\documentclass[11pt,a4paper]{article}
\usepackage{amssymb}
\usepackage{amsmath}
\usepackage{amscd}
\usepackage[all,cmtip]{xy}
\usepackage{amsthm}
\usepackage{amsfonts}
\usepackage{mathrsfs}
\usepackage{marvosym}
\usepackage{bbm}
\usepackage{graphicx}
\usepackage{a4wide}
\usepackage[utf8]{inputenc}
\usepackage{hyperref}
\usepackage[margin=2.2cm]{geometry}
\usepackage{color}
\usepackage{comment}

\usepackage{stmaryrd}
%%%%%%%%%%%%%%%%%%%%%%%%%%%%%%

%\usepackage[notref,notcite]{showkeys}

%%%%%%%%%%%%%%%%%%%%%%%%%%%%%%

\definecolor{darkred}{rgb}{0.8,0.1,0.1}
\hypersetup{
     colorlinks=true,         % false: boxed links; true: colored links,false is default
     linkcolor=darkred,
     citecolor=blue,
}

%%%%%%%%%%%%%%%%%%%%%%%%%%%%%%

\theoremstyle{plain}
\newtheorem{theo}{Theorem}[section]
\newtheorem{lem}[theo]{Lemma}
\newtheorem{propo}[theo]{Proposition}

\theoremstyle{definition}
\newtheorem{defi}[theo]{Definition}

\newtheorem{rem}[theo]{Remark}

\numberwithin{equation}{section}

%%%%%%%%%%%%%%%%%%%%%%%%%%%%%

%% General definitions %%

\def\nn{\nonumber}

%% Rings and fields %%

\def\bbR{\mathbb{R}}
\def\bbC{\mathbb{C}}
\def\bbN{\mathbb{N}}
\def\bbZ{\mathbb{Z}}
\def\bbT{\mathbb{T}}
%% Categories %%
\def\Loc{\mathsf{Loc}}
\def\LocSrc{\mathsf{LocSrc}}
\def\LocTop{\mathsf{LocTop}}
\def\bfM{\boldsymbol{M}}
\def\o{\mathfrak{o}}
\def\t{\mathfrak{t}}
\def\g{\boldsymbol{g}}
\def\h{\boldsymbol{h}}
\def\j{\boldsymbol{j}}
\def\x{\boldsymbol{x}}
\def\bfJ{\boldsymbol{J}}
\def\bfJt{\widetilde{\boldsymbol{J}}}

\def\Vec{\mathsf{Vec}}
\def\Aff{\mathsf{Aff}}

\def\PreSymp{\mathsf{PreSymp}}

\def\astAlg{{}^\ast\mathsf{Alg}}

\def\PoisAlg{\mathsf{PoisAlg}}

%% Functors %%

\def\Cinfty{\mathfrak{C}^\infty}
\def\Ainfty{\mathfrak{A}^\infty}
\def\PhSp{\mathfrak{PS}}
\def\Sol{\mathfrak{Sol}}
\def\NN{\mathfrak{N}}
\def\CCR{\mathfrak{CCR}}
\def\CanPois{\mathfrak{CanPois}}
\def\CPA{\mathfrak{CPA}}
\def\PA{\mathfrak{PA}}
\def\CQA{\mathfrak{CQA}}
\def\QA{\mathfrak{QA}}
\def\Z{\mathfrak{Z}}
%% Spaces %%

\def\End{\mathrm{End}}

\def\Aut{\mathrm{Aut}}

%% Maps &&

\def\id{\mathrm{id}}
\def\dd{\mathrm{d}}

%% Properties %% 

\def\supp{\mathrm{supp}}

\def\dim{\mathrm{dim}}

%% Special elements %%

\def\vol{\mathrm{vol}}
\def\1{\mathbbm{1}}

%% Operations %%

\newcommand{\ip}[2]{\left\langle #1,#2 \right\rangle}
\newcommand{\sip}[2]{\left\langle\left\langle #1,#2\right\rangle\right\rangle}
\newcommand{\omi}[1]{\buildrel { \buildrel{#1}\over{\vee} } \over .}

\def\sk{\vspace{2mm}}

\newcommand{\pPreSymp}{\bullet\PreSymp}
\newcommand{\pPhSp}{\bullet\PhSp}
\newcommand{\pCCR}{\bullet\CCR}
\newcommand{\obulplus}{\mathbin{\ooalign{$\oplus$\cr\hidewidth\raise0.17ex\hbox{$\scriptstyle\bullet\mkern4.48mu$}}}}

\newcommand{\rce}{\mathrm{rce}}

%%%%%%%%%%%%%%%%%%%%%%%%%%%%%%%%%%%%%%%%%%%%%%%%%%%%%%%%%%%%%%%%%%%%%%%%

\title{%
Locally covariant quantum field theory with external sources\\
{\Large Relative Cauchy evolution, automorphisms and dynamical locality}
}

\author{%
Christopher J. Fewster$^{1,a}$ and Alexander Schenkel$^{2,b}$\thanks{Present address:
Department of Mathematics, Heriot-Watt University,
 Edinburgh EH14 4AS, United Kingdom. \newline
Maxwell Institute for Mathematical Sciences, Edinburgh, United Kingdom. \newline
The Tait Institute, Edinburgh, United Kingdom.\newline
Email: \texttt{as880@hw.ac.uk}\newline
Supported by a Research Fellowship of Deutsche Forschungsgemeinschaft (DFG).}\vspace{4mm}\\
{\small $^1$Department of Mathematics}\\ 
{\small University of York, Heslington, York, YO10 5DD, UK.}\vspace{2mm}\\
{\small $^2$ Fachgruppe Mathematik}\\
{\small Bergische Universit\"at Wuppertal, Gau\ss stra\ss e 20, 42119 Wuppertal, Germany.}\vspace{4mm}\\
 {\footnotesize  ~$^a$ \texttt{chris.fewster@york.ac.uk}~,~$^b$ \texttt{schenkel@math.uni-wuppertal.de} }
 }

\date{July 2014}

%%%%%%%%%%%%%%%%%%%%%%%%%%%%%%%%%%%%%%%%%%%%%%%%%%%%%%%%%%%%%%%%%%%%%%%%

\begin{document}

\maketitle

\begin{abstract}
We provide a detailed analysis of the classical and quantized theory of a multiplet of inhomogeneous Klein--Gordon fields, which 
couple to the spacetime metric and also to an external source term; 
thus the solutions form an affine space. 
Following the formulation of affine field theories
in terms of presymplectic vector spaces as proposed in $[$Annales Henri Poincar{\'e} {\bf 15}, 171 (2014)$]$, 
we determine the relative Cauchy evolution induced by metric as well as source term perturbations
and compute the automorphism group of natural isomorphisms of the presymplectic vector space functor. 
Two pathological features of this formulation are revealed: 
the automorphism group contains elements that cannot be interpreted
as global gauge transformations of the theory; moreover, the presymplectic formulation does not respect a natural requirement on 
composition of subsystems. We therefore propose a systematic strategy to improve the original description
of affine field theories at the classical and quantized level,
first passing to a Poisson algebra description in the classical case.
The idea is to consider state spaces on the classical and quantum algebras suggested by the physics of the theory (in the classical case, we use the affine solution space). The state spaces are not separating for the algebras, indicating a redundancy in the description. 
Removing this redundancy by a quotient, a functorial
theory is obtained that is free of the above mentioned pathologies.
These techniques are applicable to general affine field theories and Abelian gauge theories. The resulting quantized theory is shown to be dynamically local. 
\end{abstract}
\paragraph*{Keywords:}
locally covariant quantum field theory, 
relative Cauchy evolution,
quantum field theory on curved spacetimes,
affine quantum field theory
\paragraph*{MSC 2010:}81T20, 81T05

%%%%%%%%%%%%%%%%%%%%%%%%%%%%%%%%%%%%%%%%%%%%%%%%%%%%%%%
%%%%%%%%%%%%%%%%%%%%%%%%%%%%%%%%%%%%%%%%%%%%%%%%%%%%%%%

\section{\label{sec:intro}Introduction}
Our understanding of quantum field theories on Lorentzian manifolds has made tremendous
developments since the principle of general local covariance was introduced in \cite{Brunetti:2001dx}.
Its underlying physical idea, which roughly speaking says that
any reasonable quantum field theory should be defined coherently on {\it all} spacetimes instead of focusing 
on formulations in individual spacetimes, is expressed mathematically in terms of category theory.
The basic structure of interest is that of a covariant functor from a category of spacetimes (possibly with extra 
data such as fibre bundles) to a category of algebras, which is supposed to describe the association of
observable algebras to spacetimes.
The benefits from this new perspective on quantum field theory
are substantial: On the one hand, many structural problems have been addressed and solved, for example
the generalization of the famous spin-statistics theorem to curved spacetimes \cite{Verch:2001bv}
and the perturbative renormalization of quantum field theories, see e.g.\ \cite{Brunetti:2009qc,Hollands:2001nf,Hollands:2001fb}
for general developments and \cite{Fredenhagen:2011mq,Brunetti:2013maa} for perturbative gauge and gravity theories.
On the other hand, the locally covariant framework also has had an impact
on applications of quantum field theory to e.g.\ quantum energy inequalities
\cite{Fewster:2006kt,Fewster:2006iy} and cosmology \cite{Verch:2011bx,Pinamonti:2013zba,Pinamonti:2013wya}.
\sk

Another new and interesting aspect arising in the locally covariant framework
is that internal symmetries of (quantum) field theories can be promoted to the functorial level.
It has been proposed recently by one of us \cite{Fewster:2012yc} that
the automorphism group (of natural isomorphisms) of a locally covariant quantum field theory functor
is a suitable generalization to curved spacetimes of the global gauge group in Minkowski 
space algebraic quantum field theory. Besides clarifying general properties of such automorphism
groups, it has been shown in \cite{Fewster:2012yc} that this concept indeed captures the usual orthogonal 
symmetries of the quantum theory of a multiplet of Klein--Gordon fields with equal masses.

\sk
In this paper we investigate how some of the above features are
modified when the basic category is enriched from a category of 
spacetimes to include additional external sources. This particularly
influences the {\em relative Cauchy evolution} \cite{Brunetti:2001dx}, which measures
sensitivity of a theory to perturbations of the background structure,
and plays a key role in the classification of the automorphism group
\cite{Fewster:2012yc}
and also in defining the notion of a {\em dynamically local} theory  \cite{Fewster:2011pe}. External sources provide additional degrees
of freedom for the relative Cauchy evolution to exploit, leading to
a richer framework. 
\sk

Our investigation is conducted in the context of an example, namely the
theory of a multiplet of {\it inhomogeneous} Klein--Gordon fields interacting with an external source, with underlying 
Lagrangian 
\begin{flalign}\label{eqn:actionintro}
\mathcal{L} = \sqrt{\vert \g \vert}\left(\frac{1}{2}\ip{\nabla_a\phi}{\nabla^a\phi}- \frac{1}{2}m^2\ip{\phi}{\phi} 
- \ip{\bfJ}{\phi}\right)~,
\end{flalign}
where $\bfJ\in C^\infty(M,\bbR^p)$ is a classical and non-dynamical source term.
The interest in this model comes from two directions. Physically, it represents an approximation to a model
 in which $\phi$ is coupled both to gravity and to other fields, but with the simplifying assumption that not 
 only the metric but also the other fields have been `frozen' as background structure represented by the external source; this would be an appropriate
  approximation in situations where the back-reaction of $\phi$ on both the metric and other fields can be neglected. 
  Mathematically, interest arises because, in contrast to the homogeneous theory with $\bfJ=0$,
the equation of motion corresponding to this Lagrangian is not linear, but affine, and as a consequence
the space of solutions is not a vector space, but rather an affine space. On the one hand, replacing linear structures
by affine ones leads to the simplest `non-linear' models of quantum field theories, which still can be treated
exactly without the need for perturbative expansions \cite{Benini:2012vi}. On the other hand, these
 affine structures are unavoidable in gauge theories
\cite{Benini:2013tra,Benini:2013ita} as the space of connections on a principal bundle is intrinsically an affine space.
Hence, inhomogeneous theories such as (\ref{eqn:actionintro}) can be regarded as toy-models
for gauge theories, which reflect parts of their geometric structure.
For these reasons, a general study of affine field theories was recently undertaken in \cite{Benini:2012vi}.
See also \cite{Sanders:2012sf} for a recent study of the $p$-form fields (in particular, the Maxwell field) coupled to an external source term.
\sk

Formulating the (classical and quantum) inhomogeneous Klein--Gordon theory according to 
\cite{Benini:2012vi}, however, we have found that the resulting functors have two serious pathologies. 
First, their automorphism groups do not reflect the expected symmetries of this model: From the Lagrangian
(\ref{eqn:actionintro}) one expects that the usual orthogonal symmetries are broken due to the presence
of the (arbitrary) source terms $\bfJ$, so that only a translation symmetry $\phi \mapsto\phi + \mu $,
$\mu\in\bbR^p$, remains in the massless case $m=0$. By contrast, the functor constructed in 
\cite{Benini:2012vi} always has a $\bbZ_2$ (sub)group of automorphisms, which has no corresponding interpretation
at the level of the Lagrangian. Second, the functors of \cite{Benini:2012vi} provide a description 
of a multiplet of $p$ (mutually noninteracting) fields that is inequivalent to 
what it would provide for the composition of $p$ copies of a single field. These
defects convince us that there is a flaw in this earlier description
of affine field theories and that the corresponding functor has to be modified. 
\sk

A second theme of this paper, then, is to propose a systematic way to improve the construction of the classical and quantum theory of the inhomogeneous 
multiplet of Klein--Gordon fields. The essential ingredient is the use of suitable state spaces 
(in the classical theory given by the solution space) and the characterization of the vanishing ideals induced by these state spaces
in the abstract algebras considered in \cite{Benini:2012vi}. These
ideals reflect redundancies in that description, and we therefore
quotient by the vanishing ideals to obtain improved functors. 
Our construction is free of the pathologies of \cite{Benini:2012vi}: 
the functors have the expected automorphism groups and satisfy
a natural composition property with respect to the size of the multiplet.
Furthermore, we will see that the functorial perspective can illuminate some aspects of the structure of these models that is obscured in other approaches (see Remark~\ref{rem:splitting} in particular).
This work provides the foundation for a broader discussion of the properties of these models.  
It is worth emphasizing that our improved classical theory is not formulated in terms of presymplectic vector spaces,
but in terms of Poisson algebras.\footnote{In Appendix~\ref{sec:pointed}, however, we explain a 
formulation using {\em pointed} presymplectic spaces.} Similarly, the improved quantum 
theory is not given by CCR-algebras, but by quotients of such algebras. 
Finally, we would like to mention that the techniques developed in this work 
are important for and can be applied to generic affine field theories \cite{Benini:2012vi} and, with slight modifications due to
the presence of gauge invariance, also to Abelian gauge theories \cite{Benini:2013tra,Benini:2013ita}. 
\sk

The outline of this paper is as follows: 
In Section \ref{sec:prelim} we shall review briefly the techniques required for studying affine field theories
\cite{Benini:2012vi}, focusing for simplicity on the explicit example given by the inhomogeneous multiplet of Klein--Gordon fields.
The relative Cauchy evolution for this model is discussed in detail in Section \ref{sec:rce}; as a new feature compared
to earlier studies, we study perturbations of both the metric $\g$ and  the external source
$\bfJ$. The derivative of the relative Cauchy evolution along metric perturbations is calculated
and it is shown how to identify it with the stress-energy tensor corresponding to the action given by (\ref{eqn:actionintro}).
Furthermore, the derivative of the relative Cauchy evolution along external source perturbations
is determined and identified with the $\bfJ$-variation of the action given by (\ref{eqn:actionintro}).
In Section \ref{sec:automorphisms} we compute the automorphism group of the functor
describing the presymplectic vector spaces of the classical theory of a multiplet of inhomogeneous Klein--Gordon fields.
We find that all endomorphisms of this functor (embeddings of the theory as a
subtheory of itself) are in fact automorphisms (global gauge transformations), and that the automorphism group
is isomorphic to $\bbZ_2$ in the massive case and to $\bbZ_2\times \bbR^p$ for $m=0$.
The nontrivial $\bbZ_2$ automorphism does not describe a symmetry of the Lagrangian (\ref{eqn:actionintro}), 
suggesting that inhomogeneous field theories are not appropriately described by the presymplectic vector space 
functor developed in \cite{Benini:2012vi}. This suggestion is strengthened in Section \ref{sec:compproperty}, 
where we study a composition
property: Any pair $(\bfM,\bfJ)$
consisting of a spacetime $\bfM$ with source term $\bfJ\in C^\infty(M,\bbR^p)$ may be split in a functorial way 
 into two pairs $\big((\bfM,\bfJ^q),(\bfM,\bfJ^{p-q})\big)$,
where the source $\bfJ$ is split into the first $q$ and last $p-q$ components. Treating the two pairs individually 
by the presymplectic vector space functor of \cite{Benini:2012vi}, we get a separate description
 of the first $q$ and last $p-q$ components
of the inhomogeneous Klein--Gordon field. We observe that the direct sum of these two presymplectic vector spaces
is not isomorphic to the original presymplectic vector space, and as a consequence the theory
obtained in the direct way is not naturally isomorphic to the one  obtained after splitting.
 As the individual components of the inhomogeneous Klein--Gordon field have no mutual interactions,
  this behavior is pathological and strengthens our claim that the presymplectic vector space functor
 is not a satisfactory description of the inhomogeneous theory of a multiplet of Klein--Gordon fields.
\sk
 
In  Section \ref{sec:poisson} we show how to resolve these issues by passing from  the category 
of presymplectic vector spaces to that of Poisson algebras. The presymplectic vector space
 of \cite{Benini:2012vi} has a canonical corresponding (abstract) 
Poisson algebra which can be represented naturally as an algebra of
functionals on the affine space of solutions to the inhomogeneous Klein--Gordon equation.
 In this representation there arises a kernel, which has no corresponding analog at the level of the presymplectic
vector spaces. We show that these kernels are natural Poisson ideals and hence we can modify our Poisson algebra
functor by quotienting them out. The resulting improved Poisson algebra functor is shown to have
the expected automorphism group (i.e.\ the trivial group for $m\neq 0$ and $\bbR^p$ for $m=0$)
and to satisfy the composition property. Hence, it is a better description of the classical theory
of a multiplet of inhomogeneous Klein--Gordon fields. We extend these constructions to the quantum level in
Section \ref{sec:quantization}. The main idea is to characterize suitable state spaces
for the CCR-algebras obtained by canonical quantization of our presymplectic vector spaces,
which reflect the fact that the latter describe affine functionals on the solution space of 
the inhomogeneous theory. Quotienting by the intersection of the kernels of 
corresponding GNS representations, we obtain our improved (functorial) quantized theory, which has the correct automorphism
group and satisfies the composition property. We  also prove that
our improved theory satisfies the dynamical locality property introduced in \cite{Fewster:2011pe,Fewster:2011pn}.
Furthermore, we compare our improved algebras
with the algebras for inhomogeneous theories constructed
by Hollands and Wald \cite{Hollands:2004yh} as a quotient of the Borchers--Uhlmann algebra \cite{Borchers,Uhlmann} 
and show that they are naturally isomorphic; nonetheless, as
we describe in more detail in Section \ref{sec:conclusion}, our methods provide a wider perspective on this approach. 
The somewhat special case of the massless multiplet of inhomogeneous Klein--Gordon fields and its interpretation as
a rather simple kind of gauge theory is discussed in Section \ref{sec:massless}.
In Section \ref{sec:conclusion} we add some concluding remarks, which should show that the techniques developed
in this paper can be readily applied to generic affine field theories in the sense of \cite{Benini:2012vi} and also to  Abelian 
gauge theories \cite{Benini:2013tra,Benini:2013ita}. 
There are three appendices. 
Appendix \ref{app:Cauchy} includes details on how to take the derivative of the relative Cauchy evolution
and the stress-energy tensor.
In Appendix \ref{sec:pointed} we give an alternative solution to the problems arising with the presymplectic
vector space functor by introducing a category of {\it pointed} presymplectic spaces. Finally, Appendix \ref{sec:fedosov} treats 
 the quantization of our model by deformation methods. It turns out that our improved classical
Poisson algebra is amenable to direct deformation quantization; 
alternatively, one may also apply an algebraic version of Fedosov's method -- both lead to the improved quantum theory discussed
in the text.  
We comment on the relationship between our approach
and that of the recent paper~\cite{Sanders:2012sf}.

%%%%%%%%%%%%%%%%%%%%%%%%%%%%%%%%%%%%%%%%%%%%%%%%%%%%%%%
%%%%%%%%%%%%%%%%%%%%%%%%%%%%%%%%%%%%%%%%%%%%%%%%%%%%%%%

\section{\label{sec:prelim}Preliminaries}
\subsection{Basics and notations}
The model we study throughout this work is given by a multiplet of $p\in \bbN$ real scalar fields (with the same mass),
which are minimally coupled to the Lorentzian metric and in addition coupled to an external source.
We shall exclusively work in a category theoretical setting, which is an extension of
the framework of locally covariant quantum field theory developed in \cite{Brunetti:2001dx}, see also \cite{Fewster:2011pe}.
The basic category entering our construction is given by the following
\begin{defi}\label{defi:LocSrcp}
The category $\LocSrc_{p}$ consists of the following objects and morphisms:
\begin{itemize}
\item The objects in $\LocSrc_{p}$ are pairs $(\bfM,\bfJ)$, where $\bfM = (M,\o,\g,\t)$ is 
a manifold $M$ (of arbitrary but finite dimension and with finitely many connected components) with orientation $\o$,
globally hyperbolic Lorentzian metric $\g$ (of signature $(+,-,\cdots,-)$) and time-orientation $\t$,
and $\bfJ\in C^\infty(M,\bbR^p)$. 
\item The morphisms $f:(\bfM_1,\bfJ_1)\to (\bfM_2,\bfJ_2)$ in $\LocSrc_{p}$ are orientation and time-orientation preserving
isometric embeddings $f:M_1\to M_2$, such that $f[M_1]\subseteq M_2$ is causally compatible and open
and such that $f^\ast(\bfJ_2) = \bfJ_1$, where $f^\ast$ denotes the pull-back.
\end{itemize}
\end{defi}
Any morphism whose image contains a Cauchy surface of the codomain 
will be called a {\em Cauchy morphism}; any functor from $\LocSrc_p$ to some other category is 
said to obey the {\em time-slice axiom} if it maps every Cauchy morphism to an isomorphism. 
\sk

The configuration space of a multiplet of $p\in \bbN$ real scalar fields is given by the following contravariant
functor $\Cinfty_{p} :\LocSrc_{p}\to \Vec $ to the category of real vector spaces:
To any object $(\bfM,\bfJ)$ in $\LocSrc_{p}$ we associate $\Cinfty_p(\bfM,\bfJ) := C^\infty(M,\bbR^p)$ 
and to any morphism $f:(\bfM_1,\bfJ_1)\to (\bfM_2,\bfJ_2)$ in $\LocSrc_{p}$ we associate the pull-back
$\Cinfty_{p}(f):=f^\ast:\Cinfty_p(\bfM_2,\bfJ_2)\to \Cinfty_p(\bfM_1,\bfJ_1)$.
\sk

We model the equations of motion for our theory, given by inhomogeneous Klein--Gordon equations,
 by a certain natural transformation. As a first step, remember that
the homogeneous Klein--Gordon equation is described by the natural transformation
$\mathrm{KG}= \{\mathrm{KG}_{\bfM} \}: \Cinfty_{p} \Rightarrow \Cinfty_{p}$ given by the Klein--Gordon
operators 
\begin{flalign}
\mathrm{KG}_{\bfM} : \Cinfty_{p}(\bfM,\bfJ) \to \Cinfty_{p}(\bfM,\bfJ)~,~~\phi\mapsto \mathrm{KG}_{\bfM}(\phi) 
= \square_{\bfM}(\phi) + m^2\phi~.
\end{flalign}
Here $\square_{\bfM}$ is the d'Alembert operator corresponding to $\bfM$
 and $m\ge 0$ is a fixed  mass.
In order to couple the theory to the sources $\bfJ$, which are  part of the data of the category
$\LocSrc_{p}$, we have to relax the condition that morphisms in $\Vec$ are linear. 
Let us therefore introduce the category $\Aff$ of affine spaces over real vector spaces 
with affine maps as morphisms and the obvious forgetful functor $\mathfrak{Forget}:\Vec \to \Aff$.
We compose $\Cinfty_p$ with the functor $\mathfrak{Forget}$
and obtain as result a contravariant functor $\Ainfty_p:= \mathfrak{Forget}\circ \Cinfty_p : \LocSrc_p \to \Aff$.
The inhomogeneous Klein--Gordon operators are then described by  the natural transformation
$\mathrm{P} = \{\mathrm{P}_{(\bfM,\bfJ)}\} :  \Ainfty_p \Rightarrow \Ainfty_p$  given by
\begin{flalign}
\mathrm{P}_{(\bfM,\bfJ)}: \Ainfty_{p}(\bfM,\bfJ) \to \Ainfty_{p}(\bfM,\bfJ)~,~~\phi \mapsto \mathrm{P}_{(\bfM,\bfJ)}(\phi) 
=\square_{\bfM}(\phi)  + m^2\phi + \bfJ~.
\end{flalign}
The solution spaces for these equations can be given a functorial form. Note that we do 
{\em not} assume that the solutions have spacelike compact support (there is no assumption on the support of $\bfJ$). 
\begin{defi}
The contravariant functor $\Sol_p: \LocSrc_p \to \Aff$ is defined as follows: To any object 
$(\bfM,\bfJ)$ in $\LocSrc_p$ it associates the solution space 
\begin{flalign}
\Sol_p(\bfM,\bfJ):= 
\{\phi\in C^\infty(M,\bbR^p) : \mathrm{P}_{(\bfM,\bfJ)}(\phi) =\mathrm{KG}_{\bfM}(\phi) + \bfJ =0\}~,
\end{flalign} 
which is an affine space
over the vector space $\Sol_p^\mathrm{lin} (\bfM) := \{\underline{\phi} \in C^\infty(M,\bbR^p): \mathrm{KG}_{\bfM}(\underline{\phi})=0\}$.
To any morphism $f:(\bfM_1,\bfJ_1)\to(\bfM_2,\bfJ_2)$ in $\LocSrc_p$,
$\Sol_p$ associates the $\Aff$-morphism 
given by the pull-back $\Sol_p(f):= f^\ast : \Sol_p(\bfM_2,\bfJ_2)\to\Sol_p(\bfM_1,\bfJ_1)$.
\end{defi}

\subsection{The presymplectic vector space functor}\label{sec:psvsf}
We follow the prescription of \cite{Benini:2012vi} in order to construct a covariant functor
$\PhSp_p : \LocSrc_p\to \PreSymp$ associating presymplectic vector spaces to objects in $\LocSrc_p$, 
whose role is to label certain affine functionals on $\Sol_p(\bfM,\bfJ)$, i.e.\ observables of the theory.\footnote{
In \cite{Benini:2012vi} this functor was denoted by $\mathfrak{PhSp}$ and it was called the ``phase space functor''.
We avoid this notation in our present work, since the presymplectic vector spaces obtained by $\mathfrak{PhSp}$
are just labeling spaces for functionals and not what one typically calls the phase space 
(i.e.\ the space of initial data or the space of solutions). 
} Here $\PreSymp$ denotes the category of real presymplectic vector spaces, with all
 morphisms being assumed to be injective. The aim is to have sufficiently many observables to separate the solutions, while
also removing redundancy by identifying observables that vanish on all solutions. Accordingly, 
to any object $(\bfM,\bfJ)$ in $\LocSrc_p$ we associate the object $\PhSp_p(\bfM,\bfJ)$ in $\PreSymp$
 given by the following construction:  As a vector space,
\begin{flalign}\label{eqn:PhSp_VS}
\PhSp_p(\bfM,\bfJ):= \big(C^\infty_0(M, \bbR^{p})\oplus\bbR\big)/\mathrm{P}^\ast_{(\bfM,\bfJ)}[C^\infty_0(M,\bbR^p)]~,
\end{flalign}
where, for all $h\in C^\infty_0(M,\bbR^p)$,
\begin{flalign}\label{eqn:equivalence}
\mathrm{P}^\ast_{(\bfM,\bfJ)}(h) = \left(
\mathrm{KG}_{\bfM}(h),\int_{M} \ip{\bfJ}{h} \,\vol_{\bfM} \right) \in C^\infty_0(M,\bbR^{p})\oplus \bbR~.
\end{flalign}
(One may also understand this construction as follows: 
$\PhSp_p(\bfM,\bfJ)$ is (isomorphic to) the vector space of compactly supported sections of the vector dual bundle
of our configuration bundle $M\times \bbR^p\stackrel{\mathrm{pr}_1}{\to} M$ (in the category of affine bundles)
modulo the quotient which identifies with zero those elements corresponding
to functionals which act trivially on all solutions. This viewpoint, which also leads naturally to the definitions of 
the presymplectic structure and morphisms given below, is spelled out in 
more detail in \cite[Section 4 and Section 5]{Benini:2012vi}.)
\sk

The presymplectic structure in $\PhSp_p(\bfM,\bfJ)$ is defined by, for all $[(\varphi,\alpha)],[(\psi,\beta)] \in \PhSp_p(\bfM,\bfJ)$,
\begin{flalign}\label{eqn:PhSp_PS}
\sigma_{(\bfM,\bfJ)}\big([(\varphi,\alpha)],[(\psi,\beta)] \big) := \int_{M} \ip{\varphi}{\mathrm{E}_{\bfM}(\psi)} \,\vol_{\bfM}~,
\end{flalign}
where $\mathrm{E}_{\bfM} = \mathrm{E}^-_{\bfM} - \mathrm{E}^+_{\bfM}$ is the advanced-minus-retarded
Green's operator for $\mathrm{KG}_{\bfM}$, and the Green's operators 
obey $\supp (\mathrm{E}^\pm_{\bfM}(\varphi)) \subseteq  J^\pm_{\bfM}(\supp(\varphi))$.
\sk

To any morphism $f: (\bfM_1,\bfJ_1)\to(\bfM_2,\bfJ_2)$ in $\LocSrc_p$ the functor $\PhSp_p$
associates the morphism $\PhSp_p(f) : \PhSp_p(\bfM_1,\bfJ_1)\to \PhSp_p(\bfM_2,\bfJ_2)$ in
$\PreSymp$ that is canonically induced by the push-forward,
\begin{flalign}\label{eqn:PhSp_mor}
\PhSp_p(f)\big([(\varphi,\alpha)]\big) := [(f_\ast(\varphi),\alpha)]~,
\end{flalign} 
for any $[(\varphi,\alpha)] \in \PhSp_p(\bfM_1,\bfJ_1)$, which is 
well-defined because 
\begin{flalign}
\left(f_\ast\big(\mathrm{KG}_{\bfM_1} (h)\big), \int_{M_1} \ip{\bfJ_1}{h} \,\vol_{\bfM_1}\right) = \mathrm{P}^\ast_{(\bfM_2,\bfJ_2)}\big(f_\ast (h)\big)~,
\end{flalign}
and injective because of the general result in \cite[Theorem 5.4.]{Benini:2012vi}.
As mentioned, the role of the covariant functor $\PhSp_p$ is to label affine functionals on the contravariant functor $\Sol_p$. 
This manifests itself in a natural dual pairing: For each object $(\bfM,\bfJ)$ in $\LocSrc_p$ the evaluation map 
\begin{flalign}\label{eqn:pairing}
\sip{\,\cdot\,}{\,\cdot\,}_{(\bfM,\bfJ)} : \PhSp_p(\bfM,\bfJ)\times \Sol_p(\bfM,\bfJ)\to \bbR~,~~
\big([(\varphi,\alpha)],\phi\big) \mapsto \left( \int_M\ip{\varphi}{\phi}\,\vol_{\bfM} \right)+ \alpha~
\end{flalign}
is well-defined and linear in the left and affine in the right entry. Naturality means that the following diagram commutes for any morphism
$f: (\bfM_1,\bfJ_1)\to (\bfM_2,\bfJ_2)$ in $\LocSrc_p$ 
\begin{flalign}\label{eqn:pairingdiagramphsp}
\xymatrix{
\PhSp_p(\bfM_1,\bfJ_1)\times \Sol_p(\bfM_2,\bfJ_2)\ar[d]_-{\PhSp_p(f)\times \id_{\Sol_p(\bfM_2,\bfJ_2)}} \ar[rrrr]^-{\id_{\PhSp_p(\bfM_1,\bfJ_1)}\times \Sol_p(f)}&&&&\PhSp_p(\bfM_1,\bfJ_1)\times \Sol_p(\bfM_1,\bfJ_1)
\ar[d]^-{\sip{\,\cdot\,}{\,\cdot\,}_{(\bfM_1,\bfJ_1)}}\\
\PhSp_p(\bfM_2,\bfJ_2)\times \Sol_p(\bfM_2,\bfJ_2) \ar[rrrr]^-{\sip{\,\cdot\,}{\,\cdot\,}_{(\bfM_2,\bfJ_2)}}&&&&\bbR
}
\end{flalign}
Furthermore, the presymplectic structure in
$\PhSp_p(\bfM,\bfJ)$ coincides precisely with the Peierls bracket \cite{Peierls:1952} for the theory \eqref{eqn:actionintro}, 
on regarding elements of $\PhSp_p(\bfM,\bfJ)$ as observables in this way.
\sk

We summarize the main properties of the covariant functor $\PhSp_p$ defined by  
\eqref{eqn:PhSp_VS}, \eqref{eqn:PhSp_PS} and \eqref{eqn:PhSp_mor}, which follow immediately from
the general treatment of affine field theories in \cite{Benini:2012vi}.
\begin{propo}\label{propo:phspproperties}
\begin{itemize}
\item[a)] Let $(\bfM,\bfJ)$ be any object in $\LocSrc_p$. Then the null space $\NN_p(\bfM,\bfJ)$ of the presymplectic structure
in $\PhSp_p(\bfM,\bfJ)$ is isomorphic to $\bbR$.
\item[b)] The null space is functorial, i.e.~$\NN_p:\LocSrc_p\to \Vec$ is a covariant functor.
\item[c)] The covariant functor $\PhSp_p:\LocSrc_p \to \PreSymp$ satisfies the causality property and the time-slice axiom.
\end{itemize}
\end{propo}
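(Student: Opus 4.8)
The plan is to establish each of the three assertions by tracing through the explicit formulas \eqref{eqn:PhSp_VS}--\eqref{eqn:PhSp_mor} and invoking the corresponding general results from \cite{Benini:2012vi}. For part a), I would compute the radical of $\sigma_{(\bfM,\bfJ)}$ directly. Since the presymplectic form \eqref{eqn:PhSp_PS} only involves the $C^\infty_0(M,\bbR^p)$-component through $\mathrm{E}_{\bfM}$ and ignores the $\bbR$-summand entirely, an element $[(\varphi,\alpha)]$ lies in $\NN_p(\bfM,\bfJ)$ precisely when $\int_M \ip{\varphi}{\mathrm{E}_{\bfM}(\psi)}\,\vol_{\bfM} = 0$ for all $\psi$. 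By the standard properties of the Green's operator for the normally hyperbolic operator $\mathrm{KG}_{\bfM}$ (namely that $\mathrm{E}_{\bfM}$ has range exactly the spacelike-compact solutions and kernel exactly $\mathrm{KG}_{\bfM}[C^\infty_0(M,\bbR^p)]$, together with the fact that the pairing $\int_M\ip{\cdot}{\mathrm{E}_{\bfM}(\cdot)}\,\vol_{\bfM}$ is weakly non-degenerate on $C^\infty_0/\mathrm{KG}_{\bfM}[C^\infty_0]$), one concludes $\varphi \in \mathrm{KG}_{\bfM}[C^\infty_0(M,\bbR^p)]$, say $\varphi = \mathrm{KG}_{\bfM}(h)$. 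Then $[(\varphi,\alpha)] = [(0,\alpha - \int_M\ip{\bfJ}{h}\,\vol_{\bfM})]$ using the quotient relation \eqref{eqn:equivalence}, so $\NN_p(\bfM,\bfJ)$ is spanned by $[(0,1)]$ and the map $\bbR\to\NN_p(\bfM,\bfJ)$, $\lambda\mapsto[(0,\lambda)]$ is an isomorphism once one checks $[(0,\lambda)]=0$ forces $\lambda=0$ — which holds because $\mathrm{P}^\ast_{(\bfM,\bfJ)}(h)=(0,\lambda)$ requires $\mathrm{KG}_{\bfM}(h)=0$ with $h$ compactly supported, hence $h=0$ (as $\mathrm{KG}_{\bfM}$ has no compactly supported solutions on a globally hyperbolic spacetime), hence $\lambda = \int_M\ip{\bfJ}{0}\,\vol_{\bfM}=0$.

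For part b), the content is that the null-space assignment is compatible with morphisms. Given $f:(\bfM_1,\bfJ_1)\to(\bfM_2,\bfJ_2)$, I would show that $\PhSp_p(f)$ restricts to a linear map $\NN_p(f):\NN_p(\bfM_1,\bfJ_1)\to\NN_p(\bfM_2,\bfJ_2)$; concretely, from the computation in a) it sends $[(0,\lambda)]\mapsto[(f_\ast(0),\lambda)]=[(0,\lambda)]$, so under the identifications with $\bbR$ it is simply the identity. Functoriality (respecting composition and identities) is then inherited from that of $\PhSp_p$. The only thing to verify carefully is that $\PhSp_p(f)$ does indeed map the radical into the radical, which follows because $\PhSp_p(f)$ is a $\PreSymp$-morphism and therefore intertwines the presymplectic forms, $\sigma_{(\bfM_2,\bfJ_2)}(\PhSp_p(f)(\cdot),\PhSp_p(f)(\cdot)) = \sigma_{(\bfM_1,\bfJ_1)}(\cdot,\cdot)$; combined with injectivity of $\PhSp_p(f)$, an element of the radical upstairs lands in an element whose pairing with everything in the image vanishes, and one extends this to the whole of $\PhSp_p(\bfM_2,\bfJ_2)$ using the explicit description of the image together with the structure of $\mathrm{E}_{\bfM_2}$ (this last point being where one needs that causally compatible embeddings are compatible with the Green's operators, i.e.\ $f^\ast\circ\mathrm{E}_{\bfM_2} = \mathrm{E}_{\bfM_1}\circ f^\ast$ on appropriate domains).

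For part c), the causality property asserts that if $f_1:(\bfM_1,\bfJ_1)\to(\bfM,\bfJ)$ and $f_2:(\bfM_2,\bfJ_2)\to(\bfM,\bfJ)$ have causally disjoint images in $\bfM$, then the presymplectic form $\sigma_{(\bfM,\bfJ)}$ vanishes between elements of $\PhSp_p(f_1)[\PhSp_p(\bfM_1,\bfJ_1)]$ and $\PhSp_p(f_2)[\PhSp_p(\bfM_2,\bfJ_2)]$. This reduces immediately to the corresponding fact for the scalar field, since the extra $\bbR$-summand does not enter \eqref{eqn:PhSp_PS}: for $\varphi_i$ supported in $f_i[M_i]$ one has $\supp(\mathrm{E}_{\bfM}(\psi_2))\subseteq J_{\bfM}(f_2[M_2])$, which is disjoint from $f_1[M_1]\supseteq\supp(\varphi_1)$ by causal disjointness, so the integrand vanishes. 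For the time-slice axiom, I would take a Cauchy morphism $f$ and show $\PhSp_p(f)$ is surjective (injectivity is already known); this again follows from the scalar-field case, because on the level of representatives any $[(\varphi,\alpha)]\in\PhSp_p(\bfM_2,\bfJ_2)$ can, using the usual argument that the Cauchy data of $\mathrm{E}_{\bfM_2}(\varphi)$ are captured in a neighbourhood of the Cauchy surface lying in $f[M_1]$, be represented by some $[(f_\ast(\varphi'),\alpha)]$ with $\varphi'$ compactly supported in $M_1$ — the constant $\alpha$ being carried along unchanged by \eqref{eqn:PhSp_mor}. Both properties are explicitly proved in the general affine setting of \cite{Benini:2012vi}, so the cleanest route is to cite \cite[Theorem 5.4 and the surrounding results]{Benini:2012vi} after noting the model here is a special case. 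The main obstacle, and the only place requiring genuine care rather than bookkeeping, is part b): one must check that the radical is not merely abstractly $\bbR$ in each fibre but that the functorial maps respect it, which hinges on the interplay between the explicit quotient \eqref{eqn:equivalence} (involving $\bfJ$) and the push-forward formula \eqref{eqn:PhSp_mor}, and on the compatibility of Green's operators with causally compatible isometric embeddings.
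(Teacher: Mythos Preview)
Your proposal is correct and essentially self-contained, whereas the paper's proof simply cites the relevant general results from \cite{Benini:2012vi} (Corollary~4.5 for a), Lemma~7.3 for b), Theorems~5.5 and~5.6 for c)), noting that the present model is a special case of the affine framework developed there. Your explicit computations for a) and c) are exactly the arguments one would unfold from those references, so the difference is one of presentation rather than substance: you trade brevity for self-containedness, which is a reasonable choice.

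One remark on your treatment of b): the direct computation you give first --- that $\PhSp_p(f)$ sends $[(0,\lambda)]$ to $[(0,\lambda)]$, which by a) lies in $\NN_p(\bfM_2,\bfJ_2)$ --- is already a complete proof. The subsequent discussion attempting to show abstractly that $\PreSymp$-morphisms preserve radicals is unnecessary and in fact delicate: injective presymplectic maps do \emph{not} in general map the radical into the radical (e.g.\ $(\bbR,0)\hookrightarrow(\bbR^2,\omega_{\mathrm{std}})$), so your proposed extension ``from the image to the whole of $\PhSp_p(\bfM_2,\bfJ_2)$'' via Green's-operator compatibility is doing real work and would need to be spelled out carefully. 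Since the direct computation already settles the matter, I would simply drop that paragraph.
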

\begin{proof}
The proof of a) follows from \cite[Corollary 4.5.]{Benini:2012vi} and b) follows from \cite[Lemma 7.3.]{Benini:2012vi}.
Item c) is a consequence of \cite[Theorem 5.5. and Theorem 5.6.]{Benini:2012vi}.
\end{proof}
Due to the nontrivial null space of the presymplectic structure (cf.~item a)) this theory has distinct features which are
not present in the homogeneous Klein--Gordon theory, where the null space is trivial.

\subsection{Quantization}
The theory $\PhSp_p: \LocSrc_p\to \PreSymp$  may be quantized by composing it with the 
canonical commutation relation (CCR) functor (either in Weyl or polynomial form). Since these quantization functors
preserve locality, causality and the time-slice axiom, we obtain a locally covariant quantum field theory 
in the sense of \cite{Brunetti:2001dx,Fewster:2011pe}, 
with the difference that our underlying geometric category
is enhanced from $\Loc$ to $\LocSrc_p$. 
This construction results in a functorial assignment of quantized observable algebras to objects
in $\LocSrc_p$, i.e.\ a covariant functor $\LocSrc_p\to \astAlg$ to the category of unital $\ast$-algebras. 
We would like to stress that in locally covariant quantum field theory
the primary focus is on the functor describing the observables algebras, while
quantum fields (being certain natural transformations with values in this functor, cf.\ \cite{Brunetti:2001dx}) 
are secondary concepts which can be assigned once this functor is specified.
Hence, we shall mostly focus in this work on the functor itself and refer the reader to
Remark \ref{rem:fields} (and also Remark \ref{rem:classical_fields}) for an example of
a locally covariant quantum (and classical) field for our model.
For more details on the Weyl quantization functors for presymplectic vector spaces
(and more general also presymplectic Abelian groups) we refer to the Appendix of 
\cite{Benini:2013ita}. The quantized theory of a multiplet of $p\in\bbN$ inhomogeneous
Klein--Gordon fields is studied in detail in Section \ref{sec:quantization}. 

%%%%%%%%%%%%%%%%%%%%%%%%%%%%%%%%%%%%%%%%%%%%%%%%
%%%%%%%%%%%%%%%%%%%%%%%%%%%%%%%%%%%%%%%%%%%%%%%%

\section{\label{sec:rce}Relative Cauchy evolution of the functor $\PhSp_p$}
Relative Cauchy evolution encodes the sensitivity of a theory to variations
in the background structures; in this it closely resembles the action. 
Apart from its intrinsic interest, understanding the relative Cauchy evolution
is also an integral step in characterizing the automorphism groups of our functors
in Section \ref{sec:automorphisms}.
We base our analysis on the refined construction given in \cite{Fewster:2011pe}, 
which we now review and adapt to our present setting. 
\sk

Given any object $(\bfM,\bfJ)$ in $\LocSrc_p$, we can consider its perturbation
by elements $(\h,\j)\in \Gamma^\infty_0(T^\ast M\vee T^\ast M) \times C^\infty_0(M,\bbR^p)$,
where $\Gamma^\infty_0(T^\ast M\vee T^\ast M)$ denotes the vector space of compactly supported 
sections of the symmetric tensor product of the cotangent bundle (i.e.~symmetric tensor fields).
Explicitly, given $(\h,\j)\in \Gamma^\infty_0(T^\ast M\vee T^\ast M)\times C^\infty_0(M,\bbR^p)$ such that $\g+\h$
is a time-orientable Lorentzian metric,
we define $(\bfM[\h],\bfJ[\j]) := \big((M,\o,\g+\h,\t_{\h}),\bfJ+\j\big)$,
where $\t_{\h}$ is the unique time-orientation for $\g+\h$, such that $\t_{\h}=\t$ outside
the support of $\h$. If $(\bfM[\h],\bfJ[\j])$ is an object in $\LocSrc_p$, i.e.~if $\bfM[\h]$ is globally hyperbolic,\footnote{It could happen that
 $\bfM[\h]$ is {\em not} globally hyperbolic: consider $\g=dt\otimes dt-\sum_{k=1}^{n-1} d\theta_k\otimes d\theta_k$
on $\bbR\times\bbT^{n-1}$ with $\h=\frac{1}{2}\eta(t) (dt\otimes d\theta_1+d\theta_1\otimes dt+2d\theta_1\otimes d\theta_1)$ where $\eta\in C_0^\infty(\bbR)$,
$0\le\eta\le 1$, and $\eta(0)=1$. Then $\g$ is globally hyperbolic, 
while $\g+\h$ is Lorentzian and time-orientable but admits a closed null curve in the hypersurface $t=0$.} we say that $(\h,\j)$ is a globally hyperbolic perturbation and write $(\h,\j)\in H(\bfM,\bfJ)$. 
Evidently $H(\bfM,\bfJ)$ contains an open neighborhood of $\{0\}\times C^\infty_0(M,\bbR^p) $
 in the usual test-section topology.
\sk

For any object  $(\bfM,\bfJ)$ in $\LocSrc_p$ and any $(\h,\j)\in H(\bfM,\bfJ)$ 
we define the sets 
\begin{equation}
M^\pm := M\setminus J_{\bfM}^{\mp}\big(\supp(\h)\cup \supp(\j)\big),
\end{equation}
which are causally compatible, open and globally hyperbolic subsets of $\bfM$ and $\bfM[\h]$. We further define
$\bfM^\pm[\h,\j]:= \bfM\vert_{M^\pm} = \bfM[\h]\vert_{M^\pm}$
and $\bfJ^\pm[\h,\j]:= \bfJ\vert_{M^\pm} = (\bfJ+\j)\vert_{M^\pm}$. 
Notice that $(\bfM^\pm[\h,\j],\bfJ^\pm[\h,\j])$ are objects in $\LocSrc_p$
and further that the canonical inclusions of underlying manifolds yield Cauchy morphisms 
\begin{subequations}
\begin{flalign}
i_{(\bfM,\bfJ)}^\pm[\h,\j]&: \big(\bfM^\pm[\h,\j],\bfJ^\pm[\h,\j]\big) \to \big(\bfM,\bfJ\big)~,\\
j_{(\bfM,\bfJ)}^\pm[\h,\j]&: \big(\bfM^\pm[\h,\j],\bfJ^\pm[\h,\j]\big) \to \big(\bfM[\h],\bfJ[\j]\big)~.
\end{flalign}
\end{subequations}
Since, by Proposition \ref{propo:phspproperties}, $\PhSp_p : \LocSrc_p \to \PreSymp$ satisfies
 the time-slice axiom, we can construct isomorphisms $\tau_{(\bfM,\bfJ)}^\pm[\h,\j]:
\PhSp_p(\bfM,\bfJ)\to \PhSp_p(\bfM[\h],\bfJ[\j])$ in $\PreSymp$ by
\begin{flalign}\label{eqn:rce_defa}
\tau_{(\bfM,\bfJ)}^\pm[\h,\j]:= \PhSp_p(j_{(\bfM,\bfJ)}^\pm[\h,\j])\circ \big(\PhSp_p(i_{(\bfM,\bfJ)}^\pm[\h,\j])\big)^{-1}~.
\end{flalign}
The relative Cauchy evolution of $\PhSp_p$ induced by $(\h,\j)\in H(\bfM,\bfJ)$ is defined as the automorphism
\begin{flalign}\label{eqn:rce_defb}
\mathrm{rce}_{(\bfM,\bfJ)}^{(\PhSp_p)}[\h,\j]:= \big(\tau_{(\bfM,\bfJ)}^{-}[\h,\j]\big)^{-1} \circ \tau^+_{(\bfM,\bfJ)}[\h,\j] \in \Aut(
\PhSp_p(\bfM,\bfJ))~,
\end{flalign}
and may be computed as follows.  Owing to the time-slice axiom, 
any element in $\PhSp_p(\bfM,\bfJ)$ may be written in the 
form\footnote{For clarity, in this discussion we shall indicate the $\LocSrc_p$ object with 
respect to which the equivalence relation is understood.} $[(\varphi,\alpha)]_{_{(\bfM,\bfJ)}}$ with 
$\varphi$ supported in $M^+$, whereupon
\begin{flalign}
\tau^{+}_{(\bfM,\bfJ)}[\h,\j]([(\varphi,\alpha)]_{_{(\bfM,\bfJ)}})= [(\varphi,\alpha)]_{_{(\bfM[\h],\bfJ[\j])}}~.
\end{flalign}
In turn, given a representative
$(\varphi^\prime,\alpha^\prime)\in [(\varphi,\alpha)]_{_{(\bfM[\h],\bfJ[\j])}}$
so that $\varphi^\prime$ has support in $M^-$, the relative Cauchy evolution of $[(\varphi,\alpha)]_{_{(\bfM,\bfJ)}}$ is 
\begin{flalign}
\mathrm{rce}_{(\bfM,\bfJ)}^{(\PhSp_p)}[\h,\j]\big([(\varphi,\alpha)]_{_{(\bfM,\bfJ)}}\big) =\big(\tau^{-}_{(\bfM,\bfJ)}[\h,\j]\big)^{-1}([(\varphi,\alpha)]_{_{(\bfM[\h],\bfJ[\j])}})=
 [(\varphi^\prime,\alpha^\prime)]_{_{(\bfM,\bfJ)}}~.
\end{flalign}
Thus it remains to find a suitable representative $(\varphi',\alpha')$. 
By a standard argument, see e.g.~\cite[Lemma 3.1.]{Fewster:2011pn}, we can find a smooth function $\chi\in C^\infty(M)$,
such that $\varphi^\prime = \varphi - \mathrm{KG}_{\bfM[\h]} \big(\chi\,\mathrm{E}^-_{\bfM[\h]}(\varphi)\big)$
has  support in $M^-$ and such that $\chi\,\mathrm{E}^-_{\bfM[\h]}(\varphi)$ has compact support.
Explicitly, we take any two Cauchy surfaces $\Sigma^{\pm}$ in $\bfM^{-}[\h,\j]$ such that
$\Sigma^{+}\cap \Sigma^- =\emptyset$, $\Sigma^+$ is in the future of $\Sigma^-$
and $\supp(\varphi)\cup \supp(\h)\cup\supp(\j)$ is in the future of $\Sigma^+$. 
Any $\chi$ such that $\chi\vert_{J^+_{\bfM[\h]}(\Sigma^+)}\equiv 1$
and $\chi\vert_{J^-_{\bfM[\h]}(\Sigma^-)}\equiv 0$ then leads 
by the formula above  to a $\varphi^\prime$ with the desired properties.
Using (\ref{eqn:equivalence}) and now dropping the labels on equivalence classes 
(which from now on are all taken with respect to $(\bfM,\bfJ)$)
we obtain for the relative Cauchy evolution 
\begin{flalign}\label{eqn:tmpcauchy}
\mathrm{rce}_{(\bfM,\bfJ)}^{(\PhSp_p)}[\h,\j]\big([(\varphi,\alpha)]\big) =  
\Big[\Big(\varphi - \mathrm{KG}_{\bfM[\h]}\big(\chi\,\mathrm{E}^-_{\bfM[\h]}(\varphi) \big)
,\alpha - \int_{M} \ip{\bfJ+\j}{\chi\,\mathrm{E}^-_{\bfM[\h]}(\varphi)}\, \vol_{\bfM[\h]} \Big)\Big]~.
\end{flalign}
As $\chi\,\mathrm{E}^-_{\bfM[\h]}(\varphi)$ is compactly supported, 
we may use the equivalence relation with respect to $(\bfM,\bfJ)$
to obtain
\begin{multline}
\Big(\mathrm{rce}_{(\bfM,\bfJ)}^{(\PhSp_p)}[\h,\j]-\id_{\PhSp_p(\bfM,\bfJ)}\Big)\big([(\varphi,\alpha)]\big)  \\
=\Big[\Big((\mathrm{KG}_{\bfM}-\mathrm{KG}_{\bfM[\h]})\big(\chi\,\mathrm{E}^-_{\bfM[\h]}(\varphi) \big)
,\int_{M}
\ip{(1-\rho_{\h}) \bfJ -\rho_{\h}\,\j}{\chi\,\mathrm{E}^-_{\bfM[\h]}(\varphi)} \,\vol_{\bfM}  \Big)\Big]~,
\end{multline}
in which we have also written $\rho_{\h}\in C^\infty_0(M)$ for the unique function such that $\vol_{\bfM[\h]} = \rho_{\h}\,\vol_{\bfM}$,
explicitly $\rho_{\h} = \sqrt{\vert \g+\h\vert}/\sqrt{\vert \g\vert}$.
Noting that $\chi=1$ and $\mathrm{E}^+_{\bfM[\h]}(\varphi)=0$ on $\supp (\j) \cup \supp (\h)$, we may replace both occurrences of
$\chi\,\mathrm{E}^-_{\bfM[\h]}(\varphi)$ by 
$\mathrm{E}_{\bfM[\h]}(\varphi)$, obtaining
\begin{multline}\label{eq:rce_reform}
\Big(\mathrm{rce}_{(\bfM,\bfJ)}^{(\PhSp_p)}[\h,\j]-\id_{\PhSp_p(\bfM,\bfJ)}\Big)\big([(\varphi,\alpha)]\big)  \\
=\Big[\Big((\mathrm{KG}_{\bfM}-\mathrm{KG}_{\bfM[\h]})
\big(\mathrm{E}_{\bfM[\h]}(\varphi) \big)
,\int_{M}  \Big(\ip{-\j}{\mathrm{E}_{\bfM[\h]}(\varphi)}+
\ip{(1-\rho_{\h}) (\bfJ +\j)}{\mathrm{E}_{\bfM[\h]}(\varphi)} \Big)\, \vol_{\bfM} \Big)\Big]~,
\end{multline}
after a further slight rearrangement. Note that \eqref{eq:rce_reform} applies 
only for representatives where $\varphi$ is supported in $M^+$. In this form, it is easy to see
what the functional derivative
of the relative Cauchy evolution with respect to $\h$ and $\j$ {\em ought} to be, 
simply by expanding to first order in these quantities. This procedure gives
\begin{flalign}\label{eq:rce_deriv}
\left.\frac{d}{ds} \mathrm{rce}_{(\bfM,\bfJ)}^{(\PhSp_p)}[s\h,s\j] \big([(\varphi,\alpha)]\big)  \right|_{s=0} &=: -
 \left(\mathcal{T}_{(\bfM,\bfJ)}[\h]+\mathcal{J}_{(\bfM,\bfJ)}[\j]\right)\big([(\varphi,\alpha)]\big)  ~,
\end{flalign}
where  
\begin{subequations}
\begin{flalign}\label{eqn:Tdefinition} 
\mathcal{T}_{(\bfM,\bfJ)}[\h]\big([(\varphi,\alpha)]\big) & = 
\Big[\Big( 
\mathrm{KG}^{\prime}_{\bfM[\h]}\big(\mathrm{E}_{\bfM}(\varphi)\big), \int_M\frac{1}{2}\,g^{ab}\,h_{ab}\,\ip{\bfJ}{\mathrm{E}_{\bfM}(\varphi)} \,\vol_{\bfM} \Big)\Big]~, \\
\label{eqn:Jdefinition}
\mathcal{J}_{(\bfM,\bfJ)}[\j]\big([(\varphi,\alpha)]\big) & = 
\Big[\Big( 0,\int_M\ip{\j}{\mathrm{E}_{\bfM}(\varphi)} \,\vol_{\bfM} \Big)\Big]~,
\end{flalign}
\end{subequations}
and $\mathrm{KG}^{\prime}_{\bfM[\h]} = \frac{d}{ds} \mathrm{KG}_{\bfM[s\h]}\big\vert_{s=0}$.\footnote{Note that
the derivative of the relative Cauchy evolution involves {\em minus} the derivative
of the Klein--Gordon operator. The BFV paper~\cite{Brunetti:2001dx}  contains
an error [or unconventional terminology] on p.61, where an advanced
solution is given support in the causal future of the source, leading to the opposite  
overall sign in the analogous expression for the derivative on p.62 and hence 
in their Theorem~4.3.} Formulae~\eqref{eqn:Tdefinition} and~\eqref{eqn:Jdefinition}
hold for arbitrary representatives $(\varphi,\alpha)$.
Note that elements in $\PhSp_p(\bfM,\bfJ)$ which are of the form $[(0,\alpha)]$, $\alpha\in\bbR$,
are left unchanged under the relative Cauchy evolution. In Appendix~\ref{app:Cauchy}, we shall show how (\ref{eq:rce_deriv})
holds rigorously in the weak-$*$ topology induced by the pairing \eqref{eqn:pairing}. Moreover, we obtain the formula
\begin{flalign}
\sip{\mathcal{T}_{(\bfM,\bfJ)}[\h]\big([(\varphi,\alpha)]\big)}{\phi}_{(\bfM,\bfJ)} =
\frac{1}{2}\left.\frac{d}{ds} \int_{M}h_{ab} \,T_{(\bfM,\bfJ)}^{ab}[\phi+s \,\mathrm{E}_{\bfM}(\varphi)]\, \vol_{\bfM} \right|_{s=0}  ~,\label{eq:tmp_stresstensorfuncderiv_short}
\end{flalign}
where the stress-energy tensor is\footnote{The minus sign before the functional 
derivative appears because we differentiate with respect to the covariant form of the metric.}
\begin{flalign}\label{eqn:fullSET}
T_{(\bfM,\bfJ)}^{ab}[\phi] :=  -\frac{2}{\sqrt{\vert \g\vert }} \frac{\delta S}{\delta g_{ab}(x)}=  \ip{\nabla^a \phi }{\nabla^b \phi} - \frac{1}{2}g^{ab} \ip{\nabla_c \phi }{\nabla^c \phi}
+\frac{1}{2}m^2 g^{ab}\ip{ \phi }{ \phi}  +
g^{ab} \ip{\bfJ}{\phi}~,
\end{flalign}
and $S$ is the classical action obtained from the Lagrangian 
\eqref{eqn:actionintro}.
Similarly, it is clear from \eqref{eqn:Jdefinition} that 
\begin{flalign} 
\sip{\mathcal{J}_{(\bfM,\bfJ)}[\j] \big([(\varphi,\alpha)]\big)}{\phi}_{(\bfM,\bfJ)}
= \left.\frac{d}{ds}\int_M \ip{\j}{\phi+s \,\mathrm{E}_{\bfM}(\varphi)}\, \vol_{\bfM}\right|_{s=0}~.
\end{flalign}
These formulae establish a close link between the relative Cauchy evolution and the action; indeed,
\begin{flalign}\label{eq:rce_deriv2}
\left.\frac{d}{ds}\sip{\mathrm{rce}_{(\bfM,\bfJ)}^{(\PhSp_p)}[s\h,s\j] \big([(\varphi,\alpha)]\big) }{\phi}_{(\bfM,\bfJ)} \right|_{s=0} &= 
\frac{\delta^2 S}{\delta \phi\delta \g}\big(\mathrm{E}_{\bfM}(\varphi)\otimes\h\big) + 
\frac{\delta^2 S}{\delta \phi\delta \bfJ}\big(\mathrm{E}_{\bfM}(\varphi)\otimes\j\big)~,
\end{flalign}
where the functional derivatives are evaluated at $\phi\in\Sol_p(\bfM,\bfJ)$, and 
on the background $(\bfM,\bfJ)$, and we differentiate with respect to the covariant metric tensor.  
\sk

At this point the following remark is in order: The stress-energy tensor (\ref{eqn:fullSET}) is {\it not} covariantly conserved
for generic $(\bfM,\bfJ)$ and generic solutions $\phi$ of the inhomogeneous Klein--Gordon equation, since
\begin{flalign}\label{eqn:tmp_nonconservationlaw}
\nabla_a T^{ab}_{(\bfM,\bfJ)}[\phi] = \ip{\nabla^b\bfJ}{\phi}~.
\end{flalign}
Modifying $T^{ab}_{(\bfM,\bfJ)}$ by adding a constant functional, which would not change the derivative of the relative Cauchy evolution
given in (\ref{eq:tmp_stresstensorfuncderiv_short}), does not change this fact.
Repeating the arguments given in \cite[\S 4]{Brunetti:2001dx}, the non-conservation law
(\ref{eqn:tmp_nonconservationlaw})  (up to constant functionals) can also be derived directly from the relative Cauchy evolution.
This perhaps unpleasant feature can be understood as follows: diffeomorphism
invariance of the action derived from \eqref{eqn:actionintro}  entails the identity
\begin{flalign}
\frac{\delta S}{\delta\g}(\pounds_X \g) + 
\frac{\delta S}{\delta\bfJ}(\pounds_X \bfJ) +
\frac{\delta S}{\delta\phi}(\pounds_X \phi) = 0
\end{flalign}
for all compactly supported vector fields $X$. When $\phi$ is on-shell, the
last term vanishes and the identity implies \eqref{eqn:tmp_nonconservationlaw}.
We cannot expect conservation of the stress-energy tensor in our theory, because
$\bfJ$ is non-dynamical; indeed \eqref{eqn:tmp_nonconservationlaw} is the correct generalized 
conservation law in this case. (Were we to modify the theory, so that $\bfJ$ became  dynamical,
 the additional Euler--Lagrange equation $\phi=0$  
would rather trivially restore conservation of the stress-energy tensor.)

%%%%%%%%%%%%%%%%%%%%%%%%%%%%%%%%%%%%%%%%%%%%%%%%
%%%%%%%%%%%%%%%%%%%%%%%%%%%%%%%%%%%%%%%%%%%%%%%%

\section{\label{sec:automorphisms}Automorphism group of the functor $\PhSp_p$}
Given any covariant functor from $\LocSrc_p$ to $\PreSymp$ it is interesting
to study its endomorphisms and automorphisms \cite{Fewster:2012yc}. The latter typically
sheds light on possible symmetries of the theory at the functorial level,
which is comparable to the global gauge group of Minkowski algebraic quantum field theory.
In \cite{Fewster:2012yc}, the automorphism group of a theory describing 
a multiplet of $p\in\bbN$ classical real scalar fields satisfying the minimally coupled Klein--Gordon equation
was found to be the orthogonal group $O(p)$ if all masses coincide and are nonzero, or $O(p)\ltimes\mathbb{R}^p$ if they all vanish. 
As mentioned in the Introduction, we expect the source terms in the inhomogeneous Klein--Gordon theory to break 
(at least for the massive case $m\neq 0$)
all the symmetries of the homogeneous Klein--Gordon theory.  It therefore comes as a surprise 
that the functor $\PhSp_p$ turns out to have a nontrivial automorphism group for any mass $m$.
\sk

We shall briefly fix some notation. Given any covariant functor $\mathfrak{F} : \LocSrc_p \to \PreSymp$,
an endomorphism of $\mathfrak{F}$ is a natural transformation $\eta : \mathfrak{F} \Rightarrow \mathfrak{F}$, 
i.e.~a collection of morphisms  $\eta_{(\bfM,\bfJ)}: \mathfrak{F}(\bfM,\bfJ) \to \mathfrak{F}(\bfM,\bfJ)$
in $\PreSymp$, such that for any morphism $f: (\bfM_1,\bfJ_1)\to (\bfM_2,\bfJ_2)$ in $\LocSrc_p$ the following 
diagram commutes
\begin{flalign}\label{eqn:endodiagram}
\xymatrix{
\ar[d]_-{\eta_{(\bfM_1,\bfJ_1)}} \mathfrak{F}(\bfM_1,\bfJ_1)  \ar[rr]^-{\mathfrak{F}(f)} && \mathfrak{F}(\bfM_2,\bfJ_2) \ar[d]^-{\eta_{(\bfM_2,\bfJ_2)}}\\
\mathfrak{F}(\bfM_1,\bfJ_1) \ar[rr]^-{\mathfrak{F}(f)}&& \mathfrak{F}(\bfM_2,\bfJ_2) 
}
\end{flalign}
We denote the collection of all endomorphisms of $\mathfrak{F}$ by $\End(\mathfrak{F})$.
An automorphism of $\mathfrak{F}$ is an endomorphism $\eta\in \End(\mathfrak{F})$, such that
all $\eta_{(\bfM,\bfJ)}$ are isomorphisms. Under composition, the automorphisms of $\mathfrak{F}$
form a group denoted by $\Aut(\mathfrak{F})$.
\sk

The goal of this section is to characterize the automorphism group of the functor
 $\PhSp_p:\LocSrc_p\to \PreSymp$ introduced in Section \ref{sec:prelim}. Due to the 
 following general statement,  $\Aut(\PhSp_p)$ is nontrivial.
\begin{propo}\label{propo:Z2flip}
Let $\mathfrak{F}: \LocSrc_p \to \PreSymp$ be any covariant functor. 
Then there exists a faithful homomorphism $\eta:\bbZ_2 \to \Aut(\mathfrak{F})$
given by $\eta(\sigma) = \{\sigma\,\id_{\mathfrak{F}(\bfM,\bfJ)}\} $, where 
$\sigma\in \bbZ_2 = \{-1,+1\}$.
\end{propo}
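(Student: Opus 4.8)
The plan is to verify directly that the prescribed family of morphisms is a natural transformation and that each component is invertible, and then to check faithfulness of the homomorphism. The key observation is that for a fixed $\sigma\in\bbZ_2=\{-1,+1\}$, the map $\sigma\,\id_{\mathfrak{F}(\bfM,\bfJ)}$ is scalar multiplication by $\sigma$ on the presymplectic vector space $\mathfrak{F}(\bfM,\bfJ)$. Since $\sigma=\pm1$ is real, this is a linear map, it is injective (indeed bijective, with inverse $\sigma\,\id$ again, because $\sigma^2=1$), and it preserves the presymplectic form $\omega$, since $\omega(\sigma u,\sigma v)=\sigma^2\,\omega(u,v)=\omega(u,v)$. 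Hence $\sigma\,\id_{\mathfrak{F}(\bfM,\bfJ)}$ is a legitimate $\PreSymp$-morphism, in fact an isomorphism.

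Next I would check naturality, i.e.\ that for every morphism $f:(\bfM_1,\bfJ_1)\to(\bfM_2,\bfJ_2)$ in $\LocSrc_p$ the square \eqref{eqn:endodiagram} (with $\eta_{(\bfM_i,\bfJ_i)}=\sigma\,\id$) commutes. This is immediate: $\mathfrak{F}(f)$ is linear, so $\mathfrak{F}(f)\circ(\sigma\,\id_{\mathfrak{F}(\bfM_1,\bfJ_1)}) = \sigma\,\mathfrak{F}(f) = (\sigma\,\id_{\mathfrak{F}(\bfM_2,\bfJ_2)})\circ\mathfrak{F}(f)$. Therefore $\eta(\sigma):=\{\sigma\,\id_{\mathfrak{F}(\bfM,\bfJ)}\}$ is an automorphism of $\mathfrak{F}$ in the sense of the definitions above, i.e.\ $\eta(\sigma)\in\Aut(\mathfrak{F})$.

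It then remains to see that $\sigma\mapsto\eta(\sigma)$ is a group homomorphism and that it is faithful (injective). For the homomorphism property, the components compose as $(\sigma\,\id)\circ(\sigma'\,\id)=(\sigma\sigma')\,\id$, which is precisely the component of $\eta(\sigma\sigma')$; and $\eta(+1)$ is the identity natural transformation. Faithfulness follows because if $\eta(\sigma)=\eta(+1)$ then in particular $\sigma\,\id_{\mathfrak{F}(\bfM,\bfJ)}=\id_{\mathfrak{F}(\bfM,\bfJ)}$ on every object, so $\sigma=1$ provided $\mathfrak{F}(\bfM,\bfJ)\neq\{0\}$ for at least one object --- which holds here since $\PhSp_p(\bfM,\bfJ)$ always contains the nontrivial null space $\bbR$ by Proposition~\ref{propo:phspproperties}a). (In the purely abstract statement about an arbitrary $\mathfrak{F}$ one should note the harmless caveat that if $\mathfrak{F}$ is identically the zero functor then the two elements of $\bbZ_2$ act identically; this degenerate case does not arise for $\PhSp_p$.)

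There is essentially no obstacle here: the whole point of the proposition is that the existence of this $\bbZ_2$ is automatic for \emph{any} $\PreSymp$-valued functor, precisely because $\PreSymp$ is a category of \emph{vector} spaces in which $-\id$ is always an admissible morphism preserving the (bilinear, antisymmetric) presymplectic form. The mild subtlety worth flagging in the write-up is just the faithfulness argument, i.e.\ ensuring some component space is nonzero; beyond that the proof is a one-line verification in each of the three parts (morphism, naturality, homomorphism).
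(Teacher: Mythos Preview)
Your proof is correct and follows essentially the same approach as the paper's: verify that $\sigma\,\id$ preserves the presymplectic form via $\sigma^2=1$ and bilinearity, observe naturality from linearity of $\mathfrak{F}(f)$, and note the group law. You are in fact slightly more careful than the paper on faithfulness, correctly flagging the degenerate case of the zero functor (which the paper's ``injectivity is obvious'' glosses over); this caveat is harmless for the intended application to $\PhSp_p$.
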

\begin{proof}
Injectivity of $\eta$ and the group law $\eta(\sigma)\circ \eta(\sigma^\prime) = \eta(\sigma\,\sigma^\prime)$ are obvious.
All $\eta(\sigma)_{(\bfM,\bfJ)}$ are clearly linear automorphisms and since $\sigma^2=1$ they preserve
the presymplectic structure in $\mathfrak{F}(\bfM,\bfJ)$ (this follows from bilinearity of any presymplectic structure).
For any morphism $f$ in $\LocSrc_p$, $\eta(\sigma)$ satisfies the diagram in (\ref{eqn:endodiagram}), 
since $\mathfrak{F}(f)$ are in particular linear maps and hence commute with the multiplication by $\sigma$.
\end{proof}

The previous proposition in particular shows that $\Aut(\PhSp_p)$ contains a $\mathbb{Z}_2$ subgroup
for all values of $m$. In the massless case we can say more. 
\begin{propo}\label{propo:Z2xR}
If $m=0$ there exists a faithful homomorphism $\eta:\mathbb{Z}_2\times\bbR^p \to \Aut(\PhSp_p)$
given by $\eta(\sigma,\mu) = \{\eta(\sigma,\mu)_{(\bfM,\bfJ)}\}$, where, for all $[(\varphi,\alpha)]\in \PhSp_p(\bfM,\bfJ)$,
\begin{flalign}
\eta(\sigma,\mu)_{(\bfM,\bfJ)}\big([(\varphi,\alpha)] \big) =\left[\left(\sigma\, \varphi,\sigma\,\alpha+  \sigma\int_{M} \ip{\varphi}{\mu} \, \vol_{\bfM} \right)\right]~.
\label{eqn:Z2xR}
\end{flalign}
Here we have identified $\mu\in \bbR^p$  with the corresponding constant function in $C^\infty(M,\bbR^p)$.
\end{propo}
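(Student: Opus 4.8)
The plan is to verify directly that the prescribed maps $\eta(\sigma,\mu)_{(\bfM,\bfJ)}$ define automorphisms of each presymplectic vector space and that they assemble into a natural transformation; the remaining injectivity and group-law claims are then routine. The proof splits naturally into three parts: (i) well-definedness, linearity and bijectivity of $\eta(\sigma,\mu)_{(\bfM,\bfJ)}$ as a map on $\PhSp_p(\bfM,\bfJ)$; (ii) compatibility with the presymplectic structure $\sigma_{(\bfM,\bfJ)}$; (iii) naturality, i.e.~commutativity of the diagram \eqref{eqn:endodiagram} for every $\LocSrc_p$-morphism $f$. Throughout, the key structural input is that we are in the massless case $m=0$, so $\mathrm{KG}_{\bfM} = \square_{\bfM}$ and, crucially, $\mathrm{P}^\ast_{(\bfM,\bfJ)}(h) = (\square_{\bfM} h,\int_M\ip{\bfJ}{h}\vol_{\bfM})$; since a constant function $\mu$ satisfies $\square_{\bfM}\mu = 0 = \mathrm{KG}_{\bfM}(\mu)$, the combination $\int_M\ip{\varphi}{\mu}\vol_{\bfM}$ interacts well with the equivalence relation in \eqref{eqn:PhSp_VS}.

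First I would check well-definedness: if $(\varphi,\alpha)$ and $(\varphi',\alpha')$ differ by $\mathrm{P}^\ast_{(\bfM,\bfJ)}(h) = (\square_{\bfM}h,\int_M\ip{\bfJ}{h}\vol_{\bfM})$, then their images under the representative-level map $(\varphi,\alpha)\mapsto(\sigma\varphi,\sigma\alpha+\sigma\int_M\ip{\varphi}{\mu}\vol_{\bfM})$ differ by $(\sigma\square_{\bfM}h,\sigma\int_M\ip{\bfJ}{h}\vol_{\bfM}+\sigma\int_M\ip{\square_{\bfM}h}{\mu}\vol_{\bfM})$. The extra term is $\sigma\int_M\ip{\square_{\bfM}h}{\mu}\vol_{\bfM} = \sigma\int_M\ip{h}{\square_{\bfM}\mu}\vol_{\bfM}=0$ by self-adjointness of $\square_{\bfM}$ (valid since $h$ is compactly supported) and $\square_{\bfM}\mu=0$; hence the difference is $\mathrm{P}^\ast_{(\bfM,\bfJ)}(\sigma h)$ and the class is unchanged. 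Linearity in $[(\varphi,\alpha)]$ is immediate from linearity of the formula in $(\varphi,\alpha)$. Bijectivity follows by exhibiting the inverse: a short computation shows $\eta(\sigma,\mu)^{-1} = \eta(\sigma,-\sigma\mu) = \eta(\sigma,\mu)$ in fact, since $\sigma^2=1$ and applying the map twice returns $(\varphi,\alpha + \int_M\ip{\varphi}{\mu} - \int_M\ip{\varphi}{\mu}) = (\varphi,\alpha)$; so each $\eta(\sigma,\mu)_{(\bfM,\bfJ)}$ is its own inverse and in particular an isomorphism in $\PreSymp$.

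Next, for the presymplectic structure \eqref{eqn:PhSp_PS}, one computes $\sigma_{(\bfM,\bfJ)}(\eta(\sigma,\mu)[(\varphi,\alpha)],\eta(\sigma,\mu)[(\psi,\beta)]) = \int_M\ip{\sigma\varphi}{\mathrm{E}_{\bfM}(\sigma\psi)}\vol_{\bfM} = \sigma^2\int_M\ip{\varphi}{\mathrm{E}_{\bfM}(\psi)}\vol_{\bfM}$, which equals $\sigma_{(\bfM,\bfJ)}([(\varphi,\alpha)],[(\psi,\beta)])$ because $\sigma^2=1$; the $\alpha$- and $\mu$-dependent pieces drop out since $\sigma_{(\bfM,\bfJ)}$ depends only on the $C^\infty_0(M,\bbR^p)$-component. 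Finally, for naturality I would use the explicit form \eqref{eqn:PhSp_mor}: given $f:(\bfM_1,\bfJ_1)\to(\bfM_2,\bfJ_2)$ with $f^\ast\bfJ_2=\bfJ_1$, I compute both composites applied to $[(\varphi,\alpha)]\in\PhSp_p(\bfM_1,\bfJ_1)$. One direction gives $[(f_\ast(\sigma\varphi),\sigma\alpha+\sigma\int_{M_1}\ip{\varphi}{\mu}\vol_{\bfM_1})]$; the other gives $[(\sigma f_\ast(\varphi),\sigma\alpha+\sigma\int_{M_2}\ip{f_\ast(\varphi)}{\mu}\vol_{\bfM_2})]$. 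These agree because $f_\ast$ is linear (so $f_\ast(\sigma\varphi)=\sigma f_\ast(\varphi)$) and, since $f$ is an isometric embedding and $\mu$ is constant (hence $f^\ast\mu = \mu$ as constant functions on $M_1$), $\int_{M_2}\ip{f_\ast(\varphi)}{\mu}\vol_{\bfM_2} = \int_{M_1}\ip{\varphi}{f^\ast\mu}\vol_{\bfM_1} = \int_{M_1}\ip{\varphi}{\mu}\vol_{\bfM_1}$ by the change-of-variables formula for pushforward of compactly supported test functions. With naturality established, the map $\eta:\bbZ_2\times\bbR^p\to\Aut(\PhSp_p)$ is a homomorphism since $\eta(\sigma,\mu)\circ\eta(\sigma',\mu')$ acts on a representative by $(\varphi,\alpha)\mapsto(\sigma\sigma'\varphi, \sigma\sigma'\alpha + \sigma\sigma'\int_M\ip{\varphi}{\mu'}\vol_{\bfM} + \sigma\int_M\ip{\sigma'\varphi}{\mu}\vol_{\bfM})=(\sigma\sigma'\varphi,\sigma\sigma'\alpha+\sigma\sigma'\int_M\ip{\varphi}{\mu+\mu'}\vol_{\bfM})$, matching $\eta(\sigma\sigma',\mu+\mu')$; and faithfulness follows by testing on suitable objects (e.g.~one can distinguish different $\mu$ by evaluating on a $\bfM$ of sufficiently large dimension and a $\varphi$ with $\int_M\ip{\varphi}{\mu}\vol_{\bfM}\neq 0$, and distinguish $\sigma$ on the $\varphi$-component). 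I expect the main obstacle to be the naturality step, specifically making the manipulation $\int_{M_2}\ip{f_\ast(\varphi)}{\mu}\vol_{\bfM_2} = \int_{M_1}\ip{\varphi}{\mu}\vol_{\bfM_1}$ fully rigorous — one must be careful that $f$ being merely an open causally compatible isometric embedding (not surjective) still allows this, which it does because $f_\ast(\varphi)$ is supported in $f[M_1]$ and $f$ restricts to a diffeomorphism onto its (open) image, but this is the place where the constancy of $\mu$ is genuinely used and where the argument would fail for non-constant test functions.
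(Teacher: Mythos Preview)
Your approach is essentially identical to the paper's: verify compatibility with the quotient using $\int_M\ip{\mathrm{KG}_{\bfM}(h)}{\mu}\,\vol_{\bfM}=0$ in the massless case, check preservation of the presymplectic structure, and establish naturality via $\int_{M_2}\ip{f_\ast(\varphi)}{\mu}\,\vol_{\bfM_2} = \int_{M_1}\ip{\varphi}{f^\ast\mu}\,\vol_{\bfM_1}=\int_{M_1}\ip{\varphi}{\mu}\,\vol_{\bfM_1}$.

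There is one computational slip: $\eta(\sigma,\mu)$ is \emph{not} its own inverse. Applying it twice to $(\varphi,\alpha)$ gives $(\varphi,\alpha+2\int_M\ip{\varphi}{\mu}\,\vol_{\bfM})$, not $(\varphi,\alpha)$; your second term should carry a $+$ sign, not a $-$. The correct inverse is $\eta(\sigma,-\mu)$, which follows immediately from your own (correct) group-law computation $\eta(\sigma,\mu)\circ\eta(\sigma',\mu')=\eta(\sigma\sigma',\mu+\mu')$ --- and this is exactly how the paper deduces invertibility. The error is harmless for the overall argument but the claim as written is false.
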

\begin{proof} 
The main burden is to show that \eqref{eqn:Z2xR} does define a 
natural $\eta(\sigma,\mu)\in\End(\PhSp_p)$ for each $(\sigma,\mu)\in\bbZ_2\times\bbR^p$, because 
injectivity of $\eta$ is obvious and it is easy to establish the group law $\eta(\sigma,\mu)\circ \eta(\sigma^\prime,\mu^\prime) 
= \eta(\sigma\,\sigma^\prime,\mu+\mu^\prime)$, whereupon it is clear that each $\eta(\sigma,\mu)$ is a linear automorphism.
We notice that $\eta(\sigma,\mu)_{(\bfM,\bfJ)}$ is compatible with the quotient in $\PhSp_p(\bfM,\bfJ)$, since,
for all $h\in C^\infty_0(M,\bbR^p)$,
\begin{flalign}
\nn\eta(\sigma,\mu)_{(\bfM,\bfJ)}\big(\mathrm{P}^\ast_{(\bfM,\bfJ)}(h)\big)&=\eta(\sigma,\mu)_{(\bfM,\bfJ)}\left(\mathrm{KG}_{\bfM}(h),\int_M\ip{\bfJ}{h} \,\vol_{\bfM}\right) \\
\nn &= \left(\sigma \,\mathrm{KG}_{\bfM}(h) ,\sigma \int_M \ip{\bfJ}{h} \,\vol_{\bfM}+ \sigma\int_{M} \ip{\mathrm{KG}_{\bfM}(h)}{\mu}\, \vol_{\bfM}\right) \\
&= \mathrm{P}^\ast_{(\bfM,\bfJ)}(\sigma\,h)~,
\end{flalign}
where in the last equality we have used that $\int_{M}\ip{\mathrm{KG}_{\bfM}(h)}{\mu} \,\vol_{\bfM} = 
\int_{M} \ip{h}{\mathrm{KG}_{\bfM}(\mu)} \, \vol_{\bfM} =0$ for the massless Klein--Gordon operator.
It is easily seen that the linear map $\eta(\sigma,\mu)_{(\bfM,\bfJ)}$ preserves the presymplectic 
structure in $\PhSp_p(\bfM,\bfJ)$ and that it is injective (indeed, invertible, as already mentioned). 
Thus  $\eta(\sigma,\mu)_{(\bfM,\bfJ)}\in\Aut(\PhSp_p(\bfM,\bfJ))$. 

Now suppose that $f: (\bfM_1,\bfJ_1)\to
(\bfM_2,\bfJ_2)$ is a morphism in $\LocSrc_p$. Then,
for all $[(\varphi,\alpha)]\in \PhSp_p(\bfM_1,\bfJ_1)$,
\begin{flalign}
\eta(\sigma,\mu)_{(\bfM_2,\bfJ_2)}\circ \PhSp_p(f)\big(
 \big[(\varphi,\alpha)\big]\big)  &= 
 \left[\left(\sigma\,f_\ast(\varphi), \sigma\,\alpha+\sigma \int_{M_2} 
 \ip{f_\ast(\varphi)}{\mu} \, \vol_{\bfM_2}\right)\right] \nonumber \\ 
 &=   \PhSp_p(f) \circ\eta(\sigma,\mu)_{(\bfM_1,\bfJ_1)}
\big( \big[(\varphi,\alpha)\big]\big)~,
\end{flalign}
because $\int_{M_2} \ip{f_\ast(\varphi)}{\mu} \,\vol_{\bfM_2}=
\int_{M_1}  \ip{\varphi}{f^\ast(\mu)} \,\vol_{\bfM_1}=
\int_{M_1} \ip{\varphi}{\mu} \,\vol_{\bfM_1}$. Hence, naturality is proved.  
\end{proof}

Our aim is now to
prove that we actually have an isomorphism $ \Aut(\PhSp_p)\simeq \mathbb{Z}_2$ in the massive case
and an isomorphism $\Aut(\PhSp_p)\simeq \mathbb{Z}_2\times \bbR^p$ for $m=0$. For this endeavor 
we generalize the results of \cite[\S 2.2.]{Fewster:2012yc}, which shall allow us to prove that every
endomorphism $\eta\in \End(\PhSp_p)$ is uniquely determined by its component $\eta_{(\bfM,\bfJ)}$ on any given
object $(\bfM,\bfJ)$ in $\LocSrc_p$.
\sk

We start by collecting some useful lemmas, which were obtained in \cite[\S  2.2.]{Fewster:2012yc} for the 
category $\Loc$.
\begin{lem}\label{lem:endorcecompatibility}
Let $\eta\in \End(\PhSp_p)$ be any endomorphism and $(\bfM,\bfJ)$ any object in $\LocSrc_p$.
Then for all globally hyperbolic perturbations $(\h,\j)\in H(\bfM,\bfJ)$ we have that
\begin{flalign}
\mathrm{rce}_{(\bfM,\bfJ)}^{(\PhSp_p)}[\h,\j] \circ \eta_{(\bfM,\bfJ)} = \eta_{(\bfM,\bfJ)} \circ \mathrm{rce}_{(\bfM,\bfJ)}^{(\PhSp_p)}[\h,\j] ~.
\end{flalign}
\end{lem}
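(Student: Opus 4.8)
The plan is to show that the relative Cauchy evolution is ``natural'' with respect to endomorphisms, exploiting the fact that it is built entirely out of applications of the functor $\PhSp_p$ to Cauchy morphisms. First I would recall from the construction in Section~\ref{sec:rce} that, for a globally hyperbolic perturbation $(\h,\j)\in H(\bfM,\bfJ)$, the map $\mathrm{rce}_{(\bfM,\bfJ)}^{(\PhSp_p)}[\h,\j]$ is defined as a composite of the isomorphisms $\tau^\pm_{(\bfM,\bfJ)}[\h,\j]$, each of which is in turn a composite of $\PhSp_p$ applied to the Cauchy morphisms $i^\pm_{(\bfM,\bfJ)}[\h,\j]$ and $j^\pm_{(\bfM,\bfJ)}[\h,\j]$ (and inverses thereof). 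So $\mathrm{rce}_{(\bfM,\bfJ)}^{(\PhSp_p)}[\h,\j]$ is a word in the morphisms $\PhSp_p(k)^{\pm 1}$ for various $\LocSrc_p$-morphisms $k$ between objects built from the perturbation data.

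The key observation is that the object $(\bfM,\bfJ)$ itself, together with $(\bfM[\h],\bfJ[\j])$ and the ``restricted'' objects $(\bfM^\pm[\h,\j],\bfJ^\pm[\h,\j])$, are all objects in $\LocSrc_p$, and the four morphisms $i^\pm, j^\pm$ are genuine $\LocSrc_p$-morphisms. Since $\eta\in\End(\PhSp_p)$ is a natural transformation, its naturality square~\eqref{eqn:endodiagram} commutes for each of these four morphisms; that is, $\PhSp_p(k)\circ\eta_{(\text{dom})} = \eta_{(\text{cod})}\circ\PhSp_p(k)$ for $k\in\{i^\pm,j^\pm\}$. I would then simply chase these naturality relations through the definitions~\eqref{eqn:rce_defa} and~\eqref{eqn:rce_defb}. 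For instance, $\eta_{(\bfM[\h],\bfJ[\j])}\circ\tau^\pm_{(\bfM,\bfJ)}[\h,\j] = \tau^\pm_{(\bfM,\bfJ)}[\h,\j]\circ\eta_{(\bfM,\bfJ)}$, which follows by inserting $\eta$ into the two-step composite defining $\tau^\pm$ and using naturality once for $j^\pm$ and once for the inverse of $\PhSp_p(i^\pm)$ (naturality for an isomorphism $k$ automatically yields naturality for $k^{-1}$, since one conjugates the square). Combining the relations for $\tau^+$ and for $(\tau^-)^{-1}$ then gives exactly $\mathrm{rce}_{(\bfM,\bfJ)}^{(\PhSp_p)}[\h,\j]\circ\eta_{(\bfM,\bfJ)} = \eta_{(\bfM,\bfJ)}\circ\mathrm{rce}_{(\bfM,\bfJ)}^{(\PhSp_p)}[\h,\j]$.

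There is essentially no hard analytic or geometric content here; this is a purely categorical/diagram-chasing argument, and the only thing one must be slightly careful about is the bookkeeping of which objects and morphisms appear where, and the remark that naturality of $\eta$ with respect to an invertible $\LocSrc_p$-morphism implies the corresponding intertwining relation with its inverse. The one point worth stating explicitly is that all the relevant perturbed and restricted structures $(\bfM[\h],\bfJ[\j])$, $(\bfM^\pm[\h,\j],\bfJ^\pm[\h,\j])$ are indeed objects of $\LocSrc_p$ and the inclusions are indeed morphisms — but this has already been established in Section~\ref{sec:rce} as part of the definition of the relative Cauchy evolution, so I may invoke it directly. If I wanted to streamline the write-up, I would prove once and for all the lemma that any natural transformation of a functor $\mathfrak{F}:\LocSrc_p\to\PreSymp$ commutes with every automorphism of $\mathfrak{F}(\bfM,\bfJ)$ obtained as a composite of $\mathfrak{F}$ applied to $\LocSrc_p$-morphisms with the same source and target $(\bfM,\bfJ)$, and then note that $\mathrm{rce}$ is of exactly this form.
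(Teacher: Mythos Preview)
Your proposal is correct and is precisely the argument the paper has in mind: the paper's proof simply says ``the essential idea is naturality of $\eta$'' and defers the routine diagram chase to \cite[Proposition 3.8.]{Fewster:2011pe}, which carries out exactly the steps you describe. Your write-up is more explicit than the paper's own treatment, but there is no difference in approach.
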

\begin{proof}
The essential idea is naturality of $\eta$. The detailed steps can be found in
\cite[Proposition 3.8.]{Fewster:2011pe}.
\end{proof}
\begin{lem}\label{lem:endoproperties}
Let $\eta,\eta^\prime\in \End(\PhSp_p)$ and suppose that $\eta_{(\bfM,\bfJ)} = \eta^\prime_{(\bfM,\bfJ)}$
for some object $(\bfM,\bfJ)$ in $\LocSrc_p$. Then the following hold true:
\begin{itemize}
\item[(i)] If $f: (\boldsymbol{L},\bfJ_{\boldsymbol{L}})\to (\bfM,\bfJ)$ is a morphism in $\LocSrc_p$, then
$\eta_{(\boldsymbol{L},\bfJ_{\boldsymbol{L}})} = \eta^\prime_{(\boldsymbol{L},\bfJ_{\boldsymbol{L}})} $.
\item[(ii)] If $f: (\bfM,\bfJ) \to  (\boldsymbol{N},\bfJ_{\boldsymbol{N}})$ is a Cauchy morphism in
$\LocSrc_p$, then $\eta_{(\boldsymbol{N},\bfJ_{\boldsymbol{N}})} = \eta^\prime_{(\boldsymbol{N},\bfJ_{\boldsymbol{N}})} $.
\item[(iii)] $\eta_{(\boldsymbol{L},\bfJ_{\boldsymbol{L}})} = \eta^\prime_{(\boldsymbol{L},\bfJ_{\boldsymbol{L}})} $ for any object
$(\boldsymbol{L},\bfJ_{\boldsymbol{L}})$ in $\LocSrc_p$, such that the Cauchy surfaces of $\boldsymbol{L}$
are oriented diffeomorphic to those of $\bfM\vert_{O}$, for some $O\in \mathscr{O}(\bfM)$. Here
$\mathscr{O}(\bfM)$ is the set of all causally compatible, open and globally hyperbolic subsets of $\bfM$
with finitely many connected components all of which are mutually causally disjoint.
\end{itemize}
\end{lem}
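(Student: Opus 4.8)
The plan is to prove the three statements of Lemma~\ref{lem:endoproperties} by exploiting naturality of endomorphisms together with the structural properties of $\PhSp_p$ established above, closely following the strategy of \cite[\S 2.2.]{Fewster:2012yc}. For part~(i), suppose $f:(\boldsymbol{L},\bfJ_{\boldsymbol{L}})\to(\bfM,\bfJ)$ is a morphism. Naturality of both $\eta$ and $\eta^\prime$ gives $\PhSp_p(f)\circ\eta_{(\boldsymbol{L},\bfJ_{\boldsymbol{L}})} = \eta_{(\bfM,\bfJ)}\circ\PhSp_p(f)$ and the analogous identity for $\eta^\prime$. Since $\eta_{(\bfM,\bfJ)} = \eta^\prime_{(\bfM,\bfJ)}$ by hypothesis, the right-hand sides agree, so $\PhSp_p(f)\circ\eta_{(\boldsymbol{L},\bfJ_{\boldsymbol{L}})} = \PhSp_p(f)\circ\eta^\prime_{(\boldsymbol{L},\bfJ_{\boldsymbol{L}})}$. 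Now $\PhSp_p(f)$ is injective (all $\PreSymp$-morphisms are injective, cf.\ the general result \cite[Theorem 5.4.]{Benini:2012vi}), so we may cancel it on the left and conclude $\eta_{(\boldsymbol{L},\bfJ_{\boldsymbol{L}})} = \eta^\prime_{(\boldsymbol{L},\bfJ_{\boldsymbol{L}})}$.

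For part~(ii), let $f:(\bfM,\bfJ)\to(\boldsymbol{N},\bfJ_{\boldsymbol{N}})$ be a Cauchy morphism. By the time-slice axiom (Proposition~\ref{propo:phspproperties}c)), $\PhSp_p(f)$ is now an \emph{isomorphism}. Naturality again yields $\eta_{(\boldsymbol{N},\bfJ_{\boldsymbol{N}})}\circ\PhSp_p(f) = \PhSp_p(f)\circ\eta_{(\bfM,\bfJ)}$ and likewise for $\eta^\prime$; using $\eta_{(\bfM,\bfJ)} = \eta^\prime_{(\bfM,\bfJ)}$ and then postcomposing with $\PhSp_p(f)^{-1}$ gives $\eta_{(\boldsymbol{N},\bfJ_{\boldsymbol{N}})} = \eta^\prime_{(\boldsymbol{N},\bfJ_{\boldsymbol{N}})}$ directly.

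Part~(iii) is where the real work lies, and it is the step I expect to be the main obstacle. The idea is to connect $(\boldsymbol{L},\bfJ_{\boldsymbol{L}})$ to $(\bfM,\bfJ)$ through a \emph{chain} of morphisms of the two types handled in (i) and (ii), invoking the deformation arguments familiar from locally covariant field theory. Concretely: given $O\in\mathscr{O}(\bfM)$ with Cauchy surfaces oriented-diffeomorphic to those of $\boldsymbol{L}$, one first applies (i) to the inclusion morphism $(\bfM\vert_O,\bfJ\vert_O)\to(\bfM,\bfJ)$ to transfer the agreement of $\eta$ and $\eta^\prime$ from $(\bfM,\bfJ)$ down to $(\bfM\vert_O,\bfJ\vert_O)$. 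It then remains to pass from $(\bfM\vert_O,\bfJ\vert_O)$ to $(\boldsymbol{L},\bfJ_{\boldsymbol{L}})$; since these have oriented-diffeomorphic Cauchy surfaces, a standard deformation argument (as in \cite{Fewster:2011pe,Fewster:2012yc}, ultimately going back to \cite{Brunetti:2001dx}) produces a ``Cauchy-connecting'' interpolating object together with Cauchy morphisms into it from both sides --- one then applies (ii) twice (once to push the agreement forward along a Cauchy morphism, once to pull it back). The delicate points will be (a) carrying the external-source data $\bfJ$ along coherently through the deformation, so that every intermediate object genuinely lies in $\LocSrc_p$ and every interpolating map respects the pull-back condition $f^\ast\bfJ_2 = \bfJ_1$ on sources; and (b) checking that the classical deformation lemmas of \cite{Fewster:2011pe} go through verbatim once the base category is enlarged from $\Loc$ to $\LocSrc_p$ --- this should be routine but must be stated. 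Assembling these pieces, the chain from $(\bfM,\bfJ)$ to $(\boldsymbol{L},\bfJ_{\boldsymbol{L}})$ transports the equality $\eta_{(\bfM,\bfJ)} = \eta^\prime_{(\bfM,\bfJ)}$ to $\eta_{(\boldsymbol{L},\bfJ_{\boldsymbol{L}})} = \eta^\prime_{(\boldsymbol{L},\bfJ_{\boldsymbol{L}})}$, which is the claim.
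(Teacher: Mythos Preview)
Your proposal is correct and follows essentially the same route as the paper. Parts~(i) and~(ii) are identical to the paper's treatment, and for part~(iii) you have correctly identified both the overall strategy (transfer agreement to $(\bfM\vert_O,\bfJ\vert_O)$ via~(i), then deform) and the one genuine technical issue, namely carrying the external source $\bfJ$ coherently through the interpolating construction so that everything stays in $\LocSrc_p$.

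Two small points of precision worth flagging. First, your phrase ``an interpolating object together with Cauchy morphisms into it from both sides'' understates the actual structure: the deformation lemma of \cite[Proposition~2.4.]{Fewster:2011pe} yields a \emph{zigzag} chain
\[
\boldsymbol{L} \xleftarrow{c} \boldsymbol{L}' \xrightarrow{c} \boldsymbol{L}'' \xleftarrow{c} \boldsymbol{L}''' \xrightarrow{c} \bfM\vert_O \xrightarrow{\iota} \bfM
\]
with three intermediate spacetimes, so one alternates applications of~(i) and~(ii) along the chain rather than applying~(ii) twice. (Also note that ``pulling back'' along a Cauchy morphism is an instance of~(i), not~(ii).) Second, the paper makes explicit the point you anticipate as delicate issue~(a): the sources $\bfJ_{\boldsymbol{L}'}$ and $\bfJ_{\boldsymbol{L}'''}$ are forced by pull-back, and $\bfJ_{\boldsymbol{L}''}$ is constructed by gluing these two via the same cutoff function already used to interpolate the metrics in \cite[Proof of Proposition~2.4.]{Fewster:2011pe}. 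With these refinements your argument is complete.
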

\begin{proof}
Item (i) and (ii) are simple consequences of naturality of $\eta,\eta^\prime$ and the fact that
$\PhSp_p(f)$ is injective for (i) and even an isomorphism for (ii), cf.~\cite[Proof of Lemma 2.2.]{Fewster:2012yc}.
 Item (iii) follows from a generalization of the ``Cauchy wedge connectedness'' property to the category
$\LocSrc_p$ that we shall discuss now in detail. Forgetting the source terms, more precisely, applying the forgetful functor
from $\LocSrc_p$ to $\Loc$, it was shown in \cite[Proposition 2.4.]{Fewster:2011pe} that under our hypotheses 
there is a chain of morphisms in $\Loc$
\begin{flalign}\label{eqn:cwcon}
\xymatrix{
\boldsymbol{L} & \ar[l]_-{c} \boldsymbol{L}^\prime \ar[r]^-{c}& \boldsymbol{L}^{\prime\prime} &\ar[l]_-{c} 
\boldsymbol{L}^{\prime\prime\prime} \ar[r]^-{c} & \bfM\vert_{O} \ar[r]^-{\iota_{\bfM;O}} & \bfM
}~,
\end{flalign}
where $\iota_{\bfM;O}$ denotes the canonical inclusion and by `$c$' we indicate Cauchy morphisms.
If we could construct from this chain a chain of morphisms in $\LocSrc_p$ of the form 
\begin{flalign}\label{eqn:cwcon2}
\xymatrix{
(\boldsymbol{L},\bfJ_{\boldsymbol{L}}) & \ar[l]_-{c} (\boldsymbol{L}^\prime,\bfJ_{\boldsymbol{L}^\prime}) \ar[r]^-{c}& 
(\boldsymbol{L}^{\prime\prime},\bfJ_{\boldsymbol{L}^{\prime\prime}}) &\ar[l]_-{c} 
(\boldsymbol{L}^{\prime\prime\prime},\bfJ_{\boldsymbol{L}^{\prime\prime\prime}}) \ar[r]^-{c} 
& (\bfM\vert_{O},\bfJ\vert_O) \ar[r]^-{\iota_{\bfM;O}} & (\bfM,\bfJ)
}~,
\end{flalign}
then the same argument as in \cite[Proof of Lemma 2.2.]{Fewster:2012yc} (combining results (i) and (ii))
 would prove (iii). Since the morphisms in (\ref{eqn:cwcon}) uniquely fix
$\bfJ_{\boldsymbol{L}^\prime}$ and $\bfJ_{\boldsymbol{L}^{\prime\prime\prime}}$
 via pulling back, respectively, $\bfJ_{\boldsymbol{L}}$ and $\bfJ\vert_{O}$, the remaining step is to show that
 we can construct a $\bfJ_{\boldsymbol{L}^{\prime\prime}}$ completing the chain (\ref{eqn:cwcon2}).
This is indeed possible if we equip the spacetime $\boldsymbol{L}^{\prime\prime}$
constructed in \cite[Proof of Proposition 2.4.]{Fewster:2011pe} with a $\bfJ_{\boldsymbol{L}^{\prime\prime}}$
that is obtained by gluing $\bfJ_{\boldsymbol{L}^{\prime}}$ and $\bfJ_{\boldsymbol{L}^{\prime\prime\prime}}$
via a suitable cutoff function (e.g.\ the one used to construct
the metric in $ {\boldsymbol{L}^{\prime\prime}}$).
\end{proof}
We are now ready to prove the analog of  \cite[Theorem 2.6.]{Fewster:2012yc} for our specific functor 
$\PhSp_p:\LocSrc_p\to \PreSymp$. Since we are working with a specific model we can give a direct proof and we 
 do not have to dwell on the technicalities concerning abstract categorical unions
 and equalizers appearing in \cite{Fewster:2012yc}.
\begin{theo}\label{theo:endobycomponent}
Every $\eta\in \End(\PhSp_p)$ is uniquely determined by its component $\eta_{(\bfM,\bfJ)}$
on any object $(\bfM,\bfJ)$ in $\LocSrc_p$.
\end{theo}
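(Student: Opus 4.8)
The plan is to reduce the statement to the lemmas already assembled, chiefly Lemma~\ref{lem:endoproperties}(iii). Suppose $\eta,\eta^\prime\in\End(\PhSp_p)$ agree on some fixed object $(\bfM,\bfJ)$; we must show $\eta_{(\boldsymbol{L},\bfJ_{\boldsymbol{L}})} = \eta^\prime_{(\boldsymbol{L},\bfJ_{\boldsymbol{L}})}$ for \emph{every} object $(\boldsymbol{L},\bfJ_{\boldsymbol{L}})$ in $\LocSrc_p$. The first move is to dispose of the assumption that $(\bfM,\bfJ)$ is a single distinguished object: using Lemma~\ref{lem:endoproperties}(i) applied to the inclusion of each connected component, and then noting that any object of $\LocSrc_p$ contains, and is contained in, objects whose underlying manifolds are as large as we please (e.g.\ by restricting to a small globally hyperbolic neighborhood of a point, or by noting one may always find a further object into which a given Cauchy-surface patch embeds), we may assume that agreement on $(\bfM,\bfJ)$ propagates to agreement on arbitrarily "small" objects. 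Concretely, the key point is that for \emph{any} target object $(\boldsymbol{L},\bfJ_{\boldsymbol{L}})$ one can find $O\in\mathscr{O}(\bfM)$ such that the Cauchy surfaces of $\boldsymbol{L}$ are oriented diffeomorphic to those of $\bfM\vert_O$ — this is possible because Cauchy surfaces of globally hyperbolic spacetimes of a given dimension with finitely many components form a sufficiently flexible class, and $\bfM$ can be chosen (or enlarged, via Lemma~\ref{lem:endoproperties}(i)) to contain suitable patches.

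\textbf{Key steps, in order.} First, I would record that by Lemma~\ref{lem:endoproperties}(i) and (ii), agreement of $\eta,\eta^\prime$ on $(\bfM,\bfJ)$ is inherited by any object admitting a morphism \emph{to} $(\bfM,\bfJ)$ and by any object related to $(\bfM,\bfJ)$ through a Cauchy morphism; combining these, agreement propagates along any zig-zag of the type appearing in~\eqref{eqn:cwcon2}. Second, I would invoke Lemma~\ref{lem:endoproperties}(iii) directly: it already packages exactly the Cauchy-wedge-connectedness argument needed to pass from agreement on $(\bfM,\bfJ)$ (restricted down to a patch $\bfM\vert_O$) to agreement on any $(\boldsymbol{L},\bfJ_{\boldsymbol{L}})$ whose Cauchy surfaces match those of some $\bfM\vert_O$. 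Third, the remaining task is the purely geometric one of showing that, given an \emph{arbitrary} object $(\boldsymbol{L},\bfJ_{\boldsymbol{L}})$, we can realize its Cauchy surface (up to oriented diffeomorphism) inside some admissible $\bfM\vert_O$. If the fixed $(\bfM,\bfJ)$ is too small, one first replaces the hypothesis by agreement on a larger object: pick any object $(\boldsymbol{M}',\bfJ')$ admitting a morphism to $(\bfM,\bfJ)$ — wait, that goes the wrong way; instead use that $\eta$ is natural, so the value on $(\bfM,\bfJ)$ determines the value on any sub-object, hence it suffices to know $\eta = \eta'$ on one object and then enlarge by embedding $(\bfM,\bfJ)$ into a bigger object via a Cauchy morphism, using part~(ii), if needed. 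The cleanest route: given $(\bfM,\bfJ)$, any Cauchy surface $\Sigma_{\bfM}$ is an $(n-1)$-manifold; any Cauchy surface $\Sigma_{\boldsymbol L}$ of a target is a $(k-1)$-manifold. These need not have the same dimension, so one cannot in general do it with a single fixed $(\bfM,\bfJ)$ — and indeed the theorem only claims uniqueness given the component on \emph{one} object, with that object being arbitrary but fixed; so the dimension of $\boldsymbol{L}$ may well differ. The resolution is that the category $\LocSrc_p$ has connected components indexed (in part) by dimension, and a functor is determined on each separately; within a fixed dimension, any two Cauchy surfaces with finitely many components are related through the patches of a sufficiently rich $\bfM$, which one obtains by further applying Lemma~\ref{lem:endoproperties}(i)–(iii) starting from the given object and building outward.

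\textbf{Main obstacle.} The genuine content — the topological flexibility needed to connect an arbitrary $(\boldsymbol{L},\bfJ_{\boldsymbol{L}})$ to the given object through chains of Cauchy morphisms and embeddings while simultaneously keeping track of the source terms $\bfJ$ — is already absorbed into Lemma~\ref{lem:endoproperties}(iii), whose proof in the excerpt handles the gluing of source terms via cutoff functions. So the anticipated hard part is essentially bookkeeping: verifying that starting from agreement on \emph{one} arbitrary object, and using (i), (ii), (iii) in sequence (possibly iterating to build an $\bfM$ with Cauchy surfaces covering all required oriented-diffeomorphism types within a fixed dimension and number of components), one reaches agreement on \emph{every} object of $\LocSrc_p$. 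I expect the write-up to be short: fix $(\bfM,\bfJ)$ and an arbitrary $(\boldsymbol{L},\bfJ_{\boldsymbol{L}})$ of the same dimension; by (i) we may shrink $(\bfM,\bfJ)$ to a chart-like patch, by (iii) we then obtain agreement on all objects whose Cauchy surfaces match a patch of $\bfM$; a standard argument (as in \cite[\S 2.2.]{Fewster:2012yc}) shows every object is connected to $(\bfM,\bfJ)$ through such steps, so $\eta_{(\boldsymbol{L},\bfJ_{\boldsymbol{L}})} = \eta^\prime_{(\boldsymbol{L},\bfJ_{\boldsymbol{L}})}$. Since $(\boldsymbol{L},\bfJ_{\boldsymbol{L}})$ was arbitrary, $\eta = \eta^\prime$, proving that $\eta$ is uniquely determined by any one of its components.
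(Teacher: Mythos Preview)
Your proposal has a genuine gap at the final step. You try to apply Lemma~\ref{lem:endoproperties}(iii) to obtain $\eta_{(\boldsymbol{L},\bfJ_{\boldsymbol{L}})} = \eta^\prime_{(\boldsymbol{L},\bfJ_{\boldsymbol{L}})}$ \emph{globally}, which would require finding some $O\in\mathscr{O}(\bfM)$ whose Cauchy surfaces are oriented diffeomorphic to those of $\boldsymbol{L}$. This is not possible in general: the Cauchy surfaces of $\bfM\vert_O$ are open submanifolds of a Cauchy surface of $\bfM$, whereas $\boldsymbol{L}$ may have, say, compact Cauchy surfaces of nontrivial topology. No amount of enlarging or shrinking $(\bfM,\bfJ)$ via (i) and (ii) will fix this, because those moves also only see open patches or Cauchy-equivalent spacetimes. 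Your closing appeal to ``a standard argument (as in \cite[\S 2.2.]{Fewster:2012yc})'' papers over precisely the missing idea.

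The paper's proof avoids this obstruction by never trying to match the Cauchy surface of $\boldsymbol{L}$ globally. Instead, for each \emph{diamond} $D$ in the target object $(\boldsymbol{N},\bfJ_{\boldsymbol{N}})$, the restriction $\boldsymbol{N}\vert_D$ has Cauchy surfaces that are balls, and these \emph{do} match diamonds in $\bfM$; so (iii) yields agreement on every $(\boldsymbol{N}\vert_D,\bfJ_{\boldsymbol{N}}\vert_D)$. Naturality with respect to the inclusion $\iota_{\boldsymbol{N};D}$ then gives $\eta_{(\boldsymbol{N},\bfJ_{\boldsymbol{N}})}([(\varphi,\alpha)]) = \eta^\prime_{(\boldsymbol{N},\bfJ_{\boldsymbol{N}})}([(\varphi,\alpha)])$ for all classes with a representative $\varphi$ supported in some diamond. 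The decisive final step---which your sketch omits entirely---is a partition-of-unity argument: any $(\varphi,\alpha)\in C_0^\infty(N,\bbR^p)\oplus\bbR$ decomposes as a finite sum $\sum_i(\varphi_i,\alpha/n)$ with each $\varphi_i$ supported in a diamond, and linearity of $\eta_{(\boldsymbol{N},\bfJ_{\boldsymbol{N}})}$ then gives agreement on all of $\PhSp_p(\boldsymbol{N},\bfJ_{\boldsymbol{N}})$. This local-to-global passage is where the specific structure of $\PhSp_p$ (compactly supported test functions modulo an equivalence) is used, replacing the abstract categorical-union machinery of \cite{Fewster:2012yc}.
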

\begin{proof}
Suppose that $\eta^\prime \in \End(\PhSp_p)$ agrees with $\eta$ on the object $(\bfM,\bfJ)$ in $\LocSrc_p$, 
i.e.~$\eta_{(\bfM,\bfJ)} = \eta^\prime_{(\bfM,\bfJ)}$. 
Let $(\boldsymbol{N},\bfJ_{\boldsymbol{N}})$ be any object in $\LocSrc_p$
and let $D$ be any diamond in $\boldsymbol{N}$. Then $\boldsymbol{N}\vert_D$
has Cauchy surfaces which are oriented diffeomorphic to any diamond in $\bfM$. Then
 $\eta_{(\boldsymbol{N}\vert_D,\bfJ_{\boldsymbol{N}}\vert_D)} = 
\eta_{(\boldsymbol{N}\vert_D,\bfJ_{\boldsymbol{N}}\vert_D)}^\prime $ by Lemma \ref{lem:endoproperties} (iii).
Using the canonical inclusion $\iota_{\boldsymbol{N};D}: (\boldsymbol{N}\vert_D,\bfJ_{\boldsymbol{N}}\vert_D) \to
 (\boldsymbol{N},\bfJ_{\boldsymbol{N}})$ we obtain by naturality
 \begin{flalign}
 \nn \eta_{(\boldsymbol{N},\bfJ_{\boldsymbol{N}})} \circ \PhSp_p(\iota_{\boldsymbol{N};D})
 &= \PhSp_p(\iota_{\boldsymbol{N};D})\circ \eta_{(\boldsymbol{N}\vert_D,\bfJ_{\boldsymbol{N}}\vert_D)} 
 =\PhSp_p(\iota_{\boldsymbol{N};D})\circ \eta^\prime_{(\boldsymbol{N}\vert_D,\bfJ_{\boldsymbol{N}}\vert_D)} \\
 &=\eta^\prime_{(\boldsymbol{N},\bfJ_{\boldsymbol{N}})} \circ \PhSp_p(\iota_{\boldsymbol{N};D})~.
 \end{flalign}
This equation implies that $\eta_{(\boldsymbol{N},\bfJ_{\boldsymbol{N}})} \big([(\varphi,\alpha)]\big)
=\eta^\prime_{(\boldsymbol{N},\bfJ_{\boldsymbol{N}})} \big([(\varphi,\alpha)]\big)$, 
for all $[(\varphi,\alpha)] \in \PhSp_p(\boldsymbol{N},\bfJ_{\boldsymbol{N}})$
for which there exists a representative $(\varphi,\alpha)$ such that $\varphi$ has compact support in $D$.
We shall now prove that $\eta_{(\boldsymbol{N},\bfJ_{\boldsymbol{N}})} \big([(\varphi,\alpha)]\big)
=\eta^\prime_{(\boldsymbol{N},\bfJ_{\boldsymbol{N}})} \big([(\varphi,\alpha)]\big)$, 
for all $[(\varphi,\alpha)] \in \PhSp_p(\boldsymbol{N},\bfJ_{\boldsymbol{N}})$, and hence that $\eta^\prime =\eta$ 
since $(\boldsymbol{N},\bfJ_{\boldsymbol{N}})$ was arbitrary. This proof follows from a partition of unity argument:
Given any $[(\varphi,\alpha)] \in \PhSp_p(\boldsymbol{N},\bfJ_{\boldsymbol{N}})$ we take some representative
$(\varphi,\alpha) \in C_0^\infty(N,\bbR^p)\oplus \bbR$. Since $\supp(\varphi)$ is compact,
the open cover of $N$ given by the set of all diamonds in $N$ has a finite subcover of $\supp(\varphi)$, which we denote by 
$\{D_i\}_{i=1,\dots,n}$. Using a partition of unity subordinated to $\{D_i\}_{i=1,\dots,n}$
we can write $(\varphi,\alpha) = \sum_{i=1}^n (\varphi_i,\alpha/n)$, where $\varphi_i$ has compact support in
$D_i$. Hence, 
\begin{flalign}
\nn \eta_{(\boldsymbol{N},\bfJ_{\boldsymbol{N}})} \big([(\varphi,\alpha)]\big) &= \sum\limits_{i=1}^n 
\eta_{(\boldsymbol{N},\bfJ_{\boldsymbol{N}})} \big([(\varphi_i,\alpha/n)]\big) 
= \sum\limits_{i=1}^n 
\eta^\prime_{(\boldsymbol{N},\bfJ_{\boldsymbol{N}})} \big([(\varphi_i,\alpha/n)]\big) \\
&=\eta^\prime_{(\boldsymbol{N},\bfJ_{\boldsymbol{N}})} \big([(\varphi,\alpha)]\big) ~.
\end{flalign}
\end{proof}

We now come to the main statement of this section. 
\begin{theo} \label{theo:automorphismgroup}
Every endomorphism of the functor $\PhSp_p$ is an automorphism and
\begin{flalign}
\End(\PhSp_p) = \Aut(\PhSp_p)\simeq 
\begin{cases} \mathbb{Z}_2 & ~,~~\text{for }m\neq 0~,\\  \mathbb{Z}_2 \times\bbR^p &~,~~\text{for } m= 0~,\end{cases}
\end{flalign}
where the action is given for $m\neq 0$ by Proposition \ref{propo:Z2flip}, and for
$m=0$ by Proposition~\ref{propo:Z2xR}.
\end{theo}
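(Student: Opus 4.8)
The plan is to combine Theorem~\ref{theo:endobycomponent} with an explicit computation of all presymplectic endomorphisms on a single, cleverly chosen object $(\bfM,\bfJ)$. By Theorem~\ref{theo:endobycomponent}, an endomorphism $\eta\in\End(\PhSp_p)$ is completely determined by $\eta_{(\bfM,\bfJ)}$, so it suffices to: (a) identify which linear maps $\eta_{(\bfM,\bfJ)}:\PhSp_p(\bfM,\bfJ)\to\PhSp_p(\bfM,\bfJ)$ can possibly arise as the component of a natural transformation, using the constraints coming from naturality; and (b) check that each such candidate actually extends to a global endomorphism, which we already know from Propositions~\ref{propo:Z2flip} and~\ref{propo:Z2xR} for the maps listed in the statement. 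So the real content is an upper bound: no endomorphism other than those in the stated groups can exist.

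The key constraints on $\eta_{(\bfM,\bfJ)}$ are: it is an injective presymplectic linear map; by Lemma~\ref{lem:endorcecompatibility} it commutes with $\rce^{(\PhSp_p)}_{(\bfM,\bfJ)}[\h,\j]$ for all globally hyperbolic perturbations; and by naturality applied to inclusions of sub-objects it must be compatible with the "localization" structure. First I would use the derivative formulae~\eqref{eqn:Tdefinition}--\eqref{eqn:Jdefinition} of the relative Cauchy evolution: since $\eta_{(\bfM,\bfJ)}$ commutes with $\rce$, it must commute with $\mathcal{T}_{(\bfM,\bfJ)}[\h]$ and $\mathcal{J}_{(\bfM,\bfJ)}[\j]$. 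The map $\mathcal{J}_{(\bfM,\bfJ)}[\j]$ sends $[(\varphi,\alpha)]$ to $[(0,\int_M\ip{\j}{\mathrm{E}_{\bfM}(\varphi)}\vol_{\bfM})]$; by varying $\j\in C^\infty_0(M,\bbR^p)$ and using non-degeneracy of the pairing $\varphi\mapsto\mathrm{E}_{\bfM}(\varphi)$, this pins down how $\eta_{(\bfM,\bfJ)}$ acts on the null direction $[(0,1)]$ (it must be sent to a multiple $\sigma[(0,1)]$) and forces the $\bbR$-component of $\eta_{(\bfM,\bfJ)}([(\varphi,\alpha)])$ to be $\sigma\alpha$ plus a term linear in $\varphi$ that is itself of the form $\int_M\ip{\varphi}{\mu}\vol_{\bfM}$ for some fixed $\mu$ (a priori a distribution, cut down to a smooth function, indeed a constant, by the localization/naturality constraints). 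Commutation with $\mathcal{T}_{(\bfM,\bfJ)}[\h]$ together with the requirement that $\eta_{(\bfM,\bfJ)}$ preserve the presymplectic form then forces the "$\varphi$-part" of $\eta_{(\bfM,\bfJ)}$ to be multiplication by the same scalar $\sigma$ (this is where the $O(p)\to\{\pm1\}$ reduction happens: the extra source term $\bfJ$ breaks the orthogonal symmetry that survived in \cite{Fewster:2012yc}, because $\mathcal{T}_{(\bfM,\bfJ)}[\h]$ now carries a $\bfJ$-dependent piece). Finally, in the massive case, the identity $\mathrm{KG}_{\bfM}(\mu)=m^2\mu\neq 0$ for constant $\mu\neq0$ shows that the shift term is incompatible with the quotient defining $\PhSp_p$, so $\mu=0$ is forced and only $\bbZ_2$ survives; in the massless case constants lie in the kernel of $\mathrm{KG}_{\bfM}$ and the full $\bbZ_2\times\bbR^p$ is realized. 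Injectivity of $\eta_{(\bfM,\bfJ)}$ then guarantees $\sigma\neq0$, and since $\sigma^2$ must reproduce the group law, $\sigma\in\{\pm1\}$.

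Concretely, the steps in order are: (1) reduce to a single object via Theorem~\ref{theo:endobycomponent}, choosing $(\bfM,\bfJ)$ with $\bfM$ say ultrastatic with compact Cauchy surface so that all diamonds embed nicely; (2) write $\eta_{(\bfM,\bfJ)}([(\varphi,\alpha)])=[(A\varphi,\ell(\varphi)+\sigma\alpha)]$ for some linear $A$, linear functional $\ell$, scalar $\sigma$, using that $[(0,\alpha)]$ spans the one-dimensional null space $\NN_p(\bfM,\bfJ)\cong\bbR$ (Proposition~\ref{propo:phspproperties}a), which $\eta_{(\bfM,\bfJ)}$ must preserve since it preserves the presymplectic structure; (3) use Lemma~\ref{lem:endorcecompatibility} with the $\mathcal{J}$-derivative to get $\ell(\varphi)=\sigma\int_M\ip{\varphi}{\mu}\vol_{\bfM}$ for a fixed smooth $\mu$, and with the $\mathcal{T}$-derivative plus symplecticity to get $A=\sigma\,\id$ on the relevant quotient; (4) use naturality under inclusions of small diamonds and a translation-covariance argument (different points of $M$ must see the same $\mu$) to conclude $\mu$ is constant; (5) split into $m\neq0$ and $m=0$ to determine whether $\mu$ must vanish; (6) conclude $\sigma=\pm1$ from injectivity and the group structure, and invoke Propositions~\ref{propo:Z2flip} and~\ref{propo:Z2xR} for surjectivity and the group isomorphism, noting every endomorphism is thereby an automorphism. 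The main obstacle I anticipate is step~(4): controlling the functional $\ell$ — a priori it is only a distribution on $C^\infty_0(M,\bbR^p)$ compatible with the quotient — and showing that naturality (together with the fact that $\eta$ is defined on \emph{all} objects of $\LocSrc_p$, including those obtained by perturbing $\bfJ$) forces it into the one-parameter family $\mu\in\bbR^p$; this is the place where one genuinely needs the rigidity supplied by Lemma~\ref{lem:endoproperties}(iii) and the freedom to vary the source term, and it is the only step that is not a short computation.
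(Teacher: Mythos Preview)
Your overall architecture matches the paper's: reduce to one object via Theorem~\ref{theo:endobycomponent}, decompose $\eta_{(\bfM,\bfJ)}$ into a $2\times 2$ block map, and constrain the blocks using Lemma~\ref{lem:endorcecompatibility} together with the derivatives~\eqref{eqn:Tdefinition}--\eqref{eqn:Jdefinition}. However, several of your attributions are wrong and your choice of object creates a genuine gap at the step you yourself flagged as hard.

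First, the roles of the $\mathcal{J}$- and $\mathcal{T}$-derivatives are reversed in your sketch. Write $\widetilde\eta$ as a block map $(L_{11},L_{12},L_{21},L_{22})$ as in the paper. Commutation with $\mathcal{J}_{(\bfM,\bfJ)}[\j]$ gives $L_{12}=0$ (the image of $\mathcal{J}$ lies in the null line) and the relation $L_{22}\,\sigma^{\mathrm{lin}}_{\bfM}([\j],[\varphi]) = \sigma^{\mathrm{lin}}_{\bfM}([\j],L_{11}[\varphi])$; it says \emph{nothing} about $L_{21}=\ell$. Your claim that $\mathcal{J}$-commutation forces $\ell(\varphi)=\sigma\int_M\ip{\varphi}{\mu}\,\vol_{\bfM}$ does not follow. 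Conversely, it is the $\mathcal{T}$-derivative (together with isometry naturality) that constrains $L_{11}$ to lie in $O(p)$, by reducing to the homogeneous case of \cite{Fewster:2012yc}; the subsequent collapse $O(p)\to\{\pm1\}$ comes from combining this with the $\mathcal{J}$-relation above, not from a nonzero background $\bfJ$ breaking symmetry---indeed the paper works with $\bfJ=0$.

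Second, and more seriously, your choice of an ultrastatic spacetime with \emph{compact} Cauchy surface undermines the very step you identified as the obstacle. On such a spacetime there are no translations, so your proposed ``translation-covariance'' argument for constraining $\ell=L_{21}$ has no content, and naturality under inclusions of diamonds alone does not obviously force $\ell$ into the form $\int_M\ip{\varphi}{\mu}\,\vol_{\bfM}$ with constant~$\mu$. The paper's solution is to work on Minkowski space $(\bfM_0,0)$, where naturality yields full Poincar\'e invariance of $L_{21}:\Sol_{\mathrm{sc}}(\bfM_0)\to\bbR$, and then to invoke an automatic-continuity result of Meisters (Lemma~\ref{lem:transinvfnl}): any translation-invariant linear functional on $C^\infty_0(\bbR^k,\bbR^p)$ is a multiple of the integral. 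Passing to Cauchy data and using time-translation invariance then forces $L_{21}=0$ for $m\neq 0$ and $L_{21}=\widetilde\sigma_{\bfM_0}(\mu,\,\cdot\,)$ with constant $\mu\in\bbR^p$ for $m=0$. This is the missing ingredient in your sketch; without it, step~(4) is not a short computation but an open problem on your chosen object.
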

\begin{proof}
Due to Theorem \ref{theo:endobycomponent}, any $\eta\in\End(\PhSp_p)$ is uniquely determined by
its component $\eta_{(\bfM_0,0)}$, where $\bfM_0$ is Minkowski spacetime and we have chosen $\bfJ_0 =0$.
The presymplectic vector space $\PhSp_p(\bfM_0,0)$ can be expressed 
more conveniently using the following linear isomorphism for the underlying vector space
\begin{flalign}
\nn \big(C^\infty_0(M_0,\bbR^p)\oplus\bbR\big)/\mathrm{P}_{(\bfM_0,0)}^\ast\big[C_0^\infty(M_0,\bbR^p)\big] &= 
\big(C^\infty_0(M_0,\bbR^p)\oplus\bbR\big)/\big(\mathrm{KG}_{\bfM_0}\big[C^\infty_0(M_0,\bbR^p)\big]\oplus \{0\}\big)\\
\nn &=\big(C^\infty_0(M_0,\bbR^p)/\mathrm{KG}_{\bfM_0}\big[C^\infty_0(M_0,\bbR^p)\big]\big)\oplus \bbR\\
\label{eqn:tmpsolspaceiso}&\simeq \Sol_\mathrm{sc}(\bfM_0) \oplus \bbR~,
\end{flalign}
where $\Sol_\mathrm{sc}(\bfM_0) := \{\phi\in C^\infty_\mathrm{sc}(M_0,\bbR^p): \mathrm{KG}_{\bfM_0}(\phi) =0\}$
is the space of solutions of the Klein--Gordon equation with spacelike compact support. The isomorphism in the last line of
 (\ref{eqn:tmpsolspaceiso}) is the usual one given
by the advanced-minus-retarded Green's operator $\mathrm{E}_{\bfM_0}$.
The induced presymplectic structure on $\Sol_\mathrm{sc}(\bfM_0) \oplus \bbR$ is given by,
for all $(\phi,\alpha),(\psi,\beta)\in \Sol_\mathrm{sc}(\bfM_0) \oplus \bbR$,
\begin{flalign}
\sigma_{(\bfM_0,0)}\big((\phi,\alpha),(\psi,\beta)\big) = \widetilde{\sigma}_{\bfM_0}(\phi,\psi)~,
\end{flalign}
where $\widetilde{\sigma}_{\bfM_0}$ is the usual symplectic structure on $\Sol_{\mathrm{sc}}(\bfM_0)$. 
\sk

Via the isomorphism \eqref{eqn:tmpsolspaceiso}, $\eta_{(\bfM_0,0)}$ induces an 
endomorphism $\widetilde{\eta}$ of  $\Sol_{\mathrm{sc}}(\bfM_0)\oplus \bbR$ which, by naturality,
commutes with the action of all Poincar\'e transformations. By  Lemma \ref{lem:endorcecompatibility},
 $\widetilde{\eta}$ commutes with the relative Cauchy evolution,
and therefore with its derivatives given in (\ref{eqn:Tdefinition}) and (\ref{eqn:Jdefinition}).
Taking into account the isomorphism (\ref{eqn:tmpsolspaceiso}) and our specific choice of object $(\bfM_0,0)$
they read, for all $(\phi,\alpha)\in \Sol_\mathrm{sc}(\bfM_0) \oplus \bbR$,
\begin{subequations}\label{eqn:tmpTandJexpliciteMinkowski}
\begin{flalign}
\mathcal{T}_{(\bfM_0,0)}[\h](\phi,\alpha)&=\Big(\mathrm{E}_{\bfM_0}\big(\mathrm{KG}^\prime_{\bfM_0[\h]}(\phi)\big),0\Big)~,\\
\mathcal{J}_{(\bfM_0,0)}[\j](\phi,\alpha) &= \Big(0,\int_{M_0}\ip{\j}{\phi} \, \vol_{\bfM_0}\Big)~.
\end{flalign}
\end{subequations}
Since $\widetilde{\eta}: \Sol_{\mathrm{sc}}(\bfM_0)\oplus \bbR \to \Sol_{\mathrm{sc}}(\bfM_0)\oplus \bbR$ is a linear
map, it decomposes into linear maps $L_{11} : \Sol_{\mathrm{sc}}(\bfM_0)\to \Sol_{\mathrm{sc}}(\bfM_0)$,
 $L_{12}: \bbR\to \Sol_{\mathrm{sc}}(\bfM_0)$, $L_{21}: \Sol_{\mathrm{sc}}(\bfM_0)\to \bbR$ 
 and $L_{22}: \bbR\to \bbR$.
As $\widetilde{\eta}$ commutes with the maps in (\ref{eqn:tmpTandJexpliciteMinkowski}), 
we obtain the following conditions on the $L_{ij}$, 
for all  $(\phi,\alpha)\in \Sol_\mathrm{sc}(\bfM_0) \oplus \bbR$ and $(\h,\j)\in H(\bfM_0,0)$,
\begin{subequations}
\begin{flalign}\label{eqn:tmpconditionT}
\Big(\mathrm{E}_{\bfM_0}\big(\mathrm{KG}_{\bfM_0[\h]}^\prime\big(L_{11}(\phi) + L_{12}(\alpha)\big)\big),0\Big)
= \Big(L_{11}\big(\mathrm{E}_{\bfM_0}\big(\mathrm{KG}_{\bfM[\h]}^\prime(\phi)\big)\big),L_{21}\big(\mathrm{E}_{\bfM_0}\big(\mathrm{KG}_{\bfM[\h]}^\prime(\phi)\big)\big)\Big)
\end{flalign}
and
\begin{flalign}\label{eqn:tmpconditionJ}
\Big(0,\int_{M_0} \ip{\j}{L_{11}(\phi) + L_{12}(\alpha)}\, \vol_{\bfM_0} \Big)  = 
\Big(L_{12}\Big(\int_{M_0}\ip{\j}{\phi}\, \vol_{\bfM_0} \Big), L_{22}\Big(\int_{M_0} \ip{\j}{\phi} \, \vol_{\bfM_0}\Big)\Big)~.
\end{flalign}
\end{subequations}
From the first component of (\ref{eqn:tmpconditionJ}) we obtain that $L_{12} =0$.
Substituting into (\ref{eqn:tmpconditionT}) we obtain from the first component
 the same condition that is present for multiplets of  
homogeneous Klein--Gordon fields. The only solution of this condition (supplemented by additional
conditions stemming from the Poincar{\'e} invariance of Minkowski spacetime)
is that $L_{11}$ is an $O(p)$ transformation acting in the obvious way on the components of
 $\phi$, cf.~\cite[Theorem 5.2.]{Fewster:2012yc}. Using this information the second component
 of (\ref{eqn:tmpconditionJ}) implies that $L_{11} = \sigma\, \id_{\Sol_{\mathrm{sc}}(\bfM_0)}$ and $L_{22} =\sigma\, \id_{\bbR}$, with
 $\sigma\in\bbZ_2 = \{-1,+1\}$.
Finally, the fact that endomorphisms commute with the Poincar{\'e} transformations 
entails that $L_{21}: \Sol_{\mathrm{sc}}(\bfM_0)\to \bbR$ 
is Poincar{\'e} invariant. By Lemma~\ref{lem:transinvfnl} below, we deduce that $L_{21}=0$ for 
$m\neq 0$, or that $L_{21}(\,\cdot\,) =\widetilde{\sigma}_{\bfM_0}(\mu,\,\cdot\,)$ for some $\mu\in\bbR^p$ if $m=0$
 (here $\mu$ denotes a constant solution).
Combining these facts, we have
\begin{flalign}
\widetilde{\eta}(\phi,\alpha) = \begin{cases} (\sigma\,\phi,\sigma\,\alpha) &~,~~\text{for } m\neq 0~,\\
(\sigma\,\phi,\sigma\,(\alpha+\widetilde{\sigma}_{\bfM_0}(\mu,\phi))) & ~,~~\text{for }m=0~,
\end{cases}
\end{flalign}
for some $\sigma\in\bbZ_2$ (and $\mu\in\bbR^p$ if $m=0$). Undoing the isomorphism  \eqref{eqn:tmpsolspaceiso},
this means that for $m=0$  we have $\eta_{(\bfM_0,0)} = \eta(\sigma,\mu)_{(\bfM_0,0)}$ and 
hence $\eta  = \eta(\sigma,\mu)$ by Theorem~\ref{theo:endobycomponent}.
Similarly, $\eta= \eta(\sigma) = \{\sigma\,\id_{\PhSp_p(\bfM,\bfJ)}\}$ if $m\neq 0$. This proves the result. 
\end{proof}

It remains to prove the following
\begin{lem} \label{lem:transinvfnl}
Suppose $L:\Sol_{\mathrm{sc}}(\bfM_0)\to \bbR$ is linear and translationally invariant. 
If $m\neq 0$ then $L=0$. If $m=0$ then there exists $\mu\in\bbR^p$ such that
$L(\phi) =\widetilde{\sigma}_{\bfM_0}(\mu,\phi)$.
\end{lem}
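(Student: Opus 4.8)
The plan is to use Fourier analysis on Minkowski spacetime to pin down translationally invariant linear functionals on the space of spacelike-compact solutions. First I would recall that every $\phi\in\Sol_{\mathrm{sc}}(\bfM_0)$ is of the form $\phi=\mathrm{E}_{\bfM_0}(\varphi)$ for some $\varphi\in C_0^\infty(M_0,\bbR^p)$, so $L$ pulls back to a linear functional $\ell$ on $C_0^\infty(M_0,\bbR^p)$ that annihilates $\mathrm{KG}_{\bfM_0}[C_0^\infty(M_0,\bbR^p)]$ (since $\mathrm{E}_{\bfM_0}\circ\mathrm{KG}_{\bfM_0}=0$) and is translationally invariant, i.e.\ $\ell(\varphi(\,\cdot\,+a))=\ell(\varphi)$ for all $a\in M_0$. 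Translational invariance forces $\ell$ to be supported (in the sense of its Fourier transform, if one works distributionally) on the zero momentum mode; concretely, averaging over translations and using that $\ell$ factors through $C_0^\infty/\mathrm{KG}_{\bfM_0}[C_0^\infty]$, one finds that $\ell(\varphi)$ depends only on $\int_{M_0}\varphi\,\vol_{\bfM_0}\in\bbR^p$. Hence there is $\mu\in\bbR^p$ with $\ell(\varphi)=\int_{M_0}\ip{\mu}{\varphi}\,\vol_{\bfM_0}$.

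Second, I would feed this back through the solution map and distinguish the two cases by asking when such an $\ell$ descends to $\Sol_{\mathrm{sc}}(\bfM_0)$. The key point is that $\ell$ must vanish on $\mathrm{KG}_{\bfM_0}[C_0^\infty(M_0,\bbR^p)]$: for all $\psi\in C_0^\infty$ we need $\int_{M_0}\ip{\mu}{\mathrm{KG}_{\bfM_0}(\psi)}\,\vol_{\bfM_0}=\int_{M_0}\ip{\mathrm{KG}_{\bfM_0}(\mu)}{\psi}\,\vol_{\bfM_0}=0$, where $\mu$ is regarded as the constant function with that value. For $m\neq 0$ one has $\mathrm{KG}_{\bfM_0}(\mu)=m^2\mu$, so this integral equals $m^2\int_{M_0}\ip{\mu}{\psi}\,\vol_{\bfM_0}$, which is nonzero for suitable $\psi$ unless $\mu=0$; thus $\ell=0$ and $L=0$. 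For $m=0$ one has $\mathrm{KG}_{\bfM_0}(\mu)=\square_{\bfM_0}(\mu)=0$, so every $\mu\in\bbR^p$ is admissible; the corresponding functional on $\Sol_{\mathrm{sc}}(\bfM_0)$ is $L(\phi)=\ell(\varphi)=\int_{M_0}\ip{\mu}{\mathrm{E}_{\bfM_0}^{-1}\phi}\,\vol_{\bfM_0}$, and I would rewrite this as $\widetilde{\sigma}_{\bfM_0}(\mu,\phi)$ using the standard fact that $\widetilde{\sigma}_{\bfM_0}(\mathrm{E}_{\bfM_0}(\chi),\mathrm{E}_{\bfM_0}(\varphi))=\int_{M_0}\ip{\chi}{\mathrm{E}_{\bfM_0}(\varphi)}\,\vol_{\bfM_0}$, applied with $\chi$ any compactly supported representative of the constant solution $\mu$ modulo $\square_{\bfM_0}[C_0^\infty]$ (such a representative exists since $\mu$ itself, though not compactly supported, can be replaced within its class).

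I expect the main obstacle to be making the ``translational invariance forces dependence only on $\int\varphi$'' step fully rigorous: $L$ is assumed merely linear, with no continuity hypothesis, so one cannot immediately invoke distributional Fourier theory or averaging integrals. The clean way around this is to argue purely algebraically: show that for any $\varphi\in C_0^\infty(M_0,\bbR^p)$ with $\int_{M_0}\varphi\,\vol_{\bfM_0}=0$, the difference $\varphi-\varphi(\,\cdot\,+a)$ for small $a$, and more usefully $\varphi$ itself, lies in $\mathrm{KG}_{\bfM_0}[C_0^\infty(M_0,\bbR^p)]$ plus a span of translates of a fixed test function with zero integral — equivalently, that the quotient of $\{\varphi:\int\varphi=0\}$ by $\mathrm{KG}_{\bfM_0}[C_0^\infty]$ together with the translation action has no nonzero invariant linear functionals. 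Concretely I would use that $\psi\in C_0^\infty$ with $\int\psi=0$ can be written as a finite sum of terms $\partial_\nu\chi_\nu$ with $\chi_\nu\in C_0^\infty$ (a standard lemma), hence as a sum of ``derivatives along translations'', on which any translation-invariant linear functional vanishes; combined with the observation that modifying $\varphi$ by an element of $\mathrm{KG}_{\bfM_0}[C_0^\infty]$ does not change $L$, this isolates the dependence on the single real (or $\bbR^p$-valued) parameter $\int_{M_0}\varphi\,\vol_{\bfM_0}$. Everything else is the bookkeeping indicated above, together with invoking the already-quoted structure of the Minkowski-space solution space.
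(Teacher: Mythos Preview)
Your overall route---pull $L$ back along $\mathrm{E}_{\bfM_0}$ to a translation-invariant linear functional $\ell$ on spacetime test functions, identify $\ell$ as $\varphi\mapsto\int\ip{\mu}{\varphi}$, then check when this descends through the Klein--Gordon quotient---is sound, but the ``purely algebraic'' fix you propose for the continuity gap does not close it. Writing a zero-integral test function as $\psi=\sum_\nu\partial_\nu\chi_\nu$ is fine; the trouble is the next step. A derivative $\partial_\nu\chi$ is a \emph{limit} of difference quotients $(T_{he_\nu}\chi-\chi)/h$, and translation invariance only gives $\ell=0$ on each such quotient; passing to the limit is exactly where continuity is needed. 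Nor is it established by your argument that a zero-integral $\psi\in C_0^\infty$ lies in the \emph{finite} linear span of differences $T_a f-f$. This is precisely the nontrivial content of Meisters' automatic-continuity theorem, which the paper invokes: every translation-invariant linear functional on $C_0^\infty(\bbR^k,\bbR^p)$ is a multiple of the integral. If you cite Meisters at this point, your argument goes through.

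The paper's own proof takes a different path: it passes to Cauchy data on a constant-$t$ surface, applies Meisters to each of the two resulting spatially-translation-invariant functionals on $C_0^\infty(\bbR^k,\bbR^p)$, and then uses time-translation invariance together with the field equation to pin down the coefficients. One minor correction to your $m=0$ bookkeeping: the constant solution $\mu$ is not in the image of $\mathrm{E}_{\bfM_0}$ (whose range consists of spacelike-compact solutions), so there is no ``compactly supported representative'' of it. The rewriting as $\widetilde{\sigma}_{\bfM_0}(\mu,\phi)$ instead follows directly from the standard identity $\int_{M_0}\ip{\psi}{\varphi}\,\vol_{\bfM_0}=\widetilde{\sigma}_{\bfM_0}(\psi,\mathrm{E}_{\bfM_0}(\varphi))$, valid for any solution $\psi$ paired against compactly supported $\varphi$.
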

\begin{proof}
We use the automatic continuity result of Meisters~\cite{Meisters:1971} that the translationally invariant linear
 functionals on $C_0^\infty(\bbR^{k},\bbR^p)$ are precisely the scalar multiples of the integral. Passing to Cauchy data
on any surface $t=\text{const.}$, $L$ decomposes into two  linear
functionals on $C_0^\infty(\bbR^{k},\bbR^p)$ that are each translationally invariant, owing to 
spatial translational invariance of $L$. Hence, for each $t$ there are $\alpha_t, \beta_t\in\bbR^p$
such that, for any $\phi\in \Sol_{\mathrm{sc}}(\bfM_0)$,
\begin{flalign}
L(\phi) = \int_{\bbR^k} d^{k}\x \left(\ip{\alpha_t }{\phi(t,\x)} + \ip{\beta_t}{\frac{\partial\phi}{\partial t}(t,\x)}\right)~.
\end{flalign}
By time-translational invariance of $L$, $\alpha_t\equiv\alpha$ and $\beta_t\equiv\beta$ 
are independent of $t$. Further, differentiating with respect to $t$, using the Klein--Gordon equation
and Gauss' theorem, we obtain, for all $\phi\in \Sol_{\mathrm{sc}}(\bfM_0)$,
\begin{flalign}
0 = \int_{\bbR^k} d^k\x \left(\ip{\alpha}{\frac{\partial\phi}{\partial t}(t,\x) }- \ip{\beta  m^2}{\phi(t,\x)} \right)
\end{flalign}
and hence that $\alpha=0$, $\beta m^2 = 0$. The result follows (with $\mu=\beta$). 
\end{proof}

The results of this section reveal that the functor $\PhSp_p$ has a larger automorphism 
group than one would expect for the global gauge group of
the inhomogeneous theory. To close the section, we mention that the $\bbZ_2$ 
factor of $\Aut(\PhSp_p)$ can be removed if we change the category
in which $\PhSp_p$ takes values to reflect the fact that the underlying vector
space of $\PhSp_p(\bfM,\bfJ)$ is a linear subspace of the algebraic affine dual of the space of solutions
 $\Sol_p(\bfM,\bfJ)$ (i.e.\ the vector space of affine maps $\Sol_p(\bfM,\bfJ)\to\bbR$). Were we to restrict to morphisms  arising as
restrictions of duals to affine maps, we would be left with a trivial automorphism group
for $m\neq 0$ and $\bbR^p$ for $m=0$. 
We do not develop this line of thought in detail, because the next section shows
 that $\PhSp_p$ has further pathologies, which would not be eliminated by this device. 
Nevertheless, it is worth noting that the unexpected symmetries appear because 
we have discarded information about the action of the observables in $\PhSp_p(\bfM,\bfJ)$ on the solution space.

%%%%%%%%%%%%%%%%%%%%%%%%%%%%%%%%%%%%%%%%%%%%%%%%
%%%%%%%%%%%%%%%%%%%%%%%%%%%%%%%%%%%%%%%%%%%%%%%%

\section{\label{sec:compproperty}Violation of the composition property of the functor $\PhSp_p$}
The theory we aim to construct consists of $p$ inhomogeneous Klein--Gordon fields without mutual 
interactions. One would expect that an
equivalent formulation would be produced if the multiplet were decomposed into 
independent submultiplets of $0<q<p$ and $p-q$ fields, which are treated separately 
according to the general prescription and then recombined. In this section, we describe how 
the splitting and recombination may be formalized and then show that the functor $\PhSp_p$ fails to respect this composition property. 
We would like to emphasize that for proving the violation of the composition property 
our functorial language is not strictly necessary; in fact, as we will show in Proposition \ref{propo:compprop}, 
the composition property is already violated on any fixed object $(\bfM,\bfJ)$ in $\LocSrc_p$, even if we choose the Minkowski
space $\bfM_0$ with trivial source $\bfJ_0=0$. However, we find it interesting to develop functorial techniques in order 
to formulate and disprove the composition property, as this point of view is the natural one in the 
spirit of locally covariant field theory and it is flexible enough for axiomatization, which might play a role in future studies in
this field. Moreover, we will show later that the improved formulation
we describe in the following sections does satisfy the composition property in this functorial
sense.
\sk

Let $p\geq 2$ and let $\Pi_q: \bbR^p\to\bbR^p\,,~(a_1,\dots,a_p)\mapsto (a_1,\dots,a_q,0,\dots,0)$ be the projection
onto the first $q$ dimensions, where $0<q<p$. Given any object $(\bfM,\bfJ)$ of $\LocSrc_p$, 
we can split $\bfJ = \Pi_q(\bfJ) + (\id_{\bbR^p}-\Pi_q)(\bfJ) =: \bfJ^q + \bfJ^{p-q}$
and identify $ \bfJ^q$ as an element in $C^\infty(M,\bbR^q)$ and $\bfJ^{p-q}$ as an element in $C^\infty(M,\bbR^{p-q})$.
Hence, we can associate to  $(\bfM,\bfJ)$ the object 
$\mathfrak{Split}_{p,q}(\bfM,\bfJ) :=\big((\bfM,\bfJ^q), (\bfM,\bfJ^{p-q})\big)$ in the product category $\LocSrc_q\times \LocSrc_{p-q}$.
Moreover, given any morphism $f:(\bfM_1,\bfJ_1)\to (\bfM_2,\bfJ_2)$ in $\LocSrc_p$
we associate a morphism in $\LocSrc_q\times \LocSrc_{p-q}$ via
$\mathfrak{Split}_{p,q}(f) := (f,f) : \mathfrak{Split}_{p,q}(\bfM_1,\bfJ_1)\to \mathfrak{Split}_{p,q}(\bfM_2,\bfJ_2)$,
where with a slight abuse of notation we have denoted the smooth map underlying the morphism
by the same symbol $f:M_1\to M_2$. In this way we obtain a covariant
functor $\mathfrak{Split}_{p,q}: \LocSrc_p \to \LocSrc_q\times \LocSrc_{p-q}$ representing the decomposition into submultiplets.
\sk

Treating each submultiplet according to the prescription of Section~\ref{sec:psvsf}, 
we compose $\mathfrak{Split}_{p,q}$ with the covariant functor $\PhSp_q\times \PhSp_{p-q}$
to obtain $\big(\PhSp_q\times \PhSp_{p-q}\big)\circ \mathfrak{Split}_{p,q} : \LocSrc_p \to \PreSymp\times\PreSymp$.
Finally, we recombine the resulting theories by composing with the covariant functor 
$\oplus: \PreSymp\times \PreSymp \to \PreSymp$ that forms the direct sum of two presymplectic vector spaces -- 
the monoidal structure in this category. Explicitly, for any object 
$((V,\sigma_V),(W,\sigma_W))$ in $\PreSymp\times \PreSymp$ we define
$\oplus\big((V,\sigma_V),(W,\sigma_W)\big) := (V\oplus W,\sigma_{V\oplus W})$, where
$V\oplus W$ is the direct sum of vector spaces and, for all $(v,w),(v^\prime,w^\prime)\in V\oplus W$,
\begin{flalign}\label{eqn:sumpresympstructure}
\sigma_{V\oplus W}\big((v,w),(v^\prime,w^\prime)\big) := \sigma_{V}(v,v^\prime) + \sigma_{W}(w,w^\prime)~.
\end{flalign}
On morphisms, $\oplus(L,K):=L\oplus K$. The resulting covariant functor is
\begin{flalign}\label{eqn:PhSppq}
\PhSp_{p,q}:= \oplus\circ \big(\PhSp_q\times\PhSp_{p-q}\big)\circ \mathfrak{Split}_{p,q}:\LocSrc_p \to \PreSymp~.
\end{flalign}

Since the covariant functor $\PhSp_p:\LocSrc_p\to \PreSymp$ satisfies the causality property and the time-slice axiom for
all $p\in\bbN$ (cf.~Proposition \ref{propo:phspproperties}), it is not hard to see that the same holds true for the covariant functor
$\PhSp_{p,q}:\LocSrc_p\to \PreSymp$.
\begin{propo}
The covariant functor $\PhSp_{p,q}:\LocSrc_p\to \PreSymp$ satisfies the causality property and the time-slice axiom
for all $2\leq p\in\bbN$ and $0<q<p$.
\end{propo}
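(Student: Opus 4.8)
The plan is to reduce the claim for $\PhSp_{p,q}$ to the corresponding properties of $\PhSp_q$ and $\PhSp_{p-q}$, which hold by Proposition~\ref{propo:phspproperties}, together with the fact that the functor $\oplus\colon\PreSymp\times\PreSymp\to\PreSymp$ and the splitting functor $\mathfrak{Split}_{p,q}$ transport these properties along the composition~\eqref{eqn:PhSppq}. Concretely, since $\PhSp_{p,q}=\oplus\circ(\PhSp_q\times\PhSp_{p-q})\circ\mathfrak{Split}_{p,q}$, one works through each factor in turn.

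First I would treat the time-slice axiom. A morphism $f$ in $\LocSrc_p$ is a Cauchy morphism if and only if its underlying spacetime embedding contains a Cauchy surface of the codomain; by construction $\mathfrak{Split}_{p,q}(f)=(f,f)$ has both components given by the {\em same} underlying embedding, so $(f,f)$ is a Cauchy morphism in $\LocSrc_q\times\LocSrc_{p-q}$ precisely when $f$ is. Applying $\PhSp_q\times\PhSp_{p-q}$ and using that each factor satisfies the time-slice axiom (Proposition~\ref{propo:phspproperties}\,c)), we get a pair of isomorphisms in $\PreSymp\times\PreSymp$; finally $\oplus$ sends a pair of isomorphisms to an isomorphism (its action on morphisms is $\oplus(L,K)=L\oplus K$, and a direct sum of invertible presymplectic maps is invertible with inverse $L^{-1}\oplus K^{-1}$, which is again a $\PreSymp$-morphism). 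Hence $\PhSp_{p,q}(f)$ is an isomorphism whenever $f$ is a Cauchy morphism.

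Next I would treat the causality property. Here one takes two morphisms $f_i\colon(\bfM_i,\bfJ_i)\to(\bfM,\bfJ)$ in $\LocSrc_p$ whose images are causally disjoint in $\bfM$, and must show that the images of $\PhSp_{p,q}(f_1)$ and $\PhSp_{p,q}(f_2)$ symplectically commute in $\PhSp_{p,q}(\bfM,\bfJ)$. Since $\mathfrak{Split}_{p,q}(f_i)=(f_i,f_i)$ carries the same underlying embedding in each slot, the images of the two submultiplet morphisms are causally disjoint in each factor; causality for $\PhSp_q$ and $\PhSp_{p-q}$ then gives symplectic commutation in each summand. The key point is that the presymplectic form on the direct sum is the orthogonal sum~\eqref{eqn:sumpresympstructure}: $\sigma_{V\oplus W}((v,w),(v',w'))=\sigma_V(v,v')+\sigma_W(w,w')$, so vanishing of each term forces vanishing of the total, and commutation in the two factors assembles to commutation in $\PhSp_{p,q}(\bfM,\bfJ)$.

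I do not expect a serious obstacle here; this proposition is essentially bookkeeping. The one point that deserves a line of care is verifying that $\mathfrak{Split}_{p,q}$ preserves the relevant categorical features of morphisms---namely that it sends Cauchy morphisms to Cauchy morphisms and preserves causal disjointness of images---but this is immediate from the fact that both output components of $\mathfrak{Split}_{p,q}(f)$ share the underlying spacetime embedding of $f$ and that the source-compatibility condition $f^\ast(\bfJ)=\bfJ'$ splits componentwise under $\Pi_q$. A second minor point is to note that $\oplus$ is a well-defined functor into $\PreSymp$ (its morphisms are injective: a direct sum of injective linear maps is injective), so that the composite~\eqref{eqn:PhSppq} indeed lands in $\PreSymp$. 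With these observations in place, the causality property and the time-slice axiom for $\PhSp_{p,q}$ follow at once from those of $\PhSp_q$ and $\PhSp_{p-q}$.
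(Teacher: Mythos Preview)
Your proposal is correct and follows essentially the same approach as the paper: both reduce to the causality and time-slice properties of $\PhSp_q$ and $\PhSp_{p-q}$ (Proposition~\ref{propo:phspproperties}), then observe that $\oplus$ preserves isomorphisms and symplectic orthogonality of images via~\eqref{eqn:sumpresympstructure}. You supply a bit more detail than the paper on why $\mathfrak{Split}_{p,q}$ preserves Cauchy morphisms and causal disjointness, and on why $\oplus$ lands in $\PreSymp$; these are welcome but not substantively different.
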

\begin{proof}
Causality holds owing to causality of $\PhSp_{p-q}$ and $\PhSp_q$
and the following property of the direct sum: if  $\PreSymp$-morphisms 
$L_i : (V_i,\sigma_{V_i})\to (V,\sigma_{V})$, $i=1,2$, have symplectically orthogonal images, 
and so do $K_i : (W_i,\sigma_{W_i})\to (W,\sigma_{W})$, $i=1,2$, then $L_1\oplus K_1$ and $L_2\oplus K_2$
have symplectically orthogonal images in $(V\oplus W,\sigma_{V\oplus W})$ because 
$\sigma_{V\oplus W}= \sigma_V\oplus\sigma_W$ by (\ref{eqn:sumpresympstructure}).
The time-slice axiom holds simply because the direct sum of
two presymplectic isomorphisms is itself a presymplectic isomorphism. 
\end{proof}
We shall now prove that the theories $\PhSp_{p,q}$ and $\PhSp_p$ 
are inequivalent for $0<q<p$; that is, the functors are not naturally isomorphic. This will be a consequence of the following simple
\begin{lem}\label{lem:presympiso}
Let $L : (V,\sigma_V)\to (W,\sigma_W)$ be an isomorphism in $\PreSymp$. Then $L$ induces a linear isomorphism 
between the null spaces $N(V,\sigma_V)$ and $N(W,\sigma_W)$.
\end{lem}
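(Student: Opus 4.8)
The plan is to show that an isomorphism of presymplectic vector spaces restricts to an isomorphism between their null spaces. Recall that for a presymplectic vector space $(V,\sigma_V)$ the null space is
\begin{flalign*}
N(V,\sigma_V) = \{v\in V : \sigma_V(v,v^\prime)=0 \text{ for all } v^\prime\in V\}~.
\end{flalign*}
First I would show that $L$ maps $N(V,\sigma_V)$ into $N(W,\sigma_W)$. Let $v\in N(V,\sigma_V)$ and let $w^\prime\in W$ be arbitrary. Since $L$ is an isomorphism, in particular it is surjective, so $w^\prime = L(v^\prime)$ for some $v^\prime\in V$. Because $L$ is a $\PreSymp$-morphism it preserves the presymplectic structure, i.e.\ $\sigma_W(L(v),L(v^\prime)) = \sigma_V(v,v^\prime)$, and the right-hand side vanishes since $v\in N(V,\sigma_V)$. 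Hence $\sigma_W(L(v),w^\prime)=0$ for all $w^\prime\in W$, so $L(v)\in N(W,\sigma_W)$. This gives a well-defined linear map $L\vert_{N(V,\sigma_V)} : N(V,\sigma_V)\to N(W,\sigma_W)$, linear because $L$ is.

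Next I would apply the same argument to $L^{-1}$, which is also a $\PreSymp$-isomorphism (its inverse $L$ preserves the structure, hence so does $L^{-1}$), to obtain a linear map $L^{-1}\vert_{N(W,\sigma_W)} : N(W,\sigma_W)\to N(V,\sigma_V)$. Since $L\circ L^{-1} = \id_W$ and $L^{-1}\circ L = \id_V$ as maps on the ambient spaces, the corresponding restrictions to the null spaces are mutually inverse. Therefore $L\vert_{N(V,\sigma_V)}$ is a linear isomorphism onto $N(W,\sigma_W)$, which is the claim.

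There is essentially no obstacle here: the statement is a direct consequence of the defining property of $\PreSymp$-morphisms (preservation of the presymplectic form) together with surjectivity of an isomorphism. The only point requiring a moment's care is noting that the null space is characterized via \emph{all} test vectors in the ambient space, so surjectivity of $L$ is genuinely used to conclude $L(v)$ annihilates all of $W$ rather than merely the image of $L$; for a non-surjective $\PreSymp$-morphism one would only get $L(N(V,\sigma_V))\subseteq L(V)\cap N(L(V),\sigma_W\vert_{L(V)})$, which need not land in $N(W,\sigma_W)$. (This is exactly why the lemma is stated for isomorphisms, and why it suffices to detect the failure of natural isomorphism in the subsequent argument via Proposition~\ref{propo:phspproperties}a.)
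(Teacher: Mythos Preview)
Your proof is correct and follows essentially the same approach as the paper: both use surjectivity of $L$ to write an arbitrary $w\in W$ as $L(v')$ (the paper phrases this as $w = L(L^{-1}(w))$), apply preservation of the presymplectic form to conclude $L[N(V,\sigma_V)]\subseteq N(W,\sigma_W)$, and then invoke $L^{-1}$ for the inverse. Your closing remark on the necessity of surjectivity is a helpful addition not made explicit in the paper.
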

\begin{proof}
The $\PreSymp$-isomorphism $L$ induces an injective linear map $L: N(V,\sigma_V)\to W$. The image $L[N(V,\sigma_V)]$
is contained in $N(W,\sigma_W)$, since for all $v\in N(V,\sigma_V)$ and $w\in W$,
\begin{flalign}
\sigma_{W}\big(L(v),w\big)=\sigma_{W}\big(L(v),L(L^{-1}(w))\big)  = \sigma_{V}\big(v,L^{-1}(w)\big)=0~.
\end{flalign}
Hence, $L:  N(V,\sigma_V)\to N(W,\sigma_W)$ is a linear map that is invertible via
$L^{-1}:  N(W,\sigma_W)\to N(V,\sigma_V)$.
\end{proof}
\begin{propo}\label{propo:compprop}
For any $2\leq p\in\bbN$ and $0<q<p$, the covariant functors $\PhSp_{p,q}:\LocSrc_p\to \PreSymp$ 
and $\PhSp_p:\LocSrc_p\to \PreSymp$ are not naturally isomorphic. Indeed, 
there is no object
$(\bfM,\bfJ)$ in $\LocSrc_p$ for which the presymplectic vector spaces $\PhSp_{p,q}(\bfM,\bfJ)$ and $\PhSp_p(\bfM,\bfJ)$
are isomorphic.
\end{propo}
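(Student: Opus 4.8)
The plan is to use Lemma~\ref{lem:presympiso} to reduce the claim to a statement about the dimensions of the null spaces of the two presymplectic vector spaces. By Proposition~\ref{propo:phspproperties}~a), for any object $(\bfM,\bfJ)$ in $\LocSrc_p$ the null space $\NN_p(\bfM,\bfJ)$ of $\PhSp_p(\bfM,\bfJ)$ is one-dimensional, isomorphic to $\bbR$. On the other hand, $\PhSp_{p,q}(\bfM,\bfJ) = \PhSp_q(\bfM,\bfJ^q)\oplus \PhSp_{p-q}(\bfM,\bfJ^{p-q})$ by the construction~\eqref{eqn:PhSppq}, and since the presymplectic form on a direct sum is the direct sum of the forms (cf.~\eqref{eqn:sumpresympstructure}), its null space is $\NN_q(\bfM,\bfJ^q)\oplus \NN_{p-q}(\bfM,\bfJ^{p-q})$. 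Applying Proposition~\ref{propo:phspproperties}~a) to each summand (which is legitimate since $(\bfM,\bfJ^q)$ and $(\bfM,\bfJ^{p-q})$ are genuine objects of $\LocSrc_q$ and $\LocSrc_{p-q}$ respectively), each summand is isomorphic to $\bbR$, so the null space of $\PhSp_{p,q}(\bfM,\bfJ)$ is two-dimensional.

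The conclusion then follows immediately: if there were a $\PreSymp$-isomorphism $L:\PhSp_p(\bfM,\bfJ)\to\PhSp_{p,q}(\bfM,\bfJ)$ for some object $(\bfM,\bfJ)$, Lemma~\ref{lem:presympiso} would give a linear isomorphism between a one-dimensional and a two-dimensional real vector space, which is impossible. Hence no such object exists, and in particular the functors $\PhSp_p$ and $\PhSp_{p,q}$ cannot be naturally isomorphic, since a natural isomorphism would in particular furnish an isomorphism of the presymplectic vector spaces over every object.

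The proof is essentially routine once the null-space computation is organized correctly; the only point requiring a little care is the identification of the null space of a direct sum of presymplectic vector spaces with the direct sum of the null spaces, which is where~\eqref{eqn:sumpresympstructure} is used. I would state this as a one-line observation (for $(v,w)\in V\oplus W$ one has $\sigma_{V\oplus W}((v,w),(v',w'))=0$ for all $(v',w')$ iff $\sigma_V(v,v')=0$ for all $v'$ and $\sigma_W(w,w')=0$ for all $w'$) rather than dignifying it with a separate lemma. No genuine obstacle is expected here; the substantive work has already been done in Proposition~\ref{propo:phspproperties} and Lemma~\ref{lem:presympiso}, and this proposition simply packages it.
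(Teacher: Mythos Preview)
Your proof is correct and follows essentially the same approach as the paper: argue by contradiction, invoke Lemma~\ref{lem:presympiso} to reduce to comparing null spaces, and observe that $\NN_p(\bfM,\bfJ)\simeq\bbR$ while the null space of $\PhSp_{p,q}(\bfM,\bfJ)$ is $\bbR^2$. The only difference is cosmetic: you spell out explicitly why the null space of a direct sum is the direct sum of the null spaces, whereas the paper simply asserts that the null space of $\PhSp_{p,q}(\bfM,\bfJ)$ ``is easily seen to be isomorphic to $\bbR^2$''.
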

\begin{proof}
We argue by contradiction. Suppose there were a $\PreSymp$-isomorphism 
$L: \PhSp_{p,q}(\bfM,\bfJ)\to \PhSp_p(\bfM,\bfJ)$ for some object $(\bfM,\bfJ)$
in $\LocSrc_p$. Then $L$ induces a linear isomorphism between the null spaces of
$\PhSp_{p,q}(\bfM,\bfJ)$ and $\PhSp_{p}(\bfM,\bfJ)$ by Lemma \ref{lem:presympiso}. However, the latter 
is isomorphic to $\bbR$ (cf.~Proposition \ref{propo:phspproperties}),
while the former is easily seen to be isomorphic to $\bbR^2$. Hence, no such isomorphism exists 
and consequently the functors  $\PhSp_{p,q}$ 
and $\PhSp_p$ are not naturally isomorphic. 
\end{proof}

%%%%%%%%%%%%%%%%%%%%%%%%%%%%%%%%%%%%%%%%%%%%%%%%
%%%%%%%%%%%%%%%%%%%%%%%%%%%%%%%%%%%%%%%%%%%%%%%%

\section{\label{sec:poisson}The Poisson algebra functor}
In order to resolve the pathological composition property of the functor
$\PhSp_p$ obtained in Section \ref{sec:compproperty}, as well as the mysterious automorphism group
of Theorem \ref{theo:automorphismgroup} (cf.~also the text below Lemma \ref{lem:transinvfnl}),
we introduce further structures. Naively speaking, we aim to make the theory
given by $\PhSp_p$ remember that it describes the affine functionals on the affine space of solutions
of the inhomogeneous Klein--Gordon equation. To realize this idea, we first construct  from 
$\PhSp_p$ a functor describing (abstract) Poisson algebras of observables, which are then represented {\em non-faithfully}
on the solution space. The kernel of this representation is then identified and it is shown that
the quotients of the Poisson algebras by these kernels are described by a covariant functor
which has the desired automorphism group and the composition property.
In Appendix \ref{sec:pointed} we present an alternative strategy for improving the classical theory
of the inhomogeneous multiplet of Klein--Gordon fields by using pointed presymplectic spaces.

\subsection{\label{subsec:canpois}Canonical Poisson algebras}
Recall that a unital Poisson algebra over $\mathbb{R}$ is an associative and commutative unital algebra $\mathcal{A}$ over $\mathbb{R}$
together with a Poisson bracket, i.e.\ a Lie bracket $\{\cdot,\cdot\}_{\mathcal{A}} : \mathcal{A}\times\mathcal{A}\to \mathcal{A}$
which satisfies the derivation property $\{a,b\,c\}_{\mathcal{A}} = \{a,b\}_{\mathcal{A}}\,c + b\,\{a,c\}_{\mathcal{A}}$, for all
$a,b,c\in\mathcal{A}$. A unit-preserving Poisson algebra homomorphism $\kappa : \mathcal{A}\to \mathcal{B}$ between 
two unital Poisson algebras $\mathcal{A},\mathcal{B}$ over $\mathbb{R}$ is a unital algebra homomorphism 
$\kappa : \mathcal{A}\to \mathcal{B}$ that preserves the
Poisson brackets, i.e.\ $\{\cdot,\cdot\}_{\mathcal{B}} \circ (\kappa\times\kappa) = \{\cdot,\cdot\}_{\mathcal{A}}$.
Let us denote by $\PoisAlg$ the
category of unital Poisson algebras over $\mathbb{R}$, with injective unit-preserving Poisson algebra homomorphisms as morphisms. 
We first construct a covariant functor $\CanPois: \PreSymp\to \PoisAlg$ 
that associates canonical Poisson algebras to presymplectic vector spaces:
Given any object $(V,\sigma_V)$ in $\PreSymp$, let us consider the symmetric tensor algebra
$S(V):= \bigoplus_{k=0}^\infty S^k(V)$, where $S^0(V)=\bbR$, $S^1(V)=V$ and $S^k(V):= \bigvee^k V$, for $k\geq 2$, 
is the $k$-th symmetric power of $V$. The product in $S(V)$ is denoted  by juxtaposition and turns
$S(V)$ into an associative and commutative algebra over $\bbR$ with unit $1\in S^0(V)\subset S(V)$.
We define a Poisson bracket $\{\cdot,\cdot\}_{\sigma_V}: S(V)\times S(V) \to S(V)$ by,
for all $\alpha\in S^0(V)$ and $v_1,\dots,v_k,v_1^\prime,\dots,v_l^\prime\in V$,
\begin{subequations}\label{eqn:poissonbracket}
\begin{flalign}
\{\alpha,v_1\cdots v_k\}_{\sigma_V}&=\{v_1\cdots v_k,\alpha\}_{\sigma_V}=0~,\\
\{v_1 \cdots  v_k,v^\prime_1\cdots  v^\prime_l\}_{\sigma_V} &=\sum_{i=1}^k\sum_{j=1}^l
v_1\cdots \omi{i}\cdots  v_k\, v^\prime_1 \cdots \omi{j}\cdots  v_l^\prime ~\sigma_{V}(v_i,v^\prime_j)~.
\end{flalign}
\end{subequations}
The symbols $\omi{i}$ mean the omission of the $i$-th element.
We denote the resulting Poisson algebra  by $\CanPois(V,\sigma_V) := \big(S(V),\{\cdot,\cdot\}_{\sigma_V}\big)$.
Given any morphism $L: (V,\sigma_V)\to (W,\sigma_W)$ in $\PreSymp$
we associate a map
$\CanPois(L): \CanPois(V,\sigma_V)\to \CanPois(W,\sigma_W)$
via $\CanPois(L)(\alpha) = \alpha$, for all $\alpha\in S^0(V)$, and
$\CanPois(L)(v_1\cdots v_k) = L(v_1) \cdots L(v_k)$, for all $v_1,\dots,v_k\in V$.
It is easy to see that $\CanPois(L)$ is an injective Poisson algebra homomorphism.
Thus, we have shown the following
\begin{propo}
The association $\CanPois: \PreSymp\to \PoisAlg$ constructed above is a covariant functor.
\end{propo}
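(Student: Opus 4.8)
The plan is to verify in turn that (i) for every object $(V,\sigma_V)$ of $\PreSymp$ the pair $\CanPois(V,\sigma_V)=(S(V),\{\cdot,\cdot\}_{\sigma_V})$ lies in $\PoisAlg$, (ii) $\CanPois(L)$ is a $\PoisAlg$-morphism for every $\PreSymp$-morphism $L$, and (iii) the assignment respects identities and composition. Throughout I use the standard facts that $S(V)=\bigoplus_{k\ge0}S^k(V)$ with juxtaposition product is a commutative associative unital $\bbR$-algebra generated by $V=S^1(V)$, and that every linear map from $V$ into a commutative unital $\bbR$-algebra extends uniquely to a unital algebra homomorphism out of $S(V)$. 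For (i) the only content is the bracket: first I would note that the right-hand side of the defining formula is invariant under permuting the $v_i$ among themselves and the $v'_j$ among themselves, hence descends to a well-defined $\bbR$-bilinear map $S(V)\times S(V)\to S(V)$; antisymmetry is immediate from $\sigma_V(v_i,v'_j)=-\sigma_V(v'_j,v_i)$; and the derivation property $\{a,bc\}_{\sigma_V}=\{a,b\}_{\sigma_V}\,c+b\,\{a,c\}_{\sigma_V}$ follows by evaluating both sides on monomials and observing that the double sum on the left splits exactly into the contributions of the factors of $b$ and of $c$. By antisymmetry the bracket is then a derivation in the first argument as well.

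The hard part will be the Jacobi identity; this is the step I expect to demand the most care, although it ultimately reduces to a triviality. I would use the usual ``Jacobiator is a triderivation'' argument: set $\mathrm{Jac}(a,b,c):=\{a,\{b,c\}_{\sigma_V}\}_{\sigma_V}+\{b,\{c,a\}_{\sigma_V}\}_{\sigma_V}+\{c,\{a,b\}_{\sigma_V}\}_{\sigma_V}$. Using only bilinearity, antisymmetry and the derivation property established above, an iterated application of the Leibniz rule shows that $\mathrm{Jac}$ is a derivation in each of its three slots, so, since $S(V)$ is generated by $V$, it is enough to check $\mathrm{Jac}(v_1,v_2,v_3)=0$ for $v_1,v_2,v_3\in V$. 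But $\{v_2,v_3\}_{\sigma_V}=\sigma_V(v_2,v_3)\,1\in S^0(V)$, and by definition the bracket of any element of $S^0(V)$ with anything vanishes, so all three terms are zero. (Alternatively one can invoke the familiar fact that $(S(V),\{\cdot,\cdot\}_{\sigma_V})$ is the classical limit of the CCR-algebra of $(V,\sigma_V)$.) This finishes the proof that $\CanPois(V,\sigma_V)$ is a unital Poisson algebra over $\bbR$.

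For (ii), the composite linear map $V\xrightarrow{L}W\hookrightarrow S(W)$ extends uniquely to a unital algebra homomorphism $S(V)\to S(W)$, which is precisely $\CanPois(L)$; in particular $\CanPois(L)$ is well-defined and unit-preserving, and equals the graded map $\bigoplus_k S^k(L)$. That it preserves the Poisson bracket I would check on monomials: applying it to the defining formula and using that $L$ is a $\PreSymp$-morphism, so $\sigma_W(L(v_i),L(v'_j))=\sigma_V(v_i,v'_j)$, reproduces the defining formula for the bracket of the images in $S(W)$. Injectivity follows because $L$ injective sends a basis of $V$ to a linearly independent family in $W$, whence each $S^k(L)$ is injective, and a degree-preserving map injective in every degree is injective. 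Finally, for (iii), $\CanPois(\id_V)$ and $\id_{S(V)}$ both restrict to $\id_V$ on the generators $V$, and $\CanPois(L'\circ L)$ and $\CanPois(L')\circ\CanPois(L)$ both restrict to $L'\circ L$ on $V$, so in each case uniqueness in the universal property forces equality. This establishes that $\CanPois$ is a covariant functor.
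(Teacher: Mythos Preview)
Your proposal is correct and follows the same approach as the paper, which simply declares the result evident from the preceding construction (``It is easy to see that $\CanPois(L)$ is an injective Poisson algebra homomorphism. Thus, we have shown the following''). Your write-up supplies the routine verifications the paper omits---in particular your Jacobiator-as-triderivation argument for the Jacobi identity and the injectivity of $\bigoplus_k S^k(L)$---and is sound throughout.
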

\begin{rem}
Notice that for any object $(V,\sigma_V)$ in $\PreSymp$ the Poisson algebra $\CanPois(V,\sigma_V)$
is $\bbN^0$-graded. Furthermore, for any morphism $L:(V,\sigma_{V})\to (W,\sigma_W)$ in $\PreSymp$
the morphism $\CanPois(L): \CanPois(V,\sigma_V) \to \CanPois(W,\sigma_W)$
is a graded Poisson algebra morphism. Hence, $\CanPois$ is also a covariant functor to the category
of $\bbN^0$-graded Poisson algebras. As will become clear in the next subsection, the 
latter category is too restrictive for our purposes, hence we shall usually disregard this natural grading.
\end{rem}

We can compose our functor $\PhSp_p: \LocSrc_p\to \PreSymp$ 
with $\CanPois:\PreSymp\to \PoisAlg$ 
to obtain the covariant functor $\CPA_p:= \CanPois\circ \PhSp_p : \LocSrc_p\to \PoisAlg$,
which we call the {\em canonical} Poisson algebra functor.
We immediately observe the following 
\begin{propo}\label{propo:PApproperties}
The covariant functor $\CPA_p:\LocSrc_p\to \PoisAlg$ satisfies the causality property and
the time-slice axiom. Moreover, $\Aut(\CPA_p)$ contains a $\bbZ_2$ subgroup for $m\neq 0$ and a $\bbZ_2\times \bbR^p$ subgroup
 for $m=0$.
\end{propo}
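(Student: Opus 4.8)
The plan is to prove Proposition~\ref{propo:PApproperties} by transporting properties of $\PhSp_p$ through the functor $\CanPois$. The statement has three parts --- causality, the time-slice axiom, and the existence of the $\bbZ_2$ (resp.\ $\bbZ_2\times\bbR^p$) subgroup of $\Aut(\CPA_p)$ --- and each should reduce to a property already established for $\PhSp_p$ together with an elementary observation about the symmetric tensor algebra construction.

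First I would handle the time-slice axiom. By Proposition~\ref{propo:phspproperties}(c), $\PhSp_p$ maps every Cauchy morphism $f$ to an isomorphism in $\PreSymp$; I then need only check that $\CanPois$ sends isomorphisms to isomorphisms. This is immediate: if $L:(V,\sigma_V)\to(W,\sigma_W)$ has inverse $L^{-1}$, then $\CanPois(L^{-1})$ is inverse to $\CanPois(L)$ by functoriality, so $\CPA_p(f)=\CanPois(\PhSp_p(f))$ is an isomorphism in $\PoisAlg$. For causality, the relevant statement is that if $f_1:(\bfM_1,\bfJ_1)\to(\bfM,\bfJ)$ and $f_2:(\bfM_2,\bfJ_2)\to(\bfM,\bfJ)$ are morphisms in $\LocSrc_p$ with causally disjoint images, then the subalgebras $\CPA_p(f_1)[\CPA_p(\bfM_1,\bfJ_1)]$ and $\CPA_p(f_2)[\CPA_p(\bfM_2,\bfJ_2)]$ Poisson-commute in $\CPA_p(\bfM,\bfJ)$. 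Here I would use that $\PhSp_p$ is causal, so the images $\PhSp_p(f_1)[\PhSp_p(\bfM_1,\bfJ_1)]$ and $\PhSp_p(f_2)[\PhSp_p(\bfM_2,\bfJ_2)]$ are symplectically orthogonal in $\PhSp_p(\bfM,\bfJ)$; then I observe from the defining formula~\eqref{eqn:poissonbracket} that if $U,U'\subseteq V$ are linear subspaces with $\sigma_V(u,u')=0$ for all $u\in U$, $u'\in U'$, then the subalgebras of $S(V)$ generated by $U$ and $U'$ Poisson-commute --- each bracket of a monomial in $U$ with a monomial in $U'$ is a sum of terms each carrying a factor $\sigma_V(u_i,u'_j)=0$. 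Since $\CanPois(L)$ maps the subalgebra generated by a subspace $U$ onto the subalgebra generated by $L[U]$, this gives causality of $\CPA_p$.

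For the automorphism subgroup, the cleanest route is to note that any natural transformation $\eta:\PhSp_p\Rightarrow\PhSp_p$ yields a natural transformation $\CanPois(\eta):\CPA_p\Rightarrow\CPA_p$ by applying $\CanPois$ componentwise --- functoriality of $\CanPois$ together with naturality of $\eta$ with respect to each morphism $f$ gives the required commuting squares, and if each $\eta_{(\bfM,\bfJ)}$ is a $\PreSymp$-isomorphism then each $\CanPois(\eta)_{(\bfM,\bfJ)}$ is a $\PoisAlg$-isomorphism by the same argument as in the time-slice step. This defines a group homomorphism $\Aut(\PhSp_p)\to\Aut(\CPA_p)$; I would then check it is injective by restricting $\CanPois(\eta)$ to the degree-one part $S^1(\PhSp_p(\bfM,\bfJ))=\PhSp_p(\bfM,\bfJ)$, which recovers $\eta_{(\bfM,\bfJ)}$ (this uses that $\CanPois(L)$ preserves the grading, or at least maps $S^1$ into $S^1$ via $L$ itself). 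Composing with the faithful homomorphisms $\bbZ_2\to\Aut(\PhSp_p)$ of Proposition~\ref{propo:Z2flip} and, for $m=0$, $\bbZ_2\times\bbR^p\to\Aut(\PhSp_p)$ of Proposition~\ref{propo:Z2xR}, yields the desired subgroups of $\Aut(\CPA_p)$.

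The main obstacle --- which is really the only non-formal point --- is making the causality argument precise: one must verify that the Poisson bracket of an arbitrary pair of elements (not just monomials) of the two subalgebras vanishes, which follows from bilinearity of $\{\cdot,\cdot\}_{\sigma_V}$ and the derivation property, and one must be slightly careful that ``the subalgebra generated by $\PhSp_p(f_i)[\cdots]$'' is exactly the image of $\CPA_p(f_i)$, which holds because $\CanPois(L)$ by construction sends a product $v_1\cdots v_k$ to $L(v_1)\cdots L(v_k)$ and sends $S^0$ identically, so its image is precisely the unital subalgebra generated by $L[V]$. Everything else is bookkeeping with the definitions of $\CanPois$ and the already-cited properties of $\PhSp_p$; I would keep the written proof short, citing Proposition~\ref{propo:phspproperties} for causality and time-slice of $\PhSp_p$ and Propositions~\ref{propo:Z2flip} and~\ref{propo:Z2xR} for the automorphisms, and spelling out only the transport-through-$\CanPois$ steps.
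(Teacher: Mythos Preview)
Your proposal is correct and follows essentially the same approach as the paper: transport time-slice and causality from $\PhSp_p$ through $\CanPois$ (functors preserve isomorphisms; the explicit bracket formula~\eqref{eqn:poissonbracket} shows symplectically orthogonal images give Poisson-commuting subalgebras), and lift the $\PhSp_p$-automorphisms componentwise via $\CanPois$. The only minor difference is that the paper cites Theorem~\ref{theo:automorphismgroup} rather than Propositions~\ref{propo:Z2flip} and~\ref{propo:Z2xR} directly, and does not spell out the injectivity of the lift $\Aut(\PhSp_p)\to\Aut(\CPA_p)$; your observation that restriction to degree one recovers $\eta_{(\bfM,\bfJ)}$ is a useful addition ensuring the image really is the claimed subgroup.
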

\begin{proof}
By Proposition \ref{propo:phspproperties} c) the functor $\PhSp_p$ satisfies these properties. Thus $\CanPois$ automatically obeys
the time-slice property because functors preserve isomorphisms.  
The causality property
 is seen as follows: Given any two $\PreSymp$-morphisms $L_{i}: (V_i,\sigma_{V_i})\to (V,\sigma_V) $, $i=1,2$, 
 such that $\sigma_V\big(L_1[V_1],L_2[V_2]\big) =\{0\}$
 in $(V,\sigma_V)$, then the explicit expression for the Poisson bracket (\ref{eqn:poissonbracket})
 implies that $\{\cdot,\cdot\}_{\sigma_V}$ acts trivially between
 $\CanPois(L_1)\big[\CanPois(V_1,\sigma_1)\big]$
 and  $\CanPois(L_2)\big[\CanPois(V_2,\sigma_2)\big]$. The statement
on the automorphism group follows from Theorem \ref{theo:automorphismgroup}
and the  fact that every automorphism $\eta=\{\eta_{(\bfM,\bfJ)}\}$ of $\PhSp_p$ 
lifts to an automorphism $\{\CanPois(\eta_{(\bfM,\bfJ)})\}$ of $\CPA_p$.
 \end{proof}

Next, we shall show that the functor $\CPA_p$ violates the analog of the composition property for $\PhSp_p$ discussed in
Section \ref{sec:compproperty}, cf.~Proposition \ref{propo:compprop}.
 For $2\leq p\in\bbN$ and $0<q<p$, we define the covariant functor
\begin{flalign}\label{eqn:PApq}
\CPA_{p,q}:= \otimes \circ \big(\CPA_q\times \CPA_{p-q}\big)\circ \mathfrak{Split}_{p,q}: \LocSrc_p\to \PoisAlg~,
\end{flalign}
where $\otimes: \PoisAlg\times\PoisAlg\to\PoisAlg$ is the covariant functor that takes (algebraic) tensor products of
 Poisson algebras. Explicitly, to any object $(\mathcal{A},\mathcal{B})$ in $\PoisAlg\times\PoisAlg$ we associate the object
$\otimes(\mathcal{A},\mathcal{B}):= \mathcal{A}\otimes\mathcal{B}$ in $\PoisAlg$, which is the tensor product
of the underlying commutative and associative unital algebras, equipped with the Poisson bracket specified by linearity and,
for all $a,a^\prime\in\mathcal{A}$ and $b,b^\prime\in\mathcal{B}$,
\begin{flalign}
\{a\otimes b,a^\prime \otimes b^\prime\}_{\mathcal{A}\otimes\mathcal{B}} := \{a,a^\prime\}_{\mathcal{A}}\otimes (b\,b^\prime) 
+ (a\,a^\prime)\otimes \{b,b^\prime\}_{\mathcal{B}}~.
\end{flalign}
To any morphism $(\kappa,\lambda):(\mathcal{A}_1,\mathcal{B}_1)\to(\mathcal{A}_2,\mathcal{B}_2)$ in 
$\PoisAlg\times\PoisAlg$ we associate the morphism $\otimes(\kappa,\lambda) :=\kappa\otimes\lambda:
 \mathcal{A}_1\otimes\mathcal{B}_1\to \mathcal{A}_2\otimes\mathcal{B}_2$
in $\PoisAlg$ specified by linearity and, for all $a\in\mathcal{A}_1$ and $b\in\mathcal{B}_1$,
$(\kappa\otimes\lambda)\big(a\otimes b\big) = \kappa(a)\otimes \lambda(b)$.
To show that the covariant functors $\CPA_{p,q}$ and $\CPA_p$ are inequivalent,
we require two  lemmas.
\begin{lem}\label{lem:presymppoisalgiso}
Let $(V,\sigma_V)$ and $(W,\sigma_W)$ be two objects in $\PreSymp$. 
Then there exists an isomorphism $L: (V,\sigma_V)\to(W,\sigma_W)$ in $\PreSymp$ if and only
if there exists an isomorphism $\kappa: \CanPois(V,\sigma_V)\to \CanPois(W,\sigma_W)$
in $\PoisAlg$. Moreover, the isomorphisms $\kappa$ and $L$ are related by $L = \pi_{S^1(W)}\circ \kappa \vert_{S^1(V)}$,
where $\kappa\vert_{S^1(V)}$ is the restriction of $\kappa$ to the vector subspace $S^1(V)\subseteq \CanPois(V,\sigma_V)$
and $\pi_{S^1(W)}:  \CanPois(W,\sigma_W)\to S^1(W)$ is the projection to the degree one vector subspace $S^1(W)$. 
$L$ is uniquely determined by $\kappa$, but $L$ does not determine $\kappa$.
\end{lem}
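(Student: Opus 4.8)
The plan is to prove both implications separately, with the nontrivial direction being the construction of a $\PreSymp$-isomorphism from a $\PoisAlg$-isomorphism. For the easy direction, if $L:(V,\sigma_V)\to(W,\sigma_W)$ is a $\PreSymp$-isomorphism, then $\CanPois$ being a functor gives a $\PoisAlg$-morphism $\CanPois(L):\CanPois(V,\sigma_V)\to\CanPois(W,\sigma_W)$ whose inverse is $\CanPois(L^{-1})$, so it is an isomorphism. Conversely, suppose $\kappa:\CanPois(V,\sigma_V)\to\CanPois(W,\sigma_W)$ is a $\PoisAlg$-isomorphism. The key idea is to recover $V$ and $W$ intrinsically inside the Poisson algebras as canonically characterized subspaces, so that $\kappa$ must respect them. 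First I would note that $S(V)=\CanPois(V,\sigma_V)$ is augmented: the quotient by the ideal generated by $S^{\geq 1}(V)$ is $\bbR$, and any unital algebra isomorphism is automatically compatible with augmentations. The real work is to characterize $S^1(V)=V$ (modulo the augmentation issue — one should work with elements of augmentation zero, or use $\ker(\mathrm{aug})/\ker(\mathrm{aug})^2$, i.e.\ the cotangent space at the augmentation). Concretely, $V\cong \mathfrak{m}/\mathfrak{m}^2$ where $\mathfrak{m}$ is the augmentation ideal; this is a purely algebraic invariant preserved by any unital algebra isomorphism. So $\kappa$ induces a linear isomorphism $\bar L:\mathfrak{m}_V/\mathfrak{m}_V^2\to\mathfrak{m}_W/\mathfrak{m}_W^2$, which under the canonical identifications $\mathfrak{m}_V/\mathfrak{m}_V^2\cong S^1(V)=V$ and likewise for $W$ is exactly $L:=\pi_{S^1(W)}\circ\kappa\vert_{S^1(V)}$ (here one checks that for $v\in S^1(V)$, $\kappa(v)$ has augmentation zero, so its class in $\mathfrak{m}_W/\mathfrak{m}_W^2$ is represented by $\pi_{S^1(W)}(\kappa(v))$).

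Once $L:V\to W$ is defined and known to be a linear isomorphism, it remains to check that $L$ is a $\PreSymp$-morphism, i.e.\ $\sigma_W(L(v),L(v'))=\sigma_V(v,v')$ for all $v,v'\in V$. Here one uses that $\kappa$ preserves Poisson brackets: for $v,v'\in S^1(V)$, formula \eqref{eqn:poissonbracket} gives $\{v,v'\}_{\sigma_V}=\sigma_V(v,v')\,1\in S^0(V)$, a scalar. Applying $\kappa$ and using that $\kappa$ is unital (so $\kappa(1)=1$) and bracket-preserving yields $\{\kappa(v),\kappa(v')\}_{\sigma_W}=\sigma_V(v,v')\,1$. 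Now write $\kappa(v)=L(v)+r$ and $\kappa(v')=L(v')+r'$ with $r,r'$ in (a complement of $S^1$ inside) the augmentation ideal — more precisely, decompose into graded-looking pieces, noting $\kappa(v)$ has zero component in $S^0(W)$. One expands $\{\kappa(v),\kappa(v')\}_{\sigma_W}$ using bilinearity and the derivation property; the only term landing in $S^0(W)=\bbR$ is $\{L(v),L(v')\}_{\sigma_W}=\sigma_W(L(v),L(v'))\,1$, because a bracket of two elements raises or preserves total degree in a way that cannot produce a scalar from higher-degree inputs (the bracket of $S^k$ and $S^l$ lands in $S^{k+l-2}$, so producing $S^0$ forces $k=l=1$). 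Comparing scalar parts gives $\sigma_W(L(v),L(v'))=\sigma_V(v,v')$, as required. The same argument applied to $\kappa^{-1}$ shows the map $\pi_{S^1(V)}\circ\kappa^{-1}\vert_{S^1(W)}$ is a $\PreSymp$-morphism $W\to V$, and one checks it is a two-sided inverse of $L$, so $L$ is a $\PreSymp$-isomorphism.

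For the uniqueness claims: the formula $L=\pi_{S^1(W)}\circ\kappa\vert_{S^1(V)}$ exhibits $L$ as explicitly determined by $\kappa$. For the converse failure, I would exhibit two distinct $\PoisAlg$-automorphisms of $\CanPois(V,\sigma_V)$ inducing $L=\id_V$; the simplest is to choose a nonzero $c\in\bbR$ and consider the algebra automorphism fixing $S^1(V)=V$ pointwise but sending $1\mapsto 1$ while acting nontrivially on higher degrees — e.g.\ the automorphism induced by $v\mapsto v + c\cdot(\text{something quadratic in }v)$; even more cheaply, when $\dim V\geq 1$ pick $v_0\in V$ and the automorphism determined on generators by $v\mapsto v$ for all $v\in V$ but extended so that it is not the identity on $S^2(V)$, which is impossible since an algebra map is determined by its action on generators. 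So instead the correct cheap example: take $V$ with a nonzero element $v_0$ in the \emph{radical} of $\sigma_V$ (which exists here since $\NN_p$ is nontrivial), and let $\kappa$ be induced by the linear shift $v\mapsto v$ on $V$ composed with nothing — rather, use the Poisson algebra automorphism $a\mapsto a$ on $S^0\oplus S^1$ but $v_0^2\mapsto v_0^2 + 1$: one checks this extends to a Poisson algebra automorphism precisely because $v_0$ is central for the bracket, and it induces $L=\id$ yet differs from $\CanPois(\id)=\id$. I expect the main obstacle to be the bookkeeping in the expansion of $\{\kappa(v),\kappa(v')\}_{\sigma_W}$ — specifically, making rigorous the ``degree counting'' that isolates the scalar component — since $\kappa$ need not preserve the grading, only the filtration by the augmentation ideal; the clean way around this is to phrase everything in terms of the augmentation-ideal filtration $\mathfrak{m}\supseteq\mathfrak{m}^2\supseteq\cdots$, observe that $\{\mathfrak{m}^k,\mathfrak{m}^l\}\subseteq\mathfrak{m}^{k+l-2}$ (directly from the derivation property plus $\{\mathfrak{m},\mathfrak{m}\}\subseteq\mathfrak{m}^{0}=S(V)$ refined by $\{S^1,S^1\}\subseteq S^0$), and extract the required identity at the level of $\mathfrak{m}/\mathfrak{m}$, i.e.\ the scalar quotient.
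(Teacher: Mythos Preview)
Your argument has a genuine gap at the step ``any unital algebra isomorphism is automatically compatible with augmentations.'' This is false: the augmentation ideal $\mathfrak{m}_V=S^{\ge 1}(V)$ is not an intrinsic invariant of $S(V)$ as a unital algebra. For example, with $V=W=\bbR$ and $\sigma=0$ (so $S(V)=\bbR[x]$ with trivial bracket), the map $x\mapsto x+1$ is a $\PoisAlg$-isomorphism sending $\mathfrak{m}_V=(x)$ to $(x+1)\neq\mathfrak{m}_W$. Consequently your identification $V\cong\mathfrak{m}_V/\mathfrak{m}_V^2$ is not respected by $\kappa$, and the deduction that $L=\pi_{S^1(W)}\circ\kappa|_{S^1(V)}$ is a linear isomorphism is not justified. (Your bracket computation showing $\sigma_W(L(v),L(v'))=\sigma_V(v,v')$ is actually fine even when $\kappa_0\neq 0$, since scalars Poisson-commute; the problem is solely with bijectivity of $L$.) The paper repairs this by first composing $\kappa$ with the shift automorphism $\chi:v\mapsto v-\kappa_0(v)$ to arrange $\widetilde\kappa_0=0$; then $\widetilde\kappa$ is lower-triangular for the grading and your filtration argument goes through to show $\kappa_1$ is bijective.

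Your example for ``$L$ does not determine $\kappa$'' is also incorrect: an algebra endomorphism of $S(V)$ is determined by its values on the generators $V=S^1(V)$, so there is no automorphism fixing $S^0\oplus S^1$ yet sending $v_0^2\mapsto v_0^2+1$. The right example is again a shift: for any nonzero linear $\ell:V\to\bbR$, the automorphism determined by $v\mapsto v+\ell(v)$ is a $\PoisAlg$-automorphism (brackets of scalars vanish) with $\kappa_1=\id_V=L$ but $\kappa\neq\id$.
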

\begin{proof}
The direction ``$\Rightarrow$'' is a consequence of functoriality, 
because $\CanPois$ preserves isomorphisms. To show the reverse direction, suppose that
$\kappa:\CanPois(V,\sigma_V)\to\CanPois(W,\sigma_W)$ is a $\PoisAlg$-isomorphism.
In particular, $\kappa$ is a unital algebra isomorphism $\kappa:S(V)\to S(W)$
between the symmetric tensor algebras of $V$ and $W$. This algebra isomorphism is uniquely specified
by its action on arbitrary $v\in S^1(V)=V$, so let us write $\kappa(v) = \kappa_0(v) + \kappa_1(v) + \kappa_{\geq 2}(v)$, 
where $\kappa_0:V\to \bbR$, $\kappa_1:V\to W$ and $\kappa_{\geq 2}:V\to S^{\geq 2}(W)$
 are the projections of $\kappa\vert_{S^1(V)}$ to the subspaces $S^0(W)$, $S^1(W)$ and $S^{\geq 2}(W):=
\bigoplus_{k=2}^\infty S^k(W)$.
\sk

We now will show that, given any $\PoisAlg$-isomorphism $\kappa$, 
there exists a $\PoisAlg$-isomorphism $\widetilde{\kappa}:\CanPois(V,\sigma_V)
\to \CanPois(W,\sigma_W)$, with $\widetilde{\kappa}_0 =0$ and $\widetilde{\kappa}_1=\kappa_1$.
Consider the $\PoisAlg$-automorphism
$\chi: \CanPois(V,\sigma_V)\to \CanPois(V,\sigma_V)$ defined by, for all $v\in V$,
$\chi(v) = v-\kappa_0(v)$. Define $\widetilde{\kappa} := \kappa\circ \chi : \CanPois(V,\sigma_V)\to 
\CanPois(W,\sigma_W)$, which is as a composition of $\PoisAlg$-isomorphisms again a $\PoisAlg$-isomorphism
and notice that $\widetilde{\kappa}(v) = \kappa_1(v) + \kappa_{\geq 2}(v)$, for any $v\in V$. As 
$\widetilde{\kappa}$ is an algebra homomorphism, it is therefore lower-triangular with 
respect to the gradings of $\CanPois(V,\sigma_V)$ and
$\CanPois(W,\sigma_W)$: the degree-$k$ component of any $\widetilde{\kappa}(a)$
depends only on the components of $a$ with degree $k$ or less. Accordingly, 
$\widetilde{\kappa}^{-1}$ is also lower-triangular, and all diagonal
entries $\pi_{S^k(W)}\circ\widetilde{\kappa}|_{S^k(V)}$ ($k\in\mathbb{N}^0$) are vector space isomorphisms. In particular,
 $\kappa_1:V\to W$ is a vector space isomorphism. The claim that $\kappa_1: (V,\sigma_V)\to (W,\sigma_W)$
is a $\PreSymp$-isomorphism follows by evaluating both sides of the condition, for all $v,v^\prime\in V$,
 $\widetilde{\kappa}\big(\{v,v^\prime\}_{\sigma_V}\big) = \{\widetilde{\kappa}(v),\widetilde{\kappa}(v^\prime)\}_{\sigma_W}$.
\end{proof}
We next show that the covariant functor $\CPA_{p,q}$ defined in (\ref{eqn:PApq}) 
is naturally isomorphic to the covariant functor $\CanPois\circ \PhSp_{p,q}$, where
$\PhSp_{p,q}$ is defined in (\ref{eqn:PhSppq}). This follows from the more general
\begin{lem}\label{lem:naturalisoforpoisson}
The covariant functors $\CanPois\circ \oplus : \PreSymp\times\PreSymp \to \PoisAlg$
and $\otimes\circ (\CanPois\times \CanPois): \PreSymp\times\PreSymp \to \PoisAlg$ are naturally isomorphic. 
Specifically, the $\PoisAlg$-morphisms 
\begin{flalign}
\eta_{((V,\sigma_V),(W,\sigma_W))} : \CanPois(V\oplus W,\sigma_{V\oplus W}) \to
\CanPois(V,\sigma_V)\otimes \CanPois(W,\sigma_W)
\end{flalign}
specified by, for all $(v,w)\in V\oplus W$, $\eta_{((V,\sigma_V),(W,\sigma_W))}(v,w) = v\otimes 1 + 1\otimes w$
define a natural isomorphism $\{\eta_{((V,\sigma_V),(W,\sigma_W))}\} : \CanPois\circ \oplus
\Rightarrow \otimes\circ (\CanPois\times \CanPois)$.
\end{lem}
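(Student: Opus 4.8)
The plan is to verify, in turn, that each $\eta_{((V,\sigma_V),(W,\sigma_W))}$ is a well-defined $\PoisAlg$-morphism, that it is invertible, and that the family is natural; the only nontrivial point is compatibility with the Poisson brackets, and I would handle that by a biderivation argument.

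First I would note that $\CanPois(V\oplus W,\sigma_{V\oplus W})$ has underlying algebra the symmetric algebra $S(V\oplus W)$, which is the free unital commutative $\bbR$-algebra on the vector space $S^1(V\oplus W)=V\oplus W$. Hence the prescription $(v,w)\mapsto v\otimes 1+1\otimes w$ extends uniquely to a unital algebra homomorphism $\eta:=\eta_{((V,\sigma_V),(W,\sigma_W))}\colon S(V\oplus W)\to S(V)\otimes S(W)$. To see that $\eta$ is bijective I would exhibit an explicit inverse: writing $\iota_V\colon S(V)\to S(V\oplus W)$ and $\iota_W\colon S(W)\to S(V\oplus W)$ for the algebra homomorphisms induced by $v\mapsto(v,0)$ and $w\mapsto(0,w)$, the assignment $\zeta(a\otimes b):=\iota_V(a)\,\iota_W(b)$ is (by bilinearity, commutativity of $S(V\oplus W)$, and multiplicativity of $\iota_V,\iota_W$) a well-defined unital algebra homomorphism $S(V)\otimes S(W)\to S(V\oplus W)$. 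One then checks $\zeta\circ\eta$ and $\eta\circ\zeta$ on algebra generators: $\zeta(\eta(v,w))=(v,0)+(0,w)=(v,w)$, while $\eta(\zeta(v\otimes 1))=v\otimes 1$ and $\eta(\zeta(1\otimes w))=1\otimes w$; since both composites are algebra homomorphisms agreeing with the identity on generators, they are the identity, so $\eta$ is bijective. (Alternatively, one may simply invoke that $S(V\oplus W)\cong S(V)\otimes S(W)$ is the canonical identification of coproducts of $S(V)$ and $S(W)$ in commutative unital $\bbR$-algebras.)

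The key step is that $\eta$ intertwines the Poisson brackets, i.e. $\eta(\{a,b\}_{\sigma_{V\oplus W}})=\{\eta(a),\eta(b)\}$ for all $a,b\in S(V\oplus W)$. Here I would introduce the two antisymmetric bilinear maps $B_1(a,b):=\{\eta(a),\eta(b)\}$ and $B_2(a,b):=\eta(\{a,b\}_{\sigma_{V\oplus W}})$ and observe that both are \emph{biderivations along $\eta$}: since $\eta$ is an algebra homomorphism and the bracket on $S(V)\otimes S(W)$ is a derivation, $B_1(ac,b)=\eta(a)B_1(c,b)+\eta(c)B_1(a,b)$, and likewise $B_2(ac,b)=\eta(a)B_2(c,b)+\eta(c)B_2(a,b)$ because $\{\cdot,\cdot\}_{\sigma_{V\oplus W}}$ is a derivation; moreover $B_1(1,\cdot)=B_2(1,\cdot)=0$ using the derivation property on $1=1\cdot 1$ and formula \eqref{eqn:poissonbracket}. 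Consequently, for fixed $b$ the set $\{a:(B_1-B_2)(a,b)=0\}$ is a unital subalgebra of $S(V\oplus W)$ and a linear subspace; since $S(V\oplus W)$ is generated as a unital algebra by $S^1(V\oplus W)$, it suffices (also using antisymmetry to swap the roles of the two slots) to check $B_1=B_2$ on pairs of degree-one elements, and by bilinearity only on pairs drawn from $\{(v,0):v\in V\}\cup\{(0,w):w\in W\}$. This is then a short computation from \eqref{eqn:poissonbracket} and \eqref{eqn:sumpresympstructure}: on $(v,0),(v',0)$ both sides equal $\sigma_V(v,v')\,(1\otimes 1)$; on $(0,w),(0,w')$ both equal $\sigma_W(w,w')\,(1\otimes 1)$; and on a mixed pair $(v,0),(0,w')$ both vanish. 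Hence $\eta$ is a bijective $\PoisAlg$-morphism, and its inverse $\zeta$ is then automatically a Poisson algebra homomorphism (and injective), so $\eta$ is an isomorphism in $\PoisAlg$.

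Finally, for naturality I would take $\PreSymp$-morphisms $L\colon(V_1,\sigma_{V_1})\to(V_2,\sigma_{V_2})$ and $K\colon(W_1,\sigma_{W_1})\to(W_2,\sigma_{W_2})$ and compare $\eta_{((V_2,\sigma_{V_2}),(W_2,\sigma_{W_2}))}\circ\CanPois(L\oplus K)$ with $(\CanPois(L)\otimes\CanPois(K))\circ\eta_{((V_1,\sigma_{V_1}),(W_1,\sigma_{W_1}))}$; both are unital algebra homomorphisms out of $S(V_1\oplus W_1)$, so it is enough to evaluate them on $(v,w)\in V_1\oplus W_1$, where each sends $(v,w)$ to $L(v)\otimes 1+1\otimes K(w)$, so the naturality square commutes. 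I expect the main obstacle to be organizing the bracket-compatibility argument cleanly --- in particular making the ``biderivation along $\eta$'' reduction precise enough that checking on degree-$\le 1$ generators genuinely closes the argument --- while every individual computation is routine.
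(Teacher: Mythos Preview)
Your proof is correct and follows essentially the same approach as the paper's: both extend $\eta$ from degree-one generators to a unital algebra homomorphism, verify Poisson compatibility by reducing to pairs in $S^1(V\oplus W)$, construct the inverse on generators $v\otimes 1$ and $1\otimes w$, and check naturality on generators. The only difference is that you spell out the biderivation-along-$\eta$ reduction explicitly, whereas the paper simply invokes formula~\eqref{eqn:poissonbracket} to pass from degree-one to general elements; your version is a bit more detailed but not genuinely different.
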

\begin{proof}
$\eta:=\eta_{((V,\sigma_V),(W,\sigma_W))}$ is clearly a linear map (of $\bbN^0$-degree zero) 
and therefore induces a unital algebra homomorphism
$S(V\oplus W) \to S(V)\otimes S(W)$, which preserves the natural $\bbN^0$-gradings (with a slight abuse of notation
we use for the graded tensor product the same symbol $\otimes$).
A simple computation shows that $\eta$ preserves the Poisson bracket of elements in 
$S^1(V\oplus W)$ and is therefore a Poisson morphism
by \eqref{eqn:poissonbracket}. The inverse to $\eta$
is given by $\eta^{-1}(v\otimes 1) = (v,0)$ and $\eta^{-1}(1\otimes w) = (0,w)$ on the generators $v\otimes 1$ and $1\otimes w$, with 
$v\in V$ and $w\in W$, of $S(V)\otimes S(W)$, and extended as
a unital algebra homomorphism. 
\sk

It remains to show naturality. Let $(L,K): ((V_1,\sigma_{V_1}),(W_1,\sigma_{W_1}))\to  ((V_2,\sigma_{V_2}),(W_2,\sigma_{W_2}))$
be a morphism in $\PreSymp\times \PreSymp$, and denote $\eta_1:=\eta_{((V_1,\sigma_{V_1}),(W_1,\sigma_{W_1}))}$
and $\eta_2:=\eta_{((V_2,\sigma_{V_2}),(W_2,\sigma_{W_2}))}$.
Then, for all $(v,w)\in V_1\oplus W_1$,
\begin{flalign}
\nn \eta_2\big(L\oplus K(v,w)\big) &= \eta_{2}(L(v),K(w)) = L(v)\otimes 1 + 1\otimes K(w) \\
&= (L\otimes K)(v\otimes 1) + (L\otimes K)(1\otimes w) = (L\otimes K)(\eta_1(v,w)),
\end{flalign}
which proves naturality since $(v,w)\in V_1\oplus W_1$ are the  generators of $S(V_1\oplus W_1)$.
\end{proof}
We now can prove the violation of the composition property.
\begin{propo}\label{propo:compviolationPA}
For any $2\leq p\in\bbN$ and $0<q<p$, the covariant functors $\CPA_{p,q}:\LocSrc_p\to \PoisAlg$ 
and $\CPA_p:\LocSrc_p\to \PoisAlg$ are not naturally isomorphic. 
Indeed, there is no object
$(\bfM,\bfJ)$ in $\LocSrc_p$ for which the Poisson algebras $\CPA_{p,q}(\bfM,\bfJ)$
and $\CPA_p(\bfM,\bfJ)$ are isomorphic.
\end{propo}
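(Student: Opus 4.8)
The plan is to deduce the result from Proposition~\ref{propo:compprop} by transporting the presymplectic obstruction through the two lemmas just established. First I would unwind the definition~\eqref{eqn:PApq}: since $\CPA_r = \CanPois\circ\PhSp_r$ for every $r$, one has
\[
\CPA_{p,q}= \otimes\circ(\CanPois\times\CanPois)\circ(\PhSp_q\times\PhSp_{p-q})\circ\mathfrak{Split}_{p,q}.
\]
Precomposing (right-whiskering) the natural isomorphism $\{\eta_{((V,\sigma_V),(W,\sigma_W))}\}$ of Lemma~\ref{lem:naturalisoforpoisson} with the functor $(\PhSp_q\times\PhSp_{p-q})\circ\mathfrak{Split}_{p,q}$ yields a natural isomorphism between $\CPA_{p,q}$ and $\CanPois\circ\oplus\circ(\PhSp_q\times\PhSp_{p-q})\circ\mathfrak{Split}_{p,q}$, and this last functor is precisely $\CanPois\circ\PhSp_{p,q}$ by~\eqref{eqn:PhSppq}. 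Since also $\CPA_p=\CanPois\circ\PhSp_p$ by construction, it suffices to prove that $\CanPois\circ\PhSp_{p,q}$ and $\CanPois\circ\PhSp_p$ have non-isomorphic values in $\PoisAlg$ on every object of $\LocSrc_p$ (which then also rules out a natural isomorphism between the functors).

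For this I would argue by contradiction, in parallel with the proof of Proposition~\ref{propo:compprop}. Suppose $\CPA_{p,q}(\bfM,\bfJ)$ and $\CPA_p(\bfM,\bfJ)$ were isomorphic in $\PoisAlg$ for some object $(\bfM,\bfJ)$ in $\LocSrc_p$. Combining with the natural isomorphism above, this gives a $\PoisAlg$-isomorphism $\CanPois\big(\PhSp_{p,q}(\bfM,\bfJ)\big)\cong\CanPois\big(\PhSp_p(\bfM,\bfJ)\big)$. By Lemma~\ref{lem:presymppoisalgiso} it follows that $\PhSp_{p,q}(\bfM,\bfJ)\cong\PhSp_p(\bfM,\bfJ)$ in $\PreSymp$, contradicting Proposition~\ref{propo:compprop}. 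Hence no such object exists; in particular the functors $\CPA_{p,q}$ and $\CPA_p$ cannot be naturally isomorphic, since a natural isomorphism would restrict to an isomorphism of the values on each object.

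The only point needing a little care is the categorical bookkeeping in the first paragraph: one must check that right-whiskering the natural isomorphism of Lemma~\ref{lem:naturalisoforpoisson} by $(\PhSp_q\times\PhSp_{p-q})\circ\mathfrak{Split}_{p,q}$ again gives a natural isomorphism (immediate, since whiskering a natural isomorphism with a functor always does) and that the two presentations of $\CPA_{p,q}$ and of $\CanPois\circ\PhSp_{p,q}$ agree strictly, not merely up to a further coherence isomorphism, which holds because all the functors involved are defined on the nose. An alternative, more self-contained route that avoids Lemma~\ref{lem:naturalisoforpoisson} is to compare Poisson centres: one checks that $Z\big(\CanPois(V,\sigma_V)\big)=S\big(N(V,\sigma_V)\big)$ is the symmetric algebra on the null space (because $\{a,v\}_{\sigma_V}$ depends only on the components of $v$ transverse to $N(V,\sigma_V)$, so a central $a$ must be a polynomial in the null directions alone), whence $\dim N(V,\sigma_V)$ is an isomorphism invariant of $\CanPois(V,\sigma_V)$; since the relevant null spaces have dimensions $2$ and $1$ (cf.\ the proof of Proposition~\ref{propo:compprop} and Proposition~\ref{propo:phspproperties}~a)), no Poisson isomorphism can exist. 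Either way, the entire content is inherited from the presymplectic-level obstruction of Section~\ref{sec:compproperty}.
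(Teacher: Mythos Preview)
Your main argument is correct and is essentially identical to the paper's: both reduce to Proposition~\ref{propo:compprop} by first using Lemma~\ref{lem:naturalisoforpoisson} to identify $\CPA_{p,q}(\bfM,\bfJ)$ with $\CanPois\big(\PhSp_{p,q}(\bfM,\bfJ)\big)$ and then invoking Lemma~\ref{lem:presymppoisalgiso} to descend a hypothetical $\PoisAlg$-isomorphism to a $\PreSymp$-isomorphism. Your additional Poisson-centre argument (noting $Z(\CanPois(V,\sigma_V))\cong S(N(V,\sigma_V))$, so the Krull dimension of the centre recovers $\dim N$) is a nice self-contained alternative that the paper does not give, though it ultimately encodes the same null-space obstruction.
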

\begin{proof}
We argue by contradiction: suppose $\CPA_{p,q} (\bfM,\bfJ)$ and 
 $\CPA_{p}(\bfM,\bfJ)$ are $\PoisAlg$-isomorphic for some object $(\bfM,\bfJ)$ in $\LocSrc_p$.  Then $\CanPois(\PhSp_p(\bfM,\bfJ))$ 
and $\CanPois\big(\PhSp_q(\bfM,\bfJ^q)\oplus \PhSp_{p-q}(\bfM,\bfJ^{p-q})\big)$ are also $\PoisAlg$-isomorphic by 
Lemma \ref{lem:naturalisoforpoisson}, and hence by Lemma \ref{lem:presymppoisalgiso} we have that
$\PhSp_p(\bfM,\bfJ)$ and $\PhSp_q(\bfM,\bfJ^q)\oplus \PhSp_{p-q}(\bfM,\bfJ^{p-q})$ are
 $\PreSymp$-isomorphic. But this is excluded
by Proposition \ref{propo:compprop}.
\end{proof}

\subsection{\label{subsec:improvedpoisson}Improved Poisson algebras}
In this subsection we will modify the canonical Poisson algebras constructed in
Subsection \ref{subsec:canpois} in order to address the problems concerning the unexpectedly large automorphism group
and the violation of the composition property. As already mentioned, the key point is to represent the algebras 
given by the functor $\CPA_p$ as functionals on the affine space of solutions to the inhomogeneous Klein--Gordon equation. 
When this is done, a degeneracy becomes apparent which was not visible in the description available in the category 
of presymplectic vector spaces.
However, we will show that the degeneracy may be described and also   removed in the category of 
Poisson algebras, thereby resolving the problems discussed above 
(cf.~Subsections \ref{subsec:poissonautomorphism} and \ref{subsec:poissoncomposition}, 
and Appendix~\ref{sec:pointed} for another approach).
\sk

The abstract Poisson algebras $\CPA_p$ are represented as functionals on the solution spaces
 $\Sol_p: \LocSrc_p \to \Aff$ by extending the pairing
\eqref{eqn:pairing} of $\PhSp_p$ and $\Sol_p$ as follows:
For any object $(\bfM,\bfJ)$ in $\LocSrc_p$
we extend $\sip{\,\cdot\,}{\,\cdot\,}_{(\bfM,\bfJ)}$ to a map $\CPA_p(\bfM,\bfJ)\times \Sol_p(\bfM,\bfJ)\to \bbR$
(denoted with a slight abuse of notation by the same symbol) in such a way that it is a unital algebra homomorphism in the left entry.
Explicitly, we set, for all $\phi\in \Sol_p(\bfM,\bfJ)$ and $\alpha\in \bbR$,
\begin{subequations}
\begin{flalign}\label{eqn:zeroorderpairing}
\sip{\alpha}{\phi}_{(\bfM,\bfJ)} = \alpha~,
\end{flalign}
and, for all $\phi\in \Sol_p(\bfM,\bfJ)$ and $[(\varphi_1,\alpha_1)],\dots ,[(\varphi_k,\alpha_k)] \in \PhSp_p(\bfM,\bfJ)$,
\begin{flalign}\label{eqn:solpairingismultiplicative}
\sip{[(\varphi_1,\alpha_1)]\cdots [(\varphi_k,\alpha_k)]}{\phi}_{(\bfM,\bfJ)} 
= \sip{[(\varphi_1,\alpha_1)]}{\phi}_{(\bfM,\bfJ)}\cdots \sip{[(\varphi_k,\alpha_k)]}{\phi}_{(\bfM,\bfJ)}~.
\end{flalign}
\end{subequations} 
Note that  
the pairing induces a pairing $\sip{\,\cdot\,}{\,\cdot\,}^\mathrm{lin}_{\bfM}:
 \PhSp_p^\mathrm{lin}(\bfM)\times \Sol_p^\mathrm{lin}(\bfM)\to\bbR$ between the linearized (pre)symplectic vector space
  and solution space, which describe a multiplet of $p\in\bbN$ homogeneous Klein--Gordon fields.
Explicitly, we have $\PhSp_p^\mathrm{lin}(\bfM) := \big(C^\infty_0(M,\bbR^p)/\mathrm{KG}_{\bfM}[C^\infty_0(M,\bbR^p)],
\sigma_{\bfM}^\mathrm{lin}\big)$, where, for all $[\varphi]^\mathrm{lin},[\psi]^\mathrm{lin}\in 
C^\infty_0(M,\bbR^p)/\mathrm{KG}_{\bfM}[C^\infty_0(M,\bbR^p)]$, the symplectic structure is given by
$\sigma_{\bfM}^\mathrm{lin}([\varphi]^\mathrm{lin},[\psi]^\mathrm{lin})
 := \int_M \ip{\varphi}{\mathrm{E}_{\bfM}(\psi)} \, \vol_{\bfM}$. The linearized pairing reads as,
 for all $[\varphi]^\mathrm{lin}\in \PhSp_p^\mathrm{lin}(\bfM)$ and $\underline{\phi}\in\Sol_p^\mathrm{lin}(\bfM)$,
 \begin{flalign}
 \sip{[\varphi]^\mathrm{lin}}{\underline{\phi}}_{\bfM}^\mathrm{lin} = \int_M\ip{\varphi}{\underline{\phi}}\, \vol_{\bfM}
 \end{flalign}
 and is related to $\sip{\,\cdot\,}{\,\cdot\,}_{(\bfM,\bfJ)}$ via, for all $[(\varphi,\alpha)]\in \PhSp_p(\bfM,\bfJ)$,
 $\phi\in\Sol_p(\bfM,\bfJ)$ and $\underline{\phi}\in\Sol_p^\mathrm{lin}(\bfM)$,
 \begin{flalign}\label{eqn:pairingrelation}
 \sip{[(\varphi,\alpha)]}{\phi +\underline{\phi}}_{(\bfM,\bfJ)} = \sip{[(\varphi,\alpha)]}{\phi }_{(\bfM,\bfJ)} 
 +  \sip{[\varphi]^\mathrm{lin}}{\underline{\phi}}_{\bfM}^\mathrm{lin}~.
 \end{flalign}
 Notice that for the linearized setting the analog of the diagram in (\ref{eqn:pairingdiagramphsp}) holds true. Moreover, we can extend $\sip{\,\cdot\,}{\,\cdot\,}^\mathrm{lin}_{\bfM}$ to a map 
$\CPA_p^\mathrm{lin}(\bfM)\times \Sol^\mathrm{lin}_p(\bfM)\to \bbR$, where 
 $\CPA_p^\mathrm{lin} := \CanPois\circ \PhSp_p^\mathrm{lin}$. 
These extended pairings are also natural, i.e.~the analog of the diagram in (\ref{eqn:pairingdiagramphsp}) holds true.

\begin{rem}
The pairing $\sip{\,\cdot\,}{\,\cdot\,}_{(\bfM,\bfJ)}$ provides us with a representation of the canonical (abstract)
Poisson algebra $\CPA_p(\bfM,\bfJ)$ as a polynomial algebra of functionals on the 
affine space $\Sol_p(\bfM,\bfJ)$. Analogously, the pairing $\sip{\,\cdot\,}{\,\cdot\,}^\mathrm{lin}_{\bfM}$ 
leads to a representation of $\CPA_p^\mathrm{lin}(\bfM)$ as a polynomial algebra 
of functionals on the vector space $\Sol_p^\mathrm{lin}(\bfM)$.
The Poisson bracket (\ref{eqn:poissonbracket}) can be expressed
in this representation as follows, for all $a,b\in \CPA_p(\bfM,\bfJ)$ and $\phi\in \Sol_p(\bfM,\bfJ)$,
\begin{flalign}\label{eqn:poissonfuncder}
\sip{\big\{a,b\big\}_{\sigma_{(\bfM,\bfJ)}}}{\phi}_{(\bfM,\bfJ)} = \int_M\ip{\sip{a^{(1)}}{\phi}_{(\bfM,\bfJ)}}{\mathrm{E}_{\bfM}\left(\sip{b^{(1)}}{\phi}_{(\bfM,\bfJ)}\right)}\, \vol_{\bfM}~,
\end{flalign}
where $a^{(1)}$ and $b^{(1)}$ are the first functional derivatives of $a$ and $b$, respectively, defined uniquely so that  
\begin{flalign}
\int_M\ip{\sip{a^{(1)}}{\phi}_{(\bfM,\bfJ)}}{\underline{\phi}} \,\vol_{\bfM}:= \frac{d}{d\epsilon} 
\sip{a}{\phi+\epsilon\,\underline{\phi}}_{(\bfM,\bfJ)}\vert^{\,}_{\epsilon=0}~,
\end{flalign}
for all $a\in \CPA_p(\bfM,\bfJ)$, $\phi\in \Sol_p(\bfM,\bfJ)$
and $\underline{\phi} \in\Sol_{p}^\mathrm{lin}(\bfM)$.
\end{rem}
\sk

We notice that the pairing $\sip{\,\cdot\,}{\,\cdot\,}_{(\bfM,\bfJ)}$ is non-degenerate when
acting on $\PhSp_p(\bfM,\bfJ)$. This means that $\sip{[(\varphi,\alpha)]}{\phi}_{(\bfM,\bfJ)} =0$, for 
all $\phi\in \Sol_p(\bfM,\bfJ)$, implies that $[(\varphi,\alpha)]=0$, and, vice versa,
that $\sip{[(\varphi,\alpha)]}{\phi}_{(\bfM,\bfJ)}=\sip{[(\varphi,\alpha)]}{\phi^\prime}_{(\bfM,\bfJ)}$,
for all $[(\varphi,\alpha)]\in \PhSp_p(\bfM,\bfJ)$, implies that $\phi = \phi^\prime$. However,
 the extended pairing on $\CPA_p(\bfM,\bfJ)\times\Sol_p(\bfM,\bfJ)$ turns out to be degenerate in the left entry
and non-degenerate in the right entry. For example,
taking $[(0,\alpha)]\in \PhSp_p(\bfM,\bfJ)$ with $\alpha\in \bbR$ we obtain 
\begin{flalign}\label{eqn:tmpvanishingofsimpleelements}
\sip{[(0,\alpha)] -\alpha }{\phi}_{(\bfM,\bfJ)} = \left(\int_M\ip{0}{\phi} \, \vol_{\bfM}\right) + \alpha -\alpha =0~,
\end{flalign}
for all $\phi\in\Sol_p(\bfM,\bfJ)$. Hence, the extension of $\sip{\,\cdot\,}{\,\cdot\,}$ from $\PhSp_p$
to $\CPA_p$ has introduced a new degeneracy, which can not be seen at the level
of presymplectic vector spaces as it mixes different $\bbN^0$-degrees in $\CPA_p$. This degeneracy 
is removed precisely by  taking the quotient via a suitable Poisson ideal, namely 
the vanishing ideal 
\begin{flalign}\label{eqn:poisideal}
\mathfrak{I}_p(\bfM,\bfJ) := \Big\{a\in \CPA_p(\bfM,\bfJ): \sip{a}{\phi}_{(\bfM,\bfJ)} 
=0~,~~\text{for all }\phi\in\Sol_p(\bfM,\bfJ)\Big\}
\end{flalign}
of the pairing $\sip{\,\cdot\,}{\,\cdot\,}$ (we will check that it is
indeed a Poisson ideal below). The
corresponding theory will turn out to have the correct automorphism group and composition property.
At this point we would like to note that the pairing $\sip{\,\cdot\,}{\,\cdot\,}_{\bfM}^\mathrm{lin}$
is non-degenerate when acting on both $\PhSp_p^\mathrm{lin}(\bfM)$ and $\CPA_p^\mathrm{lin}(\bfM)$.

\begin{lem}\label{lem:vanishingidealproperties}
For any object $(\bfM,\bfJ)$ in $\LocSrc_p$ the vanishing ideal $\mathfrak{I}_p(\bfM,\bfJ)$ is
a proper Poisson ideal of $\CPA_p(\bfM,\bfJ)$. Hence the quotient
$\CPA_p(\bfM,\bfJ)/\mathfrak{I}_p(\bfM,\bfJ)$ is a nontrivial unital Poisson algebra. 
\end{lem}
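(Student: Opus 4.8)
The plan is to check, in turn, the three things encoded in the statement: that $\mathfrak{I}_p(\bfM,\bfJ)$ is an ideal of the commutative algebra $\CPA_p(\bfM,\bfJ)$, that it is moreover stable under the Poisson bracket, and that it is proper (the last point giving at once the nontriviality of the quotient Poisson algebra).

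First I would deal with the associative-algebra part, which is purely formal. For each fixed $\phi\in\Sol_p(\bfM,\bfJ)$ the evaluation $\sip{\,\cdot\,}{\phi}_{(\bfM,\bfJ)}:\CPA_p(\bfM,\bfJ)\to\bbR$ is, by its construction \eqref{eqn:zeroorderpairing}--\eqref{eqn:solpairingismultiplicative}, a unital algebra homomorphism, so its kernel is an ideal; hence $\mathfrak{I}_p(\bfM,\bfJ)=\bigcap_{\phi\in\Sol_p(\bfM,\bfJ)}\Ker\,\sip{\,\cdot\,}{\phi}_{(\bfM,\bfJ)}$, being an intersection of ideals, is itself an ideal. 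Properness is equally quick: the inhomogeneous Klein--Gordon equation admits solutions on any globally hyperbolic spacetime (solve the Cauchy problem, e.g.\ with vanishing data on a Cauchy surface), so $\Sol_p(\bfM,\bfJ)\neq\emptyset$; fixing $\phi_0\in\Sol_p(\bfM,\bfJ)$, \eqref{eqn:zeroorderpairing} gives $\sip{1}{\phi_0}_{(\bfM,\bfJ)}=1\neq 0$, so $1\notin\mathfrak{I}_p(\bfM,\bfJ)$.

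The substantive step is to prove $\{a,b\}_{\sigma_{(\bfM,\bfJ)}}\in\mathfrak{I}_p(\bfM,\bfJ)$ whenever $a\in\mathfrak{I}_p(\bfM,\bfJ)$ and $b\in\CPA_p(\bfM,\bfJ)$ is arbitrary. Fix $\phi\in\Sol_p(\bfM,\bfJ)$. Because $\Sol_p(\bfM,\bfJ)$ is affine over $\Sol_p^{\mathrm{lin}}(\bfM)$, we have $\phi+\epsilon\,\underline{\phi}\in\Sol_p(\bfM,\bfJ)$ for every $\underline{\phi}\in\Sol_p^{\mathrm{lin}}(\bfM)$ and $\epsilon\in\bbR$; since $a\in\mathfrak{I}_p(\bfM,\bfJ)$ this means $\sip{a}{\phi+\epsilon\,\underline{\phi}}_{(\bfM,\bfJ)}\equiv 0$, and differentiating at $\epsilon=0$ (using the defining relation of the functional derivative in the Remark preceding this lemma) gives $\int_M\ip{\sip{a^{(1)}}{\phi}_{(\bfM,\bfJ)}}{\underline{\phi}}\,\vol_{\bfM}=0$ for all $\underline{\phi}\in\Sol_p^{\mathrm{lin}}(\bfM)$. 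In other words, the class of $\sip{a^{(1)}}{\phi}_{(\bfM,\bfJ)}$ in $\PhSp_p^{\mathrm{lin}}(\bfM)$ is annihilated by the linearized pairing $\sip{\,\cdot\,}{\,\cdot\,}^{\mathrm{lin}}_{\bfM}$; by non-degeneracy of that pairing on $\PhSp_p^{\mathrm{lin}}(\bfM)$ this class vanishes, and therefore $\mathrm{E}_{\bfM}\big(\sip{a^{(1)}}{\phi}_{(\bfM,\bfJ)}\big)=0$, as $\mathrm{E}_{\bfM}$ annihilates $\mathrm{KG}_{\bfM}[C^\infty_0(M,\bbR^p)]$. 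Applying \eqref{eqn:poissonfuncder} to $\{b,a\}_{\sigma_{(\bfM,\bfJ)}}=-\{a,b\}_{\sigma_{(\bfM,\bfJ)}}$, so that this vanishing factor appears inside the integral, and using linearity of $\sip{\,\cdot\,}{\phi}_{(\bfM,\bfJ)}$ in the left slot, we get $\sip{\{a,b\}_{\sigma_{(\bfM,\bfJ)}}}{\phi}_{(\bfM,\bfJ)}=0$; as $\phi$ was arbitrary, $\{a,b\}_{\sigma_{(\bfM,\bfJ)}}\in\mathfrak{I}_p(\bfM,\bfJ)$. Hence $\mathfrak{I}_p(\bfM,\bfJ)$ is a Poisson ideal, so $\CPA_p(\bfM,\bfJ)/\mathfrak{I}_p(\bfM,\bfJ)$ carries a quotient unital Poisson algebra structure, and it is nontrivial since $[1]\neq[0]$.

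I expect the only real obstacle to be the observation underlying the Poisson step, namely that membership of $a$ in the vanishing ideal is strictly stronger than $\sip{a}{\phi}_{(\bfM,\bfJ)}=0$: it forces the first functional derivative of $a$ to vanish as an element of the linearized labelling space $\PhSp_p^{\mathrm{lin}}(\bfM)$, and this uses both the affine structure of $\Sol_p(\bfM,\bfJ)$ and the non-degeneracy of the linearized pairing. Everything else is routine.
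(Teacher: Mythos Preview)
Your proof is correct and follows essentially the same approach as the paper: both arguments show that the vanishing ideal is an (associative) ideal and proper by inspection, and handle the Poisson step by observing that if $a\in\mathfrak{I}_p(\bfM,\bfJ)$ then its first functional derivative vanishes in the relevant sense, whence \eqref{eqn:poissonfuncder} gives $\{a,b\}\in\mathfrak{I}_p(\bfM,\bfJ)$. Your detour through antisymmetry and $\mathrm{E}_{\bfM}\big(\sip{a^{(1)}}{\phi}\big)=0$ is slightly more elaborate than necessary --- one can apply \eqref{eqn:poissonfuncder} to $\{a,b\}$ directly, since $\mathrm{E}_{\bfM}\big(\sip{b^{(1)}}{\phi}\big)\in\Sol_p^{\mathrm{lin}}(\bfM)$ is already one of the test vectors $\underline{\phi}$ against which you have shown the pairing with $\sip{a^{(1)}}{\phi}$ vanishes --- but this is a cosmetic difference only.
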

\begin{proof}
Given any element $a\in\mathfrak{I}_p(\bfM,\bfJ)$ it is easy to see that all 
its functional derivatives vanish, in particular $\sip{a^{(1)}}{\phi}_{(\bfM,\bfJ)}=0$ for all $\phi\in\Sol_p(\bfM,\bfJ)$.
Thus $\{a,b\}_{\sigma_{(\bfM,\bfJ)}} \in\mathfrak{I}_p(\bfM,\bfJ)$ for any $b\in \CPA_p(\bfM,\bfJ)$ by \eqref{eqn:poissonfuncder}. Since
$\mathfrak{I}_p(\bfM,\bfJ)$ is certainly an ideal, it is a Poisson ideal, and
 \eqref{eqn:zeroorderpairing} shows that it is proper. 
\end{proof}

The quotient $\CPA_p(\bfM,\bfJ)/\mathfrak{I}_p(\bfM,\bfJ)$ gives
our improved Poisson algebra for the multiplet of $p\in\bbN$ inhomogeneous Klein--Gordon fields.
 It is of course free of the redundancy discussed above. However, it is sometimes cumbersome to 
 do explicit calculations involving $\mathfrak{I}_p(\bfM,\bfJ)$.
In order to simplify the following constructions, we shall provide an equivalent characterization
of $\mathfrak{I}_p(\bfM,\bfJ)$ in terms of an algebraically generated ideal. Let us define
the following ideal (generated by a set) of $\CPA_p(\bfM,\bfJ)$ 
\begin{flalign}\label{eqn:algebraicPoissonideal}
\widetilde{\mathfrak{I}}_p(\bfM,\bfJ) := \Big\langle\big\{[(0,\alpha)] - \alpha \in \CPA_p(\bfM,\bfJ) : \alpha\in\bbR \big\} \Big\rangle~.
\end{flalign}
Since $\big\{[(0,\alpha)] -\alpha , a\big\}_{\sigma_{(\bfM,\bfJ)}}=0$, for all $\alpha\in\bbR$ and $a\in \CPA_p(\bfM,\bfJ)$,
the ideal $\widetilde{\mathfrak{I}}_p(\bfM,\bfJ) $ is a Poisson ideal of $\CPA_p(\bfM,\bfJ)$.
Furthermore, (\ref{eqn:tmpvanishingofsimpleelements})  and (\ref{eqn:solpairingismultiplicative}) implies that
\begin{flalign}\label{eqn:embedded_ideals}
\widetilde{\mathfrak{I}}_p(\bfM,\bfJ) \subseteq \mathfrak{I}_p(\bfM,\bfJ)~.
\end{flalign}
We now prove that $\widetilde{\mathfrak{I}}_p(\bfM,\bfJ) =\mathfrak{I}_p(\bfM,\bfJ)$, which will allow
 us to work with the easy-to-use algebraically generated ideal  $\widetilde{\mathfrak{I}}_p(\bfM,\bfJ)$
 whenever it is suitable.
\begin{lem}\label{lem:Poissonidealsarethesame}
Let $(\bfM,\bfJ)$  be any object in $\LocSrc_p$. 
\begin{itemize}
\item[a)] The Poisson algebra $\CPA_p(\bfM,\bfJ)/\widetilde{\mathfrak{I}}_p(\bfM,\bfJ)$ is 
(noncanonically) isomorphic to $\CPA_p^{\mathrm{lin}}(\bfM)$.
\item[b)] $\widetilde{\mathfrak{I}}_p(\bfM,\bfJ) = \mathfrak{I}_p(\bfM,\bfJ)$.
\end{itemize} 
\end{lem}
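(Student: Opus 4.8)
~\textbf{The plan.}
For part a) I would exhibit an explicit isomorphism by choosing a splitting.  Fix once and for all a solution $\phi_0\in\Sol_p(\bfM,\bfJ)$ (which exists since the inhomogeneous Klein--Gordon equation has solutions on globally hyperbolic spacetimes).  This induces an affine-to-linear shift, and correspondingly a vector space isomorphism $\PhSp_p(\bfM,\bfJ)\to\PhSp_p^{\mathrm{lin}}(\bfM)\oplus\bbR$ sending $[(\varphi,\alpha)]$ to $\big([\varphi]^{\mathrm{lin}},\,\sip{[(\varphi,\alpha)]}{\phi_0}_{(\bfM,\bfJ)}\big)$; the point of the second component is that it is exactly the ``affine'' degree of freedom, and the generator $[(0,\alpha)]$ is sent to $(0,\alpha)$.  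Applying $\CanPois$ and then quotienting the $\bbR$-summand (which is central and Poisson-trivial) against its copy in $S^0$ gives a Poisson algebra isomorphism $\CPA_p(\bfM,\bfJ)/\widetilde{\mathfrak I}_p(\bfM,\bfJ)\xrightarrow{\ \sim\ }\CPA_p^{\mathrm{lin}}(\bfM)$.  Concretely, the ideal $\widetilde{\mathfrak I}_p(\bfM,\bfJ)$ is generated by the central elements $[(0,\alpha)]-\alpha$, so the quotient is generated (as a unital Poisson algebra) by the images of $\PhSp_p(\bfM,\bfJ)$ modulo the relation that identifies the ``$\alpha$-direction'' with scalars; with the chosen splitting this is visibly $S(\PhSp_p^{\mathrm{lin}}(\bfM))$ with the linearized Poisson bracket, i.e.\ $\CPA_p^{\mathrm{lin}}(\bfM)$.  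I would check that the bracket is respected using~\eqref{eqn:poissonbracket} together with the fact that the presymplectic form on $\PhSp_p(\bfM,\bfJ)$ pulls back to $\sigma_{\bfM}^{\mathrm{lin}}$ under this splitting (this is essentially the computation already recorded in~\eqref{eqn:PhSp_PS}, since $\sigma_{(\bfM,\bfJ)}$ only depends on the $\varphi$-components).

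For part b), given~\eqref{eqn:embedded_ideals} it remains to prove $\mathfrak I_p(\bfM,\bfJ)\subseteq\widetilde{\mathfrak I}_p(\bfM,\bfJ)$.  The strategy is: work in the quotient $\CPA_p(\bfM,\bfJ)/\widetilde{\mathfrak I}_p(\bfM,\bfJ)\cong\CPA_p^{\mathrm{lin}}(\bfM)$ from part a), and show that the induced pairing with $\Sol_p(\bfM,\bfJ)$ is \emph{non-degenerate} in the left entry there.  Under the isomorphism of a) and the affine identification $\Sol_p(\bfM,\bfJ)\cong\Sol_p^{\mathrm{lin}}(\bfM)$ (again via $\phi_0$), the extended pairing $\sip{\cdot}{\cdot}_{(\bfM,\bfJ)}$ descends — using the relation~\eqref{eqn:pairingrelation} and multiplicativity~\eqref{eqn:solpairingismultiplicative} — to the linearized pairing $\sip{\cdot}{\cdot}^{\mathrm{lin}}_{\bfM}$ between $\CPA_p^{\mathrm{lin}}(\bfM)$ and $\Sol_p^{\mathrm{lin}}(\bfM)$, which is stated in the text to be non-degenerate on $\CPA_p^{\mathrm{lin}}(\bfM)$.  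Hence any $a\in\mathfrak I_p(\bfM,\bfJ)$ maps to an element of $\CPA_p^{\mathrm{lin}}(\bfM)$ pairing to zero with all of $\Sol_p^{\mathrm{lin}}(\bfM)$, so it maps to zero, i.e.\ $a\in\widetilde{\mathfrak I}_p(\bfM,\bfJ)$.

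The main obstacle is part a), and more precisely verifying that quotienting $\CPA_p(\bfM,\bfJ)$ by the algebraically-generated ideal $\widetilde{\mathfrak I}_p(\bfM,\bfJ)$ really produces $S$ of a \emph{smaller} vector space rather than something with leftover relations.  The subtle point is that $\widetilde{\mathfrak I}_p(\bfM,\bfJ)$ is generated by elements $[(0,\alpha)]-\alpha$ which mix $\bbN^0$-degrees $1$ and $0$; one must argue that the quotient of the symmetric algebra by the ideal generated by $\{v-\ell(v): v\in N\}$, where $N\subseteq V$ is a subspace and $\ell\colon N\to\bbR$ linear, is canonically $S(V/N)$ — a standard but slightly fiddly fact about symmetric algebras that I would verify by choosing a complement to $N$ in $V$ and comparing generators and relations.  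Here $N$ is the image of $\{[(0,\alpha)]\}$, which is precisely the null space $\NN_p(\bfM,\bfJ)\cong\bbR$ of Proposition~\ref{propo:phspproperties}, so $V/N\cong\PhSp_p^{\mathrm{lin}}(\bfM)$ and the induced bracket is the linearized one; the non-canonical choice in the statement of a) is exactly the choice of complement (equivalently, of $\phi_0$).  Once this is pinned down, the Poisson-bracket compatibility and the naturality-free identifications are routine.
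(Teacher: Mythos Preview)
Your proposal is correct. For part a) you take the same approach as the paper: fix $\phi_0\in\Sol_p(\bfM,\bfJ)$ and define $\kappa\big([(\varphi,\alpha)]\big)=\sip{[(\varphi,\alpha)]}{\phi_0}_{(\bfM,\bfJ)}+[\varphi]^{\mathrm{lin}}$, extended as a unital algebra homomorphism. The only difference is presentational: you justify bijectivity via the general symmetric-algebra fact $S(V)/\langle v-\ell(v):v\in N\rangle\cong S(V/N)$, whereas the paper simply writes down the inverse $\kappa^{-1}\big([\varphi]^{\mathrm{lin}}\big)=\big[[(\varphi,0)]-\sip{[(\varphi,0)]}{\phi_0}_{(\bfM,\bfJ)}\big]$ and checks it directly, which is quicker and avoids your ``fiddly'' detour.

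For part b) the routes genuinely diverge. You show that under the isomorphism of a) and the bijection $\Sol_p(\bfM,\bfJ)\to\Sol_p^{\mathrm{lin}}(\bfM)$, $\phi\mapsto\phi-\phi_0$, the pairing $\sip{\,\cdot\,}{\,\cdot\,}_{(\bfM,\bfJ)}$ descends to the nondegenerate pairing $\sip{\,\cdot\,}{\,\cdot\,}^{\mathrm{lin}}_{\bfM}$ (this follows from \eqref{eqn:pairingrelation} on generators and multiplicativity), so any $a\in\mathfrak{I}_p(\bfM,\bfJ)$ has vanishing image in the quotient. The paper instead observes that $\CPA_p^{\mathrm{lin}}(\bfM)$ is \emph{simple} (being the canonical Poisson algebra of a genuinely symplectic vector space), so $\widetilde{\mathfrak{I}}_p(\bfM,\bfJ)$ is a maximal proper ideal; since $\mathfrak{I}_p(\bfM,\bfJ)$ is proper and contains it, they coincide. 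Your argument is more hands-on and self-contained; the paper's is shorter but relies on the (standard, though not proved in the paper) simplicity of $\CPA_p^{\mathrm{lin}}(\bfM)$, which is also used later (e.g.\ in Proposition~\ref{propo:quotientfunctor}).
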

\begin{proof}
Proof of a):  Let us define a
Poisson algebra homomorphism $\kappa : \CPA_p(\bfM,\bfJ) \to \CPA_p^\mathrm{lin}(\bfM)$
by setting, for all $[(\varphi,\alpha)]\in \CPA_p(\bfM,\bfJ) $,
\begin{flalign}\label{eqn:noncanonicalkappa}
\kappa\big([(\varphi,\alpha)]\big) = \sip{[(\varphi,\alpha)]}{\phi_0}_{(\bfM,\bfJ)} + [\varphi]^\mathrm{lin}~,
\end{flalign}
where $\phi_0\in\Sol_p(\bfM,\bfJ)$ is any fixed solution. As $\widetilde{\mathfrak{I}}_p(\bfM,\bfJ)$ clearly lies in the kernel
of $\kappa$, we can induce a Poisson algebra homomorphism (denoted by the same symbol) 
$\kappa : \CPA_p(\bfM,\bfJ)/\widetilde{\mathfrak{I}}_p(\bfM,\bfJ) \to \CPA_p^\mathrm{lin}(\bfM)$.
To show that the induced $\kappa$ is a $\PoisAlg$-isomorphism, we notice that
setting, for all $[\varphi]^\mathrm{lin}\in \CPA_p^\mathrm{lin}(\bfM)$,
\begin{flalign}
\kappa^{-1}\big([\varphi]^\mathrm{lin}\big) := \big[ [(\varphi,0)] -\sip{[(\varphi,0)]}{\phi_0}_{(\bfM,\bfJ)}  \big]\in \CPA_p(\bfM,\bfJ)/\widetilde{\mathfrak{I}}_p(\bfM,\bfJ) 
\end{flalign} 
is well-defined and defines the inverse of $\kappa$.
\sk

Proof of b): By a), $\CPA_p(\bfM,\bfJ)/\widetilde{\mathfrak{I}}_p(\bfM,\bfJ)$ is a simple (and nontrivial) 
Poisson algebra. Hence $\widetilde{\mathfrak{I}}_p(\bfM,\bfJ)$ is a maximal proper ideal.
In view of \eqref{eqn:embedded_ideals}, this shows that $\widetilde{\mathfrak{I}}_p(\bfM,\bfJ) = \mathfrak{I}_p(\bfM,\bfJ)$.
\end{proof}

These results now allow us to construct our improved functor for the classical theory
of a multiplet of $p\in\bbN$ inhomogeneous Klein--Gordon fields.
\begin{propo}\label{propo:quotientfunctor}
The following defines a covariant functor $\PA_p : \LocSrc_p\to \PoisAlg $:
To any object $(\bfM,\bfJ)$ in $\LocSrc_p$ we associate the
Poisson algebra
$\PA_p (\bfM,\bfJ) := \CPA_p(\bfM,\bfJ)/\mathfrak{I}_p(\bfM,\bfJ)$. 
To any morphism $f:(\bfM_1,\bfJ_1)\to (\bfM_2,\bfJ_2)$ in $\LocSrc_p$ we associate the map
$\PA_p(f): \PA_p(\bfM_1,\bfJ_1) \to \PA_p(\bfM_2,\bfJ_2)$
that is canonically induced from $\CPA_p(f): \CPA_p(\bfM_1,\bfJ_1) \to \CPA_p(\bfM_2,\bfJ_2)$.
\end{propo}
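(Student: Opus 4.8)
The plan is to check three things in turn: that $\PA_p$ is well defined on objects, that $\PA_p(f)$ is a well-defined morphism in $\PoisAlg$, and that the assignment is functorial. The first is immediate: by Lemma~\ref{lem:vanishingidealproperties} the quotient $\PA_p(\bfM,\bfJ)=\CPA_p(\bfM,\bfJ)/\mathfrak{I}_p(\bfM,\bfJ)$ of a unital Poisson algebra by a proper Poisson ideal is again a nontrivial unital Poisson algebra, so there is nothing further to do for the objects.

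For the morphisms, I would first show that $\CPA_p(f)$ descends to the quotients, i.e.\ that $\CPA_p(f)[\mathfrak{I}_p(\bfM_1,\bfJ_1)]\subseteq\mathfrak{I}_p(\bfM_2,\bfJ_2)$. Here it is convenient to replace $\mathfrak{I}_p$ by the algebraically generated ideal $\widetilde{\mathfrak{I}}_p$, which is legitimate by Lemma~\ref{lem:Poissonidealsarethesame}~b), because then it suffices to track the generators $[(0,\alpha)]-\alpha$, $\alpha\in\bbR$. Since $\CanPois$ fixes constants and $\PhSp_p(f)\big([(0,\alpha)]\big)=[(f_\ast(0),\alpha)]=[(0,\alpha)]$ by \eqref{eqn:PhSp_mor}, the Poisson homomorphism $\CPA_p(f)=\CanPois(\PhSp_p(f))$ sends $[(0,\alpha)]-\alpha$ to $[(0,\alpha)]-\alpha\in\widetilde{\mathfrak{I}}_p(\bfM_2,\bfJ_2)$; being an algebra homomorphism it therefore maps the whole ideal $\widetilde{\mathfrak{I}}_p(\bfM_1,\bfJ_1)$ into $\widetilde{\mathfrak{I}}_p(\bfM_2,\bfJ_2)$. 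Consequently $\CPA_p(f)$ induces a map $\PA_p(f)$ of quotients, and this induced map is automatically a unit-preserving Poisson algebra homomorphism because $\CPA_p(f)$ is one and the quotient projections are surjective Poisson homomorphisms.

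The one point needing genuine thought is \emph{injectivity} of $\PA_p(f)$, which is required for it to qualify as a $\PoisAlg$-morphism. My plan here is to argue via maximality. Writing $q_i:\CPA_p(\bfM_i,\bfJ_i)\to\PA_p(\bfM_i,\bfJ_i)$ for the quotient projections, the composite $q_2\circ\CPA_p(f)$ is a unital Poisson algebra homomorphism whose kernel is a Poisson ideal of $\CPA_p(\bfM_1,\bfJ_1)$ that contains $\mathfrak{I}_p(\bfM_1,\bfJ_1)$ by the previous step. By Lemma~\ref{lem:Poissonidealsarethesame}~a), $\CPA_p(\bfM_1,\bfJ_1)/\mathfrak{I}_p(\bfM_1,\bfJ_1)\cong\CPA_p^{\mathrm{lin}}(\bfM_1)$ is a simple Poisson algebra, so $\mathfrak{I}_p(\bfM_1,\bfJ_1)$ is a maximal proper ideal; and the composite is not identically zero, since it sends $1$ to $1\neq 0$ in the nontrivial algebra $\PA_p(\bfM_2,\bfJ_2)$. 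Hence its kernel is proper, and therefore equals $\mathfrak{I}_p(\bfM_1,\bfJ_1)=\ker q_1$, which forces the induced map $\PA_p(f)$ to be injective.

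Finally, functoriality ($\PA_p(\id)=\id$ and $\PA_p(g\circ f)=\PA_p(g)\circ\PA_p(f)$) is inherited directly from the functoriality of $\CPA_p$ together with the uniqueness of the maps induced on quotients. Overall I expect the only real obstacle to be the injectivity argument; the remaining steps are routine bookkeeping built on Lemmas~\ref{lem:vanishingidealproperties} and~\ref{lem:Poissonidealsarethesame}.
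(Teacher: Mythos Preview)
Your proof is correct and follows essentially the same approach as the paper: both use Lemma~\ref{lem:Poissonidealsarethesame}~b) and the invariance of the generators $[(0,\alpha)]-\alpha$ to show that $\CPA_p(f)$ descends to the quotients, and both use the simplicity of $\PA_p(\bfM_1,\bfJ_1)$ (equivalently, maximality of $\mathfrak{I}_p(\bfM_1,\bfJ_1)$) from Lemma~\ref{lem:Poissonidealsarethesame}~a) together with the fact that the induced map preserves the unit to conclude injectivity. The only cosmetic difference is that you phrase the injectivity argument upstairs (via the kernel of $q_2\circ\CPA_p(f)$) while the paper phrases it downstairs (via the kernel of $\PA_p(f)$ itself).
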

\begin{proof}
Lemma \ref{lem:vanishingidealproperties} has established that
the quotients are nontrivial unital Poisson algebras. Next, let $f:(\bfM_1,\bfJ_1)\to (\bfM_2,\bfJ_2)$ be
 any morphism in $\LocSrc_p$. Then $\CPA_p(f)$  induces a Poisson algebra homomorphism 
 $\PA_p(f):\PA_p (\bfM_1,\bfJ_1)  \to \PA_p (\bfM_2,\bfJ_2)$
because it restricts to a map 
 $\CPA_p(f): \mathfrak{I}_p(\bfM_1,\bfJ_1) \to \mathfrak{I}_p(\bfM_2,\bfJ_2)$, as is obvious from
  Lemma \ref{lem:Poissonidealsarethesame} b), the explicit characterization of the
 algebraic Poisson ideal (\ref{eqn:algebraicPoissonideal}) and the fact that
 $\CPA_p(f)([(0,\alpha)]-\alpha) = [(0,\alpha)]-\alpha$, for any $\alpha\in\bbR$. It is clear that $\PA_p(f)$ is unit-preserving, because
$\CPA_p(f)$ is; moreover, it is injective, since $\PA_p(\bfM_1,\bfJ_1)$ is simple by~Lemma \ref{lem:Poissonidealsarethesame} a),
 and hence it does not have any nontrivial proper Poisson ideals (and
$\PA_p(f)$ is a unit-preserving map to a nontrivial unital
Poisson algebra, so it is not the zero map). 
The composition and identity properties of $\PA_p$ are inherited from $\CPA_p$, hence
 $\PA_p:\LocSrc_p\to \PoisAlg$ is a covariant functor.
\end{proof}

\begin{propo}\label{propo:improvedpoisalgfunctorisLCFT}
The covariant functor $\PA_p: \LocSrc_p\to\PoisAlg$ satisfies the causality property and the time-slice axiom, Hence,
it is a locally covariant classical field theory.
\end{propo}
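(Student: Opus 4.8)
The plan is to deduce both properties for $\PA_p$ from the corresponding statements for the canonical Poisson algebra functor $\CPA_p$, established in Proposition~\ref{propo:PApproperties}, together with the fact (Proposition~\ref{propo:quotientfunctor}) that $\PA_p$ is the quotient of $\CPA_p$ by the functorial family of vanishing ideals $\{\mathfrak{I}_p(\bfM,\bfJ)\}$, with $\PA_p(f)$ the map induced on quotients by $\CPA_p(f)$.

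First I would treat causality. Let $f_i:(\bfM_i,\bfJ_i)\to(\bfM,\bfJ)$, $i=1,2$, be morphisms in $\LocSrc_p$ with causally disjoint images in $\bfM$. Given $a\in\PA_p(\bfM_1,\bfJ_1)$ and $b\in\PA_p(\bfM_2,\bfJ_2)$, choose lifts $\tilde a\in\CPA_p(\bfM_1,\bfJ_1)$, $\tilde b\in\CPA_p(\bfM_2,\bfJ_2)$. Causality of $\CPA_p$ gives $\{\CPA_p(f_1)(\tilde a),\CPA_p(f_2)(\tilde b)\}_{\sigma_{(\bfM,\bfJ)}}=0$ in $\CPA_p(\bfM,\bfJ)$; applying the canonical Poisson projection $\CPA_p(\bfM,\bfJ)\to\PA_p(\bfM,\bfJ)$, which intertwines $\CPA_p(f_i)$ with $\PA_p(f_i)$, yields $\{\PA_p(f_1)(a),\PA_p(f_2)(b)\}_{\sigma_{(\bfM,\bfJ)}}=0$. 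This step is essentially immediate, since passing to a Poisson quotient preserves vanishing of brackets.

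Next I would treat the time-slice axiom. Let $f:(\bfM_1,\bfJ_1)\to(\bfM_2,\bfJ_2)$ be a Cauchy morphism. By Proposition~\ref{propo:PApproperties}, $\CPA_p(f)$ is a $\PoisAlg$-isomorphism, so it suffices to show that $\CPA_p(f)$ maps $\mathfrak{I}_p(\bfM_1,\bfJ_1)$ onto $\mathfrak{I}_p(\bfM_2,\bfJ_2)$; then the induced map $\PA_p(f)$ on quotients has an inverse induced by $\CPA_p(f)^{-1}$ and is an isomorphism. For this I would invoke Lemma~\ref{lem:Poissonidealsarethesame}~b) and the explicit description~\eqref{eqn:algebraicPoissonideal}, according to which $\mathfrak{I}_p(\bfM_i,\bfJ_i)$ is the ideal generated by $\{[(0,\alpha)]-\alpha:\alpha\in\bbR\}$; since $\PhSp_p(f)([(0,\alpha)])=[(f_\ast(0),\alpha)]=[(0,\alpha)]$ and $\CanPois$ preserves the $\bbN^0$-grading, the algebra isomorphism $\CPA_p(f)$ fixes each generator $[(0,\alpha)]-\alpha$, hence carries the generating set of $\mathfrak{I}_p(\bfM_1,\bfJ_1)$ bijectively onto that of $\mathfrak{I}_p(\bfM_2,\bfJ_2)$ and therefore maps the one ideal onto the other. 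Finally, a covariant functor $\LocSrc_p\to\PoisAlg$ satisfying causality and the time-slice axiom is by definition a locally covariant classical field theory, which completes the argument.

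The only genuinely substantive ingredient is the functorial behaviour of the vanishing ideals under Cauchy morphisms, and even this is not really an obstacle once Lemma~\ref{lem:Poissonidealsarethesame}~b) is at hand, since it reduces the claim to the trivial observation that the generators $[(0,\alpha)]-\alpha$ are fixed by every $\CPA_p(f)$. Everything else is formal manipulation with quotients, so I expect no real difficulty in carrying out this plan.
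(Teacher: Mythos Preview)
Your proposal is correct and follows essentially the same approach as the paper: deduce both properties from those of $\CPA_p$ (Proposition~\ref{propo:PApproperties}) and pass to the quotient. The paper's proof is terser, merely noting that Poisson-commuting subalgebras remain Poisson-commuting under quotients and that a $\PoisAlg$-isomorphism preserving the ideals descends to an isomorphism; your version is more explicit about why the Cauchy isomorphism $\CPA_p(f)$ maps $\mathfrak{I}_p(\bfM_1,\bfJ_1)$ \emph{onto} $\mathfrak{I}_p(\bfM_2,\bfJ_2)$, invoking Lemma~\ref{lem:Poissonidealsarethesame}~b) and the fact that the generators $[(0,\alpha)]-\alpha$ are fixed by every $\CPA_p(f)$, which the paper leaves implicit (or defers to the proof of Proposition~\ref{propo:quotientfunctor}).
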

\begin{proof}
By Proposition \ref{propo:PApproperties} the covariant functor $\CPA_p$ satisfies these properties.
The quotients by Poisson ideals used in the definition of the functor $\PA_p$
preserve these properties due to the following arguments: For the time-slice axiom we
just have to notice that any $\PoisAlg$-isomorphism which preserves the Poisson ideals 
induces a $\PoisAlg$-isomorphism on the quotients (simply induce the morphism and its inverse to the quotients). Causality holds
because if two subalgebras of a Poisson algebra $\mathcal{A}$
Poisson-commute, then the same is true of the corresponding
subalgebras of any quotient of $\mathcal{A}$ by a Poisson ideal.
\end{proof}

\begin{rem}\label{rem:classical_fields}
The theory $\PA_p$ admits a locally covariant field, in the following sense. 
 For each $(\bfM,\bfJ)$, let $\Phi_{(\bfM,\bfJ)}:C_0^\infty(M, \bbR^p)
\to \PA_p(\bfM,\bfJ)$ be the linear map
\begin{flalign}
\Phi_{(\bfM,\bfJ)}(\varphi) =\big[ [(\varphi,0)]\big]  \,,
\end{flalign}
where the outer brackets denote the equivalence relation used in defining
$\PA_p$ and the inner square brackets that used in defining $\PhSp_p$. 
The significance of this field is seen via its action on solutions,  
\begin{flalign}
\sip{\Phi_{(\bfM,\bfJ)}(\varphi)}{\phi}_{(\bfM,\bfJ)} = 
 \int_M\ip{\varphi}{\phi}\,\vol_{\bfM} ~,
\end{flalign}
where we abuse notation slightly, by extending the notation for
the pairing of $\CPA_p$ with $\Sol_p$ to encompass the pairing
of $\PA_p$ with $\Sol_p$ obtained by quotienting. 
Furthermore, if $f:(\bfM_1,\bfJ_1)\to (\bfM_2,\bfJ_2)$ is a morphism in $\LocSrc_{p}$
then the fields are related by 
\begin{flalign}
\QA_p(f)\big(\Phi_{(\bfM_1,\bfJ_1)}(\varphi)\big)
=\Phi_{(\bfM_2,\bfJ_2)}\big(f_\ast(\varphi)\big) ~,
\end{flalign}
for all $\varphi\in C_0^\infty(M_1,\bbR^p)$. Formalizing
the test function spaces via 
a functor $\mathfrak{D}_p:\LocSrc_{p}\to \Vec$ defined by
$\mathfrak{D}_p(\bfM,\bfJ)=
C_0^\infty(M,\bbR^p)$, $\mathfrak{D}_p(f) =f_\ast$, we
see that the maps $\Phi_{(\bfM,\bfJ)}$ provide the components
of a natural transformation $\Phi:\mathfrak{D}_p\Rightarrow\PA_p$ (we suppress a forgetful functor from $\PoisAlg$ to $\Vec$). 
Thus $\Phi$ conforms to the understanding of fields
as natural transformations, set out in \cite{Brunetti:2001dx} for the case of quantum fields.
Moreover,  $\Phi_{(\bfM,\bfJ)}$ obeys  
the field equation in the form
\begin{flalign}\label{eq:cfeq}  
\nn \Phi_{(\bfM,\bfJ)}\big(\mathrm{KG}_{\bfM}(\varphi) \big)
+\int_M \ip{\varphi}{\bfJ}\, \vol_{\bfM}&= \Big[ \big[\big(\mathrm{KG}_{\bfM}(\varphi) ,0\big)\big]\Big]  + \Big[ \big[\big(0 ,\int_M \ip{\varphi}{\bfJ}\, \vol_{\bfM}\big)\big]\Big] \\
&=\Big[ \big[\big(\mathrm{KG}_{\bfM}(\varphi) ,\int_M\ip{\varphi}{\bfJ}\, \vol_{\bfM}\big)\big]\Big] =0~,
\end{flalign}
for all $\varphi\in C_0^\infty(M,\bbR^p)$, 
where we have used \eqref{eqn:algebraicPoissonideal} in the first equality and \eqref{eqn:equivalence} in the last one.
\end{rem}

\begin{rem}\label{rem:splitting}
The isomorphism of Lemma \ref{lem:Poissonidealsarethesame} a) merits further consideration. 
For $\lambda\in\bbR$, define a covariant functor $\Z_\lambda:\LocSrc_p\to\LocSrc_p$ so that
$\Z_\lambda(\bfM,\bfJ)=(\bfM,\lambda\bfJ)$ and so that, if $f:(\bfM_1,\bfJ_1)\to (\bfM_2,\bfJ_2)$, then
 $\Z_\lambda(f):(\bfM_1,\lambda\bfJ_1)\to (\bfM_2,\lambda\bfJ_2)$ has the same underlying map as $f$. 
 (Notice that $\Z_{\lambda}$ is the identity functor for $\lambda=1$.)
 Then the covariant functor $\PA_p\circ\Z_\lambda : \LocSrc_p\to \PoisAlg$ 
 describes the theory of $p$ inhomogeneous Klein--Gordon fields with a coupling
constant $\lambda$ to the external source. The corresponding equation of motion  on an object $(\bfM,\bfJ)$ in $\LocSrc_p$ is
$\mathrm{KG}_{\bfM}(\phi) + \lambda\, \bfJ=0$,
which reduces to the homogeneous Klein--Gordon equation for $\lambda=0$.
Lemma \ref{lem:Poissonidealsarethesame} a) asserts that, fixing any object $(\bfM,\bfJ)$ in $\LocSrc_p$, all Poisson 
algebras $\PA_p\circ\Z_\lambda(\bfM,\bfJ) = \PA_p(\bfM,\lambda\,\bfJ)$ ($\lambda\in \bbR$)
 are isomorphic to $\CPA_p^{\mathrm{lin}}(\bfM)$, albeit in a noncanonical fashion. 
The physical interpretation of 
this result is that the theories $\PA_p\circ\Z_\lambda$ ($\lambda\in\bbR$) have the same observables, 
which moreover satisfy identical algebraic relations, for any fixed object $(\bfM,\bfJ)$ in $\LocSrc_p$. 
In particular, this means that there exists a (by no means canonical) linear map
 $\Psi^{\mu,\lambda}_{(\bfM,\bfJ)}:C_0^\infty(\bfM;\bbR)\to\PA_p\circ\Z_\lambda(\bfM,\bfJ)$
  for each $\lambda,\mu\in\bbR$ such that
\begin{flalign}
\Psi^{\mu,\lambda}_{(\bfM,\bfJ)}\big(\mathrm{KG}_{\bfM}(\varphi) \big)
+\mu \int_M \ip{\varphi}{\bfJ}\, \vol_{\bfM} &= 0~,
\end{flalign}
for all $\varphi\in C_0^\infty(M,\bbR^p)$. In other words, restricting attention
to a single object $(\bfM,\bfJ)$, the theory for coupling constant $\lambda$ 
admits a field obeying the field equation for coupling constant $\mu$!

These observations {\em do not} assert that the theories with different coupling constants are equivalent.
Rather, they show that the distinction between the theories $\PA_p\circ\Z_\lambda$ and $\PA_p\circ\Z_\mu$ 
for $\lambda\neq\mu$ is {\em only} visible when one considers how the Poisson algebras
for two objects $(\bfM_1,\bfJ_1)$ and $(\bfM_2,\bfJ_2)$ are
embedded if there is a morphism $f:(\bfM_1,\bfJ_1)\to(\bfM_2,\bfJ_2)$. For instance,
the field assignments $\Psi^{\mu,\lambda}_{(\bfM,\bfJ)}$ just mentioned cannot be chosen
so as to constitute a locally covariant field $\Psi^{\mu,\lambda}:\mathfrak{D}_p\Rightarrow\PA_p\circ \Z_\lambda$;
if they could, the difference $\Psi^{\mu,\lambda}-\Phi^{\lambda}$ (with $\Phi^\lambda$ being the locally covariant field
for the theory $\PA_p\circ \Z_{\lambda}$, cf.\ Remark \ref{rem:classical_fields}) would be a multiple of the unit 
yielding a naturally assigned solution to the inhomogeneous field equation for coupling $\mu-\lambda$, which is impossible
unless $\lambda=\mu$. This is direct evidence for the necessity of our locally covariant perspective on inhomogeneous field theories.

At the root of this discussion is the fact, evident from 
the proof of Lemma \ref{lem:Poissonidealsarethesame} a), that
the identification between $\PA_p(\bfM,\bfJ)$ and $\CPA_p^{\mathrm{lin}}(\bfM)$ 
requires a (necessarily noncanonical) splitting of an inhomogeneous solution $\phi = \phi_0+\underline{\phi}$
into a fixed ``background solution'' $\phi_0$ and a homogeneous solution $\underline{\phi}$. A similar result holds true
after quantization, see Lemma \ref{lem:quantidealproperties} below, where the choice of a ``background solution'' 
$\phi_0$ is replaced by the choice of a ``background quantum state'' $\omega_0$ that satisfies appropriate 
inhomogeneous properties.
In textbook treatments of inhomogeneous classical and quantum field theories, see e.g.\ \cite[Chapter 4]{IZbook},
such a splitting is typically assumed from the beginning in order to reduce the construction
of inhomogeneous theories to homogeneous ones. As we have explained above, 
the splitting approach is highly noncanonical and can easily obscure the (physical) interpretation of the theory, 
which, however, is faithfully reflected in the functorial structure.
It is also worth mentioning that, in Appendix~\ref{sec:fedosov}, we discuss the Fedosov quantization method, 
which addresses the noncanonical nature of such splittings by treating {\em all} of them on an equal footing.

\end{rem}

 % % % %

\subsection{\label{subsec:poissonrce}Relative Cauchy evolution of the functor $\PA_p$}
The relative Cauchy evolution of the functor $\PhSp_p:\LocSrc_p\to \PreSymp$ induces 
that of the functor $\PA_p: \LocSrc_p\to\PoisAlg$ as follows:
Let $(\bfM,\bfJ)$ be any object in $\LocSrc_p$ and let $(\h,\j)\in H(\bfM,\bfJ)$ be any globally hyperbolic
perturbation. From the explicit expression for $\rce_{(\bfM,\bfJ)}^{(\PhSp_p)}[\h,\j] \in\Aut(\PhSp_p(\bfM,\bfJ))$
given in (\ref{eq:rce_reform}) we observe that the relative Cauchy evolution $\rce_{(\bfM,\bfJ)}^{(\PA_p)}[\h,\j]\in\Aut(
\PA_p(\bfM,\bfJ))$ of $\PA_p$ is uniquely specified by, for
all $[(\varphi,\alpha)]\in \PA_p(\bfM,\bfJ)$,
\begin{multline}\label{eqn:rcePA}
\rce_{(\bfM,\bfJ)}^{(\PA_p)}[\h,\j]\big([(\varphi,\alpha)]\big) = \left[\left(\varphi + (\mathrm{KG}_{\bfM} - \mathrm{KG}_{\bfM[\h]})\big(\mathrm{E}_{\bfM[\h]}(\varphi)\big),0\right)\right] + \alpha \\
+  \int_M \left(\ip{-\j}{\mathrm{E}_{\bfM[\h]}(\varphi)} + \ip{(1-\rho_{\h})\,(\bfJ+\j)}{\mathrm{E}_{\bfM[\h]}(\varphi)}\right)\,\vol_{\bfM}~,
\end{multline}
where on the right hand side we have used the equivalence relation entering the definition
of $\PA_p(\bfM,\bfJ)$ (cf.\ Proposition \ref{propo:quotientfunctor}) and we have chosen
as in (\ref{eq:rce_reform})  a representative $\varphi$ with compact support in $M^+$.
With the techniques presented in Appendix \ref{app:Cauchy} one can differentiate this expression to yield 
\begin{flalign}
\nonumber \frac{d}{ds} \rce_{(\bfM,\bfJ)}^{(\PA_p)}[s\h,s\j]\big([(\varphi,\alpha)]\big)\Big\vert_{s=0} &= - \left[\left( \mathrm{KG}_{\bfM[\h]}^\prime \big(\mathrm{E}_{\bfM}(\varphi)\big),0\right)\right] \\
\nonumber & \qquad ~\qquad - \int_M\ip{\frac{1}{2} g^{ab}\,h_{ab}\,\bfJ + \j}{\mathrm{E}_{\bfM}(\varphi)}\,\vol_{\bfM}\\
& = -\left\{ \frac{1}{2}  T_{(\bfM,\bfJ)}(\h)+ [(\j,0)]   ,   [(\varphi,\alpha)]\right\}_{\sigma_{(\bfM,\bfJ)}}~,\label{eq:PArcederiv}
\end{flalign}
where $T_{(\bfM,\bfJ)}(\h)$ is the functional on $\Sol_p(\bfM,\bfJ)$ given by
$\phi \mapsto \int_M h_{ab}\,T^{ab}_{(\bfM,\bfJ)}[\phi]\,\vol_{\bfM} $ and
 $T^{ab}_{(\bfM,\bfJ)}[\phi]$ is the stress-energy tensor \eqref{eqn:fullSET}.
Although \eqref{eqn:rcePA} was derived
 under an assumption on the support of the representative $\varphi$, 
 the formulae in \eqref{eq:PArcederiv} do not require to choose a suitable representative as they depend only on the equivalence
 class of $(\varphi,\alpha)$.

\subsection{\label{subsec:poissonautomorphism}Automorphism group of the functor $\PA_p$}
We study the automorphism group of the covariant functor $\PA_p:\LocSrc_p\to\PoisAlg$ defined
in Proposition \ref{propo:quotientfunctor}. We shall obtain that it is, as expected, the trivial group
for $m\neq 0$ and isomorphic to $\bbR^p$ for $m=0$.
\sk

We first show that for $m=0$ the automorphism group
of $\PA_p$ contains  $\bbR^p$ as a subgroup.
\begin{propo}\label{propo:R}
If $m=0$ there exists a faithful homomorphism $\eta: \bbR^p \to \Aut(\PA_p)$
induced by the one in Proposition \ref{propo:Z2xR} restricted to $\{+1\} \times \bbR^p\subseteq \bbZ_2\times \bbR^p$.
Explicitly, for any object $(\bfM,\bfJ)$ in $\LocSrc_p$ the automorphism
$\eta(\mu)_{(\bfM,\bfJ)}$ is specified by, for all $[(\varphi,\alpha)]\in \PA_p(\bfM,\bfJ)$,
\begin{flalign}\label{eqn:etamu}
\eta(\mu)_{(\bfM,\bfJ)}\big([(\varphi,\alpha)]\big) = \Big[\Big(\varphi, \alpha + \int_M \ip{\varphi}{\mu}\, \vol_{\bfM} \Big)\Big]~.
\end{flalign}
\end{propo}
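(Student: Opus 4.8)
The plan is to obtain $\eta(\mu)$ by transporting the automorphism $\eta(1,\mu)\in\Aut(\PhSp_p)$ of Proposition~\ref{propo:Z2xR} (the $m=0$ automorphism, restricted to $\{+1\}\times\bbR^p\subseteq\bbZ_2\times\bbR^p$) through the canonical Poisson functor $\CanPois$ and then descending it to the quotient defining $\PA_p$. First I would apply $\CanPois$ to the natural family $\{\eta(1,\mu)_{(\bfM,\bfJ)}\}$: since $\CanPois$ is a functor, it sends $\PreSymp$-isomorphisms to $\PoisAlg$-isomorphisms and natural transformations to natural transformations, so $\{\CanPois(\eta(1,\mu)_{(\bfM,\bfJ)})\}$ is an automorphism of the functor $\CPA_p=\CanPois\circ\PhSp_p$ for each $\mu\in\bbR^p$, and $\mu\mapsto\{\CanPois(\eta(1,\mu)_{(\bfM,\bfJ)})\}$ is a group homomorphism by functoriality applied to the group law $\eta(1,\mu)\circ\eta(1,\mu')=\eta(1,\mu+\mu')$ of Proposition~\ref{propo:Z2xR}.

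The second step is to check that $\CanPois(\eta(1,\mu)_{(\bfM,\bfJ)})$ preserves the vanishing ideal $\mathfrak{I}_p(\bfM,\bfJ)$, so that it descends to an automorphism $\eta(\mu)_{(\bfM,\bfJ)}$ of $\PA_p(\bfM,\bfJ)=\CPA_p(\bfM,\bfJ)/\mathfrak{I}_p(\bfM,\bfJ)$. By Lemma~\ref{lem:Poissonidealsarethesame}~b) it suffices to work with the algebraically generated ideal $\widetilde{\mathfrak{I}}_p(\bfM,\bfJ)$ of \eqref{eqn:algebraicPoissonideal}, and since $\CanPois(\eta(1,\mu)_{(\bfM,\bfJ)})$ is a unital algebra homomorphism it is enough to track its action on the generators $[(0,\alpha)]-\alpha$. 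From the explicit formula of Proposition~\ref{propo:Z2xR} with $\sigma=+1$ one reads off $\eta(1,\mu)_{(\bfM,\bfJ)}([(0,\alpha)])=[(0,\alpha)]$, so every generator is fixed, the ideal is preserved (and the same computation for $\eta(1,-\mu)$ gives the inverse on the quotient), and the descended map is a Poisson algebra automorphism by inheritance. Restricting the descent to the degree-one subspace reproduces formula~\eqref{eqn:etamu}. Naturality of $\eta(\mu)=\{\eta(\mu)_{(\bfM,\bfJ)}\}$ follows from that of $\eta(1,\mu)$ together with the fact that $\PA_p(f)$ is induced from $\CPA_p(f)=\CanPois(\PhSp_p(f))$; the group law descends as well, so $\eta:\bbR^p\to\Aut(\PA_p)$ is a homomorphism. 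Faithfulness is then immediate: if $\eta(\mu)=\id$, then for $(\bfM,\bfJ)$ with $\Sol_p(\bfM,\bfJ)\neq\emptyset$, pairing $\eta(\mu)_{(\bfM,\bfJ)}([(\varphi,0)])$ against any $\phi\in\Sol_p(\bfM,\bfJ)$ via the extended pairing $\sip{\,\cdot\,}{\,\cdot\,}_{(\bfM,\bfJ)}$ yields $\int_M\ip{\varphi}{\mu}\,\vol_{\bfM}=0$ for all $\varphi\in C_0^\infty(M,\bbR^p)$, hence $\mu=0$.

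The one place that requires a moment's attention — and the reason the proposition produces only the $\bbR^p$-subgroup and not all of $\bbZ_2\times\bbR^p$ — is precisely the ideal-preservation step. The $\bbZ_2$-flip $\eta(-1,0)_{(\bfM,\bfJ)}$ sends the generator $[(0,\alpha)]-\alpha$ to $[(0,-\alpha)]-\alpha$, which is congruent to $-2\alpha$ modulo $\mathfrak{I}_p(\bfM,\bfJ)$ and hence not in $\mathfrak{I}_p(\bfM,\bfJ)$; thus the flip does \emph{not} descend to $\PA_p$, and only the $\sigma=+1$ subgroup survives the quotient. This is exactly the mechanism by which passing to $\PA_p$ removes the spurious $\bbZ_2$ found in $\Aut(\PhSp_p)$. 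Everything else in the argument is routine functoriality of $\CanPois$ and of the quotient construction, so I do not expect any genuine obstacle.
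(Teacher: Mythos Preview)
Your proposal is correct and follows essentially the same route as the paper: lift $\eta(1,\mu)$ through $\CanPois$ to $\Aut(\CPA_p)$, verify on the algebraic generators $[(0,\alpha)]-\alpha$ that the ideal $\widetilde{\mathfrak{I}}_p(\bfM,\bfJ)=\mathfrak{I}_p(\bfM,\bfJ)$ is preserved (and that the $\sigma=-1$ branch fails here), and descend to $\PA_p$. You supply more detail than the paper does on naturality and on faithfulness (the paper leaves the latter implicit), but the argument is the same in structure and substance.
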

\begin{proof}
Applying the functor $\CanPois$, the automorphism $\eta(\sigma,\mu)\in \Aut(\PhSp_p)$ of Proposition \ref{propo:Z2xR} 
induces an element in $\Aut(\CPA_p)$, which we denote with a slight abuse of notation by the same symbol
$\eta(\sigma,\mu)$. For $\sigma= -1$ this automorphism
does not preserve the Poisson ideals $\mathfrak{I}_p(\bfM,\bfJ)$:
Indeed, for $[(0,\alpha)]-\alpha \in\mathfrak{I}_p(\bfM,\bfJ)$ we find
$\eta(-1,\mu)_{(\bfM,\bfJ)}([(0,\alpha)] -\alpha) =[(0,-\alpha)] -\alpha\not\in \mathfrak{I}_p(\bfM,\bfJ)$.
For $\sigma=+1$ and $\mu\in\bbR^p$ arbitrary the Poisson ideals are preserved,
hence $\eta(+1,\mu)$ induces the automorphism $\eta(\mu)\in\Aut(\PA_p)$ given by \eqref{eqn:etamu}. 
 The group law $\eta(\mu)\circ \eta(\mu^\prime) = \eta(\mu+\mu^\prime)$
is a consequence of the group law for $\eta(\sigma,\mu)$, cf.~Proposition \ref{propo:Z2xR}.
\end{proof}
\begin{rem}
The above argument shows that the nontrivial $\bbZ_2$-automorphism
in the massless case (cf.~Proposition \ref{propo:Z2xR}) does not induce an automorphism
of $\PA_p$. The same holds true for the nontrivial $\bbZ_2$-automorphism in the massive case
(cf.~Proposition \ref{propo:Z2flip}).
\end{rem}

We now prove that the automorphisms found in Proposition \ref{propo:R}
exhaust the automorphism group of the covariant functor $\PA_p$. For the proof we
require the analog of Theorem \ref{theo:endobycomponent}, stating that an endomorphism is uniquely
determined by its component on one object, for the functor $\PA_p:\LocSrc_p\to\PoisAlg$.
This follows by a similar proof as for Theorem \ref{theo:endobycomponent} and we omit the details. 
\begin{theo}\label{theo:goodautgroupclassical}
Every endomorphism of the functor $\PA_p$ is an automorphism and
\begin{flalign}
\End(\PA_p) = \Aut(\PA_p) \simeq 
\begin{cases}
\{ \id_{\PA_p} \}& ~,~~\text{for } m\neq 0~,\\
\bbR^p & ~,~~\text{for } m=0~, 
\end{cases}
\end{flalign}
where the action for $m=0$ is given by Proposition \ref{propo:R}.
\end{theo}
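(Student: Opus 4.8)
The plan is to follow the pattern of the proof of Theorem~\ref{theo:automorphismgroup}, but exploiting the extra rigidity supplied by the Poisson structure. First I would invoke the $\PA_p$-analogue of Theorem~\ref{theo:endobycomponent} (asserted in the text to hold by the same argument): any $\eta\in\End(\PA_p)$ is determined by a single component, so it suffices to analyse $\eta_{(\bfM_0,0)}$ on Minkowski spacetime with vanishing source. On this object there is a \emph{canonical} identification $\PA_p(\bfM_0,0)\cong\CPA_p^{\mathrm{lin}}(\bfM_0)=\CanPois(V)$, where $V:=\PhSp_p^{\mathrm{lin}}(\bfM_0)$ carries the usual nondegenerate Klein--Gordon symplectic form $\sigma_V$: indeed $\PhSp_p(\bfM_0,0)\cong V\oplus\bbR$ with the $\bbR$-summand lying in the null space, and the quotient by $\mathfrak{I}_p(\bfM_0,0)=\widetilde{\mathfrak{I}}_p(\bfM_0,0)$ simply identifies the generator of that summand with the unit (this is Lemma~\ref{lem:Poissonidealsarethesame} a) specialised to the natural background solution $\phi_0=0$). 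Thus $\eta_{(\bfM_0,0)}$ transports to a $\PoisAlg$-endomorphism $\widetilde\eta$ of $\CanPois(V)$ which, by naturality, commutes with the lifted action of the Poincar\'e group.

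Next I would feed in the relative Cauchy evolution. By the $\PA_p$-analogue of Lemma~\ref{lem:endorcecompatibility}, $\widetilde\eta$ commutes with $\rce^{(\PA_p)}_{(\bfM_0,0)}[\h,\j]$ for all globally hyperbolic perturbations, hence with the $s$-derivative at $s=0$ appearing in \eqref{eq:PArcederiv}. Taking $\h=0$ there, and noting that the elements $[(\j,0)]$ run through all of $S^1(V)=V\subseteq\CanPois(V)$ as $\j$ ranges over $C^\infty_0(M_0,\bbR^p)$, this tells us that $\widetilde\eta\circ\{w,\,\cdot\,\}=\{w,\,\cdot\,\}\circ\widetilde\eta$ for every $w\in V$. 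Applying this to a generator $v\in V$ and using that $\widetilde\eta$ preserves the Poisson bracket and fixes scalars gives $\{w,\widetilde\eta(v)\}=\widetilde\eta(\{w,v\})=\sigma_V(w,v)\,1=\{w,v\}$, i.e.\ $\{w,\widetilde\eta(v)-v\}=0$ for all $w\in V$; hence $\widetilde\eta(v)-v$ lies in the Poisson centre of $\CanPois(V)$. For a symplectic vector space this centre is just $\bbR\,1$: an element of the centre is annihilated by all $\{w,\,\cdot\,\}$, and since $\{w,\,\cdot\,\}$ strictly lowers the symmetric degree, its top homogeneous component, if of positive degree, would be a nonzero element of some $S^k(V)$ killed by ``all partial derivatives'', which is impossible in characteristic zero once one reduces to the finitely many directions it involves and uses nondegeneracy of $\sigma_V$. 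Consequently there is a linear map $\ell\colon V\to\bbR$ with $\widetilde\eta(v)=v+\ell(v)\,1$ for all $v\in V$; since $V$ generates $\CanPois(V)$ and $\widetilde\eta$ is a unital algebra homomorphism, $\widetilde\eta$ must be the lift of the shift $v\mapsto v+\ell(v)$, which is invertible. In particular $\eta_{(\bfM_0,0)}$, and hence $\eta$, is an automorphism, so $\End(\PA_p)=\Aut(\PA_p)$.

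It then remains to pin down $\ell$. Compatibility of $\widetilde\eta$ with the lifted Poincar\'e action forces $\ell$ to be Poincar\'e- (in particular translationally) invariant, so Lemma~\ref{lem:transinvfnl} applies: $\ell=0$ if $m\neq0$, while for $m=0$ one has $\ell=\widetilde\sigma_{\bfM_0}(\mu,\,\cdot\,)$ for some $\mu\in\bbR^p$. Undoing the identification $\PA_p(\bfM_0,0)\cong\CPA_p^{\mathrm{lin}}(\bfM_0)$, a short calculation shows that for $m=0$ the shift automorphism attached to $\ell=\widetilde\sigma_{\bfM_0}(\mu,\,\cdot\,)$ coincides with $\eta(\mu)_{(\bfM_0,0)}$ of Proposition~\ref{propo:R}, while for $m\neq0$ it is the identity. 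By the uniqueness statement one concludes $\eta=\eta(\mu)$ for $m=0$, respectively $\eta=\id_{\PA_p}$ for $m\neq0$; combined with Proposition~\ref{propo:R} (whose homomorphism $\bbR^p\to\Aut(\PA_p)$ is already known to be faithful) this yields $\End(\PA_p)=\Aut(\PA_p)\cong\bbR^p$ for $m=0$ and $\cong\{\id\}$ for $m\neq0$. The main obstacle here is organisational rather than deep: one must set up the two ``omitted'' $\PA_p$-analogues (of Theorem~\ref{theo:endobycomponent} and Lemma~\ref{lem:endorcecompatibility}), which are routine, be a little careful with the centre-of-$\CanPois(V)$ computation in the infinite-dimensional setting, and track the canonical isomorphism $\PA_p(\bfM_0,0)\cong\CPA_p^{\mathrm{lin}}(\bfM_0)$ accurately. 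The conceptual point that makes everything work is that, in contrast to $\PhSp_p$, the source-perturbation part of the relative Cauchy evolution of $\PA_p$ is an \emph{inner} derivation, and commuting with all of these (equivalently with all $\{w,\,\cdot\,\}$, $w\in V$) already kills the spurious $\bbZ_2$-flip.
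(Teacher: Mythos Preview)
Your proposal is correct and follows essentially the same approach as the paper's proof: both reduce to the component on $(\bfM_0,0)$ via the $\PA_p$-analogue of Theorem~\ref{theo:endobycomponent}, pass to $\CPA_p^{\mathrm{lin}}(\bfM_0)$ via the canonical $\phi_0=0$ isomorphism of Lemma~\ref{lem:Poissonidealsarethesame}, and then exploit the $\h=0$ derivative of the relative Cauchy evolution~\eqref{eq:PArcederiv} together with Poincar\'e invariance and Lemma~\ref{lem:transinvfnl}. The only cosmetic difference is that you package the degree-counting step as ``$\widetilde\eta(v)-v$ lies in the Poisson centre of $\CanPois(V)$, which is $\bbR\cdot 1$'', whereas the paper decomposes $\widetilde\eta([\varphi]^{\mathrm{lin}})$ by $\bbN^0$-degree and reads off $\widetilde\eta_{\geq 2}=0$, $\widetilde\eta_1=\id$ directly; the underlying argument (nondegeneracy of $\sigma_V$ plus the fact that $\{w,\,\cdot\,\}$ lowers symmetric degree by one) is identical.
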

\begin{proof}
Let $\eta\in\End(\PA_p)$ and let us consider its component $\eta_{(\bfM_0,0)}\in\End(\PA_p(\bfM_0,0))$ on the Minkowski spacetime
$\bfM_0$ with $\bfJ_0=0$.  Now, $\eta_{(\bfM_0,0)}$ must commute with the relative Cauchy 
evolution (\ref{eqn:rcePA}) and its derivative (\ref{eq:PArcederiv});
considering the $\h=0$ case of (\ref{eq:PArcederiv}), we obtain the
requirement
\begin{flalign} \label{eqn:etaM00}
\eta_{(\bfM_0,0)}\left(\Big\{[(\j,0)],
 [(\varphi,0)]\Big\}_{\sigma_{(\bfM_0,0)}}\right) = \Big\{[(\j,0)], \eta_{(\bfM_0,0)}\big([(\varphi,0)]\big)\Big\}_{\sigma_{(\bfM_0,0)}}
\end{flalign}
for all $\j,\varphi\in C^\infty_0(M_0,\bbR^p)$. Next, 
we exploit the fact (cf.\ Lemma~\ref{lem:Poissonidealsarethesame}) that there is a preferred isomorphism
 $\kappa:\PA_p(\bfM_0,0)\to\CPA^\mathrm{lin}_p(\bfM_0)$ given by $\kappa\big([(\varphi,\alpha)]\big) = 
 \alpha + [\varphi]^{\mathrm{lin}}$, i.e.\ the $\phi_0 =0$ case of \eqref{eqn:noncanonicalkappa}, which intertwines 
 the natural actions of the Poincar{\'e} transformations
on $\PA_p(\bfM_0,0)$ and $\CPA^\mathrm{lin}_p(\bfM_0)$. Then
the induced endomorphism $\widetilde{\eta}=\kappa\circ\eta_{(\bfM_0,0)}
\circ\kappa^{-1}$ of $\CPA^\mathrm{lin}_p(\bfM_0)$ must commute
with all Poincar\'e transformations, because $\eta_{(\bfM_0,0)}$ does.
Owing to \eqref{eqn:etaM00}, $\widetilde{\eta}$ also satisfies, for all 
generators $[\varphi]^\mathrm{lin}\in  \CPA_p^\mathrm{lin}(\bfM_0)$
and all $\j\in C^\infty_0(M_0,\bbR^p)$,
\begin{flalign} \label{eqn:etatilde}
\widetilde{\eta}\left(\Big\{[\j]^\mathrm{lin}, [\varphi]^\mathrm{lin}\Big\}_{\sigma^\mathrm{lin}_{\bfM_0}}\right) = \Big\{[\j]^\mathrm{lin}, \widetilde{\eta}\big([\varphi]^\mathrm{lin}\big)\Big\}_{\sigma^\mathrm{lin}_{\bfM_0}}~.
\end{flalign}
The left hand side of this equation is simply $\sigma^\mathrm{lin}_{\bfM_0}([\j]^\mathrm{lin},[\varphi]^\mathrm{lin})$
and the right hand side can be simplified as follows: We write $\widetilde{\eta}([\varphi]^\mathrm{lin}) = 
\widetilde{\eta}_0([\varphi]^\mathrm{lin}) + \widetilde{\eta}_1([\varphi]^\mathrm{lin}) +
 \widetilde{\eta}_{\geq 2}([\varphi]^\mathrm{lin})$, where the index labels the $\bbN^0$-degree 
 of $\widetilde{\eta}([\varphi]^\mathrm{lin})$ in $\CPA_p^\mathrm{lin}(\bfM_0)$.
 This yields the condition, for all $[\varphi]^\mathrm{lin}\in \PhSp^\mathrm{lin}_p(\bfM_0)$
and all $\j\in C^\infty_0(M_0,\bbR^p)$,
 \begin{flalign}
 \sigma^\mathrm{lin}_{\bfM_0}([\j]^\mathrm{lin},[\varphi]^\mathrm{lin}) = \sigma^\mathrm{lin}_{\bfM_0}\big([\j]^\mathrm{lin},\widetilde{\eta}_1\big([\varphi]^\mathrm{lin}\big)\big) + \big\{[\j]^\mathrm{lin}, \widetilde{\eta}_{\geq 2}\big([\varphi]^\mathrm{lin}\big)\big\}_{\sigma^\mathrm{lin}_{\bfM_0}}~.
 \end{flalign}
Counting the $\bbN^0$-degree of the individual terms and using the fact that
the Poisson bracket in $\CPA_p^\mathrm{lin}(\bfM_0)$ is non-degenerate we obtain
that $\widetilde{\eta}_{\geq 2} =0$ and $\widetilde{\eta}_1 = \id_{\CPA_p^\mathrm{lin}(\bfM_0)}$.
Hence, $\widetilde{\eta}([\varphi]^\mathrm{lin}) = \widetilde{\eta}_0([\varphi]^\mathrm{lin}) + [\varphi]^\mathrm{lin}$,
for all $[\varphi]^\mathrm{lin}\in\PhSp_p^\mathrm{lin}(\bfM_0)$, and the remaining
freedom in $\widetilde{\eta}$ is a linear map $\widetilde{\eta}_0 : \PhSp_p^\mathrm{lin}(\bfM_0)\to \bbR$,
which also has to be Poincar{\'e} invariant.
By Lemma \ref{lem:transinvfnl}, $\widetilde{\eta}_0 \equiv 0$ in the case of $m\neq 0$ and $\widetilde{\eta}_0([\varphi]^\mathrm{lin})
= \int_{M_0}\ip{\varphi}{\mu}\, \vol_{\bfM_0}$ for some $\mu\in\bbR^p$ in the massless case.
Hence, there are no nontrivial endomorphisms of $\PA_p(\bfM_0,0)$
in the massive case. For $m=0$ the endomorphisms of  $\PA_p(\bfM_0,0)$
coincide with the Minkowski space components of the functor automorphisms
found in Proposition \ref{propo:R}. Since any endomorphism $\eta\in\End(\PA_p)$ is uniquely determined by 
its component on one object, this proves our claim.
\end{proof}

\subsection{\label{subsec:poissoncomposition}Composition property of the functor $\PA_p$}
It remains to prove the validity of the composition property of the covariant functor
$\PA_p:\LocSrc_p \to \PoisAlg$. Explicitly, we define for $p\geq 2$ and $0<q<p$ the covariant functor
\begin{flalign}
\PA_{p,q}:= \otimes \circ\big(\PA_q\times\PA_{p-q}\big)\circ \mathfrak{Split}_{p,q}:\LocSrc_p\to\PoisAlg
\end{flalign}
and we will prove that $\PA_{p,q}$ and $\PA_p$ are naturally isomorphic.
\begin{theo}\label{theo:classicalcompoA}
For any $2\leq p\in\bbN$ and $0<q<p$, the covariant functors $\PA_{p,q}:\LocSrc_p\to \PoisAlg$
and  $\PA_p : \LocSrc_p\to \PoisAlg$ are naturally isomorphic. The  natural isomorphism
$\eta : \PA_{p,q}\Rightarrow \PA_p$ is specified by,
for all $[(\varphi,\alpha)]\in \PA_q(\bfM,\bfJ^q)$ and $[(\psi,\beta)]\in \PA_{p-q}(\bfM,\bfJ^{p-q})$,
\begin{flalign}\label{eqn:compoiso}
\eta_{(\bfM,\bfJ)}\big([(\varphi,\alpha)]\otimes 1\big) = 
\big[(\varphi ,\alpha)\big]~,~~\eta_{(\bfM,\bfJ)}\big(1\otimes [(\psi,\beta)]\big) 
= \big[(\psi,\beta)\big]~,
\end{flalign}
where on the right hand sides we have identified $\varphi\in C^\infty_0(M,\bbR^q)$ and $\psi \in C^\infty_0(M,\bbR^{{p-q}})$
as elements in $C_0^\infty(M,\bbR^p)$ ($\varphi$ is placed in the first $q$  and 
$\psi$ in the last $p-q$ components of $\bbR^p$) .
\end{theo}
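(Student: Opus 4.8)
The plan is to verify that the prescription \eqref{eqn:compoiso} genuinely defines, for each object $(\bfM,\bfJ)$, a $\PoisAlg$-isomorphism $\eta_{(\bfM,\bfJ)}:\PA_{p,q}(\bfM,\bfJ)\to\PA_p(\bfM,\bfJ)$, and then to check naturality with respect to $\LocSrc_p$-morphisms. First I would recall that $\PA_{p,q}(\bfM,\bfJ)=\CPA_q(\bfM,\bfJ^q)/\mathfrak{I}_q(\bfM,\bfJ^q)\otimes\CPA_{p-q}(\bfM,\bfJ^{p-q})/\mathfrak{I}_{p-q}(\bfM,\bfJ^{p-q})$, whereas $\PA_p(\bfM,\bfJ)=\CPA_p(\bfM,\bfJ)/\mathfrak{I}_p(\bfM,\bfJ)$. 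The cleanest route is to work one level up: at the level of the canonical Poisson algebras there is a natural map $\widehat\eta_{(\bfM,\bfJ)}:\CPA_q(\bfM,\bfJ^q)\otimes\CPA_{p-q}(\bfM,\bfJ^{p-q})\to\CPA_p(\bfM,\bfJ)$ obtained by combining Lemma~\ref{lem:naturalisoforpoisson} (which identifies the tensor product of $\CanPois$ applied to two presymplectic spaces with $\CanPois$ of their direct sum) with $\CanPois$ applied to the obvious $\PreSymp$-morphism $\PhSp_q(\bfM,\bfJ^q)\oplus\PhSp_{p-q}(\bfM,\bfJ^{p-q})\to\PhSp_p(\bfM,\bfJ)$ induced by placing $\varphi$ in the first $q$ and $\psi$ in the last $p-q$ components. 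The point of the improvement is that, unlike in Section~\ref{sec:compproperty}, this morphism is \emph{not} an isomorphism before quotienting (its failure is exactly the mismatch of one-dimensional null spaces, $\bbR^2$ versus $\bbR$), but it \emph{becomes} an isomorphism after passing to the improved quotients, because quotienting kills precisely that redundancy.

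The key steps, in order, would be: (1) Show $\widehat\eta_{(\bfM,\bfJ)}$ is a well-defined injective Poisson morphism — this is immediate from functoriality of $\CanPois$ and Lemma~\ref{lem:naturalisoforpoisson}, since the underlying $\PreSymp$-morphism is injective (an explicit check in the spirit of \cite[Theorem 5.4.]{Benini:2012vi}). (2) Identify the images of the two generating ideals under $\widehat\eta_{(\bfM,\bfJ)}$: using the algebraic description \eqref{eqn:algebraicPoissonideal}, the generator $[(0,\alpha)]-\alpha$ of $\widetilde{\mathfrak{I}}_q(\bfM,\bfJ^q)$ (tensored with $1$) maps to $[(0,\alpha)]-\alpha$ in $\CPA_p(\bfM,\bfJ)$, and likewise for the $(p-q)$-factor; but now the crucial subtlety appears, because the \emph{tensor product} of the two improved quotients is being compared with a \emph{single} quotient, so one scalar $\alpha$ from each factor must be matched against the single scalar direction of $\PhSp_p(\bfM,\bfJ)$. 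Concretely, on $\PA_{p,q}(\bfM,\bfJ)$ the identifications $[(0,\alpha)]\otimes 1\sim\alpha(1\otimes 1)$ and $1\otimes[(0,\beta)]\sim\beta(1\otimes 1)$ hold, whereas in $\PA_p(\bfM,\bfJ)$ one has $[(0,\gamma)]\sim\gamma$; the map sends $[(0,\alpha)]\otimes 1\mapsto[(0,\alpha)]$ and $1\otimes[(0,\beta)]\mapsto[(0,\beta)]$, both consistent, so $\widehat\eta_{(\bfM,\bfJ)}$ descends to a morphism $\eta_{(\bfM,\bfJ)}$ on the quotients. (3) Exhibit the inverse explicitly on generators — given $[(\varphi,\alpha)]\in\PA_p(\bfM,\bfJ)$, split $\varphi=\varphi^q+\varphi^{p-q}$ into its first-$q$ and last-$(p-q)$ blocks and set $\eta_{(\bfM,\bfJ)}^{-1}([(\varphi,\alpha)])=[(\varphi^q,\alpha)]\otimes 1 + 1\otimes[(\varphi^{p-q},0)]$ (or the symmetric variant), and check it is well-defined modulo $\widetilde{\mathfrak{I}}_p(\bfM,\bfJ)$ and two-sided inverse to $\eta_{(\bfM,\bfJ)}$; here the single-scalar bookkeeping of step (2) is what makes the two compositions come out to the identity. (4) Verify naturality: given $f:(\bfM_1,\bfJ_1)\to(\bfM_2,\bfJ_2)$, both $\PA_{p,q}(f)$ and $\PA_p(f)$ are induced from push-forwards on compactly supported test sections which manifestly respect the block decomposition of $\bbR^p$ and act as the identity on the scalar summand, so the naturality square commutes on generators and hence everywhere.

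I expect the main obstacle to be the bookkeeping in step (2)–(3): the fact that the tensor product $\PA_q(\bfM,\bfJ^q)\otimes\PA_{p-q}(\bfM,\bfJ^{p-q})$ carries \emph{two} unit-scalar directions that must be collapsed to the \emph{one} present in $\PA_p(\bfM,\bfJ)$, and one must confirm that the quotient by $\mathfrak{I}_{p,q}$ (equivalently $\widetilde{\mathfrak{I}}$ on each factor) does exactly this collapsing. The conceptual reason it works is Lemma~\ref{lem:Poissonidealsarethesame} a): each improved algebra is abstractly isomorphic (noncanonically) to the corresponding \emph{homogeneous} Poisson algebra $\CPA^{\mathrm{lin}}$, for which the composition property is classical (it reduces to $\CanPois\circ\oplus\cong\otimes\circ(\CanPois\times\CanPois)$, Lemma~\ref{lem:naturalisoforpoisson}, applied to the linearized presymplectic spaces, together with the obvious $\PreSymp$-isomorphism $\PhSp^{\mathrm{lin}}_q(\bfM)\oplus\PhSp^{\mathrm{lin}}_{p-q}(\bfM)\cong\PhSp^{\mathrm{lin}}_p(\bfM)$, which \emph{is} an isomorphism because the homogeneous null spaces are all trivial). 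So an alternative, possibly slicker, organization would be to prove the theorem first for the linearized functors (where everything is a genuine presymplectic isomorphism) and then transport along the isomorphisms of Lemma~\ref{lem:Poissonidealsarethesame} a), being careful that the resulting identifications are compatible with morphisms — but since those isomorphisms are noncanonical, the direct verification on generators sketched above is cleaner for establishing naturality, and is the route I would take.
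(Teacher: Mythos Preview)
Your approach is essentially the paper's: lift to the canonical Poisson algebras, check that the map descends through the ideals $\widetilde{\mathfrak I}$ (by sending each generator $[(0,\alpha)]-\alpha$ to $[(0,\alpha)]-\alpha$), exhibit the explicit inverse $[(\varphi,\alpha)]\mapsto [(\varphi^q,\alpha)]\otimes 1 + 1\otimes[(\varphi^{p-q},0)]$, and verify naturality on generators.

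One correction is needed in step~(1). You claim the underlying map $\PhSp_q(\bfM,\bfJ^q)\oplus\PhSp_{p-q}(\bfM,\bfJ^{p-q})\to\PhSp_p(\bfM,\bfJ)$ is an injective $\PreSymp$-morphism, but this is false: the element $\big([(0,\alpha)],[(0,-\alpha)]\big)$ lies in its kernel for every $\alpha\in\bbR$, which is exactly the $\bbR^2$-versus-$\bbR$ null-space mismatch you yourself identify a line later. Consequently $\widehat\eta_{(\bfM,\bfJ)}$ is a unital Poisson algebra homomorphism but \emph{not} a $\PoisAlg$-morphism in the paper's sense (which demands injectivity), and you cannot literally appeal to functoriality of $\CanPois$ to produce it. This does not damage the proof, since injectivity at the $\CPA$ level is never used anywhere --- the paper simply writes down $\widehat\eta$ by the formula and checks it is a Poisson homomorphism directly --- but you should drop the injectivity claim and the appeal to \cite[Theorem 5.4.]{Benini:2012vi}.
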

\begin{proof}
We first notice that (\ref{eqn:compoiso}) actually defines a Poisson algebra homomorphism
$\CPA_q(\bfM,\bfJ^q)\otimes \CPA_{p-q}(\bfM,\bfJ^{p-q}) \to\CPA_p(\bfM,\bfJ)$.
It induces a unital Poisson algebra homomorphism between the
quotients, 
$\PA_q(\bfM,\bfJ^q)\otimes \PA_{p-q}(\bfM,\bfJ^{p-q}) \to\PA_p(\bfM,\bfJ)$, 
since, for all $\alpha\in\bbR$,
\begin{flalign}
\eta_{(\bfM,\bfJ)}\big( \big([(0,\alpha)]-\alpha\big)\otimes 1\big) =\eta_{(\bfM,\bfJ)}\big(  1\otimes \big([(0,\alpha)]-\alpha\big)\big)=[(0,\alpha)]-\alpha ~.
\end{flalign}

Naturality of the $\eta_{(\bfM,\bfJ)}$ is also a straightforward check. 
We next show that $\eta_{(\bfM,\bfJ)}$ is invertible, hence a $\PoisAlg$-isomorphism.
Notice that setting, for any $[(\varphi,\alpha)]\in \PA_p(\bfM,\bfJ)$,
\begin{flalign}
\eta_{(\bfM,\bfJ)}^{-1}\big([(\varphi,\alpha)]\big) = \Big[ [(\varphi^q,\alpha)]\otimes 1 + 1 \otimes [(\varphi^{p-q},0)]\Big]\in 
\PA_q(\bfM,\bfJ^q)\otimes \PA_{p-q}(\bfM,\bfJ^{p-q}) ~,
\end{flalign}
where $\varphi = \varphi^q + \varphi^{p-q}$ is the split of $\varphi$ into the first $q$ and last $p-q$ components,
is well-defined and extends to a unital Poisson algebra homomorphism $\eta_{(\bfM,\bfJ)}^{-1}: \PA_p(\bfM,\bfJ)\to
\PA_q(\bfM,\bfJ^q)\otimes \PA_{p-q}(\bfM,\bfJ^{p-q})$.
One checks directly that $\eta_{(\bfM,\bfJ)}^{-1}$ is the inverse of $\eta_{(\bfM,\bfJ)}$.
\end{proof}

\section{\label{sec:quantization}Quantization}
We shall now turn to the quantization of our model. As a first step, we are going to use the
CCR-functor (in polynomial form) in order to construct a covariant functor
$\CQA_p := \CCR \circ \PhSp_p:\LocSrc_p\to \astAlg$, where $\astAlg$ is the category
of unital $\ast$-algebras over $\bbC$ with injective unital $\ast$-algebra homomorphisms as morphisms.
As $\CQA_p$ is a deformation quantization 
of the Poisson algebra functor $\CPA_p$, it is not surprising to find that its automorphism group is 
too large and that it violates the composition property. We then improve this functor following 
a strategy similar to that of Subsection \ref{subsec:improvedpoisson} for the Poisson algebras.
The essential step is to specify a suitable state space for $\CQA_p$. The kernel
corresponding to this state space forms a two-sided $\ast$-ideal in the algebras described
by $\CQA_p$, which when quotiented out leads to a covariant functor 
$\QA_p :\LocSrc_p\to \astAlg$ that has the correct automorphism group and 
satisfies the composition property. Accordingly, we find that $\QA_p$ is the correct
description of the quantum field theory of a multiplet of $p\in\bbN$ inhomogeneous Klein--Gordon fields
and not the functor $\CQA_p$, which was used in  \cite{Benini:2012vi}. 
As we shall explain in Remark \ref{rem:HW}, our improved
functor is naturally isomorphic to a functor obtained by the construction of Hollands and Wald \cite{Hollands:2004yh},
which is inspired by the Borchers--Uhlmann algebra \cite{Borchers,Uhlmann}.
Further equivalent constructions of  the improved theory $\QA_p$ are via the quantization of pointed presymplectic
spaces (see Appendix \ref{sec:pointed}), via deformation quantization of the improved classical theory 
(see Appendix \ref{sec:fedosov}) and via the `Fedosov inspired' approach of \cite{Sanders:2012sf} (see Appendix \ref{sec:fedosov}
for some comments).

\subsection{\label{subsec:canonical}Canonical algebras}
We briefly review the CCR-functor $\CCR: \PreSymp\to \astAlg$ in polynomial form, following the slightly non-standard approach
 taken in \cite{Baez:1992tj} and \cite[\S 5]{Fewster:2011pn}, which is equivalent to the
 standard presentation in terms of generators and relations.
To any object $(V,\sigma_V)$ in $\PreSymp$ we associate the following unital $\ast$-algebra $\CCR(V,\sigma_V)$:
The vector space underlying $\CCR(V,\sigma_V)$ is the complexification of the vector space 
underlying the symmetric tensor algebra $S(V):=\bigoplus_{k=0}^\infty S^k(V)$.
The involution $\ast$ is defined by $\bbC$-antilinearity and $(v_1\,\cdots\,v_k)^\ast = v_1\,\cdots\,v_k $, for all
$v_1,\dots,v_k\in V$, where juxtaposition denotes the symmetric product. 
The product ${\, \star \,}$ in $\CCR(V,\sigma_V)$ is specified (uniquely) by demanding, for all $v_1,v_2\in V$ and $n,m\in\bbN_0$,
\begin{flalign}\label{eqn:starproduct}
v_1^m\star v_2^n = \sum\limits_{r=0}^{\text{min}\{m,n\}} \left(\frac{i\sigma_V(v_1,v_2)}{2}\right)^r\,\frac{m!\,n!}{r!\,(m-r)!\,(n-r)!}\,
v_1^{m-r}\, v_2^{n-r}~.
\end{flalign}
To any  morphism $L: (V,\sigma_V)\to (W,\sigma_W)$ in $\PreSymp$ we associate the injective unital $\ast$-algebra 
homomorphism  $\CCR(L): \CCR(V,\sigma_V) \to \CCR(W,\sigma_W)$, which is specified by 
  $\CCR(L)(v_1\, v_2\, \cdots\, v_k) = L(v_1)\, L(v_2)\, \cdots\, L(v_k)$, for all $v_1,\dots,v_k\in V$, and $\bbC$-linearity. 
\sk

Composing the covariant functor $\PhSp_p : \LocSrc_p\to \PreSymp$ with $\CCR$ yields
the covariant functor $\CQA_p := \CCR \circ \PhSp_p:\LocSrc_p\to \astAlg$. It is standard that 
$\CCR$ preserves the time-slice axiom and the causality property.
In \cite{Benini:2012vi} $\CQA_p$ was taken to describe the quantized polynomial algebras of a multiplet of
$p\in\bbN$ inhomogeneous Klein--Gordon fields.
\sk

It is easy to see that the automorphism group $\Aut(\CQA_p)$ contains
a $\bbZ_2$-subgroup for the massive case and a $\bbZ_2\times \bbR^p$-subgroup
for $m=0$. This is an immediate consequence of Theorem \ref{theo:automorphismgroup}
and the fact that any automorphism $\eta=\{\eta_{(\bfM,\bfJ)}\}$ of $\PhSp_p$ lifts
to an automorphism of $\CQA_p$ with components $\{\CCR(\eta_{(\bfM,\bfJ)})\}$. 
\sk

To show that $\CQA_p$ violates the composition property, we define, for all $2\leq p\in\bbN$ and $0<q<p$, the covariant functor
\begin{flalign}
\CQA_{p,q} := \otimes \circ (\CQA_q\times \CQA_{p-q})\circ \mathfrak{Split}_{p,q} : \LocSrc_p\to\astAlg~,
\end{flalign}
where $\otimes : \astAlg \times \astAlg \to \astAlg$ now denotes  the covariant functor that takes the algebraic tensor
product of unital $\ast$-algebras. 
Adapting the proof of Lemma \ref{lem:naturalisoforpoisson},
one observes that the two covariant functors
$\CCR\circ \oplus : \PreSymp\times \PreSymp\to\astAlg$  and
$\otimes\circ (\CCR\times \CCR) : \PreSymp\times \PreSymp\to\astAlg$ are naturally isomorphic. 
Furthermore,  the result of Lemma \ref{lem:presymppoisalgiso} also extends to our present setting:   
two unital $\ast$-algebras $\CCR(V,\sigma_V)$ and $\CCR(W,\sigma_W)$
are isomorphic if and only if $(V,\sigma_V)$ and $(W,\sigma_W)$ are isomorphic as presymplectic vector spaces.
Then by an argument similar to that of  Proposition \ref{propo:compviolationPA} we obtain
\begin{propo}
For any $2\leq p \in\bbN$ and $0<q<p$, the covariant functors
$\CQA_{p,q}:\LocSrc_p\to\astAlg$ and $\CQA_p:\LocSrc_p\to \astAlg$
are not naturally isomorphic. Indeed, there is no object $(\bfM,\bfJ)$ in $\LocSrc_p$
for which the unital $\ast$-algebras $\CQA_{p,q}(\bfM,\bfJ)$ and $\CQA_p(\bfM,\bfJ)$
are isomorphic.
\end{propo}

\subsection{\label{subsec:improved}Improved algebras}
Employing a strategy similar to the one in Subsection \ref{subsec:improvedpoisson}, we now modify
the canonical algebras of Subsection \ref{subsec:canonical} in order to obtain the correct automorphism
group and satisfy the composition property. The essential idea is again to make our theory remember
that it came from affine functionals acting on the affine space of solutions of the inhomogeneous Klein--Gordon equation. 
\sk

We implement this idea mathematically by introducing suitable state spaces. Recall that a state space
 for a unital $\ast$-algebra $A$ is a subset $S$ of the set of normalized and positive linear 
 functionals on $A$ that is closed under convex linear combinations and operations induced by $A$. 
 The latter property means that,
given any state $\omega \in S$ and $b\in A$ such that $\omega(b^\ast b)>0$,
then the state $\omega_b(a ) := \omega(b^\ast a b)/\omega(b^\ast b)$, for all $a\in A$, is also contained in $S$.
To promote the concept of state spaces to the categorical setting, 
we define the category $\mathsf{State}$ as follows: The objects in $\mathsf{State}$ 
are all possible state spaces for objects in $\astAlg$, with affine maps as morphisms.   
A state space for our covariant functor $\CQA_p:\LocSrc_p\to\astAlg$ 
is a contravariant functor $\mathfrak{S}_p: \LocSrc_p\to\mathsf{State}$,
such that $\mathfrak{S}_p(\bfM,\bfJ)$ is a state space for $\CQA_p(\bfM,\bfJ)$ for each object
$(\bfM,\bfJ)$ and $\mathfrak{S}_p(f)=\CQA_p(f)^*|_{\mathfrak{S}_p(\bfM_2,\bfJ_2)}$ for every morphism 
$f:(\bfM_1,\bfJ_1)\to(\bfM_2,\bfJ_2)$ in $\LocSrc_p$ (it is a necessary condition
that $\CQA_p(f)^*[\mathfrak{S}_p(\bfM_2,\bfJ_2)]\subseteq \mathfrak{S}_p(\bfM_1,\bfJ_1)$). 
\begin{defi}
An {\bf admissible state space} $\mathfrak{S}_p: \LocSrc_p\to \mathsf{State}$
for the covariant functor $\CQA_p :\LocSrc_p\to\astAlg$ is a state space $\mathfrak{S}_p$, such that
for each object $(\bfM,\bfJ)$ in $\LocSrc_p$, and for all $\omega \in \mathfrak{S}_p(\bfM,\bfJ)$ and
$[(0,\alpha)],[(0,\beta)]\in \CQA_p(\bfM,\bfJ)$,
\begin{flalign}\label{eqn:admissiblestate}
\omega\big([(0,\alpha)]\big)=\alpha \quad,\qquad \omega\big([(0,\alpha)]\star [(0,\beta)]\big) = \alpha\,\beta~.
\end{flalign}
\end{defi}
\begin{rem}
The first condition in (\ref{eqn:admissiblestate}) demands that the expectation values of the quantum observables
corresponding to $[(0,\alpha)]$ agree with the classical result (\ref{eqn:pairing}). The second condition
in (\ref{eqn:admissiblestate}) sets the fluctuations around this classical result to zero, cf.~the lemma below. 
This behavior of states for the quantum theory is motivated by the fact that $[(0,\alpha)]$ 
corresponds in the classical theory to a constant functional.
\end{rem}
\begin{lem}
Let $\mathfrak{S}_p$ be any admissible state space for $\CQA_p$. Then for any object
$(\bfM,\bfJ)$ in $\LocSrc_p$, and for all $\omega\in \mathfrak{S}_p(\bfM,\bfJ)$ and
$[(0,\alpha_1)],\dots,[(0,\alpha_n)]\in \CQA_p(\bfM,\bfJ)$,
\begin{flalign}
\omega\big([(0,\alpha_1)]\star \cdots\star [(0,\alpha_n)]\big) = \alpha_1\,\cdots\,\alpha_n~.
\end{flalign}
\end{lem}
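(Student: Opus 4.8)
The plan is to prove the statement by induction on $n$, using the definition of the star product together with the two conditions in the definition of an admissible state space. The base cases $n=1$ and $n=2$ are precisely \eqref{eqn:admissiblestate}. For the inductive step, I would first observe from the star product formula \eqref{eqn:starproduct} that elements of the form $[(0,\alpha)]$ are central with respect to $\star$: since the presymplectic structure on $\PhSp_p(\bfM,\bfJ)$ vanishes whenever one of its arguments is of degree zero (or, more precisely, because $(0,\alpha)$ pairs to zero under $\sigma_{(\bfM,\bfJ)}$ with everything, as it lies in the null space up to the identification $[(0,\alpha)] \leftrightarrow \alpha\cdot[(0,1)]$), the sum in \eqref{eqn:starproduct} collapses to the $r=0$ term and $[(0,\alpha)]\star a = [(0,\alpha)]\cdot a = a\star[(0,\alpha)]$ for any $a$, where $\cdot$ denotes the commutative symmetric product. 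In fact $[(0,\alpha)]\star[(0,\beta)] = [(0,\alpha)]\cdot[(0,\beta)] = [(0,\alpha\beta)]$ as elements of the symmetric tensor algebra, matching the second condition of \eqref{eqn:admissiblestate}.

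Granting this, the inductive step is immediate: for $n\geq 3$ write
\begin{flalign}
[(0,\alpha_1)]\star\cdots\star[(0,\alpha_n)] = [(0,\alpha_1\,\alpha_2)]\star[(0,\alpha_3)]\star\cdots\star[(0,\alpha_n)]~,
\end{flalign}
which is a product of $n-1$ elements of the same form, so by the inductive hypothesis its $\omega$-expectation equals $(\alpha_1\alpha_2)\,\alpha_3\cdots\alpha_n = \alpha_1\cdots\alpha_n$. Alternatively, and perhaps more cleanly, one notes that $[(0,\alpha_1)]\star\cdots\star[(0,\alpha_n)] = [(0,\alpha_1\cdots\alpha_n)]$ directly by iterating the centrality observation, and then applies the first condition of \eqref{eqn:admissiblestate} once. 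Either route works; I would probably present the second since it makes the role of the two conditions transparent (the multiplicativity is really just the first condition applied to a single product element, once one knows the star product of such elements is again of that form).

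The main (and only) obstacle is the bookkeeping showing that $[(0,\alpha)]$ is $\star$-central and that $[(0,\alpha)]\star[(0,\beta)] = [(0,\alpha\beta)]$ in $\CQA_p(\bfM,\bfJ) = \CCR(\PhSp_p(\bfM,\bfJ))$. This requires unpacking the isomorphism \eqref{eqn:tmpsolspaceiso}-style description: the element $[(0,\alpha)]$ corresponds to $\alpha$ times a distinguished generator spanning the null space $\NN_p(\bfM,\bfJ)\simeq\bbR$ (Proposition \ref{propo:phspproperties} a)), and the presymplectic form annihilates the null space by definition, so $\sigma_{(\bfM,\bfJ)}$ evaluated on this generator against anything vanishes, forcing the higher terms in \eqref{eqn:starproduct} to disappear. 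This is routine once stated carefully, so I do not expect genuine difficulty — the proof is essentially a one-line induction resting on the centrality of degree-zero-in-the-symplectic-sense elements.
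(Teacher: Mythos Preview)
Your argument contains a genuine error at the step where you claim $[(0,\alpha)]\star[(0,\beta)] = [(0,\alpha\beta)]$. You are correct that $[(0,\alpha)]$ lies in the null space of the presymplectic form, so the $\star$-product reduces to the symmetric product; but the symmetric product $[(0,\alpha)]\cdot[(0,\beta)]$ is an element of $S^2(\PhSp_p(\bfM,\bfJ))$, namely $\alpha\beta\,[(0,1)]^2$, and this is \emph{not} the degree-$1$ element $[(0,\alpha\beta)] = \alpha\beta\,[(0,1)]\in S^1(\PhSp_p(\bfM,\bfJ))$. The generator $[(0,1)]$ is not the unit of the CCR-algebra; the unit is $1\in S^0$. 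Indeed, the very purpose of the improved algebra $\QA_p$ (introduced immediately after this lemma) is to quotient by the ideal generated by $[(0,1)]-1$, thereby enforcing the identification you have tacitly assumed. Both of your proposed routes collapse for this reason: iterating the $\star$-product gives $\alpha_1\cdots\alpha_n\,[(0,1)]^n\in S^n$, and the admissibility conditions \eqref{eqn:admissiblestate} only constrain $\omega$ on $S^1$- and $S^2$-type elements of this form.

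What is missing is an argument that propagates information from degrees $1$ and $2$ to all degrees. The paper achieves this via the Cauchy--Schwarz inequality for the state $\omega$: from \eqref{eqn:admissiblestate} one computes $\omega\big(([(0,\alpha_1)]-\alpha_1)^2\big)=0$, and Cauchy--Schwarz then forces $\omega\big(([(0,\alpha_1)]-\alpha_1)\star b\big)=0$ for every $b$, in particular for $b=[(0,\alpha_2)]\star\cdots\star[(0,\alpha_n)]$. This, together with the induction hypothesis, closes the argument. Positivity of $\omega$ is therefore essential; centrality alone does not suffice.
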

\begin{proof}
This is a straightforward consequence of the Cauchy-Schwarz inequality and a simple proof by induction.
Using the short notation $\widehat{\alpha} :=[(0,\alpha)]$ we obtain
\begin{flalign}
\nn \big\vert \omega\big(\widehat{\alpha_1}\star \cdots \star \widehat{\alpha_n}\big)-\alpha_1\cdots\alpha_n\big\vert^2
&=\big\vert \omega\big(\widehat{\alpha_1}\star \cdots \star \widehat{\alpha_n}\big)-\alpha_1\,\omega\big(\widehat{\alpha_2}\star\cdots\star\widehat{\alpha_n}\big)\big\vert^2\\
\nn &=\big\vert \omega\big((\widehat{\alpha_1}-\alpha_1)\star\widehat{\alpha_2}\star\cdots\star\widehat{\alpha_n}\big)\big\vert^2\\
&\leq \omega\big((\widehat{\alpha_1}-\alpha_1)^2\big)~\omega\big((\widehat{\alpha_2}\star\cdots\star\widehat{\alpha_n})^2\big)=0~,
\end{flalign}
where the last equality follows from the admissibility condition (\ref{eqn:admissiblestate}).
\end{proof}
\begin{lem}
There exists a non-empty admissible state space $\mathfrak{S}_p$  for $\CQA_p$, i.e.\
$\mathfrak{S}_p(\bfM,\bfJ)$ is non-empty for all objects $(\bfM,\bfJ)$ in $\LocSrc_p$.
\end{lem}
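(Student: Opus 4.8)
The plan is to construct a concrete admissible state space by exhibiting, for every object $(\bfM,\bfJ)$, at least one state on $\CQA_p(\bfM,\bfJ)$ satisfying the admissibility conditions~\eqref{eqn:admissiblestate}, and then to close it up under convex combinations and the operations induced by the algebra. First I would choose, on the Minkowski-type building block, a quasifree (Hadamard) state for the homogeneous Klein--Gordon theory, i.e.\ a state $\omega^\mathrm{lin}_{\bfM}$ on $\CCR(\PhSp^\mathrm{lin}_p(\bfM))$. Recalling the noncanonical isomorphism of Lemma~\ref{lem:Poissonidealsarethesame}~a) (which at the quantum level becomes an isomorphism $\CQA_p(\bfM,\bfJ)\simeq \CCR(\PhSp^\mathrm{lin}_p(\bfM))$ obtained by choosing a fixed solution $\phi_0\in\Sol_p(\bfM,\bfJ)$), one transports $\omega^\mathrm{lin}_{\bfM}$ to a state $\omega_{(\bfM,\bfJ)}$ on $\CQA_p(\bfM,\bfJ)$. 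Concretely, the isomorphism sends $[(\varphi,\alpha)]$ to $\sip{[(\varphi,\alpha)]}{\phi_0}_{(\bfM,\bfJ)}\,\mathbbm{1} + [\varphi]^\mathrm{lin}$; under this identification $[(0,\alpha)]$ is mapped exactly to $\alpha\,\mathbbm{1}$, so the two admissibility conditions in~\eqref{eqn:admissiblestate} hold automatically for the transported state.

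Next I would assemble these objectwise choices into a contravariant functor. The natural way is: fix for each object $(\bfM,\bfJ)$ a reference solution $\phi_0(\bfM,\bfJ)$ in a coherent manner (or, more cleanly, simply declare $\mathfrak{S}_p(\bfM,\bfJ)$ to be the set of \emph{all} states obtainable in the above way plus the closure under convex combinations and algebraic operations $\omega \mapsto \omega_b$), and for a morphism $f$ set $\mathfrak{S}_p(f)=\CQA_p(f)^\ast|_{\mathfrak{S}_p(\bfM_2,\bfJ_2)}$. One then has to verify two things: that this pullback lands inside $\mathfrak{S}_p(\bfM_1,\bfJ_1)$, and that admissibility is preserved under pullback. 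The latter is immediate because $\CQA_p(f)$ maps $[(0,\alpha)]_{(\bfM_1,\bfJ_1)}$ to $[(0,\alpha)]_{(\bfM_2,\bfJ_2)}$ by~\eqref{eqn:PhSp_mor}, so $\omega\circ\CQA_p(f)$ evaluates the unit-multiples correctly whenever $\omega$ does. The closure-under-operations axiom for a state space is automatic once we take the smallest state space containing the seed states, so defining $\mathfrak{S}_p(\bfM,\bfJ)$ as such a generated state space is the cleanest route; then each $\mathfrak{S}_p(\bfM,\bfJ)$ is non-empty by construction.

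The main obstacle I anticipate is ensuring \emph{functoriality} of the pullback assignment, i.e.\ that the family of state spaces is large enough to be stable under $\CQA_p(f)^\ast$ for every morphism $f$ simultaneously, while still consisting only of admissible states. Taking the generated (closed) state space sidesteps the delicate issue of coherently choosing reference solutions $\phi_0$ compatibly with all morphisms, which would otherwise fail since $\Sol_p$ is only an affine (not pointed) functor and there is no natural global section. A secondary, purely technical point is checking that the transported state $\omega_{(\bfM,\bfJ)}$ is genuinely positive and normalized on $\CQA_p(\bfM,\bfJ)$ — but this follows because $\omega^\mathrm{lin}_{\bfM}$ is a state and algebra isomorphisms preserve positivity and the unit. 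I would also remark that such admissible states exist in abundance: any Hadamard state of the homogeneous theory gives one, so the resulting $\mathfrak{S}_p$ is rich enough that the intersection of kernels of the associated GNS representations (used in the next subsection to define $\QA_p$) is a well-defined two-sided $\ast$-ideal.
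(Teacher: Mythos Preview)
Your proposal is essentially correct and follows the same strategy as the paper: exhibit admissible states by pulling back states of the homogeneous theory along the map $[(\varphi,\alpha)]\mapsto \sip{[(\varphi,\alpha)]}{\phi_0}\,\mathbbm{1}+[\varphi]^{\mathrm{lin}}$, then verify the state-space axioms and functoriality using that $\CQA_p(f)$ fixes $[(0,\alpha)]$. Two remarks are worth making.

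First, the paper sidesteps the functoriality obstacle you identify by simply taking $\mathfrak{S}^{\mathrm{max}}_p(\bfM,\bfJ)$ to be the set of \emph{all} states satisfying \eqref{eqn:admissiblestate}. One then checks directly (via Cauchy--Schwarz and centrality of $[(0,\alpha)]-\alpha$) that this set is closed under convex combinations and under the operations $\omega\mapsto\omega_b$, so it is already a state space; functoriality is then immediate because pullback preserves admissibility. This is cleaner than generating from seed states, where you would still owe a verification that the operations $\omega\mapsto\omega_b$ preserve admissibility (your sentence ``the closure-under-operations axiom for a state space is automatic'' conflates closure of the generated state space with admissibility of its members).

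Second, a small imprecision: the map you describe is a surjective unital $\ast$-homomorphism $\CQA_p(\bfM,\bfJ)\to\CQA^{\mathrm{lin}}_p(\bfM)$, not an isomorphism (its kernel is exactly $\widetilde{\mathfrak{J}}_p(\bfM,\bfJ)$; the isomorphism statement only holds after quotienting, cf.\ Lemma~\ref{lem:quantidealproperties}). This does not affect your argument, since pulling back states along a unital $\ast$-homomorphism is all you need.
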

\begin{proof}
Let $\mathfrak{S}^\mathrm{max}_p(\bfM,\bfJ)$ be the set of all states 
 on $\CQA_p(\bfM,\bfJ)$ satisfying (\ref{eqn:admissiblestate}).
This set is non-empty, since it was shown in \cite[\S 8]{Benini:2012vi} 
that any state of the homogeneous Klein--Gordon theory induces a state in $\mathfrak{S}^\mathrm{max}_p(\bfM,\bfJ)$.
The admissibility condition of states in $\mathfrak{S}^\mathrm{max}_p(\bfM,\bfJ)$ is met by construction 
and it is preserved under convex linear combinations and operations induced by $\CQA_p(\bfM,\bfJ)$ 
(to prove the latter statement, use the Cauchy-Schwarz inequality 
  and the fact that $[(0,\alpha)]-\alpha$ lies in the center of $\CQA_p(\bfM,\bfJ)$).
 Thus, it remains to show that
 \begin{flalign}
 \mathfrak{S}^\mathrm{max}_p(f): \mathfrak{S}_p^\mathrm{max}(\bfM_2,\bfJ_2)\to \mathfrak{S}_p^\mathrm{max}(\bfM_1,\bfJ_1)~,~~\omega \mapsto \mathfrak{S}_p^\mathrm{max}(f)(\omega) = 
 \omega\circ \CQA_p(f)
 \end{flalign}
 is a morphism in $\mathsf{State}$, i.e.~that
  $\mathfrak{S}_p^\mathrm{max}(f)(\omega)\in\mathfrak{S}_p^\mathrm{max}(\bfM_1,\bfJ_1)$, for all 
  $\omega\in\mathfrak{S}_p^\mathrm{max}(\bfM_2,\bfJ_2)$. 
This holds because  $\mathfrak{S}_p^\mathrm{max}(f)(\omega)$
  is clearly a state, and obeys  (\ref{eqn:admissiblestate}) because $\CQA_p(f)\big([(0,\alpha)]\big) = [(0,\alpha)]$.
\end{proof}

Given any non-empty admissible state space $\mathfrak{S}_p$ for $\CQA_p$, we define 
for every object $(\bfM,\bfJ)$ in $\LocSrc_p$ 
\begin{flalign}\label{eqn:quantumideal}
\mathfrak{J}^{\mathfrak{S}_p}(\bfM,\bfJ) := \bigcap_{\omega\in \mathfrak{S}_p(\bfM,\bfJ)} \ker (\pi_\omega)\subseteq \CQA_p(\bfM,\bfJ)~,
\end{flalign}
where $\pi_\omega$ denotes the GNS-representation of $\CQA_p(\bfM,\bfJ)$ induced by
 $\omega \in\mathfrak{S}_p(\bfM,\bfJ)$. The subset (\ref{eqn:quantumideal}) of $\CQA_p(\bfM,\bfJ)$
 is clearly a two-sided $\ast$-ideal, and it must be proper because
kernels of unital algebra homomorphisms necessarily exclude the unit. Hence 
 $\CQA_p(\bfM,\bfJ)/\mathfrak{J}^{\mathfrak{S}_p}(\bfM,\bfJ)$ is a nontrivial unital $\ast$-algebra.
 \sk
 
 It will again be convenient to express $\mathfrak{J}^{\mathfrak{S}_p}(\bfM,\bfJ)$ in terms of an algebraically generated
 ideal. Let us consider the following two-sided $\ast$-ideal (generated by a set) of $\CQA_p(\bfM,\bfJ)$
 \begin{flalign}\label{eqn:quantumidealalgebraic}
 \widetilde{\mathfrak{J}}_p(\bfM,\bfJ) := \Big\langle \big\{[(0,\alpha)]-\alpha \in \CQA_p(\bfM,\bfJ):\alpha\in \bbR\big\}\Big\rangle~.
 \end{flalign}
 It is easy to see that  $\widetilde{\mathfrak{J}}_p(\bfM,\bfJ) \subseteq \mathfrak{J}^{\mathfrak{S}_p}(\bfM,\bfJ)$:
 Let $\omega \in \mathfrak{S}_p(\bfM,\bfJ)$ be arbitrary. Then, for all $b,c\in\CQA_p(\bfM,\bfJ)$ and all $\alpha\in\bbR$,
 \begin{flalign}
 \nn \Big\vert\omega\big(b\star \big([(0,\alpha)]-\alpha\big)\star c\big) \Big\vert^2&= 
 \Big\vert\omega\big(b\star c\star \big([(0,\alpha)]-\alpha\big)\big) \Big\vert^2 \\
 &\leq \omega\big(b\star c \star (b\star c)^\ast\big)\,\omega\big(\big([(0,\alpha)]-\alpha\big)^2\big) =0~,
 \end{flalign}
 where in the first step we have used that $[(0,\alpha)]-\alpha$ lies in the center of $\CQA_p(\bfM,\bfJ)$,
 in the second step the Cauchy-Schwarz inequality and in the last one the admissibility condition (\ref{eqn:admissiblestate}).
 Hence, $\widetilde{\mathfrak{J}}_p(\bfM,\bfJ) \subseteq \ker(\pi_\omega)$ and since $\omega$ was arbitrary we have 
 $\widetilde{\mathfrak{J}}_p(\bfM,\bfJ) \subseteq \mathfrak{J}^{\mathfrak{S}_p}(\bfM,\bfJ)$, for any
  non-empty admissible state space $\mathfrak{S}_p$.
 \begin{lem}\label{lem:quantidealproperties}
 Let $(\bfM,\bfJ)$ be any object in $\LocSrc_p$.
 \begin{itemize}
 \item[a)] The unital $\ast$-algebra $\CQA_p(\bfM,\bfJ)/\widetilde{\mathfrak{J}}_p(\bfM,\bfJ)$
 is (noncanonically) isomorphic to $\CQA^\mathrm{lin}_p(\bfM) := \CCR\big(\PhSp^\mathrm{lin}_p(\bfM)\big)$.
 \item[b)] $\widetilde{\mathfrak{J}}_p(\bfM,\bfJ) = \mathfrak{J}^{\mathfrak{S}_p}(\bfM,\bfJ)$ whenever
 $\mathfrak{S}_p$ is a non-empty admissible state space.
 \end{itemize}
 \end{lem}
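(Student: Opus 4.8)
The plan is to prove (a) first and then deduce (b) by the same "maximal ideal" argument used for the classical Poisson algebras in Lemma~\ref{lem:Poissonidealsarethesame}. For part~(a), I would mimic the proof of Lemma~\ref{lem:Poissonidealsarethesame}~a): fix an arbitrary background solution $\phi_0\in\Sol_p(\bfM,\bfJ)$ and use it to build a unital $\ast$-algebra homomorphism $\kappa:\CQA_p(\bfM,\bfJ)\to\CQA_p^{\mathrm{lin}}(\bfM)$ on generators by $\kappa\big([(\varphi,\alpha)]\big)=\sip{[(\varphi,\alpha)]}{\phi_0}_{(\bfM,\bfJ)}+[\varphi]^{\mathrm{lin}}$, where $[\varphi]^{\mathrm{lin}}\in\PhSp_p^{\mathrm{lin}}(\bfM)\subseteq\CQA_p^{\mathrm{lin}}(\bfM)$. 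The point is that this assignment is compatible with the $\star$-product: the quantity $\sip{[(\varphi,\alpha)]}{\phi_0}_{(\bfM,\bfJ)}$ is a central scalar, so the only noncommutativity in $\CQA_p(\bfM,\bfJ)$ is carried by the $[\varphi]^{\mathrm{lin}}$-part and matches exactly the $\star$-product on $\CQA_p^{\mathrm{lin}}(\bfM)$ because $\sigma_{(\bfM,\bfJ)}([(\varphi,\alpha)],[(\psi,\beta)])=\sigma_{\bfM}^{\mathrm{lin}}([\varphi]^{\mathrm{lin}},[\psi]^{\mathrm{lin}})$ by \eqref{eqn:PhSp_PS}. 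One checks $\kappa$ is a $\ast$-homomorphism, that $\widetilde{\mathfrak{J}}_p(\bfM,\bfJ)$ lies in its kernel (since $\kappa([(0,\alpha)]-\alpha)=0$), and that the induced map $\CQA_p(\bfM,\bfJ)/\widetilde{\mathfrak{J}}_p(\bfM,\bfJ)\to\CQA_p^{\mathrm{lin}}(\bfM)$ has inverse $[\varphi]^{\mathrm{lin}}\mapsto\big[[(\varphi,0)]-\sip{[(\varphi,0)]}{\phi_0}_{(\bfM,\bfJ)}\big]$, exactly as in the classical case.

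For part~(b), I would argue that $\CQA_p^{\mathrm{lin}}(\bfM)=\CCR(\PhSp_p^{\mathrm{lin}}(\bfM))$ is \emph{simple}: the linearized presymplectic space $\PhSp_p^{\mathrm{lin}}(\bfM)$ has trivial radical (the pairing $\sip{\,\cdot\,}{\,\cdot\,}_{\bfM}^{\mathrm{lin}}$ is nondegenerate on it, as noted in the text below \eqref{eqn:poisideal}), and for a genuinely symplectic (nondegenerate) space the polynomial CCR-algebra is a simple unital $\ast$-algebra — this is standard (e.g.\ it follows from the uniqueness of the CCR-representation / the fact that the Weyl algebra over a symplectic vector space is simple, and the polynomial algebra here is the corresponding $\ast$-algebra of which the Weyl algebra is a completion). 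Granting simplicity, part~(a) shows $\CQA_p(\bfM,\bfJ)/\widetilde{\mathfrak{J}}_p(\bfM,\bfJ)$ is simple and nontrivial, hence $\widetilde{\mathfrak{J}}_p(\bfM,\bfJ)$ is a maximal proper two-sided $\ast$-ideal of $\CQA_p(\bfM,\bfJ)$. Since we already established $\widetilde{\mathfrak{J}}_p(\bfM,\bfJ)\subseteq\mathfrak{J}^{\mathfrak{S}_p}(\bfM,\bfJ)$ and the latter is a proper ideal (it omits the unit), maximality forces $\widetilde{\mathfrak{J}}_p(\bfM,\bfJ)=\mathfrak{J}^{\mathfrak{S}_p}(\bfM,\bfJ)$.

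The main obstacle I anticipate is establishing simplicity of the polynomial CCR-algebra over a (possibly infinite-dimensional) symplectic vector space cleanly at the $\ast$-algebra level, since the cited references \cite{Baez:1992tj,Fewster:2011pn} phrase the construction via generators and relations rather than proving simplicity directly; I would likely either invoke the corresponding statement for the Weyl algebra and transfer it, or give a short direct argument that any nonzero two-sided ideal of $\CCR(V,\sigma)$ contains a nonzero scalar by using the $\star$-commutators $[v,\,\cdot\,]_\star=i\sigma(v,\cdot)$ to "lower degree" and the nondegeneracy of $\sigma$ to separate generators — the same kind of bracket-counting argument used in the proof of Theorem~\ref{theo:goodautgroupclassical}. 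All the remaining steps (that $\kappa$ respects $\star$ and $\ast$, well-definedness of the inverse, properness of $\mathfrak{J}^{\mathfrak{S}_p}$) are routine verifications directly parallel to the classical case.
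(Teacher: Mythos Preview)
Your proposal is correct and follows essentially the same strategy as the paper. The only noteworthy difference is in part~(a): you build $\kappa$ using a fixed classical solution $\phi_0\in\Sol_p(\bfM,\bfJ)$, setting $\kappa([(\varphi,\alpha)])=\sip{[(\varphi,\alpha)]}{\phi_0}_{(\bfM,\bfJ)}+[\varphi]^{\mathrm{lin}}$, whereas the paper instead fixes an admissible \emph{state} $\omega_0$ and sets $\kappa([(\varphi,\alpha)])=\omega_0([(\varphi,\alpha)])+[\varphi]^{\mathrm{lin}}$. Both choices produce a linear map on generators that preserves the commutation relations (since only the $[\varphi]^{\mathrm{lin}}$ part contributes to commutators) and kills $[(0,\alpha)]-\alpha$, so both extend to the desired isomorphism; the paper's choice is more in keeping with the quantum context (cf.\ Remark~\ref{rem:splitting}), but your classical-solution version is equally valid and arguably more elementary. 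Part~(b) is identical in both: simplicity of $\CQA_p^{\mathrm{lin}}(\bfM)$ plus the inclusion $\widetilde{\mathfrak{J}}_p\subseteq\mathfrak{J}^{\mathfrak{S}_p}$ and properness of the latter force equality.
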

 \begin{proof}
Proof of a): We define a unital $\ast$-algebra homomorphism
 $\kappa:\CQA_p(\bfM,\bfJ)\to\CQA_p^{\mathrm{lin}}(\bfM)$ by setting,
 for all $[(\varphi,\alpha)]\in\CQA_p(\bfM,\bfJ)$,
 \begin{flalign}
 \kappa\big([(\varphi,\alpha)]\big) = \omega_0\big([(\varphi,\alpha)]\big) + [\varphi]^\mathrm{lin}~,
 \end{flalign}
 where $\omega_0$ is any choice of admissible state. 
 As $\widetilde{\mathfrak{J}}_p(\bfM,\bfJ)$ clearly lies in the kernel of $\kappa$, we
 can induce a unital $\ast$-algebra homomorphism $\kappa : \CQA_p(\bfM,\bfJ)/\widetilde{\mathfrak{J}}_p(\bfM,\bfJ)
 \to\CQA_p^{\mathrm{lin}}(\bfM)$. To show that the induced $\kappa$ is a $\astAlg$-isomorphism
  we notice that setting, for all $[\varphi]^\mathrm{lin}\in\CQA_p^\mathrm{lin}(\bfM)$,
  \begin{flalign}
  \kappa^{-1}\big([\varphi]^\mathrm{lin}\big):= \Big[ [(\varphi,0)] - \omega_0\big([\varphi,0]\big)\Big]\in\CQA_p(\bfM,\bfJ)/\widetilde{\mathfrak{J}}_p(\bfM,\bfJ)
  \end{flalign}
is well-defined and defines the inverse of $\kappa$.
\sk

Proof of b): By a), $\CQA_p(\bfM,\bfJ)/\widetilde{\mathfrak{J}}_p(\bfM,\bfJ)$ is a simple nontrivial 
unital $\ast$-algebra. Hence $\widetilde{\mathfrak{I}}_p(\bfM,\bfJ)$ is a maximal proper ideal.
But $\widetilde{\mathfrak{J}}_p(\bfM,\bfJ) \subseteq \mathfrak{J}^{\mathfrak{S}_p}(\bfM,\bfJ)$ 
and $\mathfrak{J}^{\mathfrak{S}_p}(\bfM,\bfJ)$ is proper
so the ideals are equal. 
 \end{proof}
 \begin{rem}
 As a consequence of this lemma, the two-sided $\ast$-ideals
 $\mathfrak{J}^{\mathfrak{S}_p}(\bfM,\bfJ)$ do not depend on which (non-empty)
 admissible state space $\mathfrak{S}_p$ for $\CQA_p$ we use in the construction. 
We therefore introduce a simpler notation and set for any object $(\bfM,\bfJ)$ in
$\LocSrc_p$
\begin{flalign}
\mathfrak{J}_p(\bfM,\bfJ):= \mathfrak{J}^{\mathfrak{S}_p}(\bfM,\bfJ) = \widetilde{\mathfrak{J}}_p(\bfM,\bfJ)~.
\end{flalign}
  \end{rem}
  \sk
  
These studies now allow us to construct our improved functor 
for the quantum theory of a multiplet of $p\in\bbN$ inhomogeneous Klein--Gordon fields.
  \begin{propo}\label{propo:quantfunctorgood}
  The following rules define a covariant functor $\QA_p: \LocSrc_p\to\astAlg$:
  To any object $(\bfM,\bfJ)$ in $\LocSrc_p$ we associate 
  $\QA_p(\bfM,\bfJ):= \CQA_p(\bfM,\bfJ)/\mathfrak{J}_p(\bfM,\bfJ)$.
  To any morphism $f:(\bfM_1,\bfJ_1)\to(\bfM_2,\bfJ_2)$ in $\LocSrc_p$ we associate
  the map $\QA_p(f): \QA_p(\bfM_1,\bfJ_1)\to \QA_p(\bfM_2,\bfJ_2)$
  that is canonically induced from $\CQA_p(f):\CQA_p(\bfM_1,\bfJ_1)\to \CQA_p(\bfM_2,\bfJ_2)$.
  \end{propo}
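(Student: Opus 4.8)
The plan is to mirror, step for step, the proof of Proposition~\ref{propo:quotientfunctor}, replacing Poisson algebras by unital $\ast$-algebras and the Poisson ideal $\mathfrak{I}_p$ by the $\ast$-ideal $\mathfrak{J}_p$. First I would record that each $\QA_p(\bfM,\bfJ) = \CQA_p(\bfM,\bfJ)/\mathfrak{J}_p(\bfM,\bfJ)$ is a well-defined nontrivial unital $\ast$-algebra: this is immediate, since $\mathfrak{J}_p(\bfM,\bfJ)$ is a proper two-sided $\ast$-ideal (it equals the intersection $\bigcap_{\omega}\ker\pi_\omega$ over a non-empty admissible state space, which is a proper two-sided $\ast$-ideal, and by Lemma~\ref{lem:quantidealproperties}~b) it coincides with the algebraically generated ideal $\widetilde{\mathfrak{J}}_p(\bfM,\bfJ)$).

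Next I would check that each $\CQA_p(f)$ descends to the quotients. The crucial point is that $\CQA_p(f)$ maps $\mathfrak{J}_p(\bfM_1,\bfJ_1)$ into $\mathfrak{J}_p(\bfM_2,\bfJ_2)$. Using the algebraic characterization $\mathfrak{J}_p = \widetilde{\mathfrak{J}}_p = \big\langle\{[(0,\alpha)]-\alpha : \alpha\in\bbR\}\big\rangle$ together with the identity $\CQA_p(f)\big([(0,\alpha)]-\alpha\big) = [(0,\alpha)]-\alpha$ — which follows from $\PhSp_p(f)([(0,\alpha)]) = [(f_\ast(0),\alpha)] = [(0,\alpha)]$ via $\CQA_p(f)=\CCR(\PhSp_p(f))$ acting as $\PhSp_p(f)$ on the degree-one generators and as the identity on $\bbR\subset\CQA_p$ — and the fact that $\CQA_p(f)$ is a unital $\ast$-algebra homomorphism, one sees that $\CQA_p(f)$ carries the two-sided ideal generated by the central elements $[(0,\alpha)]-\alpha$ of $\CQA_p(\bfM_1,\bfJ_1)$ into the corresponding ideal of $\CQA_p(\bfM_2,\bfJ_2)$. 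Hence a unital $\ast$-algebra homomorphism $\QA_p(f): \QA_p(\bfM_1,\bfJ_1)\to\QA_p(\bfM_2,\bfJ_2)$ is induced, and it is unit-preserving because $\CQA_p(f)$ is. For injectivity I would invoke Lemma~\ref{lem:quantidealproperties}~a): $\QA_p(\bfM_1,\bfJ_1)\cong\CQA_p^{\mathrm{lin}}(\bfM_1)$ is simple, so it has no nontrivial proper two-sided ideals; since $\QA_p(f)$ is a unital $\ast$-homomorphism into the nontrivial algebra $\QA_p(\bfM_2,\bfJ_2)$, it is not the zero map, so $\ker\QA_p(f)=\{0\}$. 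Finally, compatibility with composition and identities is inherited from $\CQA_p$, since by construction the quotient projections intertwine $\CQA_p(f)$ with $\QA_p(f)$; this makes $\QA_p:\LocSrc_p\to\astAlg$ a covariant functor.

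The only potentially delicate point — and the natural candidate for the ``main obstacle'' — is the well-definedness of $\QA_p(f)$, i.e. that $\CQA_p(f)$ respects the ideals $\mathfrak{J}_p$. But with the algebraic description of $\mathfrak{J}_p$ from Lemma~\ref{lem:quantidealproperties}~b) and the invariance $\CQA_p(f)([(0,\alpha)])=[(0,\alpha)]$ in hand, this is entirely routine, and nothing substantive remains; the proposition is a bookkeeping consequence of Lemma~\ref{lem:quantidealproperties} and the definition of $\CQA_p$.
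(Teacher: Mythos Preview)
Your proposal is correct and follows essentially the same route as the paper's proof: both use Lemma~\ref{lem:quantidealproperties} to see that the quotients are nontrivial and simple, verify that $\CQA_p(f)$ preserves the ideals via the algebraic generators $[(0,\alpha)]-\alpha$ and the identity $\CQA_p(f)\big([(0,\alpha)]-\alpha\big)=[(0,\alpha)]-\alpha$, deduce injectivity of the induced map from simplicity, and inherit functoriality from $\CQA_p$.
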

  \begin{proof}
Lemma \ref{lem:quantidealproperties} has established that
the quotients are nontrivial unital $\ast$-algebras.
Next, let $f:(\bfM_1,\bfJ_1)\to (\bfM_2,\bfJ_2)$ be any morphism in $\LocSrc_p$. Then $\CQA_p(f)$ 
induces a unital $*$-homomorphism $\QA_p(f):\QA_p (\bfM_1,\bfJ_1)  \to \QA_p (\bfM_2,\bfJ_2)$ because    
it restricts to a map $\CQA_p(f):\mathfrak{J}_p(\bfM_1,\bfJ_1) \to \mathfrak{J}_p(\bfM_2,\bfJ_2) $.
  This is clear from the fact that $\CQA_p(f)\big( [(0,\alpha)]-\alpha \big) = [(0,\alpha)]-\alpha$, for any $\alpha\in\bbR$.
  The induced unital $\ast$-algebra homomorphism 
  $\QA_p(f): \QA_p(\bfM_1,\bfJ_1)\to\QA_p(\bfM_2,\bfJ_2)$ is injective (i.e.\ a morphism in $\astAlg$),
   since $\QA_p(\bfM_1,\bfJ_1)$ is simple, cf.\ Lemma \ref{lem:quantidealproperties}. 
The composition and identity properties
  of the association $\QA_p$ are consequences of the same properties
  of $\CQA_p$, hence $\QA_p: \LocSrc_p \to \astAlg$ is a covariant functor.
  \end{proof}
The following statement may be proved in complete analogy with Proposition \ref{propo:improvedpoisalgfunctorisLCFT}:
\begin{propo}
The covariant functor $\QA_p:\LocSrc_p\to\astAlg$ satisfies the causality property
and the time-slice axiom and is therefore a locally covariant quantum field theory.
\end{propo}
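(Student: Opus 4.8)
The plan is to follow the proof of Proposition~\ref{propo:improvedpoisalgfunctorisLCFT} essentially verbatim, reducing the assertion to two ingredients: (i) the intermediate functor $\CQA_p=\CCR\circ\PhSp_p$ already satisfies the causality property and the time-slice axiom; and (ii) passing from $\CQA_p$ to the quotients $\QA_p(\bfM,\bfJ)=\CQA_p(\bfM,\bfJ)/\mathfrak{J}_p(\bfM,\bfJ)$ preserves both properties. Ingredient (i) is standard and was already recorded in Subsection~\ref{subsec:canonical}: $\PhSp_p$ enjoys causality and the time-slice axiom by Proposition~\ref{propo:phspproperties}~c), while $\CCR$ preserves isomorphisms (hence the time-slice axiom) and sends symplectically orthogonal subspaces to $\star$-commuting subalgebras (hence causality).

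For the time-slice axiom I would take a Cauchy morphism $f:(\bfM_1,\bfJ_1)\to(\bfM_2,\bfJ_2)$ in $\LocSrc_p$, so that $\CQA_p(f)$ is a $\astAlg$-isomorphism. Using the algebraic characterisation $\mathfrak{J}_p=\widetilde{\mathfrak{J}}_p$ of Lemma~\ref{lem:quantidealproperties}~b) together with \eqref{eqn:quantumidealalgebraic}, and the identities $\CQA_p(f)\big([(0,\alpha)]-\alpha\big)=[(0,\alpha)]-\alpha$ and $\CQA_p(f)^{-1}\big([(0,\alpha)]-\alpha\big)=[(0,\alpha)]-\alpha$ for all $\alpha\in\bbR$ (which hold because $\CQA_p(f)$ acts trivially on the $\bbR$-summand, cf.\ Proposition~\ref{propo:quantfunctorgood}), one concludes that $\CQA_p(f)$ restricts to a \emph{bijection} of the generating sets, hence of the ideals they generate, i.e.\ $\CQA_p(f)$ maps $\mathfrak{J}_p(\bfM_1,\bfJ_1)$ onto $\mathfrak{J}_p(\bfM_2,\bfJ_2)$. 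An isomorphism carrying one $\ast$-ideal onto another descends, together with its inverse, to mutually inverse $\ast$-algebra homomorphisms of the quotients; therefore $\QA_p(f)$ is an isomorphism in $\astAlg$.

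For causality I would take a pair of morphisms $f_i:(\bfM_i,\bfJ_i)\to(\bfM,\bfJ)$, $i=1,2$, with causally disjoint images in $(\bfM,\bfJ)$. By causality of $\CQA_p$, the subalgebras $\CQA_p(f_1)[\CQA_p(\bfM_1,\bfJ_1)]$ and $\CQA_p(f_2)[\CQA_p(\bfM_2,\bfJ_2)]$ $\star$-commute inside $\CQA_p(\bfM,\bfJ)$. Let $q:\CQA_p(\bfM,\bfJ)\to\QA_p(\bfM,\bfJ)$ be the quotient homomorphism. Since $q$ is a unital $\ast$-algebra homomorphism and $\QA_p(f_i)[\QA_p(\bfM_i,\bfJ_i)]=q\big(\CQA_p(f_i)[\CQA_p(\bfM_i,\bfJ_i)]\big)$, the images $\QA_p(f_1)[\QA_p(\bfM_1,\bfJ_1)]$ and $\QA_p(f_2)[\QA_p(\bfM_2,\bfJ_2)]$ $\star$-commute in $\QA_p(\bfM,\bfJ)$. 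This establishes causality, and together with the time-slice axiom it exhibits $\QA_p:\LocSrc_p\to\astAlg$ as a locally covariant quantum field theory (over the enlarged geometric category $\LocSrc_p$). The only step that demands a moment's care is the surjectivity—not merely the inclusion—of the restriction of $\CQA_p(f)$ to the ideals in the time-slice argument; but this is immediate from the finitely-implicitly-generated description of $\mathfrak{J}_p$ in \eqref{eqn:quantumidealalgebraic} and the invariance of its generators under both $\CQA_p(f)$ and $\CQA_p(f)^{-1}$, so no genuine obstacle arises.
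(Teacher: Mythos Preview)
Your proposal is correct and follows precisely the approach the paper indicates: the paper does not give an explicit proof here but merely states that the result ``may be proved in complete analogy with Proposition~\ref{propo:improvedpoisalgfunctorisLCFT}'', and you have faithfully spelled out that analogy in the $\ast$-algebra setting. Your extra care about surjectivity of $\CQA_p(f)$ on the ideals is the one point requiring attention, and you handle it correctly via the algebraic description \eqref{eqn:quantumidealalgebraic} of $\mathfrak{J}_p$.
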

\begin{rem}\label{rem:fields}
The quantum field itself may now be introduced, following a
similar pattern to the classical fields of Remark~\ref{rem:classical_fields}. 
For each $(\bfM,\bfJ)$, let $\Phi_{(\bfM,\bfJ)}:C_0^\infty(M, \bbC^p)
\to \QA_p(\bfM,\bfJ)$ be the complex linear map
\begin{flalign}
\Phi_{(\bfM,\bfJ)}(\varphi) =\big[ [(\mathrm{Re}\,\varphi,0)]\big] + i \big[[(\mathrm{Im}\, \varphi,0)]\big]\,,
\end{flalign}
where the outer brackets denote the equivalence relation used in defining
$\QA_p$ and the inner square brackets that used in defining $\PhSp_p$. 
Then $\Phi_{(\bfM,\bfJ)}$ obeys hermiticity, i.e., $\Phi_{(\bfM,\bfJ)}(\overline{\varphi})= \Phi_{(\bfM,\bfJ)}(\varphi)^*$ for all
$\varphi\in C_0^\infty(M,\bbC^p)$, obeys 
the field equation in the form
\begin{flalign}\label{eq:feq}
\Phi_{(\bfM,\bfJ)}\big(\mathrm{KG}_{\bfM}(\varphi)\big) + 
\int_{M} \ip{\varphi}{\bfJ} \,\vol_{\bfM} =0~,
\end{flalign}
for all $\varphi\in C_0^\infty(M,\bbC^p)$, 
and obeys the canonical commutation relations
\begin{flalign}
\big[\Phi_{(\bfM,\bfJ)}(\varphi) \stackrel{\star}{,} \Phi_{(\bfM,\bfJ)}(\psi)\big]
= i \int_{M} \ip{\varphi}{\mathrm{E}_{\bfM}(\psi)} \,\vol_{\bfM}~,
\end{flalign}
for all $\varphi,\psi\in C_0^\infty(M,\bbC^p)$,
where we suppress explicit identity operators. 
Furthermore, if $f:(\bfM_1,\bfJ_1)\to (\bfM_2,\bfJ_2)$ is a morphism in $\LocSrc_{p}$
then the quantum fields are related by 
\begin{flalign}
\QA_p(f)\big(\Phi_{(\bfM_1,\bfJ_1)}(\varphi)\big)
=\Phi_{(\bfM_2,\bfJ_2)}\big(f_\ast(\varphi)\big) ~,
\end{flalign}
for all $\varphi\in C_0^\infty(M_1,\bbC^p)$. 
Defining a functor $\mathfrak{D}_p:\LocSrc_{p}\to \Vec_{\bbC}$ by  $\mathfrak{D}_p(\bfM,\bfJ)=
C_0^\infty(M,\bbC^p)$, $\mathfrak{D}_p(f) =f_\ast$, we
see that the maps $\Phi_{(\bfM,\bfJ)}$ provide the components
of a natural transformation $\Phi:\mathfrak{D}_p\Rightarrow\QA_p$ (suppressing a forgetful functor from $\astAlg$ to $\Vec_{\bbC}$)
and thus a locally covariant field.
All the above properties follow straightforwardly from our discussion
and we omit most of the details, save to mention that the field
equation~\eqref{eq:feq} is proved by a calculation
similar to \eqref{eq:cfeq}, but using  \eqref{eqn:quantumidealalgebraic}
in place of \eqref{eqn:algebraicPoissonideal}. 
\end{rem}
\begin{rem}\label{rem:HW}
The fields just introduced allow us to clarify the relation between our improved quantum algebras and
the algebras for the inhomogeneous Klein--Gordon theory used by Hollands and Wald \cite{Hollands:2004yh}, which  
are inspired by the Borchers--Uhlmann algebra \cite{Borchers,Uhlmann}.
In our notation, what Hollands and Wald propose is the following construction: Consider
the {\it off-shell} presymplectic vector space for a multiplet of $p\in\bbN$ Klein--Gordon fields
$\PhSp_p^\mathrm{HW}(\bfM,\bfJ) := \big(C^\infty_0(M,\bbR^p),\sigma_{\bfM}\big)$, 
where for all $\varphi,\psi\in C^\infty_0(M,\bbR^p) $,
$\sigma_{\bfM}(\varphi,\psi) = \int_M\ip{\varphi}{\mathrm{E}_{\bfM}(\psi)}\, \vol_{\bfM}$. Apply the CCR-functor
$\CCR(\PhSp_p^\mathrm{HW}(\bfM,\bfJ))$ and consider the two-sided $\ast$-ideal
\begin{flalign}\label{eqn:HWideal}
\mathfrak{I}_p^\mathrm{HW}(\bfM,\bfJ) := \left\langle \Big\{ \mathrm{KG}_{\bfM}(\varphi) +\int_M \ip{\varphi}{\bfJ} \, \vol_{\bfM} : 
\varphi \in C^\infty_0(M,\bbR^p)\Big\}\right\rangle~,
\end{flalign}
which is supposed to describe the inhomogeneous Klein--Gordon equation. The algebras of Hollands and Wald
are then defined by the quotient $\QA^{\mathrm{HW}}_p(\bfM,\bfJ):=
 \CCR(\PhSp_p^\mathrm{HW}(\bfM,\bfJ))/\mathfrak{I}_p^{\mathrm{HW}}(\bfM,\bfJ)$ 
 and it is easy to see that they are functorial, i.e.\ that we have a covariant 
 functor $\QA^\mathrm{HW}_p :\LocSrc_p\to \astAlg$. The covariant functor
 $\QA^\mathrm{HW}_p$ turns out to be naturally isomorphic to our functor $\QA_p$ given 
 in Proposition \ref{propo:quantfunctorgood}. Explicitly, the natural isomorphism $\kappa : \QA^\mathrm{HW}_p\Rightarrow \QA_p$
is given by setting, for all $\varphi\in C^\infty_0(M,\bbR^p)$,
\begin{flalign}
\kappa_{(\bfM,\bfJ)}(\varphi) =\Phi_{(\bfM,\bfJ)}(\varphi)~,
\end{flalign}
and extending as a $\astAlg$-morphism, which is possible
as a result of the properties of $\Phi$ set out in Remark~\ref{rem:fields}. 
\end{rem}

 % % % % %
 
 \subsection{\label{subsec:quantrce}Relative Cauchy evolution of the functor $\QA_p$}
 The relative Cauchy evolution of the functor $\PhSp_p:\LocSrc_p\to \PreSymp$ induces that
of the functor $\QA_p: \LocSrc_p\to\astAlg$ as follows:
 Let $(\bfM,\bfJ)$ be any object in $\LocSrc_p$ and let $(\h,\j)\in H(\bfM,\bfJ)$ be any globally hyperbolic
 perturbation. From the explicit expression for $\rce_{(\bfM,\bfJ)}^{(\PhSp_p)}[\h,\j] \in\Aut(\PhSp_p(\bfM,\bfJ))$  
 given in (\ref{eq:rce_reform}) we observe that the relative Cauchy evolution $\rce_{(\bfM,\bfJ)}^{(\QA_p)}[\h,\j]\in\Aut(
 \QA_p(\bfM,\bfJ))$ of $\QA_p$ is uniquely specified by, for
 all $[(\varphi,\alpha)]\in \QA_p(\bfM,\bfJ)$,
 \begin{multline}\label{eqn:rceQA}
 \rce_{(\bfM,\bfJ)}^{(\QA_p)}[\h,\j]\big([(\varphi,\alpha)]\big) = \left[\left(\varphi + (\mathrm{KG}_{\bfM} - \mathrm{KG}_{\bfM[\h]})\big(\mathrm{E}_{\bfM[\h]}(\varphi)\big),0\right)\right] + \alpha \\
 +  \int_M \left(\ip{-\j}{\mathrm{E}_{\bfM[\h]}(\varphi)} + \ip{(1-\rho_{\h})\,(\bfJ+\j)}{\mathrm{E}_{\bfM[\h]}(\varphi)}\right)\,\vol_{\bfM}~,
 \end{multline}
 where on the right hand side we have used the equivalence relation entering the definition
 of $\QA_p(\bfM,\bfJ)$ (cf.\ Proposition \ref{propo:quantfunctorgood}) and we have chosen
 as in (\ref{eq:rce_reform})  a representative $\varphi$ with compact support in $M^+$.
In sufficiently regular representations of the algebra $\QA_p(\bfM,\bfJ)$ one can differentiate this
expression, yielding
\begin{flalign}
\nonumber \frac{d}{ds} \rce_{(\bfM,\bfJ)}^{(\QA_p)}[s\h,s\j]\big([(\varphi,\alpha)]\big)\Big\vert_{s=0} &= - \left[\left( \mathrm{KG}_{\bfM[\h]}^\prime \big(\mathrm{E}_{\bfM}(\varphi)\big),0\right)\right] \\
\nonumber & \qquad ~\qquad - \int_M\ip{\frac{1}{2} g^{ab}\,h_{ab}\,\bfJ + \j}{\mathrm{E}_{\bfM}(\varphi)}\,\vol_{\bfM}\\
& = i\,\left[\frac{1}{2}  T_{(\bfM,\bfJ)}(\h)+ [(\j,0)]\stackrel{\star}{,} [(\varphi,\alpha)]\right]~,\label{eq:QArcederiv}
\end{flalign}
where $T_{(\bfM,\bfJ)}(\h) = \int_M h_{ab}\,T^{ab}_{(\bfM,\bfJ)}\,\vol_{\bfM}$ is the smearing with $h_{ab}$ 
of the quantization of the stress-energy tensor 
 \eqref{eqn:fullSET}, with regularization by point-splitting (as emphasized in \cite{Brunetti:2001dx}, 
 the precise nature of the $c$-number subtraction
 is irrelevant owing to the commutator). Although \eqref{eqn:rceQA} was derived
 under an assumption on the support of the representative $\varphi$, 
 the formulae in \eqref{eq:QArcederiv} are valid for any
 representative $(\varphi,\alpha)$ of its equivalence class. Of course, \eqref{eq:QArcederiv} is the Dirac quantization of \eqref{eq:PArcederiv}. 
Finally, we note the special case of (\ref{eqn:rceQA}) for vanishing metric perturbation $\h=0$, namely
 \begin{flalign}\label{eqn:easyrceQAj}
 \rce_{(\bfM,\bfJ)}^{(\QA_p)}[0,\j]\big([(\varphi,\alpha)]\big) = [(\varphi,\alpha)] -\int_M\ip{\j}{\mathrm{E}_{\bfM}(\varphi)}
 = [(\varphi,\alpha)]  + i\,\Big[[(\j,0)]\stackrel{\star}{,}[(\varphi,\alpha)]\Big]~.
 \end{flalign}

 \subsection{\label{subsec:quantautomorphism}Automorphism group of the functor $\QA_p$}
 We study the automorphism group of the covariant functor $\QA_p:\LocSrc_p\to \astAlg$ defined
 in Proposition \ref{propo:quantfunctorgood}. For this we first notice that in the massless case $m=0$
 the automorphism group contains a $\bbR^p$ subgroup.
 \begin{propo}\label{propo:quantautogood}
 If $m=0$ there exists a faithful homomorphism $\eta:\bbR^p\to\Aut(\QA_p)$
 induced by the one in Proposition \ref{propo:Z2xR} restricted to $\{+1\}\times\bbR^p\subseteq \bbZ_2\times\bbR^p$.
 Explicitly, for any object $(\bfM,\bfJ)$ in $\LocSrc_p$ the automorphism $\eta(\mu)_{(\bfM,\bfJ)}$
 is specified by, for all $[(\varphi,\alpha)]\in \QA_p(\bfM,\bfJ)$,
 \begin{flalign}
 \eta(\mu)_{(\bfM,\bfJ)}\big([(\varphi,\alpha)]\big) = \Big[\Big(\varphi, \alpha + \int_M\ip{\varphi}{\mu} \, \vol_{\bfM}\Big) \Big]~.
 \end{flalign}
 \end{propo}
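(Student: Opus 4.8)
The plan is to transcribe the argument of the classical Proposition~\ref{propo:R} verbatim, replacing the functor $\CanPois$ by $\CCR$ and the Poisson ideals $\mathfrak{I}_p(\bfM,\bfJ)$ by the two-sided $\ast$-ideals $\mathfrak{J}_p(\bfM,\bfJ)$. First I would lift the presymplectic automorphism $\eta(+1,\mu)\in\Aut(\PhSp_p)$ of Proposition~\ref{propo:Z2xR} to an automorphism of the canonical functor $\CQA_p=\CCR\circ\PhSp_p$ by applying $\CCR$ componentwise; functoriality of $\CCR$ turns each $\PreSymp$-isomorphism $\eta(+1,\mu)_{(\bfM,\bfJ)}$ into a $\ast$-algebra automorphism $\CCR(\eta(+1,\mu)_{(\bfM,\bfJ)})$ of $\CQA_p(\bfM,\bfJ)$, and the naturality diagrams are inherited from those of $\eta(+1,\mu)$.

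The key step is to show that these lifted automorphisms descend to the quotients $\QA_p(\bfM,\bfJ)=\CQA_p(\bfM,\bfJ)/\mathfrak{J}_p(\bfM,\bfJ)$, i.e.\ that they preserve $\mathfrak{J}_p(\bfM,\bfJ)$. By Lemma~\ref{lem:quantidealproperties} b) this ideal coincides with the algebraically generated ideal $\widetilde{\mathfrak{J}}_p(\bfM,\bfJ)$ with generating set $\{[(0,\alpha)]-\alpha:\alpha\in\bbR\}$. Evaluating \eqref{eqn:Z2xR} with $\sigma=+1$ on $\varphi=0$ shows that $\eta(+1,\mu)_{(\bfM,\bfJ)}$ fixes every $[(0,\alpha)]$, and any unital $\ast$-automorphism fixes the scalars $\alpha\cdot 1$; hence the generators are fixed and the ideal is preserved. (By contrast, the $\sigma=-1$ branch sends $[(0,\alpha)]-\alpha$ to $[(0,-\alpha)]-\alpha$, which is congruent to $-2\alpha$ modulo $\mathfrak{J}_p(\bfM,\bfJ)$ and therefore not in this proper ideal for $\alpha\neq 0$; this is why only $\bbR^p$, and not $\bbZ_2\times\bbR^p$, survives, exactly parallel to the remark following Proposition~\ref{propo:R}.) The induced maps $\eta(\mu)_{(\bfM,\bfJ)}$ on $\QA_p(\bfM,\bfJ)$ are then $\ast$-automorphisms — with inverses induced by the lift of $\eta(+1,-\mu)$ — their action on the degree-one generators $[(\varphi,\alpha)]$ is the displayed formula, and naturality passes to the quotient by the construction of $\QA_p$ in Proposition~\ref{propo:quantfunctorgood}.

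Finally I would record the group law and faithfulness. The relation $\eta(\mu)\circ\eta(\mu')=\eta(\mu+\mu')$ follows at once by applying $\CCR$ and passing to the quotient in the group law $\eta(+1,\mu)\circ\eta(+1,\mu')=\eta(+1,\mu+\mu')$ of Proposition~\ref{propo:Z2xR}; in particular each $\eta(\mu)$ is invertible, so $\eta(\mu)\in\Aut(\QA_p)$. For injectivity of $\eta:\bbR^p\to\Aut(\QA_p)$ it suffices, given $\mu\neq 0$, to choose on Minkowski spacetime with vanishing source a test section $\varphi$ with $\int_M\ip{\varphi}{\mu}\,\vol_{\bfM}\neq 0$; then $\eta(\mu)_{(\bfM,0)}\big([(\varphi,0)]\big)-[(\varphi,0)]$ is a nonzero scalar multiple of the unit, which is nonzero in the nontrivial unital algebra $\QA_p(\bfM,0)$ (cf.\ Lemma~\ref{lem:quantidealproperties}), so $\eta(\mu)\neq\id_{\QA_p}$.

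I do not expect a genuine obstacle here: every step is a direct translation of the classical argument of Proposition~\ref{propo:R}, and the one point requiring a moment's care — preservation of the ideal — is immediate from the algebraic description of $\mathfrak{J}_p(\bfM,\bfJ)$ in Lemma~\ref{lem:quantidealproperties}.
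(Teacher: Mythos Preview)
Your proposal is correct and follows essentially the same approach as the paper: lift $\eta(+1,\mu)$ to $\CQA_p$ via the functor $\CCR$, check that the algebraic generators $[(0,\alpha)]-\alpha$ of $\mathfrak{J}_p(\bfM,\bfJ)$ are fixed (while $\sigma=-1$ fails this), and deduce the group law from Proposition~\ref{propo:Z2xR}. Your explicit verification of faithfulness is a small addition the paper leaves implicit, but otherwise the arguments coincide.
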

 \begin{proof}
 Applying the functor $\CCR$, the automorphism $\eta(\sigma,\mu)\in \Aut(\PhSp_p)$ of Proposition \ref{propo:Z2xR} induces an element in $\Aut(\CQA_p)$ (denoted with a slight abuse of notation by the same symbol).
 For $\sigma=-1$ this automorphism does not preserve the two-sided $\ast$-ideals $\mathfrak{J}_p(\bfM,\bfJ)$, since
 $\eta(-1,\mu)_{(\bfM,\bfJ)}\big([(0,\alpha)]-\alpha\big) = [(0,-\alpha)]-\alpha\not\in \mathfrak{J}_p(\bfM,\bfJ)$. 
 For $\sigma =+1$ and $\mu\in \bbR^p$
 arbitrary the two-sided $\ast$-ideals are preserved, hence $\eta(+1,\mu)$ induces the automorphism
 $\eta(\mu)\in\Aut(\QA_p)$ which is claimed in this proposition. The group law is an obvious consequence
of the group law of the automorphisms $\eta(\sigma,\mu)$ of Proposition \ref{propo:Z2xR}.
 \end{proof}
 \begin{rem} In the same way, one may also show for $m\neq 0$ that the nontrivial $\bbZ_2$-automorphism of $\PhSp_p$  
  does not lift to an automorphism of $\QA_p$.
 \end{rem}
 We may now prove that the automorphisms found in Proposition \ref{propo:quantautogood} exhaust $\Aut(\QA_p)$.
 We require the analog of Theorem \ref{theo:endobycomponent} for the functor 
 $\QA_p : \LocSrc_p\to\astAlg$, which can be obtained by a similar proof as in 
 Theorem \ref{theo:endobycomponent} and hence can be omitted. 
 \begin{theo}\label{theo:goodautgroupquant}
 Every endomorphism of the functor $\QA_p$ is an automorphism and
 \begin{flalign}
 \End(\QA_p) = \Aut(\QA_p) \simeq \begin{cases}
 \{\id_{\QA_p}\} & ~,~~\text{for } m\neq 0~,\\
 \bbR^p & ~,~~\text{for } m=0~,
 \end{cases}
 \end{flalign}
 where the action for $m=0$ is given by Proposition \ref{propo:quantautogood}.
 \end{theo}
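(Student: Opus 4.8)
The plan is to mimic closely the proof of Theorem~\ref{theo:goodautgroupclassical}, replacing the classical Poisson-algebra machinery by the quantum $\ast$-algebra machinery, since the structural input is essentially the same. First I would invoke the (asserted) analog of Theorem~\ref{theo:endobycomponent} for $\QA_p$: any $\eta\in\End(\QA_p)$ is fixed by its component $\eta_{(\bfM_0,0)}$ on Minkowski spacetime with trivial source. So the whole problem reduces to classifying those endomorphisms $\eta_{(\bfM_0,0)}$ of $\QA_p(\bfM_0,0)$ which are compatible with naturality (hence with the Poincar\'e action and, by the quantum analog of Lemma~\ref{lem:endorcecompatibility}, with the relative Cauchy evolution and its derivative).

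Next I would use Lemma~\ref{lem:quantidealproperties}~a) in the distinguished form: the choice $\omega_0$ may be taken as the state associated with the solution $\phi_0=0$, giving a preferred $\astAlg$-isomorphism $\kappa:\QA_p(\bfM_0,0)\to\CQA_p^{\mathrm{lin}}(\bfM_0)$, $\kappa\big([(\varphi,\alpha)]\big)=\alpha+[\varphi]^{\mathrm{lin}}$, which intertwines the Poincar\'e actions on the two algebras. Transporting $\eta_{(\bfM_0,0)}$ through $\kappa$ gives an endomorphism $\widetilde\eta$ of $\CQA_p^{\mathrm{lin}}(\bfM_0)=\CCR(\PhSp_p^{\mathrm{lin}}(\bfM_0))$ that commutes with all Poincar\'e transformations and, by the $\h=0$ case of \eqref{eq:QArcederiv} (equivalently the commutator form \eqref{eqn:easyrceQAj}), satisfies the relation
\begin{flalign}
\widetilde\eta\left(\Big[[\j]^{\mathrm{lin}}\stackrel{\star}{,}[\varphi]^{\mathrm{lin}}\Big]\right)=\Big[[\j]^{\mathrm{lin}}\stackrel{\star}{,}\widetilde\eta\big([\varphi]^{\mathrm{lin}}\big)\Big]
\end{flalign}
for all $\j,\varphi\in C_0^\infty(M_0,\bbR^p)$. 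Since $\big[[\j]^{\mathrm{lin}}\stackrel{\star}{,}[\varphi]^{\mathrm{lin}}\big]=i\,\sigma^{\mathrm{lin}}_{\bfM_0}([\j]^{\mathrm{lin}},[\varphi]^{\mathrm{lin}})$ is a $c$-number, the left-hand side is $i\,\sigma^{\mathrm{lin}}_{\bfM_0}([\j]^{\mathrm{lin}},[\varphi]^{\mathrm{lin}})$. Writing $\widetilde\eta([\varphi]^{\mathrm{lin}})=\widetilde\eta_0([\varphi]^{\mathrm{lin}})+\widetilde\eta_1([\varphi]^{\mathrm{lin}})+\widetilde\eta_{\geq 2}([\varphi]^{\mathrm{lin}})$ by $\bbN^0$-degree and comparing degrees on the two sides (the commutator with a degree-one element lowers degree by one, and it is nondegenerate on the degree-one part), one forces $\widetilde\eta_{\geq 2}=0$ and $\widetilde\eta_1=\id$. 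Hence $\widetilde\eta([\varphi]^{\mathrm{lin}})=\widetilde\eta_0([\varphi]^{\mathrm{lin}})+[\varphi]^{\mathrm{lin}}$ with $\widetilde\eta_0:\PhSp_p^{\mathrm{lin}}(\bfM_0)\to\bbR$ linear and, by naturality, Poincar\'e invariant.

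Then Lemma~\ref{lem:transinvfnl} finishes the identification: $\widetilde\eta_0\equiv 0$ for $m\neq 0$, while $\widetilde\eta_0([\varphi]^{\mathrm{lin}})=\int_{M_0}\ip{\varphi}{\mu}\,\vol_{\bfM_0}$ for some $\mu\in\bbR^p$ when $m=0$. Undoing $\kappa$, this means $\eta_{(\bfM_0,0)}$ is the identity for $m\neq 0$, so $\eta=\id_{\QA_p}$; and for $m=0$ it coincides with the Minkowski component of the automorphism $\eta(\mu)$ of Proposition~\ref{propo:quantautogood}, so $\eta=\eta(\mu)$. Since every endomorphism is thus one of the (invertible) maps in Proposition~\ref{propo:quantautogood} (or the identity), every endomorphism is an automorphism and the stated isomorphism of groups follows, with injectivity of $\mu\mapsto\eta(\mu)$ already recorded there. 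I expect the only genuinely delicate point to be verifying that the quantum analog of Lemma~\ref{lem:endorcecompatibility} (endomorphisms commute with the relative Cauchy evolution) and the degree-counting argument in $\CQA_p^{\mathrm{lin}}$ go through cleanly; both are quantum shadows of arguments already used classically, and the $c$-number nature of the commutator $\big[[\j]^{\mathrm{lin}}\stackrel{\star}{,}[\varphi]^{\mathrm{lin}}\big]$ is exactly what makes the degree comparison identical to the Poisson-bracket case, so I anticipate no essential new obstacle beyond bookkeeping. Thus I would simply remark that the proof is completely parallel to that of Theorem~\ref{theo:goodautgroupclassical} and omit the repeated details.
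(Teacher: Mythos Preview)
Your proposal is correct and follows essentially the same approach as the paper's proof: reduce to the Minkowski component, transport through the preferred isomorphism $\kappa$ to $\CQA_p^{\mathrm{lin}}(\bfM_0)$, exploit the $\h=0$ relative Cauchy evolution to obtain the commutator condition, and then finish by the degree argument and Lemma~\ref{lem:transinvfnl}. The paper phrases the degree-counting step slightly differently, observing via the explicit $\star$-product formula \eqref{eqn:starproduct} that the commutator $[[\j]^{\mathrm{lin}}\stackrel{\star}{,}\,\cdot\,]$ literally equals $i$ times the Poisson bracket, so the condition becomes identical to \eqref{eqn:etatilde} and the remainder of the proof of Theorem~\ref{theo:goodautgroupclassical} applies verbatim; your direct degree comparison is the same observation unpacked.
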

 \begin{proof}
 The steps in this proof are similar to the ones in Theorem \ref{theo:goodautgroupclassical}.
 Let $\eta\in\End(\QA_p)$ be any endomorphism and let us consider its component  $\eta_{(\bfM_0,0)}$,
 where $\bfM_0$ is Minkowski spacetime. For this particular object, 
 the $\astAlg$-isomorphism $\kappa:\QA_p(\bfM_0,0)\to\CQA_p^\mathrm{lin}(\bfM_0)$ defined by 
 $\kappa\big([(\varphi,\alpha)]\big) = \alpha + [\varphi]^{\mathrm{lin}}$ intertwines the natural action of 
 the Poincar{\'e} transformations on $\QA_p(\bfM_0,0)$ and $\CQA_p^\mathrm{lin}(\bfM_0)$. 
Consequently, the endomorphism $\widetilde{\eta} := \kappa \circ \eta_{(\bfM_0,0)} \circ\kappa^{-1}$ 
of $\CQA^\mathrm{lin}_p(\bfM_0)$
 has to commute with all Poincar{\'e} transformations. 
 Furthermore, because $\eta_{(\bfM_0,0)}$ commutes with (derivatives of)
 the relative Cauchy evolution on $\QA_p(\bfM_0,0)$ -- in particular those
 with $\h=0$ -- we obtain the condition, for all $\j\in C^\infty_0(M_0,\bbR^p)$ and
   $[\varphi]^\mathrm{lin}\in\CQA_p^\mathrm{lin}(\bfM_0)$,
   \begin{flalign}\label{tmp:quantconditioncommutator}
   \widetilde{\eta}\Big(\Big[[\j]^\mathrm{lin}\stackrel{\star}{,} [\varphi]^\mathrm{lin}\Big]\Big) = 
   \Big[[\j]^\mathrm{lin}\stackrel{\star}{,} \widetilde{\eta}\big([\varphi]^\mathrm{lin}\big)\Big]~
   \end{flalign}
on $\widetilde{\eta}$, where we have used (\ref{eq:QArcederiv}) with $\h=0$. 
The left hand side of \eqref{tmp:quantconditioncommutator}, which is
analogous to \eqref{eqn:etatilde} in the proof of Theorem \ref{theo:goodautgroupclassical}, is simply $i\,\sigma_{\bfM_0}^\mathrm{lin}\big([\j]^\mathrm{lin},[\varphi]^\mathrm{lin}\big)$.
Using the explicit expression (\ref{eqn:starproduct}) for the $\star$-product in $\CQA_p^\mathrm{lin}(\bfM_0)$,
we find that the right hand side of this equality is equal to the Poisson bracket 
$i\, \big\{[\j]^\mathrm{lin},\widetilde{\eta}\big([\varphi]^{\mathrm{lin}}\big)\big\}_{\sigma_{\bfM_0}^\mathrm{lin}}$.
The remainder of the proof runs in complete analogy with that of Theorem \ref{theo:goodautgroupclassical}.
\end{proof}

\subsection{\label{subsec:quantcomposition}Composition property of the functor $\QA_p$}
It remains to prove that the covariant functor $\QA_p:\LocSrc_p\to \astAlg$
satisfies the composition property. We define for $p\geq 2$ and $0<q<p$
the covariant functor
\begin{flalign}
\QA_{p,q}:= \otimes \circ \big(\QA_q\times\QA_{p-q}\big) \circ \mathfrak{Split}_{p,q}:\LocSrc_p\to \astAlg~ 
\end{flalign}
and we obtain the following, in complete analogy with the proof of Theorem \ref{theo:classicalcompoA}
\begin{theo}
For any $p\geq 2$ and $0<q<p$, the covariant functors $\QA_{p,q}: \LocSrc_p\to\astAlg$
and $\QA_p:\LocSrc_p\to\astAlg$ are naturally isomorphic. The natural isomorphism $\eta = \{\eta_{(\bfM,\bfJ)}\}:
\QA_{p,q}\Rightarrow \QA_p$ is specified by, for all $[(\varphi,\alpha)]\in\QA_q(\bfM,\bfJ^q)$
and $[(\psi,\beta)]\in\QA_{p-q}(\bfM,\bfJ^{p-q})$,
\begin{flalign}
\eta_{(\bfM,\bfJ)}\big([(\varphi,\alpha)]\otimes 1\big) = \big[(\varphi,\alpha)\big]~,~~
\eta_{(\bfM,\bfJ)}\big(1\otimes [(\psi,\beta)]\big) = \big[(\psi,\beta)\big]~,
\end{flalign}
where on the right hand sides we have identified $\varphi\in C^\infty_0(M,\bbR^q)$ and $\psi\in C^\infty_0(M,\bbR^{p-q})$
as elements in $C^\infty_0(M,\bbR^p)$ ($\varphi$ is placed in the first $q$ and $\psi$ in the last $p-q$ components of $\bbR^p$).
\end{theo}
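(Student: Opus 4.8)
The plan is to mirror the proof of Theorem~\ref{theo:classicalcompoA}, replacing Poisson algebras by unital $\ast$-algebras and the Poisson bracket by the $\star$-commutator. First I would observe that the coordinate inclusions $\bbR^q\hookrightarrow\bbR^p$ (onto the first $q$ slots) and $\bbR^{p-q}\hookrightarrow\bbR^p$ (onto the last $p-q$ slots) induce $\PreSymp$-morphisms $\PhSp_q(\bfM,\bfJ^q)\to\PhSp_p(\bfM,\bfJ)$ and $\PhSp_{p-q}(\bfM,\bfJ^{p-q})\to\PhSp_p(\bfM,\bfJ)$: these are well defined because $\mathrm{KG}_{\bfM}$ and the pairing with $\bfJ$ behave correctly under the splitting, injective because $\mathrm{KG}_{\bfM}$ is injective on compactly supported sections, and symplectic because $\mathrm{E}_{\bfM}$ acts componentwise. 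Applying $\CCR$ gives two $\astAlg$-morphisms into $\CQA_p(\bfM,\bfJ)$ with commuting images (the two submultiplets are $\sigma_{(\bfM,\bfJ)}$-orthogonal, again because $\mathrm{E}_{\bfM}$ is componentwise), so the universal property of the tensor product of unital $\ast$-algebras yields a unital $\ast$-algebra homomorphism $\CQA_q(\bfM,\bfJ^q)\otimes\CQA_{p-q}(\bfM,\bfJ^{p-q})\to\CQA_p(\bfM,\bfJ)$ realizing the stated assignment. At this canonical level the homomorphism is surjective but not injective, in line with Proposition~\ref{propo:compprop}.

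Second, I would descend to the quotients. Writing $\QA_{p,q}(\bfM,\bfJ)=\QA_q(\bfM,\bfJ^q)\otimes\QA_{p-q}(\bfM,\bfJ^{p-q})$ as $\CQA_q(\bfM,\bfJ^q)\otimes\CQA_{p-q}(\bfM,\bfJ^{p-q})$ modulo the two-sided $\ast$-ideal generated by the elements $([(0,\alpha)]-\alpha)\otimes1$ and $1\otimes([(0,\alpha)]-\alpha)$ ($\alpha\in\bbR$), and noting that the homomorphism just constructed sends each such generator to $[(0,\alpha)]-\alpha\in\widetilde{\mathfrak{J}}_p(\bfM,\bfJ)=\mathfrak{J}_p(\bfM,\bfJ)$ (Lemma~\ref{lem:quantidealproperties}), I conclude that it descends to a unital $\ast$-algebra homomorphism $\eta_{(\bfM,\bfJ)}:\QA_{p,q}(\bfM,\bfJ)\to\QA_p(\bfM,\bfJ)$ given by the formulae in the statement. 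Naturality is then a routine check: for a morphism $f:(\bfM_1,\bfJ_1)\to(\bfM_2,\bfJ_2)$, both $\eta_{(\bfM_2,\bfJ_2)}\circ\QA_{p,q}(f)$ and $\QA_p(f)\circ\eta_{(\bfM_1,\bfJ_1)}$ send $[(\varphi,\alpha)]\otimes1$ to the class of $(f_\ast\varphi,\alpha)$ with $f_\ast\varphi$ placed in the first $q$ components, and similarly on $1\otimes[(\psi,\beta)]$, since push-forward commutes with the coordinate embeddings.

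The one genuinely delicate step, and where I expect the main obstacle to lie, is exhibiting the inverse. For $[(\varphi,\alpha)]\in\QA_p(\bfM,\bfJ)$ I would split $\varphi=\varphi^q+\varphi^{p-q}$ into first-$q$ and last-$(p-q)$ components and set $\eta_{(\bfM,\bfJ)}^{-1}\big([(\varphi,\alpha)]\big)=\big[\,[(\varphi^q,\alpha)]\otimes1+1\otimes[(\varphi^{p-q},0)]\,\big]$. Proving that this does not depend on the representative $(\varphi,\alpha)$ is the crux, and it is precisely here that the quotient is indispensable: evaluating on $\mathrm{P}^\ast_{(\bfM,\bfJ)}(h)$ and using $\mathrm{KG}_{\bfM}(h)=\mathrm{KG}_{\bfM}(h^q)+\mathrm{KG}_{\bfM}(h^{p-q})$, the two tensor factors reduce (modulo $\widetilde{\mathfrak{J}}_q(\bfM,\bfJ^q)$ and $\widetilde{\mathfrak{J}}_{p-q}(\bfM,\bfJ^{p-q})$, i.e.\ in $\QA_q$ and $\QA_{p-q}$) to the central scalars $\int_M\ip{\bfJ^{p-q}}{h^{p-q}}\,\vol_{\bfM}$ and $-\int_M\ip{\bfJ^{p-q}}{h^{p-q}}\,\vol_{\bfM}$ respectively, which cancel; at the canonical level these scalars would not cancel, which is exactly the obstruction of Proposition~\ref{propo:compprop}. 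After this I would verify, once more using the componentwise action of $\mathrm{E}_{\bfM}$, that $\eta_{(\bfM,\bfJ)}^{-1}$ extends to a unital $\ast$-algebra homomorphism and that it inverts $\eta_{(\bfM,\bfJ)}$ on generators, hence everywhere. Thus each $\eta_{(\bfM,\bfJ)}$ is a $\astAlg$-isomorphism, injectivity being automatic once the inverse is in hand, and $\eta$ is the asserted natural isomorphism.
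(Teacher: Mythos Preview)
Your proposal is correct and follows essentially the same route as the paper, which simply records the result ``in complete analogy with the proof of Theorem~\ref{theo:classicalcompoA}'' without further detail. Your construction of the forward map via the universal property of the tensor product (after checking that the two $\CCR$-images commute) is a mildly more conceptual packaging of the same argument, and your explicit verification that $\eta_{(\bfM,\bfJ)}^{-1}$ is well defined---tracking how the scalar $\int_M\ip{\bfJ^{p-q}}{h^{p-q}}\,\vol_{\bfM}$ appears with opposite signs in the two tensor factors and cancels only after passing to the quotients---spells out precisely what the paper leaves implicit.
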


\subsection{\label{subsec:dynloc}Dynamical locality}
To conclude, we shall study whether or not our improved functor
$\QA_p:\LocSrc_p\to \astAlg$ satisfies the dynamical locality property,
which was introduced in \cite{Fewster:2011pe} as part
of an investigation into the question of what it means for a 
theory to describe the same physics in all spacetimes (SPASs).
The dynamical locality property has been proven previously for the homogeneous Klein--Gordon theory 
with non-vanishing mass $m\neq 0$ in \cite{Fewster:2011pn} and for
extended algebras of Wick polynomials in \cite{Ferguson:2014}.
\sk

We start by formulating the content of the dynamical locality property, 
essentially following \cite{Fewster:2011pe,Fewster:2011pn}, suitably
adapted to theories on the category $\LocSrc_p$. 
Let $(\bfM,\bfJ)$ be any object in $\LocSrc_p$. As above, we shall denote by $\mathscr{O}(\bfM)$
the set of all causally compatible, open and globally hyperbolic subsets of $\bfM$
with finitely many connected components all of which are mutually causally disjoint.
To each non-empty $O\in \mathscr{O}(\bfM)$, there is an object $(\bfM,\bfJ)\vert_O$
in $\LocSrc_p$ obtained by restricting all the geometric data (including the source $\bfJ$)
to the subset $O$ of $\bfM$. Moreover, there is a canonical inclusion $\iota_{(\bfM,\bfJ);O}: (\bfM,\bfJ)\vert_O\to (\bfM,\bfJ)$
which is a morphism in $\LocSrc_p$. Adapting an idea from \cite{Brunetti:2001dx}, we may construct
from $\QA_p$ a net of unital $\ast$-algebras as follows:
Given any non-empty $O\in\mathscr{O}(\bfM)$, we denote by $\QA_p^\mathrm{kin}((\bfM,\bfJ);O)$
the image of $\QA_p((\bfM,\bfJ)\vert_O)$ in $\QA_p(\bfM,\bfJ)$ under the $\astAlg$-morphism
$\QA_p(\iota_{(\bfM,\bfJ);O})$.
The assignment
\begin{flalign}
O \mapsto \QA_p^\mathrm{kin}((\bfM,\bfJ);O)\subseteq \QA_p(\bfM,\bfJ)
\end{flalign}
is called the {\it kinematic net}, and is one way of describing the local physics of the theory $\QA_p$ in
a region $O$ in the spacetime $\bfM$  underlying the object $(\bfM,\bfJ)$.
It is easily seen that $\QA_p^\mathrm{kin}((\bfM,\bfJ);O)$ is generated by the unit together with
all $[(\varphi,0)]\in\QA_p(\bfM,\bfJ)$ such that $\supp(\varphi)\subseteq O$.
\sk

Another description of the local physics of the theory $\QA_p$ in a region $O$ in the spacetime
$\bfM$  underlying the object $(\bfM,\bfJ)$ can be obtained by using the relative Cauchy evolution
 and was introduced in \cite{Fewster:2011pe}. For $K\subseteq M$ compact, let  us denote by
 $H((\bfM,\bfJ);K^\perp)$ the set of all globally hyperbolic perturbations 
 $(\h,\j)$ of $(\bfM,\bfJ)$,
 such that $\supp(\h)\cup\supp(\j)\subseteq K^\perp$, with $K^\perp := M\setminus J_{\bfM}(K)$ the causal complement
 of $K$. We define $\QA_p^\bullet((\bfM,\bfJ);K)$ to be the subalgebra of $\QA_p(\bfM,\bfJ)$
consisting of fixed points under arbitrary relative Cauchy evolutions $\rce_{(\bfM,\bfJ)}^{(\QA_p)}[\h,\j]$ 
with $(\h,\j)\in H((\bfM,\bfJ);K^\perp)$. 
The idea behind this definition is that the elements in $\QA_p^\bullet((\bfM,\bfJ);K)$
can be regarded as localized in $K$ because they are insensitive to perturbations $(\h,\j)$ of the background localized in the causal complement
$K^\perp$. By taking the subalgebra of $\QA_p(\bfM,\bfJ)$ that is generated by the $\QA_p^\bullet((\bfM,\bfJ);K)$
as $K$ ranges over suitable compact subsets of $O\in\mathscr{O}(\bfM)$ we obtain the {\em dynamical
algebras} $\QA_p^\mathrm{dyn}((\bfM,\bfJ);O)$, which can be compared with 
the kinematic ones $\QA_p^\mathrm{kin}((\bfM,\bfJ);O)$. 
More precisely, let us denote by $\mathscr{K}_b(\bfM;O)$ the set of finite unions of 
causally disjoint subsets of $O\in\mathscr{O}(\bfM)$, each of which is
the closure of a Cauchy ball $B$ with a relatively compact Cauchy development $D_{\bfM}(B)$.
Here,  a Cauchy ball $B$ is a subset of a Cauchy surface, for which there is a chart containing 
the closure of $B$, and in which $B$ is a non-empty open ball. 
With these definitions in place, we set for any non-empty $O\in\mathscr{O}(\bfM)$
\begin{flalign}
\QA_p^\mathrm{dyn}((\bfM,\bfJ);O) := \bigvee\limits_{K\in\mathscr{K}_b(\bfM;O)} \QA_p^\bullet((\bfM,\bfJ);K)\subseteq \QA_p(\bfM,\bfJ)~.
\end{flalign}
(The same algebra $\QA_p^\mathrm{dyn}((\bfM,\bfJ);O)$ is
obtained if $K$ runs over all compact subsets of $O$ possessing a
neighborhood of the form $\cup_{i=1}^k D_{\bfM}(B_i)$ where
each $B_i$ is a Cauchy ball contained in $O$ -- see 
\cite[Lemma 5.3.]{Fewster:2011pe}.)
\begin{defi}
The functor $\QA_p:\LocSrc_p\to \astAlg$ satisfies the {\it dynamical locality property} if, for all
objects $(\bfM,\bfJ)$ in $\LocSrc_p$ and all non-empty $O\in\mathscr{O}(\bfM)$, we
have 
\begin{flalign}\label{eqn:dynlocdef}
\QA_p^\mathrm{kin}((\bfM,\bfJ);O) = \QA_p^\mathrm{dyn}((\bfM,\bfJ);O)~.
\end{flalign}
\end{defi}
\begin{rem}
In its original formulation \cite{Fewster:2011pe}, dynamical locality was
defined using relative Cauchy evolution induced by metric perturbations,
because only theories defined on $\Loc$ were considered. In generalizing
to theories on $\LocSrc_p$, one has a choice as to whether to consider
perturbations in both the metric and the external source, or just the metric,
or potentially something intermediate. We have adopted the first of
these possibilities as being the most natural -- it would indeed appear
strange to regard as local an observable that was sensitive to perturbations
in the external source located in the causal complement of the localization
region. However, our consideration of massless inhomogeneous theories
will suggest a more nuanced view, which will be discussed below.
\end{rem}

Using the relative Cauchy evolution of the functor $\QA_p$ derived
in Subsection \ref{subsec:quantrce}, we can characterize
 the fixed point subalgebras $\QA_p^\bullet((\bfM,\bfJ);K)$ of 
 $\rce_{(\bfM,\bfJ)}^{(\QA_p)}[\h,\j]$ with $(\h,\j)\in H((\bfM,\bfJ);K^\perp)$.
 \begin{lem}\label{lem:fixedpointsubalgebras}
 Let $(\bfM,\bfJ)$ be any object in $\LocSrc_p$ and let $K$ be any compact subset of $\bfM$.
 Then $\QA_p^\bullet((\bfM,\bfJ);K)$ is the subalgebra of $\QA_p(\bfM,\bfJ)$ generated by the unit
 together with all $[(\varphi,0)]\in \QA_p(\bfM,\bfJ)$ such that  $\supp \big(\mathrm{E}_{\bfM}(\varphi)\big)\subseteq J_{\bfM}(K)$.
 \end{lem}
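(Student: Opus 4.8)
The plan is to prove the two inclusions between $\QA_p^\bullet((\bfM,\bfJ);K)$ and the subalgebra $\mathcal B$ of $\QA_p(\bfM,\bfJ)$ generated by the unit and all $[(\varphi,0)]$ with $\supp(\mathrm E_{\bfM}(\varphi))\subseteq J_{\bfM}(K)$, the essential inputs being the explicit relative Cauchy evolution~\eqref{eqn:rceQA} (and its special case~\eqref{eqn:easyrceQAj}), the noncanonical identification $\kappa\colon \QA_p(\bfM,\bfJ)\to\CQA_p^{\mathrm{lin}}(\bfM)=\CCR(\PhSp_p^{\mathrm{lin}}(\bfM))$ from Lemma~\ref{lem:quantidealproperties}(a), and the support properties of the Green's operators exploited for the homogeneous theory in \cite{Fewster:2011pn}. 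Recall that $\QA_p^\bullet((\bfM,\bfJ);K)=\bigcap_{(\h,\j)}\mathrm{Fix}\big(\rce^{(\QA_p)}_{(\bfM,\bfJ)}[\h,\j]\big)$ over $(\h,\j)\in H((\bfM,\bfJ);K^\perp)$, that $K^\perp=M\setminus J_{\bfM}(K)$ is open, and that $J_{\bfM}(K)\cap K^\perp=\emptyset$.

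For $\mathcal B\subseteq\QA_p^\bullet((\bfM,\bfJ);K)$ it suffices, since every $\rce^{(\QA_p)}_{(\bfM,\bfJ)}[\h,\j]$ is a unital $\ast$-automorphism, to fix a generator $[(\varphi,0)]$ with $\supp(\mathrm E_{\bfM}(\varphi))\subseteq J_{\bfM}(K)$ and show it is invariant under $\rce^{(\QA_p)}_{(\bfM,\bfJ)}[\h,\j]$ for arbitrary $(\h,\j)\in H((\bfM,\bfJ);K^\perp)$. First I would replace $\varphi$ by a representative of its class whose support lies in $J_{\bfM}(K)$ (a standard thin-slab argument: the associated spacelike compact solution has Cauchy data supported in the compact set $J_{\bfM}(K)\cap\Sigma$, and one builds a source localized in an arbitrarily thin neighbourhood of it inside $J_{\bfM}(K)$). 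Then both $\supp\varphi$ and $\supp(\mathrm E_{\bfM}(\varphi))$ lie in $J_{\bfM}(K)$, hence are disjoint from $\supp(\h)\cup\supp(\j)\subseteq K^\perp$. Because $\square_{\bfM[\h]}$ and $\square_{\bfM}$ agree off $\supp(\h)$ and $\mathrm E_{\bfM}(\varphi)$ vanishes on the open set $K^\perp\supseteq\supp(\h)$ (and remains spacelike compact for $\bfM[\h]$), it follows that $\mathrm E_{\bfM[\h]}(\varphi)=\mathrm E_{\bfM}(\varphi)$ and $(\mathrm{KG}_{\bfM}-\mathrm{KG}_{\bfM[\h]})\big(\mathrm E_{\bfM[\h]}(\varphi)\big)=0$; moreover the $\bfJ$-dependent correction term in~\eqref{eqn:rceQA} vanishes because its integrand is supported in $(\supp\h\cup\supp\j)\cap\supp(\mathrm E_{\bfM[\h]}(\varphi))\subseteq K^\perp\cap J_{\bfM}(K)=\emptyset$. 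Hence $\rce^{(\QA_p)}_{(\bfM,\bfJ)}[\h,\j]\big([(\varphi,0)]\big)=[(\varphi,0)]$, so $\mathcal B$ consists of fixed points. (For $\h=0$ this reduces to the one-line identity~\eqref{eqn:easyrceQAj}.)

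For $\QA_p^\bullet((\bfM,\bfJ);K)\subseteq\mathcal B$ I would use only the source perturbations. If $A\in\QA_p^\bullet((\bfM,\bfJ);K)$ then in particular $\rce^{(\QA_p)}_{(\bfM,\bfJ)}[0,\j](A)=A$ for all $\j$ supported in $K^\perp$. Fixing any admissible state $\omega_0$ and transporting through $\kappa$ (with $\kappa([(\varphi,\alpha)])=\omega_0([(\varphi,\alpha)])+[\varphi]^{\mathrm{lin}}$), equation~\eqref{eqn:easyrceQAj} shows that $\kappa\circ\rce^{(\QA_p)}_{(\bfM,\bfJ)}[0,\j]\circ\kappa^{-1}$ is the $\ast$-automorphism $\gamma_\j$ of $\CCR(\PhSp_p^{\mathrm{lin}}(\bfM))$ determined on generators by $\gamma_\j([\varphi]^{\mathrm{lin}})=[\varphi]^{\mathrm{lin}}-\int_M\ip{\j}{\mathrm E_{\bfM}(\varphi)}\,\vol_{\bfM}$, i.e.\ the ``Weyl shift'' by the functional $\ell_\j([\varphi]^{\mathrm{lin}})=-\int_M\ip{\j}{\mathrm E_{\bfM}(\varphi)}\,\vol_{\bfM}$. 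The maps $\gamma_\j$ are unipotent and degree-non-increasing with respect to the symmetric-tensor grading of the underlying vector space, so a standard induction on degree (the top-degree component of a fixed element is automatically preserved; preservation of the next degree forces the degree-one vectors of the top component to annihilate every $\ell_\j$; subtract and descend) identifies the joint fixed-point subalgebra of $\{\gamma_\j:\supp\j\subseteq K^\perp\}$ with the subalgebra generated by the unit and $W:=\{[\varphi]^{\mathrm{lin}}:\ell_\j([\varphi]^{\mathrm{lin}})=0\ \forall\,\supp\j\subseteq K^\perp\}=\{[\varphi]^{\mathrm{lin}}:\supp(\mathrm E_{\bfM}(\varphi))\subseteq J_{\bfM}(K)\}$, using that the $\ell_\j$ jointly separate $\PhSp_p^{\mathrm{lin}}(\bfM)/W$. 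Since $\kappa$ carries $\mathcal B$ onto precisely this subalgebra, we obtain $\kappa(A)\in\kappa(\mathcal B)$, hence $A\in\mathcal B$.

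The hard step is the geometry behind the first inclusion: making rigorous that the metric-perturbation part of~\eqref{eqn:rceQA} is trivial requires the thin-slab choice of representative and careful bookkeeping of $\mathrm E^\pm_{\bfM[\h]}$ against $\mathrm E^\pm_{\bfM}$, exactly as in the homogeneous analysis of \cite{Fewster:2011pn}; the inhomogeneous modification contributes nothing new there, since its extra terms are central and vanish for support reasons. A minor point to check is that the polynomial-invariant argument is unaffected by $\PhSp_p^{\mathrm{lin}}(\bfM)$ being infinite-dimensional, which is immediate because any algebra element involves only finitely many generators, so one may restrict to their finite-dimensional span. As a by-product, the two inclusions show that $\QA_p^\bullet((\bfM,\bfJ);K)$ already coincides with the common fixed-point algebra of the source perturbations alone.
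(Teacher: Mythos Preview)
Your proof is correct and follows the same overall strategy as the paper: both inclusions are established using the explicit relative Cauchy evolution formula, with only the source perturbations $\j$ needed for the reverse inclusion. For the inclusion $\mathcal{B}\subseteq\QA_p^\bullet$, your argument via a thin-slab representative in $J_{\bfM}(K)$ is essentially the paper's (terse) claim that $\mathrm{E}_{\bfM[\h]}(\varphi)=\mathrm{E}_{\bfM}(\varphi)$, made explicit.

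For the reverse inclusion the packaging differs. The paper works directly in $\QA_p(\bfM,\bfJ)$: from invariance under $\rce[0,\j]$ for all $\j$ supported in $K^\perp$ (and implicitly scaling $\j\mapsto s\j$) it extracts the commutator condition $[[(\j,0)],a]=0$, which equals $i\{[(\j,0)],a\}_{\sigma_{(\bfM,\bfJ)}}$; the Poisson bracket being a degree-$(-1)$ derivation then forces each homogeneous component of $a$ into $S^k(W)$, via the support-subspace machinery of \cite[Lemma~5.2 and Appendix~A]{Fewster:2011pn}. You instead transport through $\kappa$ to $\CQA_p^{\mathrm{lin}}(\bfM)$, identify $\rce[0,\j]$ with the Weyl shift $\gamma_\j$, and run a self-contained degree induction on the full (unipotent) automorphism rather than its infinitesimal generator. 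The two routes are equivalent: your induction step ``the degree-$(N-1)$ part of $\gamma_\j(a_N)-a_N$ must vanish'' is exactly the contraction condition $c_{\ell_\j}(a_N)=0$, i.e.\ $\{[(\j,0)],a_N\}=0$. Your version has the advantage of not invoking the external support-subspace results, at the cost of introducing the noncanonical isomorphism $\kappa$ (which is not really needed, since the symmetric-tensor grading is already available on $\QA_p(\bfM,\bfJ)$).
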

 \begin{proof}
The stated subalgebra of $\QA_p(\bfM,\bfJ)$ 
  is a subalgebra of $\QA_p^\bullet((\bfM,\bfJ);K)$ for the following reason: If  
   $[(\varphi,0)]$ obeys $\supp \big(\mathrm{E}_{\bfM}(\varphi)\big)\subseteq J_{\bfM}(K)$ then,
   for any $(\h,\j)\in H((\bfM,\bfJ);K^\perp)$, we have $\mathrm{E}_{\bfM[\h]}(\varphi) = \mathrm{E}_{\bfM}(\varphi)$ and hence
   $\rce_{(\bfM,\bfJ)}^{(\QA_p)}[\h,\j]\big([(\varphi,0)]\big) = [(\varphi,0)]$ by
    (\ref{eqn:rceQA}). Thus  $\QA_p^\bullet((\bfM,\bfJ);K)$ contains
    the subalgebra generated by (finite sums of finite products of) 
such elements and the unit.
\sk

To show the reverse inclusion, let us take any element 
$a\in \QA_p^\bullet((\bfM,\bfJ);K)$. In particular,
using (\ref{eqn:easyrceQAj}), we find the condition that, for all $\j\in C_0^\infty(K^\perp,\bbR^p)$,
$\big[[(\j,0)]\stackrel{\star}{,}a\big]=0$. Evaluating the $\star$-product (\ref{eqn:starproduct})
in the commutator, this condition reduces to the vanishing
Poisson bracket condition $\big\{[(\j,0)],a\big\}_{\sigma_{(\bfM,\bfJ)}}=0$, for all $\j\in C_0^\infty(K^\perp,\bbR^p)$.
We can now express $a$ as a finite sum of finite {\it symmetric} products of the unit and the 
elements $[(\varphi,0)]$ with $\varphi\in C^\infty_0(M,\bbR^p)$. 
Notice that if $a$ is one of the generators $[(\varphi,0)]$, with $\varphi\in C^\infty_0(M,\bbR^p)$,
then the vanishing Poisson bracket condition implies that $\supp\big(\mathrm{E}_{\bfM}(\varphi)\big)\subseteq J_{\bfM}(K)$.
For generic $a\in \QA_p^\bullet((\bfM,\bfJ);K)$ we follow the strategy in \cite[Lemma 5.2.\ and Appendix A]{Fewster:2011pn}
and associate to $a$ its {\it support subspace} $Y_a$, which is a finite dimensional vector subspace of the complex vector space
spanned by the $[(\varphi,0)]$, $\varphi\in C^\infty_0(M,\bbR^p)$,
such that the element $a$ lies in the subalgebra generated by $Y_a$ together with the unit. If $a\in \QA_p^\bullet((\bfM,\bfJ);K)$ then
$Y_a$ is invariant under the relative Cauchy evolution corresponding
to perturbations supported in $K^\perp$; considering relative
Cauchy evolutions of the form (\ref{eqn:easyrceQAj}), we see that
all $[(\varphi,0)]$ in the support subspace must satisfy $\supp\big(\mathrm{E}_{\bfM}(\varphi) \big)
\subseteq J_{\bfM}(K)$.  Hence, $a$ is generated only by (finite sums of finite {\it symmetric} products of) 
the unit and those generators $[(\varphi,0)]$ satisfying $\supp\big(\mathrm{E}_{\bfM}(\varphi)\big) \subseteq J_{\bfM}(K)$. 
As one can invert the formula (\ref{eqn:starproduct}) for the $\star$-product (leading to an expression for 
the symmetric product in terms of $\star$-products)
this implies that $a$ is also generated by finite sums of finite $\star$-products of the unit 
and the  elements $[(\varphi,0)]$ with $\supp\big(\mathrm{E}_{\bfM}(\varphi) \big)\subseteq J_{\bfM}(K)$.
 \end{proof}
 
With this preparation we can prove the main statement of this subsection.
\begin{theo}\label{theo:dynloc}
The functor $\QA_p: \LocSrc_p\to \astAlg$ satisfies the dynamical locality property.
\end{theo}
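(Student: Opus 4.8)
The plan is to prove the two inclusions $\QA_p^\mathrm{kin}((\bfM,\bfJ);O) \subseteq \QA_p^\mathrm{dyn}((\bfM,\bfJ);O)$ and $\QA_p^\mathrm{dyn}((\bfM,\bfJ);O) \subseteq \QA_p^\mathrm{kin}((\bfM,\bfJ);O)$ separately, following the strategy used for the homogeneous massive Klein--Gordon theory in \cite{Fewster:2011pn}, but tracking how the presence of the source $\bfJ$ and the quotient by $\mathfrak{J}_p(\bfM,\bfJ)$ affect the argument. The key structural fact, thanks to Lemma~\ref{lem:quantidealproperties}~a) and Lemma~\ref{lem:fixedpointsubalgebras}, is that after passing to the quotient $\QA_p(\bfM,\bfJ)$ the generators $[(\varphi,0)]$ behave essentially like the smeared homogeneous Klein--Gordon field: $[(\varphi,0)]$ depends only on $[\varphi]^{\mathrm{lin}}=[\varphi] \in C^\infty_0(M,\bbR^p)/\mathrm{KG}_{\bfM}[C^\infty_0(M,\bbR^p)]$, and the relative Cauchy evolution acts on it exactly as in the homogeneous theory (the $\j$- and $\rho_{\h}$-dependent terms in \eqref{eqn:rceQA} only contribute central scalars, which are killed in $\QA_p$ by the relation \eqref{eqn:quantumidealalgebraic}). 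This reduces the problem to a statement about the homogeneous theory.

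For the inclusion $\QA_p^\mathrm{dyn} \subseteq \QA_p^\mathrm{kin}$, I would argue as follows. By Lemma~\ref{lem:fixedpointsubalgebras}, for $K\in\mathscr{K}_b(\bfM;O)$ the algebra $\QA_p^\bullet((\bfM,\bfJ);K)$ is generated by the unit and the $[(\varphi,0)]$ with $\supp(\mathrm{E}_{\bfM}(\varphi))\subseteq J_{\bfM}(K)$. Given such a $\varphi$, one uses a standard argument (cutting $\mathrm{E}_{\bfM}(\varphi)$ with a suitable function interpolating between two Cauchy surfaces straddling $K$, as in the time-slice discussion of Section~\ref{sec:rce}) to replace $\varphi$ by an equivalent $\varphi'$ supported in a neighbourhood of $K$ in $O$ of the form $\cup_i D_{\bfM}(B_i)$, hence in $O$; so $[(\varphi,0)]=[(\varphi',0)]\in\QA_p^\mathrm{kin}((\bfM,\bfJ);O)$. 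Since this holds for all generators of all $\QA_p^\bullet((\bfM,\bfJ);K)$ with $K\in\mathscr{K}_b(\bfM;O)$, it holds for the generated algebra $\QA_p^\mathrm{dyn}((\bfM,\bfJ);O)$.

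For the reverse inclusion $\QA_p^\mathrm{kin} \subseteq \QA_p^\mathrm{dyn}$, recall $\QA_p^\mathrm{kin}((\bfM,\bfJ);O)$ is generated by the unit and the $[(\varphi,0)]$ with $\supp(\varphi)\subseteq O$. The task is to write each such generator in terms of elements fixed by relative Cauchy evolutions localized in causal complements of sets in $\mathscr{K}_b(\bfM;O)$. Here I would transcribe the covering and partition-of-unity argument of \cite[\S 5]{Fewster:2011pn}: cover $\supp(\varphi)$ by finitely many Cauchy balls $B_i$ whose closures lie in $O$ and have relatively compact Cauchy developments, take a subordinate partition of unity to split $\varphi=\sum_i\varphi_i$ with $\supp(\varphi_i)\subseteq D_{\bfM}(B_i)$, and note $[(\varphi_i,0)]\in \QA_p^\bullet((\bfM,\bfJ);\overline{B_i}) \subseteq \QA_p^\mathrm{dyn}((\bfM,\bfJ);O)$ because perturbations supported in $\overline{B_i}^\perp$ leave $\mathrm{E}_{\bfM}(\varphi_i)$ unchanged on $J_{\bfM}(\overline{B_i})$ and therefore fix $[(\varphi_i,0)]$ by \eqref{eqn:rceQA} and Lemma~\ref{lem:fixedpointsubalgebras}. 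Summing gives $[(\varphi,0)]\in\QA_p^\mathrm{dyn}((\bfM,\bfJ);O)$.

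The main obstacle I expect is verifying that the $\bfJ$-dependence genuinely drops out at every point where one invokes the homogeneous argument --- in particular, that the relative Cauchy evolution formula \eqref{eqn:rceQA}, once reduced modulo $\mathfrak{J}_p(\bfM,\bfJ)$, is literally the homogeneous one on generators, and that the restriction functor $(\bfM,\bfJ)\mapsto(\bfM,\bfJ)|_O$ and its induced $\astAlg$-morphism $\QA_p(\iota_{(\bfM,\bfJ);O})$ interact with the quotient correctly (which follows from the stability of $\mathfrak{J}_p$ under morphisms established in Proposition~\ref{propo:quantfunctorgood}). Granting these bookkeeping points, the geometric content is identical to that of \cite{Fewster:2011pn}, so I would present the proof compactly, emphasizing the reduction to the homogeneous case via Lemmas~\ref{lem:quantidealproperties} and~\ref{lem:fixedpointsubalgebras} and citing \cite[\S 5]{Fewster:2011pn} for the covering, partition-of-unity, and Cauchy-ball constructions rather than reproducing them.
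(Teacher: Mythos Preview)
Your proposal is correct and follows essentially the same route as the paper's proof: both inclusions are handled via Lemma~\ref{lem:fixedpointsubalgebras}, the covering/partition-of-unity argument from \cite{Fewster:2011pn} for $\QA_p^\mathrm{kin}\subseteq\QA_p^\mathrm{dyn}$, and the time-slice type replacement $\varphi\mapsto\varphi'$ with $\supp(\varphi')\subseteq O$ for $\QA_p^\mathrm{dyn}\subseteq\QA_p^\mathrm{kin}$.

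One small correction: your claim that $[(\varphi,0)]$ depends only on $[\varphi]^{\mathrm{lin}}$, and hence that $[(\varphi,0)]=[(\varphi',0)]$, is not literally true. If $\varphi'=\varphi+\mathrm{KG}_{\bfM}(h)$ then in $\PhSp_p(\bfM,\bfJ)$ one has $[(\varphi,0)]=[(\varphi',\int_M\ip{\bfJ}{h}\,\vol_{\bfM})]$, so in $\QA_p(\bfM,\bfJ)$ the quotient relation gives $[(\varphi,0)]=[(\varphi',0)]+\int_M\ip{\bfJ}{h}\,\vol_{\bfM}$. The extra term is a scalar multiple of the unit, which lies in $\QA_p^\mathrm{kin}((\bfM,\bfJ);O)$, so the conclusion still holds --- but you should make this scalar shift explicit rather than asserting equality of the generators. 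The paper's proof does exactly this. Your final paragraph anticipates that the $\bfJ$-dependence needs checking, and this is precisely where it surfaces.
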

\begin{proof}
We must show that \eqref{eqn:dynlocdef} holds for all objects $(\bfM,\bfJ)$ in $\LocSrc_p$ and 
all non-empty $O\in\mathscr{O}(\bfM)$.
Notice that the unit is contained in both $\QA_p^\mathrm{kin}((\bfM,\bfJ);O)$ and $\QA_p^\mathrm{dyn}((\bfM,\bfJ);O)$.
To show the inclusion ``$\subseteq$'', note that any $a\in \QA_p^\mathrm{kin}((\bfM,\bfJ);O)$
is generated by finite sums of finite products of the unit and the elements $[(\varphi,0)]$
 with $\supp(\varphi)\subseteq O$, all of which may be 
shown to lie in $\QA_p^\mathrm{dyn}((\bfM,\bfJ);O)$ 
 by an argument similar to \cite[Lemma 3.3.]{Fewster:2011pn}: We can decompose $\varphi \in C^\infty_0(O,\bbR^p)$
 into a finite sum $\varphi = \sum_{i=1}^n\varphi_i$, such that $\supp(\varphi_i)\subset D_{\bfM}(B_i)$ for Cauchy ball $B_i$ with $\text{cl}\, B_i\in \mathscr{K}_b(\bfM;O)$.
 (Take for example an open cover of $\supp(\varphi)$ by diamonds, pass to a finite subcover 
 and then use a partition of unity.) For each $\varphi_i$ we have  $\supp\big(\mathrm{E}_{\bfM}(\varphi_i)\big)
 \subseteq J_{\bfM}(\supp(\varphi_i))\subseteq J_{\bfM}(B_i)$, which shows that 
$[(\varphi_i,0)] \in \QA_p^\bullet((\bfM,\bfJ);\text{cl}\,B_i)$ and hence
 $[(\varphi,0)] = \sum_{i=1}^n [(\varphi_i,0)] \in \QA_p^\mathrm{dyn}((\bfM,\bfJ);O)$.
 \sk
 
 To show the other inclusion ``$\supseteq$'', it is by Lemma \ref{lem:fixedpointsubalgebras}
 sufficient to prove that, for any $K\in\mathscr{K}_b(\bfM;O)$, all elements
 $[(\varphi,0)]\in\QA_p(\bfM,\bfJ)$ with $\supp\big(\mathrm{E}_{\bfM}(\varphi)\big)\subseteq J_{\bfM}(K)$ 
 are contained in $\QA_p^\mathrm{kin}((\bfM,\bfJ);O)$.
This is a simple consequence of the following argument:
Since $\mathrm{E}_{\bfM}(\varphi)$ has support in $J_{\bfM}(K)$ and $K\subseteq O$ is a compact subset,
there exists a $\varphi^\prime\in C^\infty_0(O,\bbR^p)$, such that $\mathrm{E}_{\bfM}(\varphi^\prime)
=\mathrm{E}_{\bfM}(\varphi)$, see e.g.\ \cite[Lemma 3.1.\ (i)]{Fewster:2011pn}. As the Klein--Gordon operator is normally hyperbolic,
we have $\varphi^\prime = \varphi + \mathrm{KG}_{\bfM}(h)$, for some $h\in C^\infty_0(M,\bbR^p)$.
Thus, $[(\varphi,0)] = [(\varphi + \mathrm{KG}_{\bfM}(h),\int_M\ip{\bfJ}{h}\,\vol_{\bfM})] 
= [(\varphi^\prime,0)]+ \int_M\ip{\bfJ}{h}\,\vol_{\bfM}$ lies in $\QA_p^\mathrm{kin}((\bfM,\bfJ);O)$.
\end{proof}
\begin{rem}\label{rem:dynloc}
The proofs of Lemma \ref{lem:fixedpointsubalgebras} and 
Theorem \ref{theo:dynloc} do not distinguish 
between the massless and the massive case. In contrast, this distinction
was essential 
for the {\it homogeneous} Klein--Gordon theory studied in \cite{Fewster:2011pn}; indeed only 
the massive homogeneous Klein--Gordon field
satisfies the dynamical locality property. At first sight this looks like a discrepancy, because
the homogeneous Klein--Gordon theory seems to be contained as a special case of our
inhomogeneous model by setting all source terms to zero. 
However, the inhomogeneous theory is formulated as a functor $\QA_p$ 
from the category $\LocSrc_p$ to $\astAlg$ and the relative Cauchy evolution $\rce_{(\bfM,\bfJ)}^{(\QA_p)}[\h,\j]$ depends
on both a metric perturbation $\h$
and a source term perturbation $\j$.  Even restricting to (the full subcategory of) objects with zero source term $\bfJ=0$, 
we still can study the response (via the relative Cauchy evolution) 
of the restricted theory to source term perturbations $\j$, as well as the response to metric perturbations $\h$, thus obtaining stronger restrictions on the fixed point subalgebras $\QA_p^\bullet((\bfM,\bfJ);K)$
than those arising from metric perturbations $\h$ alone as in \cite{Fewster:2011pn}.  To conclude, we point out that if we {\it forbade} nontrivial source term perturbations $\j$ in our proofs above, i.e.\ making
only use of the relative Cauchy evolutions $\rce_{(\bfM,\bfJ)}^{(\QA_p)}[\h,0]$ depending on $\h$, 
we would obtain as in \cite{Fewster:2011pn} (and by similar arguments) that the massive theory satisfies (this restricted form of) the dynamical
locality property and that the massless theory does not. 
We finally remark that if we were to restrict to coexact source perturbations $\j=\delta\alpha$
 (for compactly supported one-forms $\alpha$) 
 and traceless metric perturbations $\h$ we would also lose dynamical locality in the massless case.
 This will be discussed further below. 
\end{rem}

%%%%%%%%%%%%%%%%%%%%%%%%%%%%%%%%%%%%%%%%%%%%%%%%
%%%%%%%%%%%%%%%%%%%%%%%%%%%%%%%%%%%%%%%%%%%%%%%%

\section{\label{sec:massless}Gauge theory interpretation in the massless case}
In this section we shall briefly point out and discuss some features of the classical and quantum theory
of a massless multiplet of $p\in\bbN$ inhomogeneous Klein--Gordon fields.
\sk

Let us start with the automorphisms of this theory. As was shown in Theorem \ref{theo:goodautgroupclassical}
for the classical and in Theorem \ref{theo:goodautgroupquant} for the quantized case,
this theory has a nontrivial automorphism group isomorphic to $\bbR^p$. 
These symmetries can also be understood from the Lagrangian of this model (see equation 
(\ref{eqn:actionintro}) with $m=0$) as they correspond to shifts of the classical field 
$\phi$, i.e.\ transformations $\phi \mapsto \phi + \mu$ with $\mu\in\bbR^p$.
According to \cite{Fewster:2012yc}  one should regard the massless Klein--Gordon theory 
as a gauge theory of the first kind with $\phi$ playing the role of a zero-form gauge field.
This is supported by the fact that the Lagrangian can also be written as
\begin{flalign}\label{eqn:lagrangian}
\mathcal{L} = \frac{1}{2} \ip{\mathrm{d}\phi}{\ast \mathrm{d}\phi} -\ip{\phi}{\ast\bfJ}~,
\end{flalign}
where $\ast$ denotes the Hodge operator corresponding to $\bfM$. The differentials
$\mathrm{d}\phi$ play the same role as the field strength $F = \mathrm{d} A$ in electromagnetism, just one
differential form degree lower. Under gauge transformations $\phi \mapsto \phi^\prime = \phi + \mu$, $\mu\in \bbR^p$,
 the Lagrangian transforms as
\begin{flalign}\label{eqn:tmplangrangiantransformation}
\mathcal{L} \mapsto \mathcal{L}^\prime = \mathcal{L} - \ip{\mu}{\ast\bfJ}~,
\end{flalign}
thus it is gauge invariant up to a $\phi$-independent term $- \ip{\mu}{\ast\bfJ}$,
which however depends on the metric via the Hodge operator.
In particular, the gauge transformations map the solution space
of the inhomogeneous massless Klein--Gordon equation to itself.
This global gauge invariance is exactly the one described by the automorphism groups characterized
in Theorem \ref{theo:goodautgroupclassical} and Theorem \ref{theo:goodautgroupquant}.
With this interpretation in mind, the observables of the theory should be identified
with those elements of the Poisson or quantized algebras that are fixed under the
action of the automorphism group. As described in \cite[\S 3.3]{Fewster:2012yc}  
this would lead in a natural way to subfunctors of $\PA_p$ and $\QA_p$ that
can be interpreted as the `theories of observables'. This strategy was implemented
for the massless homogeneous Klein--Gordon theory in  \cite[\S 5.3]{Fewster:2012yc}  and, 
while we have not worked through the analogue
for the present models, our expectation is that it would result in the 
theories obtained by the following construction: For any object 
$(\bfM,\bfJ)$ in $\LocSrc_p$ we take the vector subspace $\PhSp_p^\mathrm{inv}(\bfM,\bfJ)\subseteq \PhSp_p(\bfM,\bfJ)$
consisting of all $[(\varphi,\alpha)]$, such that $\int_M \ip{\varphi}{\mu}\,\vol_{\bfM}=0$ for all $\mu\in \bbR^p$.
It is easy to see that $\PhSp_p^\mathrm{inv} : \LocSrc_p\to\PreSymp$ is a subfunctor
of $\PhSp_p$ and that, by the same construction as in Section \ref{sec:poisson} and Section \ref{sec:quantization},
we arrive at subfunctors $\PA_p^\mathrm{inv}:\LocSrc_p\to\PoisAlg$ and $\QA_p^\mathrm{inv}: \LocSrc_p\to \astAlg$
of, respectively, $\PA_p$ and $\QA_p$, which are gauge-invariant. (The
remaining issue is whether they coincide with the fixed-point subtheories of
$\PA_p$ and $\QA_p$, but this is our expectation.) 
\sk

The role of this gauge invariance is obscured when we study globally hyperbolic perturbations $(\h,\j)$ 
of the background $(\bfM,\bfJ)$ via the relative Cauchy evolution. The stress-energy tensor
(see (\ref{eqn:fullSET}) and set $m=0$) obtained by the $\h$-derivative of the relative Cauchy evolution is
{\it not} gauge invariant under $\phi\mapsto \phi^\prime = \phi + \mu$, $\mu\in\bbR^p$; 
it transforms as
\begin{flalign}
T_{(\bfM,\bfJ)}^{ab}[\phi] \mapsto T_{(\bfM,\bfJ)}^{ab}[\phi^\prime] = T_{(\bfM,\bfJ)}^{ab}[\phi] + g^{ab}\ip{\mu}{\bfJ}~,
\end{flalign}
and therefore is not an observable according to our discussion above.\footnote{The stress-energy tensor does not
actually belong to the algebras we have considered; here we have in mind an 
extended (quantum) algebra containing (Wick) products.} 
This feature becomes again clear by looking at the transformation property of the
Lagrangian (\ref{eqn:tmplangrangiantransformation}): In fact, the stress-energy tensor
derived from the transformed Lagrangian $\mathcal{L}^\prime$ via taking the
functional derivative along $g_{ab}$ does not coincide with the one obtained from 
the untransformed Lagrangian $\mathcal{L}$ due to the metric-dependent extra term in  
the transformation law (\ref{eqn:tmplangrangiantransformation}). 
Note, however, 
that smearings of the stress-energy tensor by {\em traceless} tensor fields are gauge-invariant 
and thus qualify as observables.  Likewise, smearings of the
field against test functions that are derivatives of compactly supported $1$-forms also give observables. 
This gives an interesting perspective on some
of the points made in Remark~\ref{rem:dynloc}: the massless inhomogeneous theory
fails to be dynamically local if one restricts to variations of background
structures with relative Cauchy evolution generated by observable fields,
but is dynamically local if one also allows variations generated by
unobservable fields. 
\sk 

In order to obtain a gauge invariant
stress-energy tensor we might proceed as follows: If we replace
the source terms $\bfJ\in C^\infty(M,\bbR^p)$ by source terms that are top-form valued
$\bfJt\in\Omega^{\dim(M)}(M,\bbR^p)$, the Lagrangian (\ref{eqn:lagrangian}) can be written as
\begin{flalign}
\widetilde{\mathcal{L}} = \frac{1}{2} \ip{\mathrm{d}\phi}{\ast\mathrm{d}\phi} - \ip{\phi}{\bfJt}~.
\end{flalign}
Under gauge transformations $\phi \mapsto \phi + \mu$, $\mu\in\bbR^p$, the Lagrangian transforms as
\begin{flalign}
\widetilde{\mathcal{L}} \mapsto \widetilde{\mathcal{L}}^\prime = \widetilde{\mathcal{L}} - \ip{\mu}{\bfJt}~,
\end{flalign}
where now the $\phi$-independent additional term does not depend on the metric. 
As a consequence, the stress-energy tensor obtained by the functional derivative of the Lagrangian
along $g_{ab}$ reads
\begin{flalign}\label{eqn:tildedstress}
\widetilde{T}_{(\bfM,\bfJt)}^{ab}[\phi] =
 \ip{\nabla^a \phi }{\nabla^b \phi} - \frac{1}{2}g^{ab} \ip{\nabla_c \phi }{\nabla^c \phi}
\end{flalign}
and is gauge invariant. The functorial theory with top-form valued source terms can be obtained from
our functors $\PA_p:\LocSrc_p\to \PoisAlg$ and $\QA_p:\LocSrc_p\to\astAlg$ by noticing the
following equivalence of categories: Let us define in analogy to $\LocSrc_p$ (see Definition \ref{defi:LocSrcp})
the category $\LocTop_p$, where objects are tuples $(\bfM,\bfJt)$ with $\bfJt\in\Omega^{\dim(M)}(M,\bbR^p)$
a top-form source term. The categories $\LocSrc_p$ and $\LocTop_p$ are equivalent
via the Hodge operator. Explicitly, we define the covariant functor 
$\mathfrak{Hodge} : \LocTop_p \to \LocSrc_p$ on objects by $\mathfrak{Hodge}(\bfM,\bfJt) = (\bfM,\ast\bfJt)$
and on morphisms by $\mathfrak{Hodge}(f) = f$ (with a slight abuse of notation we denote both a morphism
and its underlying smooth map by the same symbol). The inverse Hodge operator provides us with 
the inverse functor $\mathfrak{Hodge}^{-1} : \LocSrc_p\to\LocTop_p$.
Hence, $\LocSrc_p$ and $\LocTop_p$ are equivalent categories.
Composing the functor $\mathfrak{Hodge}$ with our functors $\PA_p$ and $\QA_p$ we obtain
the covariant functors
\begin{subequations}
\begin{flalign}
\widetilde{\PA}_p&:= \PA_p\circ \mathfrak{Hodge} : \LocTop_p\to\PoisAlg~,\\
\widetilde{\QA}_p &:= \QA_p\circ \mathfrak{Hodge}: \LocTop_p\to \astAlg~,
\end{flalign} 
\end{subequations}
which are respectively a locally covariant classical and quantum field theory. The endomorphisms
of $\widetilde{\PA}_p$ and $\widetilde{\QA}_p$ of course coincide with the ones of $\PA_p$ and $\QA_p$.
However, as the functor $\mathfrak{Hodge}$ mixes between the metric and the external source terms,
the relative Cauchy evolution of the new theories differs from the that of the original theories. Indeed,
from (\ref{eqn:tmpcauchy}) and (\ref{eq:rce_reform}) one easily observes that,
for all $[(\varphi,\alpha)]\in \widetilde{\PA}_p(\bfM,\bfJt)$,
\begin{flalign}
\rce_{(\bfM,\bfJt)}^{(\widetilde{\PA}_p)}[\h,\widetilde{\j}]\big([(\varphi,\alpha)]\big) = \left[\left(\varphi + (\mathrm{KG}_{\bfM} - \mathrm{KG}_{\bfM[\h]})\big(\mathrm{E}_{\bfM[\h]}(\varphi)\big),\alpha -\int_M\ip{\widetilde{\j}}{\mathrm{E}_{\bfM[\h]}(\varphi)}\vol_{\bfM}\right)\right] ~,
\end{flalign}
where now $\widetilde{\j}\in\Omega^{\dim(M)}_0(M,\bbR^p)$ is a compactly supported perturbation of
the top-form source term $\bfJt$. A similar formula holds true for the relative Cauchy evolution
of $\widetilde{\QA}_p$.
Following the same steps as in Appendix \ref{app:Cauchy},
we can extract the stress-energy tensor
(up to a constant functional) from the derivative of this relative Cauchy evolution along $\h$.
 In the massless case, we find exactly the one obtained from the Lagrangian, see (\ref{eqn:tildedstress}).
As already mentioned, this stress-energy tensor is gauge invariant under the gauge transformations
$\phi \mapsto \phi + \mu$, $\mu\in\bbR^p$.

%%%%%%%%%%%%%%%%%%%%%%%%%%%%%%%%%%%%%%%%%%%%%%%%
%%%%%%%%%%%%%%%%%%%%%%%%%%%%%%%%%%%%%%%%%%%%%%%%

\section{\label{sec:conclusion}Concluding remarks}
Our original aim was to understand how the methods of \cite{Fewster:2012yc} could be extended to the setting of 
locally covariant theories with external sources in order to compute the automorphism group (which should be the global gauge group) of the inhomogeneous Klein--Gordon theory.  
This has brought to light various shortcomings in the
formulation of the theory according to the prescription of \cite{Benini:2012vi}: 
it has automorphisms that are not gauge symmetries of the
original theory (cf.\ Theorem \ref{theo:automorphismgroup});
furthermore, it violates a natural composition property  that expresses the lack of interaction between fields in the multiplet (cf.\ Proposition \ref{propo:compprop}). To remedy these problems, we have
proposed an improved formulation of such theories at the classical and quantum level. 
We have traced the source of the pathological behavior to a failure of
\cite{Benini:2012vi} to adequately capture 
the interplay between the observables spaces (described by the presymplectic vector spaces) with the solution spaces. 
We have reintroduced this information by
studying the representation of the abstract Poisson algebras derived from these presymplectic vector spaces
on the solution spaces. These representations of the Poisson algebras have a kernel, 
which has no corresponding analog in the category of presymplectic vector spaces.
Performing the quotient by these kernels, we have obtained a functor to the category of Poisson algebras
which gives an appropriate description of the classical theory of a multiplet of inhomogeneous Klein--Gordon
fields. We have substantiated this claim by proving that the theory has the correct automorphism group and satisfies the composition property. In the quantized setting, we have replaced the pairing 
between observables and solutions with carefully defined state spaces on the CCR-algebras
derived from our presymplectic vector spaces. The GNS representation of our CCR-algebras in these
state spaces has a kernel, and we have shown that the quantum algebras obtained by quotienting out
these kernels are given functorially. Again, we have justified our constructions by showing that
the automorphism group of this improved functor is the correct one and that the composition property holds true. 
\sk

In this paper we have restricted ourselves to the simplest case given by a multiplet of inhomogeneous
Klein--Gordon fields, as this choice made it possible to characterize explicitly the relative Cauchy evolution
and the automorphism groups, which were important tools in unraveling the pathological features
of the earlier approach to affine field theories \cite{Benini:2012vi}. However, our insights concerning the improved functors
describing the classical and quantum theory of this model remain valid for generic affine field theories
as described in \cite{Benini:2012vi}. In particular, the general presymplectic vector 
space functor in \cite{Benini:2012vi} can be promoted via $\CanPois$ to a covariant functor with values in the
category of Poisson algebras. This functor can be paired with the solution space functor corresponding
to the equation of motion operators, which are part of the source category in  \cite{Benini:2012vi},
and the corresponding kernel forms a Poisson ideal. The improved classical functor for generic
affine field theories is then given by taking the quotient of the canonical Poisson algebras by these Poisson ideals.
In the quantized setting one proceeds analogously to Section \ref{sec:quantization}, i.e.\ one defines suitable state spaces
and studies the kernels of the corresponding GNS representation. The 
same techniques apply to Abelian gauge theories
\cite{Benini:2013tra,Benini:2013ita}, where however the following remark is in order: The kernel of the presymplectic
spaces in \cite{Benini:2013tra,Benini:2013ita} does {\it not} only consist of constant affine functionals,
but also topological observables (`electric charges') depending on the topology of spacetime. This implies that quotienting out 
the kernel of the pairing between the abstract Poisson algebras and the solution spaces does not generally yield simple Poisson algebras. The same holds true for quotienting out the kernel
of the GNS representation of the CCR-algebras given by suitable state spaces. This additional
degeneracy in the improved Poisson algebras (or the center in the improved quantum algebras) 
is the reason that Abelian gauge theories violate the injectivity axiom of locally covariant quantum field theory. 
As is clear from the general no-go theorem in \cite{Benini:2013ita}, 
our approach does not resolve this issue and hence 
reconciling gauge theory with locally covariant quantum field theory remains an open problem for future work.
\sk

Finally, we have shown that our improved quantum theory
is equivalent to that used by Hollands and Wald \cite{Hollands:2004yh} for 
the inhomogeneous Klein--Gordon theory -- see Remark \ref{rem:HW}.  
However, we would like to stress that our techniques 
are more general and flexible; in fact, in cases where the 
theory is already affine at the kinematical level (as e.g.\ in Abelian gauge theory),
the Borchers--Uhlmann algebra of the {\it affine dual} of the configuration space 
has to be considered, which will have a similar pathological behavior as our canonical algebras
in Subsection \ref{subsec:canonical} and hence has to be modified by our methods developed in Subsection \ref{subsec:improved}.
Moreover, using our techniques, we have determined a number of detailed properties of these inhomogeneous field theories
and exemplified the opportunities for analyzing and distinguishing locally covariant theories
 by functorial invariants opened up by~\cite{Fewster:2012yc}.

%%%%%%%%%%%%%%%%%%%%%%%%%%%%%%%%%%%%%%%%%%%%%%%%
%%%%%%%%%%%%%%%%%%%%%%%%%%%%%%%%%%%%%%%%%%%%%%%%

\section*{Acknowledgements}
AS would like to thank the Department of Mathematics of the University of York for hospitality
during a visit in September 2013. CJF and AS thank the Mathematisches Forschungsinstitut 
Oberwolfach for support and hospitality during the 
Mini-Workshop ``New Crossroads between Mathematics and Field Theory'', at which this work commenced, and also thank the
workshop organizers. CJF thanks Atsushi Higuchi, Daniel Siemssen and 
Katarzyna Rejzner for useful discussions.
We also thank Ko Sanders for prompting us to clarify the relationship between some
of our results and those of reference \cite{Sanders:2012sf}.
Furthermore, we thank the referees for their careful evaluation of our manuscript and their useful comments and suggestions.

%%%%%%%%%%%%%%%%%%%%%%%%%%%%%%%%%%%%%%%%%%%%%%%%
%%%%%%%%%%%%%%%%%%%%%%%%%%%%%%%%%%%%%%%%%%%%%%%%

\appendix
\section{\label{app:Cauchy}Differentiation of the relative Cauchy evolution and stress-energy tensor}
 
We discuss the sense in which the relative Cauchy evolution of
$\PhSp_p$ can be differentiated, and show that it has the
derivative stated in the text. Our method follows that of
\cite[Appendix A]{Fewster:2011pn}; however, additional care
must be taken when defining a suitable topology. In \cite{Fewster:2011pn}, the relative Cauchy evolution for the 
(homogeneous) real scalar field was differentiated using the 
weak symplectic topology on the spacelike compact solution space (describing the linear observables of the theory)
induced by seminorms of the form $|\sigma_{\bfM}(\,\cdot\,,\phi)|$, where
$\sigma_{\bfM}$ is the symplectic structure and $\phi$ ranges over the 
symplectic vector space. An obvious generalization to our present 
context is to induce a topology from the presymplectic structure
on $\PhSp_p(\bfM,\bfJ)$ in a similar way. However, the resulting
topology does not separate points, because the presymplectic structure is
degenerate. In particular, it is clear from \eqref{eqn:tmpcauchy} that the presymplectic pairing between any element with 
an element $\mathrm{rce}_{(\bfM,\bfJ)}^{(\PhSp_p)}[\h,\j]\big([(\varphi,\alpha)]\big)$
is independent of $\j$ -- it would therefore be
impossible to obtain the `obviously correct' derivative given in
\eqref{eqn:Jdefinition}.
\sk

The solution to this problem is, as at various other points in this paper, 
to recall that $\PhSp_p(\bfM,\bfJ)$ acts as a space of functionals 
on the affine solution space $\Sol_p(\bfM,\bfJ)$ of the inhomogeneous
theory. Dually, therefore, each solution $\phi\in \Sol_p(\bfM,\bfJ)$ 
defines a seminorm $|\!\sip{\,\cdot\,}{\phi}_{(\bfM,\bfJ)}\!|$, where
the pairing $\sip{\,\cdot\,}{\,\cdot\,}_{(\bfM,\bfJ)}$ was defined in \eqref{eqn:pairing}. The 
resulting weak-$*$ topology does separate points and is the appropriate generalization of the weak symplectic
topology used in~\cite{Fewster:2011pn}.
With the topology fixed, differentiability of the relative Cauchy evolution may be 
established as follows. {}From \eqref{eq:rce_reform} we have
\begin{multline}
 \sip{\Big(\mathrm{rce}_{(\bfM,\bfJ)}^{(\PhSp_p)}[\h,\j]-\id_{\PhSp_p(\bfM,\bfJ)}\Big)\big([(\varphi,\alpha)]\big) }{\phi}_{(\bfM,\bfJ)} \\\quad = 
\int_M   \Big(\ip{(\mathrm{KG}_{\bfM}-\mathrm{KG}_{\bfM[\h]})\big(\mathrm{E}_{\bfM[\h]}(\varphi)\big)}{\phi} 
-\ip{\j}{\mathrm{E}_{\bfM[\h]}(\varphi)} 
+  (1-\rho_{\h}) \ip{\bfJ+\j}{\mathrm{E}_{\bfM[\h]}(\varphi)}\Big)\, \vol_{\bfM}~,
\end{multline}
and the integration region may be restricted, without loss, to any strip\footnote{Identifying $\bfM$ diffeomorphically with a product manifold of form
$\bbR\times\Sigma$, so that $\{t\}\times\Sigma$ is spacelike for
all $t$, a {\em strip} of $\bfM$ is any subset of the form $I\times\Sigma$
where $I\subset \bbR$ is a closed interval.} $S$ of $\bfM$ containing the support of $\j$ and $\h$. As in \cite[Appendix B]{Fewster:2011pn}, energy 
estimates entail that $s\mapsto \mathrm{E}_{\bfM[s\h]}(\varphi)$
is differentiable in $L^2(S,\vol_{\bfM})\otimes \bbC^p$; moreover $s\mapsto 1-\rho_{s\h}$ is smooth in $L^2(S,\vol_{\bfM})$ near $s=0$, with first derivative $-\frac{1}{2}g^{ab}h_{ab}$ at $s=0$. It follows that
\begin{subequations}
\begin{flalign}
 \int_M  \ip{s \j}{\mathrm{E}_{\bfM[s \h]}(\varphi)}\, \vol_{\bfM}  = s\, 
\int_M  \ip{\j}{\mathrm{E}_{\bfM}(\varphi)}\, \vol_{\bfM} + O(s^2)
\end{flalign}
and 
\begin{flalign}
\int_M   (1-\rho_{s \h}) \, \ip{\bfJ+s \j}{\mathrm{E}_{\bfM[s \h]}(\varphi)}\, \vol_{\bfM}
= -\frac{s}{2} \int_M g^{ab}h_{ab}\, \ip{\bfJ}{\mathrm{E}_{\bfM}(\varphi)} \, \vol_{\bfM}+ O(s^2)~.
\end{flalign}
\end{subequations}
Moreover, the formula
\begin{flalign}
\int_M  \ip{\big(\mathrm{KG}_{\bfM}-\mathrm{KG}_{\bfM[s\h]}\big)\big(\mathrm{E}_{\bfM[s\h]}(\varphi)\big)}{\phi}\, \vol_{\bfM} 
= -s\int_M  \ip{\mathrm{KG}'_{\bfM[\h]} \big(\mathrm{E}_{\bfM}(\varphi)\big)}{\phi}\, \vol_{\bfM} +O(s^2)
\end{flalign}
was established in \cite[Appendix B]{Fewster:2011pn}. Note that
here $\phi$ solves the inhomogeneous equation, while its analogue
in the cited reference solved the homogeneous equation. However the
difference is inessential, because the only property of $\phi$ used
is that it is square-integrable on the strip $S$. Assembling these observations, 
\begin{flalign}\label{eq:rce_deriv1}
\left.\frac{d}{ds}\sip{\mathrm{rce}_{(\bfM,\bfJ)}^{(\PhSp_p)}[s\h,s\j] \big([(\varphi,\alpha)]\big) }{\phi}_{(\bfM,\bfJ)} \right|_{s=0} &= -
\sip{\left(\mathcal{T}_{(\bfM,\bfJ)}[\h]+\mathcal{J}_{(\bfM,\bfJ)}[\j]\right)\big([(\varphi,\alpha)]\big)}{\phi}_{(\bfM,\bfJ)}
\end{flalign}
for every $\phi\in\Sol_p(\bfM,\bfJ)$, where $\mathcal{T}_{(\bfM,\bfJ)}[\h]$ and $\mathcal{J}_{(\bfM,\bfJ)}[\j]$ were given in
\eqref{eqn:Tdefinition} and \eqref{eqn:Jdefinition}. Accordingly, 
\eqref{eq:rce_deriv} holds in the weak-$*$ topology on $\PhSp_p(\bfM,\bfJ)$. 
\sk

To relate these
formulae with the classical stress-energy tensor and action, we
note that
\begin{flalign}
\mathrm{KG}'_{\bfM[\h]} =\frac{d}{ds}\mathrm{KG}_{\bfM[s\h]}\Big\vert_{s=0} =
-\nabla_a  h^{ab}\nabla_b  + \frac{1}{2}\left(\nabla^a h^b_{\phantom{b}b}\right)\nabla_a 
\end{flalign}
(unfortunately, a sign error appears in the analogous step in 
\cite{Fewster:2011pn}: see the second line of the central displayed formula
on p.1706 of that reference). Inserting this formula and integrating by parts we obtain
\begin{flalign}
\nn \sip{\mathcal{T}_{(\bfM,\bfJ)}[\h]\big([(\varphi,\alpha)]\big)}{\phi}_{(\bfM,\bfJ)} &=
\int_M \left( h_{ab} \ip{\nabla^b \mathrm{E}_{\bfM}(\varphi)}{\nabla^a \phi}
 - \frac{1}{2}h^b_{\phantom{b}b}\nabla^a \ip{\nabla_a \mathrm{E}_{\bfM}(\varphi)}{\phi} 
\right) \, \vol_{\bfM}\notag\\
&\qquad\qquad + \int_M\frac{1}{2}\,g^{ab}\,h_{ab}\,\ip{\bfJ}{\mathrm{E}_{\bfM}(\varphi)}\, \vol_{\bfM} \notag \\
&=\int_M h_{ab}\left( \ip{\nabla^b \mathrm{E}_{\bfM}(\varphi)}{\nabla^a \phi} 
- \frac{1}{2}g^{ab}\,\ip{\nabla_c \mathrm{E}_{\bfM}(\varphi)}{\nabla^c \phi}
\right. \notag\\
&\qquad\qquad  + \frac{1}{2}m^2 g^{ab}\,\ip{ \mathrm{E}_{\bfM}(\varphi)}{\phi}+ \left.\frac{1}{2}\,g^{ab}\,\ip{\bfJ}{\mathrm{E}_{\bfM}(\varphi)}\right) \, \vol_{\bfM}\notag \\
&=
\frac{1}{2}\left.\frac{d}{ds} \int_{M}h_{ab} \,T_{(\bfM,\bfJ)}^{ab}[\phi+s \,\mathrm{E}_{\bfM}(\varphi)]\, \vol_{\bfM} \right|_{s=0}  ~,\label{eq:tmp_stresstensorfuncderiv}
\end{flalign}
thus establishing \eqref{eq:tmp_stresstensorfuncderiv_short}, where the stress-energy tensor is given by \eqref{eqn:fullSET} 
and we have used the fact that 
\begin{flalign}
\nn \nabla^a \ip{\nabla_a \mathrm{E}_{\bfM}(\varphi)}{\phi} &=
\ip{\Box_{\bfM} \big(\mathrm{E}_{\bfM}(\varphi)\big)}{\phi} + \ip{\nabla_a \mathrm{E}_{\bfM}(\varphi)}{\nabla^a\phi} 
\\
&= -m^2 \ip{\mathrm{E}_{\bfM}(\varphi)}{\phi} + \ip{\nabla_a \mathrm{E}_{\bfM}(\varphi)}{\nabla^a\phi} ~.
\end{flalign}

To conclude, we will express the derivative of the relative Cauchy evolution
using the Poisson bracket. Identifying the space of solutions $\Sol_p(\bfM,\bfJ)$ with 
phase space, and identifying an element $[(\varphi,\alpha)]\in\PhSp_p(\bfM)$ with the functional
\begin{flalign}
\phi\mapsto \sip{[(\varphi,\alpha)]}{\phi}_{(\bfM,\bfJ)} =  \left(\int_{M} 
\ip{\phi}{\varphi} \, \vol_{\bfM} \right) +\alpha
\end{flalign}
on phase space (cf.\ \eqref{eqn:pairing}), we already have the Poisson bracket (cf.\ \eqref{eqn:PhSp_PS})
\begin{flalign}
\big\{ [(\varphi,\alpha)],[(\varphi',\alpha')]\big\}_{\sigma_{(\bfM,\bfJ)}}(\phi) :=\sip{\big\{ [(\varphi,\alpha)],[(\varphi',\alpha')]\big\}_{\sigma_{(\bfM,\bfJ)}}}{\phi}_{(\bfM,\bfJ)} =  \mathrm{E}_{\bfM}(\varphi,\varphi')~. 
\end{flalign}
Unsmearing the second slot, this gives
\begin{flalign}
\big\{ [(\varphi,\alpha)],\phi(x)\big\}_{\sigma_{(\bfM,\bfJ)}}(\phi) =  -\big(\mathrm{E}_{\bfM}(\varphi)\big)(x) ~,
\end{flalign}
where, in an obvious way, $\phi(x)$, $x\in M$, stands for the functional $\phi\mapsto\phi(x)$ on phase space. Thus, we also have
for the functional $\phi\mapsto \int_M f \ip{\phi}{\phi}\, \vol_{\bfM}$
\begin{flalign}
\left\{ [(\varphi,\alpha)],\phi\mapsto\int_M f \ip{\phi}{\phi}\, \vol_{\bfM} \right\}_{\sigma_{(\bfM,\bfJ)}}(\phi)= 
 2\,\mathrm{E}_{\bfM}(\varphi, f\phi) ~.
\end{flalign}
Proceeding in this way, one easily obtains the formula
\begin{multline}\label{eq:rce_deriv3}
\left.\frac{d}{ds}\sip{\mathrm{rce}_{(\bfM,\bfJ)}^{(\PhSp_p)}[s\h,s\j] \big([(\varphi,\alpha)]\big) }{\phi}_{(\bfM,\bfJ)} \right|_{s=0} \\
= 
 \left\{ [(\varphi,\alpha)],\phi\mapsto\int_M \left(
\frac{1}{2} h_{ab}\, T^{ab}_{(\bfM,\bfJ)}[\phi] + \ip{\j}{\phi}\right)\, \vol_{\bfM}\right\}_{\sigma_{(\bfM,\bfJ)}}(\phi) ~,
\end{multline}
which can also be written in the form \eqref{eq:PArcederiv}. 
In this form, it is clear that, just as the relative Cauchy evolution correctly identifies  the current coupling to the metric as
the stress-energy tensor, so it also correctly identifies the `current' coupling to the external source $\bfJ$ as the field $\phi$ itself. 

\section{\label{sec:pointed}Pointed presymplectic spaces}
We briefly discuss an alternative method for obtaining
a good classical and quantum theory of the inhomogeneous models. 
The idea is to modify the functor $\PhSp_p$, so that it
takes values in the category of pointed presymplectic spaces
$\pPreSymp$ defined as follows: Objects of $\pPreSymp$ are
pairs $((V,\sigma_V),1_V)$, where $(V,\sigma_V)$ is an object in $\PreSymp$ (so that
$\sigma_V$ has nontrivial null space) and $1_V$ is a distinguished nonzero vector in the null space of $\sigma_V$. 
A morphism $L:((V,\sigma_V),1_V)\to
((W,\sigma_W),1_W)$ is a $\PreSymp$ morphism $L : (V,\sigma_V)\to (W,\sigma_W)$,
such that $L(1_V)=1_W$. For example, in the one-dimensional vector space with trivial presymplectic
structure $\mathcal{I}=(\bbR,0)$ we may single out the unit of $\bbR$ to 
obtain a pointed presymplectic space $(\mathcal{I},1)$, which is then an initial
object of $\pPreSymp$ -- in fact, we may also regard $\pPreSymp$ as the 
category of arrows in $\PreSymp$ with domain $\mathcal{I}$.
\sk

Our modified functor $\pPhSp_p: \LocSrc_p\to \pPreSymp$ is defined on objects by
\begin{flalign}
\pPhSp_p(\bfM,\bfJ)=
\big(\PhSp_p(\bfM,\bfJ),[(0,1)]\big)
\end{flalign}
and on morphisms by $\pPhSp_p(f) = \PhSp_p(f)$, noting that the
latter indeed preserve the distinguished elements $[(0,1)]\in \PhSp_p(\bfM,\bfJ)$, which are naturally distinguished
by their action on solutions: $\sip{[(0,1)]}{\phi}_{(\bfM,\bfJ)}= 1$
for all $\phi\in\Sol_p(\bfM,\bfJ)$. 
\sk

The resulting theory has the expected symmetries. 
\begin{theo} \label{theo:pPhSp}
Every endomorphism of the covariant functor $\pPhSp_p:\LocSrc_p\to \pPreSymp$ is an automorphism and
\begin{flalign}
\End(\pPhSp_p) = \Aut(\pPhSp_p)\simeq 
\begin{cases} \{\id_{\pPhSp_p}\} & ~,~~\text{for }m\neq 0~,\\  \bbR^p &~,~~\text{for } m= 0~,\end{cases}
\end{flalign}
where for $m=0$ the action is given by, for all $[(\varphi,\alpha)]\in \pPhSp_p(\bfM,\bfJ) $ and $\mu\in\bbR^p$,
\begin{flalign}
\bullet\eta(\mu)_{(\bfM,\bfJ)}\big([(\varphi,\alpha)] \big) =\left[\left(\varphi,\alpha+ \int_{M} \ip{\varphi}{\mu}\, \vol_{\bfM} \right)\right]~. 
\end{flalign}
\end{theo}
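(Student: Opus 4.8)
The plan is to follow closely the strategy used for $\PhSp_p$ in Section~\ref{sec:automorphisms}, adapting each step to the pointed category $\pPreSymp$. First I would establish the pointed analog of Theorem~\ref{theo:endobycomponent}: every endomorphism $\bullet\eta\in\End(\pPhSp_p)$ is uniquely determined by its component $\bullet\eta_{(\bfM,\bfJ)}$ on a single object. The proof of Theorem~\ref{theo:endobycomponent} carries over essentially verbatim: Lemma~\ref{lem:endorcecompatibility}, Lemma~\ref{lem:endoproperties} and the ``Cauchy wedge connectedness'' argument depend only on the functor structure, the time-slice axiom, and injectivity of the structure morphisms, all of which are inherited because a $\pPhSp_p$-morphism is a $\PhSp_p$-morphism preserving the distinguished vector, and $\pPhSp_p$ satisfies the time-slice axiom since $\PhSp_p$ does and the distinguished vectors are preserved under the relevant isomorphisms. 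Next I would verify that the proposed $\bullet\eta(\mu)$ genuinely define elements of $\End(\pPhSp_p)$ for $m=0$: compatibility with the quotient in $\PhSp_p(\bfM,\bfJ)$ and naturality are exactly the $\sigma=+1$ restriction of Proposition~\ref{propo:Z2xR} (the same computation using $\int_M\langle\mathrm{KG}_{\bfM}(h),\mu\rangle\,\vol_{\bfM}=0$ and $\int_{M_2}\langle f_\ast(\varphi),\mu\rangle\,\vol_{\bfM_2}=\int_{M_1}\langle\varphi,\mu\rangle\,\vol_{\bfM_1}$), and one checks that $\bullet\eta(\mu)_{(\bfM,\bfJ)}$ fixes $[(0,1)]$, which it does since $\varphi=0$ makes the extra integral vanish. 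Faithfulness and the group law $\bullet\eta(\mu)\circ\bullet\eta(\mu')=\bullet\eta(\mu+\mu')$ follow as in Proposition~\ref{propo:Z2xR}.

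The core of the argument is to show these exhaust $\End(\pPhSp_p)$. Take $\bullet\eta\in\End(\pPhSp_p)$ and consider its component on Minkowski spacetime $(\bfM_0,0)$. Using the isomorphism~\eqref{eqn:tmpsolspaceiso}, $\PhSp_p(\bfM_0,0)\simeq\Sol_{\mathrm{sc}}(\bfM_0)\oplus\bbR$, under which the distinguished vector $[(0,1)]$ corresponds to $(0,1)$. Thus $\bullet\eta_{(\bfM_0,0)}$ induces a linear endomorphism $\widetilde{\eta}$ of $\Sol_{\mathrm{sc}}(\bfM_0)\oplus\bbR$ that (i) preserves the presymplectic structure, (ii) fixes $(0,1)$, (iii) commutes with all Poincar\'e transformations (by naturality), and (iv) commutes with the relative Cauchy evolution and hence with the derivatives $\mathcal{T}_{(\bfM_0,0)}[\h]$ and $\mathcal{J}_{(\bfM_0,0)}[\j]$ of~\eqref{eqn:tmpTandJexpliciteMinkowski}, by Lemma~\ref{lem:endorcecompatibility}. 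The key new ingredient compared to Theorem~\ref{theo:automorphismgroup} is condition (ii): writing $\widetilde{\eta}$ in block form $L_{11},L_{12},L_{21},L_{22}$ as in that proof, fixing $(0,1)$ forces $L_{12}(1)=0$ (hence $L_{12}=0$, which we already got from~\eqref{eqn:tmpconditionJ}) and $L_{22}(1)=1$, i.e.\ $L_{22}=\id_{\bbR}$. Then the first-component analysis of~\eqref{eqn:tmpconditionT} together with Poincar\'e invariance again forces $L_{11}$ to be an $O(p)$ transformation (as in~\cite[Theorem 5.2.]{Fewster:2012yc}); feeding $L_{22}=\id$ into the second component of~\eqref{eqn:tmpconditionJ} yields $L_{11}=\id_{\Sol_{\mathrm{sc}}(\bfM_0)}$ (the sign $\sigma$ is now pinned to $+1$ rather than $\pm1$). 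Finally $L_{21}:\Sol_{\mathrm{sc}}(\bfM_0)\to\bbR$ is Poincar\'e invariant, so Lemma~\ref{lem:transinvfnl} gives $L_{21}=0$ for $m\neq0$ and $L_{21}(\,\cdot\,)=\widetilde{\sigma}_{\bfM_0}(\mu,\,\cdot\,)$ for some $\mu\in\bbR^p$ when $m=0$. Undoing~\eqref{eqn:tmpsolspaceiso} identifies $\bullet\eta_{(\bfM_0,0)}$ with $\id$ (massive case) or $\bullet\eta(\mu)_{(\bfM_0,0)}$ (massless case), and by the pointed version of Theorem~\ref{theo:endobycomponent} this determines $\bullet\eta$ globally; since $\id$ and each $\bullet\eta(\mu)$ is an automorphism, $\End(\pPhSp_p)=\Aut(\pPhSp_p)$.

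The main obstacle I anticipate is purely bookkeeping rather than conceptual: one must check carefully that the constructions of Section~\ref{sec:automorphisms} ``restrict'' correctly to $\pPreSymp$ — in particular that the relative Cauchy evolution of $\pPhSp_p$ is well-defined (it fixes $[(0,1)]$, as is visible from~\eqref{eqn:rce_reform} since that formula leaves $[(0,\alpha)]$ untouched), that the Cauchy-wedge-connectedness chain~\eqref{eqn:cwcon2} can be run in the pointed category, and that all the isomorphisms invoked preserve the distinguished vectors. None of this is hard, but it is where an incautious ``same as before'' could hide a gap. The genuinely new piece of content is the rigidity gained from condition (ii), and I would present that as the one place where the pointed structure does real work: it is exactly what removes the spurious $\bbZ_2$ that plagued $\PhSp_p$ in Theorem~\ref{theo:automorphismgroup}, because the distinguished element $[(0,1)]$ cannot be sent to $[(0,-1)]$.
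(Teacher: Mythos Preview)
Your proposal is correct but takes a more laborious route than the paper. The paper's proof is a two-line argument: the forgetful functor $\pPreSymp\to\PreSymp$ induces a faithful monoid homomorphism $\End(\pPhSp_p)\to\End(\PhSp_p)$, so every pointed endomorphism \emph{is} one of the already-classified endomorphisms of $\PhSp_p$ from Theorem~\ref{theo:automorphismgroup}; it then remains only to check which of those preserve $[(0,1)]$, and since $\eta(\sigma)_{(\bfM,\bfJ)}([(0,1)])=[(0,\sigma)]$ (respectively $\eta(\sigma,\mu)_{(\bfM,\bfJ)}([(0,1)])=[(0,\sigma)]$), exactly the $\sigma=+1$ endomorphisms lift. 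Your approach instead re-runs the entire Minkowski-space analysis of Theorem~\ref{theo:automorphismgroup} with the extra constraint $L_{22}=\id_{\bbR}$ built in from the start. Both are valid; the paper's method is shorter because it leverages Theorem~\ref{theo:automorphismgroup} as a black box rather than reopening it, while yours is more self-contained and makes explicit where the distinguished point does its work (pinning $\sigma=+1$ via $L_{22}(1)=1$). In writing this up you could save considerable effort by adopting the paper's reduction, since you have already cited all the ingredients it needs.
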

\begin{proof}
The forgetful functor $\pPreSymp\to\PreSymp$ induces a
faithful homomorphism $\End(\pPhSp_p)\to \End(\PhSp_p)$ of monoids. 
Thus, it suffices to determine which endomorphisms of $\PhSp_p$
lift to $\pPhSp_p$. By Proposition \ref{propo:Z2flip} and Theorem \ref{theo:automorphismgroup},
 we see that $\eta(-1)_{(\bfM,\bfJ)}([(0,1)]) = [(0,-1)]$
so $\eta(-1)$ does not lift, leaving only the trivial group for $m\neq 0$. Similarly, by Proposition \ref{propo:Z2xR}
and Theorem \ref{theo:automorphismgroup},  we see in the massless case 
that $\eta(\sigma,\mu)_{(\bfM,\bfJ)}([(0,1)]) = [(0,\sigma)]$ so $\eta(\sigma,\mu)$ lifts if and only if $\sigma=1$. 
\end{proof}

The second problem identified with $\PhSp_p$ was its failure to 
behave correctly with respect to the composition of systems, see Section \ref{sec:compproperty}. In our
present context this can be remedied in the following way: The natural
composition of pointed presymplectic spaces $((V,\sigma_V),1_V)$ and
$((W,\sigma_W),1_W)$ is not the direct sum, but
rather the direct sum with amalgamation of distinguished points
\begin{flalign}
((V,\sigma_V),1_V) \obulplus ((W,\sigma_W),1_W) = \big( ((V\oplus W)/\sim,\sigma_{V\oplus W}),\llbracket(1_V,0)\rrbracket\big)~,
\end{flalign}
where the equivalence relation implements a quotient by the subspace $\{(\lambda 1_V, -\lambda 1_W ):\lambda\in\bbR\}$.
Note this is well-defined due to our assumption that the distinguished elements lie in the null space of the 
presymplectic structures. The operation $\obulplus$  gives a 
monoidal structure on $\pPreSymp$, with monoidal unit $(\mathcal{I},1)$,  just as the direct sum $\oplus$ 
does for $\PreSymp$ (with the zero-dimensional
presymplectic vector space as the monoidal unit).

\begin{theo} \label{theo:pPhSp_comp}
The theory $\pPhSp_{p}$ obeys the composition property, i.e., there is a natural isomorphism
\begin{flalign}
\pPhSp_p\cong \pPhSp_{p,q} := 
 \obulplus\circ \big(\pPhSp_q\times\pPhSp_{p-q}\big)\circ \mathfrak{Split}_{p,q}~.
\end{flalign}
\end{theo}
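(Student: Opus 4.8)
The plan is to mirror the proof of Theorem~\ref{theo:classicalcompoA} (the composition property for $\PA_p$), but now at the level of pointed presymplectic spaces, exploiting the fact that the amalgamated sum $\obulplus$ is precisely engineered so that the spurious doubling of the null direction — which was the obstruction in Proposition~\ref{propo:compprop} — is killed. First I would unwind the definitions: for a fixed object $(\bfM,\bfJ)$ in $\LocSrc_p$ with split $\bfJ = \bfJ^q + \bfJ^{p-q}$, the space $\pPhSp_{p,q}(\bfM,\bfJ)$ is
\[
\big(\PhSp_q(\bfM,\bfJ^q)\oplus\PhSp_{p-q}(\bfM,\bfJ^{p-q})\big)/\!\!\sim~,
\]
where $\sim$ identifies $([(0,\lambda)],0)$ with $(0,[(0,-\lambda)])$, with distinguished element the class of $([(0,1)],0)$. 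The candidate isomorphism $\eta_{(\bfM,\bfJ)}$ is the obvious one induced by $([(\varphi,\alpha)],[(\psi,\beta)])\mapsto [(\varphi+\psi,\alpha+\beta)]$, where $\varphi\in C_0^\infty(M,\bbR^q)$ and $\psi\in C_0^\infty(M,\bbR^{p-q})$ are regarded as elements of $C_0^\infty(M,\bbR^p)$ supported in the first $q$ and last $p-q$ components respectively.

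The key steps, in order: (1)~Check that $\eta_{(\bfM,\bfJ)}$ is well-defined on the amalgamated quotient — this is immediate since $([(0,\lambda)],0)$ and $(0,[(0,-\lambda)])$ both map to $[(0,\lambda)]$ and $[(0,-\lambda)]$, which agree in $\PhSp_p(\bfM,\bfJ)$ only after we observe... wait — actually they do \emph{not} agree there; rather, the point is that the map on $\PhSp_q\oplus\PhSp_{p-q}$ sends the amalgamation subspace $\{(\lambda[(0,1)],-\lambda[(0,1)])\}$ to $[(0,\lambda)]+[(0,-\lambda)] = [(0,0)] = 0$, so it descends to the quotient. (2)~Verify $\eta_{(\bfM,\bfJ)}$ is linear, preserves the presymplectic structure — here one uses \eqref{eqn:PhSp_PS} together with the fact that $\mathrm{E}_{\bfM}$ acts blockwise on the $\bbR^q$ and $\bbR^{p-q}$ components, so the cross terms vanish and $\sigma_{(\bfM,\bfJ)}$ decomposes as $\sigma_{(\bfM,\bfJ^q)}\oplus\sigma_{(\bfM,\bfJ^{p-q})}$ on the image, matching $\sigma_{V\oplus W}$ as in \eqref{eqn:sumpresympstructure}. (3)~Check it preserves the distinguished elements: the class of $([(0,1)],0)$ goes to $[(0,1)]\in\PhSp_p(\bfM,\bfJ)$, which is the distinguished element of $\pPhSp_p(\bfM,\bfJ)$. (4)~Construct the inverse: given $[(\varphi,\alpha)]\in\PhSp_p(\bfM,\bfJ)$, split $\varphi = \varphi^q + \varphi^{p-q}$ into first-$q$ and last-$(p-q)$ components and set $\eta_{(\bfM,\bfJ)}^{-1}([(\varphi,\alpha)]) = \llbracket([(\varphi^q,\alpha)],[(\varphi^{p-q},0)])\rrbracket$; one checks this is independent of the representative of the class (using that $\mathrm{P}^\ast_{(\bfM,\bfJ)}(h) = (\mathrm{KG}_{\bfM}(h),\int\ip{\bfJ}{h}\vol_{\bfM})$ splits compatibly, noting the scalar part can be absorbed into either summand but the amalgamation makes the choice irrelevant) and is a two-sided inverse. (5)~Naturality: for a morphism $f:(\bfM_1,\bfJ_1)\to(\bfM_2,\bfJ_2)$, both $\pPhSp_p(f)$ and $\pPhSp_{p,q}(f)$ are induced by the push-forward $f_\ast$ acting componentwise on $C_0^\infty$, which manifestly commutes with the block decomposition, so the naturality square commutes.

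The main obstacle, I expect, is step~(4) — verifying well-definedness of $\eta_{(\bfM,\bfJ)}^{-1}$ on the quotient $\PhSp_p(\bfM,\bfJ)$. One must show that if $(\varphi,\alpha)$ and $(\varphi',\alpha')$ differ by $\mathrm{P}^\ast_{(\bfM,\bfJ)}(h)$ for some $h\in C_0^\infty(M,\bbR^p)$, then the two candidate preimages agree in the amalgamated sum. Writing $h = h^q + h^{p-q}$, one has $\mathrm{KG}_{\bfM}(h) = \mathrm{KG}_{\bfM}(h^q) + \mathrm{KG}_{\bfM}(h^{p-q})$ (blockwise) and $\int\ip{\bfJ}{h}\vol_{\bfM} = \int\ip{\bfJ^q}{h^q}\vol_{\bfM} + \int\ip{\bfJ^{p-q}}{h^{p-q}}\vol_{\bfM}$, so the difference of the two preimages is $\llbracket(\mathrm{P}^\ast_{(\bfM,\bfJ^q)}(h^q),\mathrm{P}^\ast_{(\bfM,\bfJ^{p-q})}(h^{p-q}))\rrbracket = \llbracket(0,0)\rrbracket$ in $\PhSp_q\oplus\PhSp_{p-q}$ already before passing to the amalgamated quotient — so in fact the amalgamation is not even needed for well-definedness of $\eta^{-1}$, only for injectivity of $\eta$ (equivalently, surjectivity of $\eta^{-1}$ onto a quotient of the correct dimension, which is exactly where Proposition~\ref{propo:compprop} failed). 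This bookkeeping — keeping straight which quotient relation is doing which job — is the only delicate point; everything else is a routine translation of Theorem~\ref{theo:classicalcompoA}. I would also remark that, since $\obulplus$ is the monoidal structure on $\pPreSymp$ with unit $(\mathcal{I},1)$, this theorem shows $\pPhSp_\bullet$ is (strong) monoidal in the appropriate sense, paralleling the remark after Theorem~\ref{theo:classicalcompoA}.
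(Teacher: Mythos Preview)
Your proposal is correct and follows essentially the same approach as the paper: the same candidate map $\llbracket([(\varphi,\alpha)],[(\psi,\beta)])\rrbracket\mapsto[(\varphi+\psi,\alpha+\beta)]$, the same inverse $[(\varphi,\alpha)]\mapsto\llbracket([(\varphi^q,\alpha)],[(\varphi^{p-q},0)])\rrbracket$, and the same naturality check via push-forward. You supply considerably more detail than the paper (which is quite terse), particularly on the presymplectic compatibility and the well-definedness of $\eta^{-1}$ on representatives, but the underlying argument is identical.
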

\begin{proof} 
For any object $(\bfM,\bfJ)$ in $\LocSrc_p$, define $(\eta_{p,q})_{(\bfM,\bfJ)}:\pPhSp_{p,q}(\bfM,\bfJ)\to\pPhSp_p(\bfM,\bfJ)$ by,
for all $[(\varphi,\alpha)]\in \pPhSp_q(\bfM,\bfJ^q)$ and $[(\psi,\beta)]\in\pPhSp_{p-q}(\bfM,\bfJ^{p-q})$,
\begin{flalign}\label{eq:eta}
(\eta_{p,q})_{(\bfM,\bfJ)}\left(
\left\llbracket \big( [(\varphi,\alpha)] , [(\psi,\beta)] \big) \right\rrbracket\right) = 
\big[ (\varphi +\psi ,\alpha + \beta)\big] ~,
\end{flalign}
where on the right hand side we have identified $\varphi\in C^\infty_0(M,\bbR^q)$ and $\psi\in C^\infty_0(M,\bbR^{p-q})$
as elements in $C^\infty_0(M,\bbR^p)$ ($\varphi$ is placed in the first $q$ and $\psi$ in the last $p-q$ components of
$\bbR^p$).
This map is well-defined (because $\alpha$ and $\beta$ are summed on the right-hand side of \eqref{eq:eta}),
linear and it preserves the distinguished element and the presymplectic structure. Furthermore, it is invertible via
$\pPhSp_p(\bfM,\bfJ)\ni [(\varphi,\alpha)]\mapsto \big\llbracket \big( [(\varphi^q,\alpha)], [(\varphi^{p-q},0)]\big)\big\rrbracket$,
where $\varphi = \varphi^q + \varphi^{p-q}$ denotes the split of $\varphi\in C^\infty_0(M,\bbR^p)$ into
the first $q$ and last $p-q$ components.
Hence, \eqref{eq:eta}  is an isomorphism in $\pPreSymp$.
\sk

To establish naturality, consider any morphism $f:(\bfM_1,\bfJ_1)\to(\bfM_2,\bfJ_2)$ in $\LocSrc_p$.
It is straightforward to check commutativity of
\begin{flalign}
\xymatrix{
\ar@{|->}[d]_-{\pPhSp_{p,q}(f)} \left\llbracket \big([(\varphi,\alpha)] ,[(\psi,\beta)] \big)\right\rrbracket  \ar@{|->}[rrr]^-{(\eta_{p,q})_{(\bfM_1,\bfJ_1)}} &&& \big[(\varphi+\psi,\alpha+\beta)\big] \ar@{|->}[d]^-{\pPhSp_p(f)} 
\\
\left\llbracket\big( [(f_*(\varphi),\alpha)] , [(f_*(\psi),\beta)] \big)\right\rrbracket 
\ar@{|->}[rrr]^-{(\eta_{p,q})_{(\bfM_2,\bfJ_2)}} &&& \big[(f_*(\varphi  + \psi),\alpha+ \beta)\big]
}
\end{flalign}
for all $[(\varphi,\alpha)]\in \pPhSp_q(\bfM_1,\bfJ_1^q)$ and $[(\psi,\beta)]\in\pPhSp_{p-q}(\bfM_1,\bfJ_1^{p-q})$.
Hence, the $(\eta_{p,q})_{(\bfM,\bfJ)}$ form the components of a natural isomorphism. 
\end{proof}

Finally, we consider the quantization of pointed presymplectic spaces
via a suitable covariant functor $\pCCR : \pPreSymp\to \astAlg$. On objects we set $\pCCR((V,\sigma_V),1_V)=
\CCR(V,\sigma_V)/\mathfrak{I}(V,\sigma_V)$, where $\mathfrak{I}(V,\sigma_V)$ is the two-sided $\ast$-ideal
generated by $1_V - 1$. 
On morphisms $L:((V,\sigma_V),1_V)\to ((W,\sigma_W),1_W)$ we have
$\CCR(L)(1_V) = 1_W$ and hence there is a uniquely
defined injective $\ast$-algebra homomorphism $\pCCR(L):\pCCR((V,\sigma_V),1_V)\to \pCCR((W,\sigma_W),1_W)$
 making the following diagram commute
\begin{flalign}
\xymatrix{
\ar[d]  \CCR(V,\sigma_V)  \ar[rr]^-{\CCR(L)} && \CCR(W,\sigma_W) \ar[d]
\\
\pCCR((V,\sigma_V),1_V) \ar[rr]^-{\pCCR(L)}&& \pCCR((W,\sigma_W),1_W)
}
\end{flalign}
where the vertical morphisms are the quotient maps (and the diagram 
is drawn in the category of unital $*$-algebras without the requirement
that morphisms be monic).  
Our last result is simply a restatement of the definition
of the improved functor $\QA_p:\LocSrc_p\to\astAlg$ in Subsection \ref{subsec:improved}.
\begin{theo} 
$\QA_p = \pCCR\circ\pPhSp_p$.
\end{theo}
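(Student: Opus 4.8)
The plan is to unwind both sides of the asserted equality $\QA_p = \pCCR\circ\pPhSp_p$ on objects and morphisms and observe that they agree by construction. First I would fix an object $(\bfM,\bfJ)$ in $\LocSrc_p$ and compare $(\pCCR\circ\pPhSp_p)(\bfM,\bfJ)$ with $\QA_p(\bfM,\bfJ)$. By definition, $\pPhSp_p(\bfM,\bfJ) = \big(\PhSp_p(\bfM,\bfJ),[(0,1)]\big)$, so $\pCCR$ applied to it yields $\CCR(\PhSp_p(\bfM,\bfJ))\big/\mathfrak{I}(\PhSp_p(\bfM,\bfJ),[(0,1)])$, where $\mathfrak{I}(\PhSp_p(\bfM,\bfJ),[(0,1)])$ is the two-sided $\ast$-ideal generated by the single element $[(0,1)]-1$. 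On the other hand, $\QA_p(\bfM,\bfJ) = \CQA_p(\bfM,\bfJ)/\mathfrak{J}_p(\bfM,\bfJ) = \CCR(\PhSp_p(\bfM,\bfJ))\big/\widetilde{\mathfrak{J}}_p(\bfM,\bfJ)$, using Lemma~\ref{lem:quantidealproperties}~b) to identify $\mathfrak{J}_p$ with the algebraically generated ideal $\widetilde{\mathfrak{J}}_p(\bfM,\bfJ)$ of \eqref{eqn:quantumidealalgebraic}. Thus the underlying algebras before quotienting are literally the same, $\CQA_p(\bfM,\bfJ) = \CCR(\PhSp_p(\bfM,\bfJ))$.

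The only point requiring a short argument is that the two ideals coincide, i.e.\ that the $\ast$-ideal generated by $[(0,1)]-1$ equals the $\ast$-ideal generated by $\{[(0,\alpha)]-\alpha : \alpha\in\bbR\}$. This is immediate because $[(0,\alpha)] = \alpha[(0,1)]$ in $\PhSp_p(\bfM,\bfJ)$ (the equivalence class is linear in its argument and $[(0,\alpha)]$ sits in the one-dimensional subspace spanned by $[(0,1)]$), so $[(0,\alpha)]-\alpha = \alpha\big([(0,1)]-1\big)$; hence each generator of $\widetilde{\mathfrak{J}}_p$ is a scalar multiple of the single generator $[(0,1)]-1$, and conversely $[(0,1)]-1$ is the $\alpha=1$ case. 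Therefore the two generated $\ast$-ideals agree, and consequently $(\pCCR\circ\pPhSp_p)(\bfM,\bfJ) = \QA_p(\bfM,\bfJ)$ as unital $\ast$-algebras.

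It then remains to check that the two functors act the same way on morphisms. Given a morphism $f:(\bfM_1,\bfJ_1)\to(\bfM_2,\bfJ_2)$ in $\LocSrc_p$, both $\QA_p(f)$ and $(\pCCR\circ\pPhSp_p)(f)$ are defined as the map canonically induced on the quotient by $\CQA_p(f) = \CCR(\PhSp_p(f))$: in the $\QA_p$ case this is spelled out in Proposition~\ref{propo:quantfunctorgood}, and in the $\pCCR$ case it is the unique $\ast$-algebra homomorphism making the defining square for $\pCCR$ commute, using $\pPhSp_p(f)=\PhSp_p(f)$ and the fact that $\PhSp_p(f)$ preserves the distinguished element $[(0,1)]$ (equivalently $\CCR(\PhSp_p(f))\big([(0,\alpha)]-\alpha\big)=[(0,\alpha)]-\alpha$). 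Since both induced maps are induced from the same $\CCR(\PhSp_p(f))$ through the same quotient, they coincide. I do not anticipate a genuine obstacle here: the statement is essentially a bookkeeping identity, and the ``hard part,'' such as it is, is merely making explicit the identification $[(0,\alpha)]=\alpha[(0,1)]$ that collapses the generating set of $\widetilde{\mathfrak{J}}_p$ to the single generator used in defining $\pCCR$.
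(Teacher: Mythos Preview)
Your proposal is correct and follows the same approach as the paper, which simply states that the theorem is a restatement of the definition of $\QA_p$ from Subsection~\ref{subsec:improved}. You have made explicit the one small check the paper leaves implicit, namely that the ideal generated by $[(0,1)]-1$ coincides with $\widetilde{\mathfrak{J}}_p(\bfM,\bfJ)$ via $[(0,\alpha)]=\alpha[(0,1)]$.
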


%%%%%%%%%%%%%%%%%%%%%%%%%%%%%%%%%%%%%%%%%%%%%%%%
%%%%%%%%%%%%%%%%%%%%%%%%%%%%%%%%%%%%%%%%%%%%%%%%

\section{\label{sec:fedosov}Deformation quantization}
As a further alternative construction of the improved functor $\QA_p: \LocSrc_p\to \astAlg$
we focus on deformation quantization. We show that,
starting from the {\em improved} Poisson algebra functor $\PA_p:\LocSrc_p\to\PoisAlg$,
we can obtain for each object $(\bfM,\bfJ)$ in $\LocSrc_p$ the unital $\ast$-algebra
$\QA_p(\bfM,\bfJ)$ by deformation quantization of (the complexification of) the Poisson algebra $\PA_p(\bfM,\bfJ)$.
After this, we make some remarks on the application of Fedosov quantization \cite{Fedosov:1994zz},
which has been studied recently in \cite{Sanders:2012sf} (although not adhering strictly to \cite{Fedosov:1994zz}, as we will explain) 
for the inhomogeneous Maxwell field.
\sk

Let $(\bfM,\bfJ)$ be any object in $\LocSrc_p$ and consider the Poisson algebra
$\PA_p(\bfM,\bfJ)$ constructed in Subsection \ref{subsec:improvedpoisson}, which we shall denote 
also by $A := \PA_p(\bfM,\bfJ)$ in order to simplify the notation.
We define a {\it differential graded algebra} over the unital algebra $A$
as follows: Consider the graded commutative algebra 
$\Omega^\bullet :=  \PA_p(\bfM,\bfJ) \otimes\bigwedge^\bullet \PhSp_p^\mathrm{lin}(\bfM)$
with product defined by linearity and, for all $a\otimes \lambda, a^\prime\otimes \lambda^\prime\in \Omega^\bullet$,
\begin{flalign}
(a\otimes \lambda)\,(a^\prime\otimes \lambda^\prime) := (a\,a^\prime) \otimes (\lambda\wedge\lambda^\prime)~.
\end{flalign}
We define a differential $\dd: \Omega^\bullet \to \Omega^{\bullet+1}$ by linearity, 
the graded Leibniz rule and setting, for all $1\otimes\lambda\in \Omega^\bullet$ and 
$[(\varphi,\alpha)]\otimes \lambda \in \Omega^\bullet$,
\begin{flalign}
\dd\big(1\otimes\lambda\big) = 0\quad,\qquad \dd\big([(\varphi,\alpha)]\otimes \lambda \big) = 1\otimes \big([\varphi]^\mathrm{lin}\wedge \lambda\big)~.
\end{flalign}
Using the differential graded algebra $(\Omega^\bullet,\dd)$ over $A$,
the Poisson bracket in $\PA_p(\bfM,\bfJ)$ can be reformulated as,
for all $a,a^\prime\in A$,
\begin{flalign}
\{a,a^\prime\}_{\sigma_{(\bfM,\bfJ)}} = \Pi\big(\dd a, \dd a^\prime\big)~,
\end{flalign}
where  the {\it Poisson tensor} $\Pi: \Omega^1 \times \Omega^1 \to A$
is the $A$-bilinear map that is defined by the following extension of $\sigma_{\bfM}^\mathrm{lin}:
 \PhSp_p^\mathrm{lin}(\bfM)\times\PhSp_p^\mathrm{lin}(\bfM)\to \bbR$, 
for all $a\otimes \lambda,a^\prime\otimes\lambda^\prime\in \Omega^1$,
\begin{flalign}
\Pi\big(a\otimes\lambda,a^\prime\otimes\lambda^\prime\big) :=
 \sigma_{\bfM}^\mathrm{lin}(\lambda,\lambda^\prime)\, a\,a^\prime~.
\end{flalign}

On the $A$-module $\Omega^1$ there is a {\it canonical connection}
$\nabla: \Omega^1 \to \Omega^1 \otimes_A \Omega^1$ specified
by linearity and, for all $a\otimes \lambda\in \Omega^1$, 
\begin{flalign}
\nabla(a\otimes \lambda) := (1\otimes \lambda)\otimes_A (\dd a)~.
\end{flalign}
The Leibniz rule $\nabla(\omega\,a) = \nabla(\omega)\,a + \omega\otimes_A\dd a$,
for all $\omega\in\Omega^1$ and $a\in A$, is obviously satisfied.
The {\it torsion} of $\nabla$ is the $A$-module homomorphism
$\mathrm{T}: \Omega^1 \to \Omega^2$ defined by, for all $\omega\in\Omega^1$,
\begin{flalign}
\mathrm{T}(\omega) := \wedge\big(\nabla(\omega)\big) + \dd \omega~.
\end{flalign}
One easily checks that the canonical connection $\nabla$ is torsion free.
The {\it curvature} of $\nabla$ is the $A$-module homomorphism
$\mathrm{R}: \Omega^1\to \Omega^1 \otimes_A\Omega^2$ defined by, 
for all $\omega\in\Omega^1$,
\begin{flalign}
\mathrm{R}(\omega) := \nabla^2(\omega) := \nabla\nabla(\omega)~.
\end{flalign}
When applying $\nabla$ the second time, the usual extension 
$\nabla: \Omega^1\otimes_A\Omega^\bullet \to 
\Omega^1\otimes_A \Omega^{\bullet+1}$ 
defined by linearity and, for all $\omega\otimes_A\omega^\prime\in\Omega^1\otimes_A\Omega^\bullet $,
\begin{flalign}\label{eqn:usualextension}
\nabla(\omega\otimes_A\omega^\prime) := (\id_{\Omega^1}\otimes \wedge)\big(\nabla(\omega)\otimes_A \omega^\prime\big) 
+ \omega\otimes_A\dd\omega^\prime
\end{flalign}
is implicitly understood. One easily checks that the canonical connection $\nabla$  is flat, i.e., $\mathrm{R}=0$. 
As a last property, notice the canonical connection preserves the Poisson
tensor $\Pi$, i.e.\ the $A$-bilinear map
 $\mathrm{Q} : \Omega^1\times\Omega^1\to \Omega^1$
defined by, for all $\omega,\omega^\prime\in\Omega^1$,
\begin{flalign}\label{eqn:poiscompcondition}
\mathrm{Q}(\omega,\omega^\prime) := \dd \big(\Pi\big(\omega,\omega^\prime\big) \big)- 
\Pi\big(\nabla(\omega),\omega^\prime\big)
- \Pi\big(\omega,\nabla(\omega^\prime)\big) 
\end{flalign}
vanishes.\footnote{
In (\ref{eqn:poiscompcondition}) we have extended
the Poisson tensor $\Pi:\Omega^1\times\Omega^1\to A $ to $A$-bilinear maps 
 $(\Omega^1\otimes_A\Omega^1)\times \Omega^1\to \Omega^1$ and $\Omega^1\times (\Omega^1\otimes_A\Omega^1)\to \Omega^1$
 (denoted by the same symbol) by setting $\Pi(\omega\otimes_A\omega^\prime , \omega^{\prime\prime}) 
 :=  \Pi(\omega,\omega^{\prime\prime})\,\omega^\prime$ and 
 $\Pi(\omega,\omega^\prime\otimes_A\omega^{\prime\prime}) := 
 \Pi(\omega,\omega^\prime)\,\omega^{\prime\prime}$, for all $\omega,\omega^\prime,\omega^{\prime\prime}\in\Omega^1$.
} To sum up, we have shown that the canonical connection $\nabla:\Omega^1\to\Omega^1\otimes_A\Omega^1$
is a flat and torsion free Poisson connection.
\sk

As the next step, we consider the tensor module $\mathcal{E} := \bigoplus_{n=0}^\infty \left(\Omega^{1}\right)^{\otimes_A^n}$,
which is an $\bbN_0$-graded $A$-module that describes {\it tensor fields} on $A$. The connection 
$\nabla$ on $\Omega^1$ lifts to a connection on
$\mathcal{E} = \bigoplus_{n=0}^\infty\mathcal{E}^n$
via a recursive construction: On $\mathcal{E}^0\simeq A$ we choose the connection
$\nabla_0: A\to A\otimes_A \Omega^1\simeq \Omega^1 $ given by the differential $\dd$. On 
$\mathcal{E}^1= \Omega^1$ we take the canonical connection $\nabla_1 = \nabla: \Omega^1\to \Omega^1\otimes_A\Omega^1$.
Given a connection $\nabla_n$ on $\mathcal{E}^n = \Omega^1\otimes_A\cdots\otimes_A\Omega^1$ ($n$-times) we construct 
a connection $\nabla_{n+1}: \mathcal{E}^{n+1}\to \mathcal{E}^{n+1}\otimes_A \Omega^1$ on 
$\mathcal{E}^{n+1} = \mathcal{E}^{n}\otimes_A \Omega^1$ by linearity and setting, for all
 $\omega\otimes_A\omega^\prime \in \mathcal{E}^{n}\otimes_A \Omega^1$,
 \begin{flalign}\label{eqn:tensorconnection}
 \nabla_{n+1}(\omega\otimes_A\omega^\prime) := 
 (\id_{\mathcal{E}^n}\otimes \tau)\big(\nabla_{n}(\omega) \otimes_A \omega^\prime\big) + \omega\otimes_A \nabla_1(\omega^\prime)~,
 \end{flalign}
where $\tau:\Omega^1\otimes_A\Omega^1\to\Omega^1\otimes_A\Omega^1$ is the flip map,
for all $\omega\otimes_A\omega^\prime$, $\tau(\omega\otimes_A\omega^\prime) = \omega^\prime \otimes_A\omega$.
To simplify the notation, we shall denote the resulting connection on $\mathcal{E}$ by $\nabla_{\mathcal{E}}$
and notice that we can regard it as a linear map (of degree 1) $\nabla_{\mathcal{E}}:\mathcal{E}\to\mathcal{E}$. 
\sk

Since $\nabla$ is a flat and torsion free Poisson connection, we can define an associative $\star$-product 
on the complexification of $A$ (denoted with abuse of notation also by $A$),
for all $a,a^\prime\in A$,
\begin{flalign}\label{eqn:starproductfedo}
a\star_{(\Pi,\nabla)} a^\prime := \sum_{n=0}^\infty\frac{1}{n!} \left(\frac{i}{2}\right)^n\Pi^n\big(\nabla_\mathcal{E}^n(a),\nabla_\mathcal{E}^n(a^\prime)\big)~.
\end{flalign}
Here $\nabla_\mathcal{E}^n $ denotes the $n$-times iterated application of $\nabla_\mathcal{E}$ and
 $\Pi^n:\mathcal{E}^n\times\mathcal{E}^n\to A$ is the $A$-bilinear map specified by, for
all $\omega_1,\dots,\omega_n,\omega_1^\prime,\dots,\omega^\prime_n\in\Omega^1$,
\begin{flalign}
\Pi^n\big(\omega_1\otimes_A\cdots\otimes_A \omega_n, \omega_1^\prime\otimes_A\cdots,\otimes_A \omega_n^\prime\big)
:=\Pi(\omega_1,\omega_1^\prime) \cdots \Pi(\omega_n,\omega_n^\prime)~
\end{flalign}
and $\Pi^0(a,a^\prime) = a\,a^\prime$, for all $a,a^\prime\in A$.
Notice that the sum in (\ref{eqn:starproductfedo}) terminates, since $a,a^\prime\in A$ 
are polynomials, hence $\nabla_{\mathcal{E}}^m(a)=0$ and $\nabla_{\mathcal{E}}^{m^\prime}(a^\prime)=0$
for sufficiently large  $m,m^\prime\in \bbN$ (see also (\ref{eqn:derivativefinite}) below).
Furthermore, the $\star$-product $\star_{(\Pi,\nabla)}$ is hermitian if we equip $A$ with the involution
$\ast$ defined by $\big([(\varphi_1,\alpha_1)]\,\cdots\,[(\varphi_n,\alpha_n)]\big)^\ast =
 [(\varphi_1,\alpha_1)]\,\cdots\,[(\varphi_n,\alpha_n)]$ and $\bbC$-antilinear extension.
 \sk
 
 It remains to show that the $\star$-product (\ref{eqn:starproductfedo}) 
 coincides with the product in $\QA_p(\bfM,\bfJ)$, which is defined in (\ref{eqn:starproduct}).
 For the elements $[(\varphi,\alpha)]^m \in A$ we find, for all $k\leq m$,
 \begin{flalign}\label{eqn:derivativefinite}
 \nabla^k_{\mathcal{E}}\big([(\varphi,\alpha)]^m \big) = \frac{m!}{(m-k)!}\, [(\varphi,\alpha)]^{m-k}\otimes {[\varphi]^{\mathrm{lin}}}^{\otimes k}~,
 \end{flalign}
and for $k>m$, $\nabla^k_{\mathcal{E}}\big([(\varphi,\alpha)]^m \big) =0$.
Plugging this into (\ref{eqn:starproductfedo})  we obtain, 
 for all $[(\varphi,\alpha)]^m,[(\varphi^\prime,\alpha^\prime)]^n\in A$,
 \begin{multline}
 [(\varphi,\alpha)]^m \star_{(\Pi,\nabla)}[(\varphi^\prime,\alpha^\prime)]^n \\
 = 
  \sum_{k=0}^{\min(m,n)} \left(\frac{i\sigma_{\bfM}^\mathrm{lin}([\varphi]^\mathrm{lin},[\varphi^\prime]^\mathrm{lin})}{2}\right)^k  \frac{m!\,n!}{k!\,(m-k)! \,(n-k)!} [(\varphi,\alpha)]^{m-k} \,[(\varphi^\prime,\alpha^\prime)]^{n-k}~.
 \end{multline}
 This shows that the products (\ref{eqn:starproductfedo}) and  (\ref{eqn:starproduct}) coincide.
\sk

In view of this direct construction of the $\star$-product (\ref{eqn:starproductfedo}) given above,
the application of full-fledged Fedosov quantization \cite{Fedosov:1994zz} to our model is not required.
However, we shall now focus on this quantization method using our algebraic approach developed in this appendix,
as this will clarify certain issues in an earlier treatment of this subject \cite{Sanders:2012sf}.
The basic structure entering Fedosov's approach is a bundle of CCR-algebras over a symplectic (or regular Poisson)
manifold. In our algebraic approach this bundle is given by the  $A$-module $\mathcal{W} := S^{\otimes_A}(\Omega^1)$,
where $S^{\otimes_A}(\Omega^1)$ is the (complexified) symmetric tensor algebra with respect to the tensor product $\otimes_A$
of the $A$-module of one-forms $\Omega^1$ on $A$. Using that $\Omega^1 = \PA_p(\bfM,\bfJ) \otimes \PhSp_p^{\mathrm{lin}}(\bfM)$,
$\mathcal{W}$ is isomorphic to $\PA_p(\bfM,\bfJ)\otimes S\big(\PhSp_p^{\mathrm{lin}}(\bfM)\big)$. We shall 
suppress this isomorphism. Notice that  $S\big(\PhSp_p^{\mathrm{lin}}(\bfM)\big)$ is the vector space underlying
the CCR-algebra $B:=\CQA_p^{\mathrm{lin}}(\bfM) = \CCR\big(\PhSp_p^\mathrm{lin}(\bfM)\big)$ -- 
the quantum algebra of observables of the homogeneous theory -- hence, we can equip the $A$-module
$\mathcal{W} = A\otimes B $ with a (noncommutative) product, for all
$a\otimes b,a^\prime\otimes b^\prime\in A\otimes B$,
\begin{flalign}
(a\otimes b)\,(a^\prime\otimes b^\prime) := (a\,a^\prime) \otimes (b\star b^\prime)~.
\end{flalign}
Notice that $A$ can be identified with a commutative subalgebra of $\mathcal{W}$ via $A\to \mathcal{W}\,,~a\mapsto a\otimes 1$.
Fedosov's idea is to characterize a subalgebra $A_\star$ of $\mathcal{W}$ in such a way that
$A_\star$ is isomorphic to $A$ as a vector space and that the induced product on $A_\star$ is a deformation
quantization of the Poisson structure on $A$. To achieve this goal, the first step is to construct a suitable connection on $\mathcal{W}$
as follows: Given the canonical connection $\nabla$ on $\Omega^1$, we can induce a tensor product connection $\nabla_{\mathcal{W}}$
on $\mathcal{W}$ via the prescription outlined in (\ref{eqn:tensorconnection}). This connection extends
analogously to (\ref{eqn:usualextension}) to a linear map $\nabla_{\mathcal{W}} : 
\mathcal{W}\otimes_A\Omega^\bullet \to \mathcal{W}\otimes_A\Omega^{\bullet +1}$. Notice that
$\mathcal{W}\otimes_A \Omega^\bullet$ is an $\bbN_0$-graded algebra (the grading is inherited from that of differential forms
$\Omega^\bullet$) by setting, for all
$\mathrm{w}\otimes_A \omega,\mathrm{w}^\prime\otimes_A\omega^\prime\in \mathcal{W}\otimes_A \Omega^\bullet$,
$(\mathrm{w}\otimes_A\omega)\,(\mathrm{w}^\prime\otimes_A\omega^\prime) 
= (\mathrm{w}\,\mathrm{w}^\prime)\otimes_A(\omega\,\omega^\prime)$. It is easy to check that $\nabla_{\mathcal{W}}$
is flat, i.e.\ $\mathrm{R}_{\mathcal{W}} = \nabla_{\mathcal{W}}^2 =0$,
since it is the tensor product connection of our flat canonical connection $\nabla$ on $\Omega^1$.
Furthermore, $\nabla_{\mathcal{W}}$ satisfies the graded Leibniz rule on the $\bbN_0$-graded algebra,
for all homogeneous elements
$\mathrm{w}^\bullet,\mathrm{w}^{\bullet\prime}\in \mathcal{W}\otimes_A \Omega^\bullet$,
\begin{flalign}
\nabla_{\mathcal{W}} \big(\mathrm{w}^\bullet\,\mathrm{w}^{\bullet\prime} \big)
=\big(\nabla_{\mathcal{W}}(\mathrm{w}^\bullet)\big)\,\mathrm{w}^{\bullet\prime} + (-1)^{\vert\mathrm{w}^\bullet\vert}\,
\mathrm{w}^\bullet\,\nabla_{\mathcal{W}}(\mathrm{w}^{\bullet\prime})~. 
\end{flalign}
Thus, $\nabla_{\mathcal{W}}$ structures $\mathcal{W}\otimes_A\Omega^\bullet$ as a differential graded algebra.
Fedosov's idea \cite{Fedosov:1994zz}  is now to modify $\nabla_{\mathcal{W}}$ into a differential 
$\mathsf{D}:  \mathcal{W}\otimes_A\Omega^\bullet\to \mathcal{W}\otimes_A\Omega^{\bullet+1}$, such that
the kernel $\ker(\mathsf{D})\cap \mathcal{W}$, which is a unital $\ast$-algebra under the product inherited from
$\mathcal{W}\otimes_A \Omega^\bullet$, gives the desired deformation quantization of $A$. 
Because $\nabla_{\mathcal{W}}$ is flat, this construction drastically simplifies and
we do not have to take into account the corrections by curvature dependent terms as in \cite{Fedosov:1994zz}.
The Fedosov differential for our model is given by  
\begin{flalign}
\mathsf{D}:= -\delta + \nabla_{\mathcal{W}}~,
\end{flalign}
where  $\delta: \mathcal{W}\otimes_A\Omega^\bullet\to \mathcal{W}\otimes_A\Omega^{\bullet+1}$ is the $A$-module 
homomorphism defined by linearity and, for all $a\otimes \big( [\varphi_1]^\mathrm{lin}\,\cdots\,[\varphi_n]^\mathrm{lin}\big)
\otimes \lambda \in A\otimes B \otimes 
\bigwedge^\bullet\PhSp_p^{\mathrm{lin}}(\bfM)\simeq \mathcal{W}\otimes_A\Omega^\bullet$,
\begin{flalign}
\delta\Big(a\otimes \big( [\varphi_1]^\mathrm{lin}\,\cdots\,[\varphi_n]^\mathrm{lin}\big)
\otimes \lambda\Big) = \sum_{j=1}^n a\otimes \big([\varphi_1]^\mathrm{lin}\cdots\omi{j}\cdots [\varphi_{n}]^\mathrm{lin} \big)\otimes\big( [\varphi_j]^\mathrm{lin}\wedge \lambda\big)~.
\end{flalign}
It is easy to check that $\delta$ satisfies the graded Leibniz rule, $\delta^2=0$ and
 $\delta\circ \nabla_{\mathcal{W}} + \nabla_{\mathcal{W}}\circ \delta=0$,
from which it follows that $\mathsf{D}$ is a differential on $\mathcal{W}\otimes_A\Omega^\bullet$.
\sk

We now come to the characterization of the kernel $\ker(\mathsf{D})\cap \mathcal{W}$.
Like in \cite{Fedosov:1994zz}, we are making use  of the $A$-module homomorphism 
$\delta^\ast: \mathcal{W}\otimes_A\Omega^\bullet\to \mathcal{W}\otimes_A\Omega^{\bullet-1}$ 
defined by linearity and, for all 
$a\otimes b\otimes \big( [\varphi_1]^\mathrm{lin}\wedge \cdots\wedge [\varphi_n]^\mathrm{lin}\big) \in A\otimes B \otimes 
\bigwedge^\bullet\PhSp_p^{\mathrm{lin}}(\bfM)\simeq \mathcal{W}\otimes_A\Omega^\bullet$,
\begin{flalign}
\delta^\ast\Big(a\otimes b\otimes \big( [\varphi_1]^\mathrm{lin}\wedge \cdots\wedge [\varphi_n]^\mathrm{lin}\big)
\Big) = \sum_{j=1}^n  (-1)^{j+1} \, a\otimes \big(b\,[\varphi_j]^{\mathrm{lin}} \big)\otimes \big( [\varphi_1]^\mathrm{lin}\wedge \cdots\omi{j}\cdots\wedge [\varphi_n]^\mathrm{lin}\big)~.
\end{flalign}
It is easy to check that ${\delta^\ast}^2 =0$ and that $\delta^\ast\circ \delta + \delta\circ\delta^\ast  = (n+m)\,\id$,
when acting on homogeneous elements $a\otimes \big( [\varphi_1]^\mathrm{lin}\,\cdots\,[\varphi_n]^\mathrm{lin}\big)\otimes 
\big( [\varphi_{n+1}]^\mathrm{lin}\wedge \cdots\wedge [\varphi_{n+m}]^\mathrm{lin}\big)  $. The latter property
implies that  
\begin{flalign}\label{eqn:homotopy}
\delta^{-1}\circ \delta + \delta\circ\delta^{-1} +\sigma = \id_{\mathcal{W}\otimes_A\Omega^\bullet}
\end{flalign} 
on all of $\mathcal{W}\otimes_A\Omega^\bullet$, where 
$\delta^{-1}: \mathcal{W}\otimes_A\Omega^\bullet\to \mathcal{W}\otimes_A\Omega^{\bullet-1}$
is defined on homogeneous elements by $\delta^{-1} = \delta^\ast/(n+m)$ for $n+m\neq 0$ and $\delta^{-1}=0$ for $n+m=0$.
The  linear map $\sigma: \mathcal{W}\otimes_A \Omega^\bullet \to A$ is the projection defined by
$\sigma(a\otimes 1\otimes 1) = a$ and $\sigma(a\otimes b\otimes \lambda)=0$ if the degree of $b$ or $\lambda$
is not equal to zero. 
Following the proof of Fedosov \cite{Fedosov:1994zz}, we can show that
the map $\sigma : \ker(\mathsf{D})\cap \mathcal{W} \to A$ is bijective, i.e.\ that for any
$a\in A$ there exists a unique $\mathrm{w}\in \mathcal{W}$, such that $\mathrm{D}(\mathrm{w})=0$ and
$\sigma(\mathrm{w})=a$. We briefly sketch the relevant steps: Let $\mathrm{w}\in \mathcal{W}$ be such that
$0=\mathsf{D}(\mathrm{w}) = -\delta\mathrm{w} + \nabla_{\mathcal{W}}(\mathrm{w})$. Applying $\delta^{-1}$ and
using (\ref{eqn:homotopy}) this yields the equation
\begin{flalign}\label{eqn:constantweylsection}
\mathrm{w} = \sigma(\mathrm{w}) + \delta^{-1}\big(\nabla_{\mathcal{W}}(\mathrm{w})\big)~.
\end{flalign}
Notice that $\sigma(\mathrm{w})$ has degree $(0,0)$ according to the natural grading $(n,m)$ on $\mathcal{W}\otimes_A\Omega^\bullet$
discussed above. The map $\nabla_{\mathcal{W}}$ increases the form-degree by one $(n,m)\mapsto (n,m+1)$, while the
map $\delta^{-1}$ decreases the form-degree by one and increases the $B$-degree by one $(n,m)\mapsto (n+1,m-1)$.
Hence, $\delta^{-1}\circ \nabla_{\mathcal{W}}$ increases the $B$-degree by one $(n,m)\mapsto (n+1,m)$
and equation (\ref{eqn:constantweylsection}) can be solved uniquely by iteration for any initial condition
$\sigma(\mathrm{w}) =a$ (since $A$ is a polynomial algebra this 
requires just a finite number of iterations). For any initial condition 
$\sigma(\mathrm{w}) =a\in A$, the solution $\mathrm{w}$ to (\ref{eqn:constantweylsection}) satisfies $\mathsf{D}(\mathrm{w})=0$
as a consequence of $\mathsf{D}$ being a differential, cf.\ \cite{Fedosov:1994zz}. This establishes
the bijection $\sigma: \ker(\mathsf{D})\cap \mathcal{W} \to A$ and we define a $\star$-product on $A$
by setting, for all $a,a^\prime\in A$,
\begin{flalign}\label{eqn:laststarproduct}
a\star_{\mathrm{F}} a^\prime := \sigma\big(\sigma^{-1}(a)\,\sigma^{-1}(a^\prime)\big)~,
\end{flalign}
where the product between $\sigma^{-1}(a)$ and $\sigma^{-1}(a^\prime)$ is of course taken in $\mathcal{W}$.
\sk

It remains to show that (\ref{eqn:laststarproduct}) coincides with the product (\ref{eqn:starproduct}).
For this let us take $[(\varphi,\alpha)]^m\in A$ and notice that
\begin{flalign}
\sigma^{-1}\big([(\varphi,\alpha)]^m\big) = \sum_{j=0}^m \binom{m}{j}\,[(\varphi,\alpha)]^{m-j}\otimes \left([\varphi]^\mathrm{lin}\right)^j~.
\end{flalign}
From this expression and a slightly tedious calculation one obtains that
\begin{flalign}
[(\varphi,\alpha)]^m\star_\mathrm{F}[(\varphi^\prime,\alpha^\prime)]^n = 
\sigma\Big(\sigma^{-1}\big([(\varphi,\alpha)]^m\big) \,\sigma^{-1}\big([(\varphi^\prime,\alpha^\prime)]^n\big)\Big)
= [(\varphi,\alpha)]^m\star [(\varphi^\prime,\alpha^\prime)]^n~,
\end{flalign}
where the product on the right hand side is (\ref{eqn:starproduct}).
So the three products  (\ref{eqn:starproduct}), (\ref{eqn:starproductfedo}) and (\ref{eqn:laststarproduct})
all coincide on the complexification of $\PA_p(\bfM,\bfJ)$ and hence give the same quantum algebra
$\QA_p(\bfM,\bfJ)$.
\sk

To conclude, we make some remarks on the quantization prescription
pursued in \cite{Sanders:2012sf}, which is described as being a Fedosov
quantization, but in fact differs in essential respects from the Fedosov method \cite{Fedosov:1994zz}. 
Briefly, the method of \cite{Sanders:2012sf} is to construct a
bundle $\mathcal{A}$ of infinitesimal Weyl algebras over an affine space $V$, equipped with a Poisson structure, 
and then to define the quantized algebra as the algebra of flat sections in $\mathcal{A}$ with respect to a certain connection. 
While this basic idea matches exactly with the Fedosov construction,
the starting point chosen in \cite{Sanders:2012sf} is an {\em affine} connection on the tangent bundle of 
$V$ with prescribed affine parallel transport maps between all fibres. The problem with
this choice is that the affine connection does not dualize to the 
cotangent bundle and in particular not to the bundle of infinitesimal Weyl algebras 
$\mathcal{A}$.\footnote{The correct dualization would be to the vector dual bundle of the 
tangent bundle regarded in the category of affine bundles, but this would lead to the following logical 
problem: Instead of replacing the problem of quantizing affine Poisson spaces by quantizing  
linear Poisson spaces (which Fedosov's method does, as explained above),
the choice of affine connection in \cite{Sanders:2012sf} replaces the problem of quantizing affine Poisson spaces
by quantizing affine Poisson spaces in the fibres.}
This issue is sidestepped in \cite{Sanders:2012sf}, by regarding
elements of the infinitesimal Weyl algebras  
as symmetric polynomials acting on the vector space $V_0$ on which $V$ is modeled, which permits a unique parallel transport between
fibres in $\mathcal{A}$ to be defined. This could be regarded as a slightly ad hoc mixture of the quantized and classical
theories, because $V_0$ is analogous to the classical solution space of the homogeneous theory. By contrast, our
prescriptions for improved theories are based on 
classical structures in the classical case and quantum structures in the quantum case. The connection thus employed in \cite{Sanders:2012sf}
is much more rigid than that of \cite{Fedosov:1994zz}, which does
not integrate to a unique parallel transport between fibres (see 
\cite[p. 222]{Fedosov:1994zz}).\footnote{Unique parallel transport depends on theorems on existence and uniqueness of solutions to differential equations that do not necessarily apply to bundles with infinite-dimensional fibres (beyond the Banach case).  A related point is that \cite{Fedosov:1994zz} obtains
a deformation of the algebra of all smooth functions on the classical phase space, while the result of \cite{Sanders:2012sf} is more analogous
to a deformation of an algebra of certain polynomial functions. In our algebraic version of Fedosov's
procedure, the restriction to polynomial functions appears naturally; it should also be possible
to extend our construction to cover Wick polynomials by modifying the underlying Poisson algebra, but
without modifying the fibre algebras, whereas the corresponding extension of \cite{Sanders:2012sf}
would require modification of the fibres.} 
With these thoughts in mind,  the approach in \cite{Sanders:2012sf} might be better
described as `Fedosov-inspired' rather than an application of 
Fedosov's method as such; nonetheless, this procedure does lead to the correct `improved algebra'. 
%We note, however,  that the summary \cite[Definition 4.3.]{Sanders:2012sf} does not adequately capture the result of the `Fedosov-inspired' 
%procedure,  because the algebraic relations posed do not ensure that the generator corresponding to the unit
% constant functional on $V$ is a unit in the algebra [as needed in the `improved' algebra]; as it stands, the algebra
%therefore has an obvious $\mathbb{Z}_2$ automorphism by negating
%the generators, which becomes a natural automorphism when the prescription is regarded as a functor from 
%affine Poisson spaces to unital $*$-algebras (as in \cite[Lemma 4.11.]{Sanders:2012sf}) and recreates one of the
% pathologies of the unimproved theory. The quantizations of specific theories given in \cite{Sanders:2012sf} 
% are free of this problem, because they are not strict applications of \cite[Definition 4.3.]{Sanders:2012sf}, but add
% a further axiom implementing the inhomogeneous field equation after the manner of \cite{Hollands:2004yh}.

%%%%%%%%%%%%%%%%%%%%%%%%%%%%%%%%%%%%%%%%%%%%%%%%
%%%%%%%%%%%%%%%%%%%%%%%%%%%%%%%%%%%%%%%%%%%%%%%%

\end{document}